\newcommand{\R}{\mathbb R}
\newcommand{\Z}{\mathbb Z}
\renewcommand{\l}{\lambda}
\newcommand{\N}{\mathbb{N}}
\newcommand{\T}{\mathbb{T}}
\newtheorem{thm}{Theorem}[section]
\newtheorem{hp}{\bf Hypothesis}
\newtheorem{tp}{\bf Type}
\newtheorem{lem}[thm]{Lemma}
\newtheorem{prop}[thm]{Proposition}
\newtheorem{rem}{\bf Remark}[section]
\theoremstyle{definition}
\theoremstyle{statement}
\numberwithin{equation}{section}
\begin{document}
	\title[Localization for asymmetric $C^2$ potential]{On the spectrum of   quasi-periodic Schr\"odinger operators on $\Z^d$  with $C^2$-cosine type potentials}
	\author[Cao]{Hongyi Cao}
	\address[H. Cao] {School of Mathematical Sciences,
		Peking University,
		Beijing 100871,
		China}
	\email{chyyy@stu.pku.edu.cn}
	\author[Shi]{Yunfeng Shi}
	\address[Y. Shi] {School of Mathematics,
		Sichuan University,
		Chengdu 610064,
		China}
	\email{yunfengshi@scu.edu.cn}
	
	\author[Zhang]{Zhifei Zhang}
	\address[Z. Zhang] {School of Mathematical Sciences,
		Peking University,
		Beijing 100871,
		China}
	\email{zfzhang@math.pku.edu.cn}
	
	\date{\today}
	
	\keywords{Multi-dimensional quasi-periodic Schr\"odinger operators, Asymmetric $C^2$-cosine type potentials, Anderson localization,  H\"older continuity of IDS, Multi-scale analysis, Rellich functions, Collapsed gaps}  
	\maketitle
	\begin{abstract}
		In this paper,  we establish  the Anderson localization,   strong dynamical localization and  the $(\frac 12-)$-H\"older continuity of the integrated density of states (IDS)  for some  multi-dimensional discrete quasi-periodic (QP) Schr\"odinger operators with asymmetric $C^2$-cosine type  potentials. {  We extend both the iteration scheme of \cite{CSZ23a}  and the  interlacing method of \cite{FV21}   to handle  {\it asymmetric} Rellich functions with {\it collapsed gaps}.}
	\end{abstract}
	
	\maketitle 
	
	\section{Introduction}
	The spectral and dynamical theory  of QP  Schr\"odinger operators has attracted great attention over the years. Of particular importance is the phenomenon  of  Anderson localization, which means, the operator has pure point spectrum with exponential decay eigenfunctions. { The pioneering works of Sinai \cite{Sin87} and  Fr\"ohlich-Spencer-Wittwer \cite{FSW90}  have successfully established  the Anderson localization  for a class of  one-dimensional QP Schr\"odinger operators with $C^2$-cosine type potentials. The result of \cite{FSW90} was extended recently to more general potentials  without symmetry condition by  Forman-VandenBoom \cite{FV21} via introducing  the  new  interlacing method. It is well-known that the openness of gaps between Rellich functions  (i.e., parameterized eigenvalues), which restricts to the one  dimensional lattice, plays an essential in \cite{Sin87,FSW90,FV21}. In this paper, we extend the iteration scheme of \cite{CSZ23a} and the  interlacing method  of \cite{FV21}   to handle asymmetric Rellich functions with collapsed gaps. This allows us to prove   the Anderson localization,   strong dynamical localization and  the $(\frac 12-)$-H\"older continuity of the IDS for some  multi-dimensional discrete QP  Schr\"odinger operators with asymmetric $C^2$-cosine type  potentials.}
	\subsection{Main result}
	In this paper, we  are concerned with  the  QP   Schr\"odinger operators 
	\begin{align}\label{model}
		H(\theta)=\varepsilon \Delta+v(\theta+ x\cdot{\omega})\delta_{x, y},\ x\in\Z^d,
	\end{align}
	where $\varepsilon\geq0$ and  the discrete Laplacian $\Delta$ is defined as
	\begin{align*}
		\Delta(x, y)=\delta_{{\|x- y\|_{1}, 1}},\ \| x\|_{1}:=\sum_{i=1}^{d}\left|x_{i}\right|.
	\end{align*}
	For the diagonal part of \eqref{model},  we let $\theta\in \mathbb{T}=\mathbb{R}/\mathbb{Z},  \  \omega\in 	{\rm DC}_{\tau, \gamma}$ and $x\cdot\omega=\sum\limits_{i=1}^dx_i \omega_i$, 	with	$$	{\rm DC}_{\tau, \gamma}=\left\{\omega\in[0,1]^d:\ \|x\cdot\omega\|=\inf_{l\in\mathbb{Z}}|l- x\cdot\omega|\geq \frac{\gamma}{\| x\|_1^{\tau}},\ \forall\  x\in\mathbb{Z}^d\setminus\{0\}\right\},$$ 
	where $\tau>d,\gamma>0.$ We call $\theta$ the phase  and $\omega$ the frequency. 
	{\bf We  assume  that the  potential $v\in C^2(\mathbb{T}; \R)$ has \textit{exactly two non-degenerate critical points}}.

	{For a finite subset $\Lambda\subset \Z^d$, we denote by $H_\Lambda:=R_\Lambda H R_\Lambda$ the restriction of $H$ to $\Lambda$ with Dirichlet boundary condition, which can be identified with a $|\Lambda|\times|\Lambda|$ self-adjoint matrix}.  First, we define the integrated density of states (IDS). For a finite set $\Lambda$, denote by $\#\Lambda$ the cardinality of $\Lambda.$ Let
	$$\mathcal{N}_{\Lambda}(E;\theta)=\frac{1}{\#\Lambda}\#\{\lambda\in\sigma({H_\Lambda(\theta)}):\  \lambda\leq E\}$$
	and denote by
	\begin{align}\label{ids}
		\mathcal{N}(E)=\lim_{N\to\infty}\mathcal{N}_{\Lambda_N}(E;\theta)
	\end{align}
	the IDS, where $\Lambda_N=\{x\in\Z^d:\ \|x\|_1\leq N\}$ for $N>0$ and $\sigma(\cdot)$ denotes the spectrum.  It is well-known that the limit in  \eqref{ids} exists and is independent of $\theta$  for a.e. $\theta$.

	Our main results  of this paper are

	\begin{thm}[Main Theorem]\label{main}
		Let $H(\theta)$ be  given by \eqref{model}.	We assume that   $v\in C^2(\mathbb{T}; \R)$  with  exactly two non-degenerate critical points and  $\omega\in {\rm DC}_{\tau, \gamma}$. Then there exists some $\varepsilon_0=\varepsilon_0(v,d,\tau,\gamma)>0$, such that for all $0< \varepsilon\leq \varepsilon_0$,
		\begin{itemize}
			\item[1.] For (Lebesgue) a.e.  $\theta\in \T$, $H(\theta)$ satisfies Anderson localization, i.e., has pure point spectrum with exponentially decaying eigenfunctions;
			\item[2.] $H(\theta)$ satisfies  strong dynamical  localization, i.e.,  for any $q>0$, 
			
			$$
			\int_\mathbb{T}\sup_{t\in \mathbb{R}}\sum_{x\in \mathbb{Z}^d}(1+\|x\|_1)^q|\langle e^{itH(\theta)}{\bm e}_0, {\bm e}_x\rangle|d\theta<+\infty,
			$$
			where $\{\bm e_x\}_{x\in\Z^d}$ denotes the standard basis of $\ell^2(\Z^d)$.
			\item [3.]
			
			The IDS of $H(\theta)$  is	  $(\frac12-)$-H\"older continuous, i.e., for all $E\in \R, \eta>0,$
			$$\mathcal{N}(E+\eta)-\mathcal{N}(E-\eta)\leq C \eta^{\frac{1}{2}}\max(1,|\log\eta|^{16d}),$$
			where $C=C(d)>0.$
		\end{itemize}
	\end{thm}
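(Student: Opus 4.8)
The proof proceeds by a multi-scale analysis (MSA) organized entirely around Green's function estimates, so as to dispense with the openness of gaps between Rellich branches that is available on $\Z$ but fails on $\Z^d$. \textbf{Scales and base case.} Fix a rapidly increasing sequence of scales $N_{k+1}=N_k^{C_0}$ and a target decay rate $\gamma_0\approx|\log\varepsilon|$. At the initial scale the box operator $H_{\Lambda_{N_0}}(\theta)$ is an $O(\varepsilon)$ perturbation of $\mathrm{diag}\big(v(\theta+x\cdot\omega)\big)_{x\in\Lambda_{N_0}}$; since $v\in C^2(\mathbb{T};\R)$ has exactly two non-degenerate critical points, $\mathbb{T}$ decomposes into two arcs on each of which $v$ is strictly monotone with $|v'|$ bounded below away from the critical points, while near each critical point $v$ agrees to second order with a non-degenerate quadratic. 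Using $\omega\in\mathrm{DC}_{\tau,\gamma}$ to control the spacing of the samples $\{\theta+x\cdot\omega:x\in\Lambda_{N_0}\}$, one obtains, for $\theta$ outside a set $\mathcal{B}_0$ of small measure and for all $E$ in a neighborhood of the spectrum, the resolvent bound $\|(H_{\Lambda_{N_0}}(\theta)-E)^{-1}\|\le e^{N_0^{1-}}$ together with off-diagonal decay $|G_{\Lambda_{N_0}}(E;\theta)(x,y)|\le e^{-\gamma_0\|x-y\|_1}$ for $\|x-y\|_1\gtrsim(\log N_0)^{2}$. These are the ``good box'' properties to be propagated.

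\textbf{Rellich functions and sub-level estimates.} For a box $\Lambda$ at scale $N$ let $E^{\Lambda}_1(\theta)\le\cdots\le E^{\Lambda}_{\#\Lambda}(\theta)$ denote the eigenvalues of $H_\Lambda(\theta)$, regarded as Rellich branches. By Hellmann--Feynman and $v\in C^2$ each branch is Lipschitz in $\theta$ with slope controlled by $\max_{\mathbb{T}}|v'|$, and behaves locally in a quadratic fashion precisely where its eigenvector concentrates at a site $x$ with $\theta+x\cdot\omega$ near a critical point of $v$. The quantitative engine of the whole argument is the uniform sub-level set bound
$$\big|\{\theta\in\mathbb{T}:\ \mathrm{dist}\big(E,\sigma(H_\Lambda(\theta))\big)<\delta\}\big|\le C\,\delta^{1/2}\,\big(\log(1/\delta)\big)^{A},$$
valid for all $\Lambda$ and $E$: on the monotone part of $v$ it is in fact linear in $\delta$, and the square root comes exclusively from the two quadratic critical points — crucially, the estimate is proved branch by branch, so the asymmetry of the two critical points is irrelevant. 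This is the source of the $\tfrac12$ exponent in part~3.

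\textbf{The inductive step and the main obstacle.} Assume the good-box property at scale $N_k$ holds off a bad set $\mathcal{B}_k$ with $\sum_k|\mathcal{B}_k|<\infty$. To reach scale $N_{k+1}$, tile $\Lambda_{N_{k+1}}$ by translates of scale-$N_k$ boxes and run the resolvent identity: the Green's function at the new scale inherits exponential decay as soon as the sub-boxes failing the good-box property are both sparse and pairwise far apart. A sub-box centered at $y$ fails only if $\theta+y\cdot\omega\in\mathcal{B}_k$, so sparseness follows from $|\mathcal{B}_k|$ small together with an averaging argument using the equidistribution of $\{y\cdot\omega\}$. The genuinely hard point is to exclude \emph{double resonances}: two failing sub-boxes at sites $y\ne y'$. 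On $\Z$ one uses that the gaps between consecutive Rellich branches stay uniformly open \cite{Sin87,FSW90}; on $\Z^d$ these gaps may \emph{collapse}, and that tool is gone. Instead I would estimate directly: the double-resonant set is contained in
$$\{\theta:\ \mathrm{dist}(E,\sigma(H_{\Lambda'}(\theta)))<\delta_k\}\cap\{\theta:\ \mathrm{dist}(E,\sigma(H_{\Lambda'}(\theta+(y'-y)\cdot\omega)))<\delta_k\},$$
and one decouples the two factors by subdividing $\mathbb{T}$ into arcs on which the relevant branch is monotone and invoking the Diophantine separation $\|(y'-y)\cdot\omega\|\ge\gamma\|y'-y\|_1^{-\tau}$, which forces the two (nearly one-dimensional) sub-level sets to overlap only in a set of measure $\lesssim\delta_k\cdot(\text{summable in }k)$. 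Removing these $\theta$ enlarges the bad set by a summably small amount, closing the induction. This decoupling of collapsed-gap resonances is the crux, and I expect it to absorb the bulk of the technical work.

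\textbf{Consequences.} Set $\mathcal{B}_\infty=\limsup_k\mathcal{B}_k$, a null set. (1) \emph{Anderson localization}: for $\theta\notin\mathcal{B}_\infty$, any polynomially bounded solution of $H(\theta)\psi=E\psi$ must, via the Poisson/resolvent expansion over nested good boxes, decay exponentially from a single center — the impossibility of two centers being precisely the double-resonance exclusion above — and since generalized eigenfunctions are spectrally complete (Schnol), $H(\theta)$ has pure point spectrum with exponentially decaying eigenfunctions. (2) \emph{Strong dynamical localization}: the bounds $|\mathcal{B}_k|$ are stretched-exponentially small and uniform in $E$; feeding them into the eigenfunction-correlator bootstrap, or estimating $\int_{\mathbb{T}}\sup_t|\langle e^{itH(\theta)}{\bm e}_0,{\bm e}_x\rangle|\,d\theta$ directly from the MSA bounds, yields the stated moment estimate for every $q>0$. (3) \emph{H\"older continuity of the IDS}: for a.e.\ $\theta$, $\mathcal{N}(E+\eta)-\mathcal{N}(E-\eta)=\lim_N(\#\Lambda_N)^{-1}\#\{j:E^{\Lambda_N}_j(\theta)\in[E-\eta,E+\eta]\}$; averaging in $\theta$, converting the count into a length on the $\theta$-axis via the Lipschitz slope bounds, and applying the sub-level estimate with $\delta=\eta$ (the logarithmic losses coming from the scale-dependent cutoffs in the MSA) gives $\mathcal{N}(E+\eta)-\mathcal{N}(E-\eta)\le C\eta^{1/2}\max(1,|\log\eta|^{16d})$.
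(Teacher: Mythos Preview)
Your outline has the right large-scale architecture (MSA, good-box induction, Schnol for part~1, eigenfunction correlators for part~2, sub-level estimates for part~3), and you correctly identify the collapsed-gap double resonance as the crux. But the mechanism you propose for that crux is a genuine gap.

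You write that the double-resonant set is contained in the intersection of two shifted sub-level sets of $\mathrm{dist}(E,\sigma(H_{\Lambda'}(\cdot)))$, and that one ``decouples the two factors by subdividing $\T$ into arcs on which the relevant branch is monotone.'' The problem is that you have not produced the relevant branch. The operator $H_{\Lambda'}$ has $|\Lambda'|$ Rellich branches; to speak of a single branch near $E$ with a two-arc monotonicity structure and a Morse lower bound, one must \emph{construct} it inductively and prove that it inherits these properties from $v$. This is exactly the content of the paper's Rellich tree $\{\mathcal{C}_n\}$: at each scale one builds a local eigenvalue $E_n(\theta)$ of a carefully chosen block $H_{B_n}$, proves $|E_n'|\le\delta\Rightarrow|E_n''|\ge 2$ (Hypothesis~\ref{h3}) and the two-monotonicity structure (Hypothesis~\ref{h2}) via Feynman--Hellmann and eigenvalue separation, and only then can one mimic the argument of Lemma~\ref{C2} to get the $\delta^{1/2}$ sub-level bound \emph{for that one branch}. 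Your ``branch by branch'' sub-level estimate, stated uniformly in $\Lambda$, is not available without this construction: naively summing over $|\Lambda|$ branches destroys the bound.

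The second, sharper gap is what happens when the gap actually collapses. In the paper's language this is \textbf{Type}~\ref{t3}: two Rellich children $E_{n,>}\ge E_{n,<}$ with $\inf E_{n,>}-\sup E_{n,<}\le 3\delta_n$, possibly touching. Here your ``subdivide into monotone arcs and use Diophantine separation'' does not apply, because the two branches can resonate \emph{vertically} at the same $\theta$, not at a Diophantine shift. The paper's device is to build, via the Cauchy interlacing theorem applied to the compressions $Q_\pm H_{B_{n+1}}Q_\pm$, auxiliary curves $\lambda_\pm$ with $E_{n,<}\le\lambda_\pm\le E_{n,>}$ and $\pm\lambda_\pm'\gtrsim\delta_{n-2}$ (Lemmas~\ref{inter1}, \ref{it}). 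These interlacing curves substitute for the open gap: they force the pair $(E_{n,>},E_{n,<})$ to separate linearly away from a single crossing point $\theta_s$, which is what lets the Morse/two-monotonicity structure survive to the next scale (Propositions~\ref{1215n}, \ref{1215n+}) and what bounds the bad-phase set $K_n$ in \eqref{bad}. Nothing in your proposal produces this substitute, and without it the induction on ``good boxes'' cannot close once a collapse occurs.

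For part~3, the paper does not average in $\theta$ as you suggest; it works at fixed $\theta$ and uses a deformation lemma (Lemma~\ref{IDSL}, after Bourgain) to bound the eigenvalue count in $[E-\eta,E+\eta]$ by $|\Lambda\Delta\Lambda'|$ for a set $\Lambda'$ with $\|G_{\Lambda'}(E)\|\le(2\eta)^{-1}$, then constructs $\Lambda'$ by excising the $(n+1)$-resonant blocks. Your averaging route could perhaps be made to work, but it again presupposes the single-branch sub-level bound that only comes from the Rellich-tree construction.
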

	
	\begin{rem}
		Definitely, the Anderson localization for the present  model has been obtained previously by Dinaburg \cite{Din97}. However, it is unclear if the stronger version of localization called arithmetic Anderson localization {\footnote{This means  the Anderson localization holds for phase $\theta \in\T$  in a set of full measure with explicitly arithmetic descriptions (c.f. \cite{CSZ23a}).}} still holds true if the potential is symmetric via the method of \cite{Din97}. Our method can   directly  imply  the arithmetic Anderson localization once the potential is symmetric.
	\end{rem}
	\begin{rem}
		For recent  progress on dynamical localization   for QP operators on $\Z^d$ with symmetric potentials for fixed Diophantine type frequency, we refer to \cite{GYZ23, CSZ23a, CSZ23b}. 
	\end{rem}
	\begin{rem}
		For results on the regularity of  the IDS of  QP  Schr\"odinger operators on $\Z^d$, we refer to \cite{Sch01, Bou07, Liu22, GYZ22, CSZ23a, CSZ23b}. 
	\end{rem}
	\subsection{Previous works}
	To prove the Anderson localization of QP Schr\"odinger operators, one has to deal with the small divisors difficulty. This inspires the usage of the perturbative KAM type method. Both the  aforementioned works \cite{Sin87} and \cite{FSW90} used the perturbative KAM type method  motivated by Fr\"ohlich-Spencer \cite{FS83}  in the random case, while  their employed techniques to eliminate the double resonances  are  significantly  different. First of all, Sinai  developed a perturbative iteration scheme to construct  approximate   Rellich functions. Instead, Fr\"ohlich-Spencer-Wittwer focused on  Green's function estimates  based on a multi-scale type analysis (MSA).  Second,  Sinai's construction relies on  gap  estimates  of  shifts of  Rellich functions by the Diophantine frequency,   and thus  is a  global one.  In  \cite{FSW90},  however, the authors mainly did the analysis in a neighborhood of given energy $E\in \R$.   It turns out the method  of Sinai is completely  free from the {\it symmetry} restriction of the potential $v\in C^2(\T;\R)$. While the work of \cite{FSW90} requires  the potential to be  symmetric,  it proves a more stronger result, namely, the arithmetic Anderson localization.  { Recently, Forman-VandenBoom \cite{FV21}  removed  the symmetry condition of \cite{FSW90} and  proved  both the Cantor spectrum and Anderson localization  by  combining some ideas of \cite{FSW90} and  a celebrated Cauchy interlacing argument.}  A perturbative (KAM reducibility) proof  of pure point spectrum for more general one-dimensional  QP Schr\"odinger operators was  obtained previously  by Eliasson \cite{Eli97}.  Finally, we want to mention that the breakthrough works  of Jitomirskaya \cite{Jit94, Jit99},  in which a non-perturbative  method was first  developed for the almost Mathieu operators, have  motivated  considerable progress on Anderson localization for one-dimensional QP  Schr\"odinger operators (c.f. \cite{BG00, BJ02, Bou05,  AYZ17, JL18, JL23, Liu23} for just a few).   
	
	Clearly, all the methods of \cite{Sin87,FSW90,FV21} restrict  to  the  one-dimensional case: in this case,  as mentioned above,  all  gaps  of eigenvalues of the truncated operators have   a priorly  positive lower bound, which plays an essential role in dealing with resonances. If one tries to extend the proofs of \cite{Sin87,FSW90,FV21} to multi-dimensional cases, there comes serious difficulty, namely,   some  spectral gaps may be collapsed.   In the spirit of \cite{FSW90},  Surace \cite{Sur90}  first  handled some QP  Schr\"odinger operators on $\Z^2$ with the symmetric  potentials  having exactly one non-degenerate critical point.  Very recently, the authors \cite{CSZ23a} developed the methods of \cite{FSW90, Sur90}  further,  and  completely extended the result of \cite{FSW90} to $\Z^d$ for arbitrary $d\geq1$. Definitely, in \cite{CSZ23a}, we also require  the potentials to satisfy the  symmetry  condition.  We should remark that the ideas of Sinai \cite{Sin87} have been extended by Chulaevsky and Dinaburg \cite{CD93, Din97}  to the $\Z^d$ and long-range case without the symmetry restriction, so it is natural to ask whether the symmetry condition is necessary or not in the analysis of \cite{CSZ23a}. In the present, we completely remove the symmetry condition of \cite{CSZ23a}, thus extend  the work of \cite{FSW90} to both multi-dimensional lattice and asymmetric potentials  cases. 
	We also mention the works  Bourgain-Goldstein-Schlag \cite{BGS02}, Bourgain \cite{Bou07} and Jitomirskaya-Liu-Shi \cite{JLS20} (c.f. \cite{Shi22,Liu22} for more recent results), in which  the  Anderson localization was proved  for multi-dimensional   QP  Schr\"odinger operators with general analytic potentials defined on multi-dimensional torus by combining the MSA method with subharmonic function estimates and semi-algebraic geometry theory (the results are a bit weaker in the sense that ``good'' frequencies are not always Diophantine type).  The arithmetic Anderson localization was  also  previously obtained for QP  Schr\"odinger operators on $\Z^d$ with the cosine potential via Aubry dual type localization-reducibility method by Ge-You  \cite{GY20} and direct Green's function estimates method \cite{CSZ23b} by the authors.   
	\subsection{New ingredients  of the proof}
	Our  proof relies heavily on ideas of  \cite{CSZ23a, Sur90, FSW90,FV21},  which  is  based  on inductive analysis on scales and  Green’s function estimates.  Once such estimates were obtained, the proof of both
	localization and  the  $(\frac{1}{2}-)$-H\"older regularity of the IDS just follows in a standard way.  
	
	{ In \cite{FSW90}, the authors assumed $d=1$ and the potential $v$ is even. They  developed a perturbative iteration scheme depending on locally defined, symmetric Rellich functions to analyze the resonances. Both the symmetry condition and $d=1$ in their proof are crucial since the  symmetry of $v$ forces the resonances to be located at $-k\omega/2$ mod $1$ (the rotation being $\theta\mapsto \theta+\omega$).   Then  the Rellich curves  are well-separated by an  eigenvalue separation lemma of one-dimensional Schr\"odinger operators since the critical points of these curves are located exactly  at these points. Combining the similar technique of \cite{FSW90} and some ideas of \cite{Sur90}, \cite{CSZ23a} successfully extended \cite{FSW90} to $d>1$.  Definitely, dropping the symmetry condition of $v$, even for $d=1$,   is a highly nontrivial matter of which the proof   needs new ideas.  This was recently  accomplished by Forman-VandenBoom \cite{FV21},  where they developed a new inductive procedure by constructing the Rellich functions  tree to handle the resonances, combined with  the  interlacing idea.}  Compared to one-dimensional case, the inductive analysis on scales for  multi-dimensional  operators  requires  to deal with   the resonance of a pair of  collapsed  Rellich functions in the vertical direction (if the potential $v$ is symmetric, the argument can be  largely simplified in \cite{CSZ23a}  as we will specify below).   Here we construct auxiliary functions  to avoid the horizon self-resonance of the these two collapsed Rellich functions by the Diophantine condition.  { Such a construction (of  auxiliary functions)  needs to extend  the interlacing argument of  \cite{FV21}  across multiple scales. We will prove  that the interlacing argument  can be applied in some sense to  level crossings in multi-dimensional setting, and that in this case   the presence of collapsed gaps in the iteration procedure are actually permitted.} This may provide  a new candidate method for handling  effective  localization for multi-dimensional QP operators.  Besides the proof of  localization by establishing the Green's function estimates, we also prove  the $(\frac12-)$-H\"older continuity of the IDS by using some ideas of Bourgain \cite{Bou00} (for more results on the regularity of the IDS, we refer to \cite{CSZ23a} and references therein). 
	
	{ Precisely, in contrast  to the works  of \cite{CSZ23a,FV21},  our new   accomplishments   are}: 
	\subsubsection{Removing symmetry}
	In  \cite{CSZ23a}, the authors proved  analogous results  (even stronger arithmetic  localization) under the additional symmetry  assumption of the sampling function: $v(\theta)=v(-\theta)$.  This assumption is crucial in our  proofs because the symmetry can be inherited by the Rellich functions    $E_n^i(\theta)$ of  Dirichlet restriction $H_{B_n^i}(\theta)$, where $B_n^i$ is an $n^{\rm th}$ scale block symmetric about $c_n^i$. With this symmetry,  we  can analyze the eigenvalue separation case and the  level crossing case of  Rellich curves respectively basing on the predetermined symmetrical points   $\theta_s=-c_n^i\cdot \omega $ of any Rellich functions of $H_{B_n^i}(\theta)$.  We  have proven  $$m(c_n^i,c_n^j):=\min(\|(c_n^i-c_n^j)\cdot\omega\|,\|2\theta^*+(c_n^i+c_n^j)\cdot\omega\| )$$ is small for any two $n$-resonant sites $c_n^i$ and $c_n^j$ relative to $\theta^*,E^*$. On one hand, the smallness of the above $m$ function together with Diophantine condition provides crucial separation of any three resonant sites at scale $n$, which enables the induction to  proceed. On the other hand, noting that this $m$ function does not depend on energy $E^*$,  one can define uniformly in energy the bad sets  of double resonant phases  explicitly  at each scale  and prove the  arithmetic version of localization. 
	Without symmetry, no such function $m$ can be  defined to be independent of $E^*$ and the above argument seems to  lapse. 
	
	In our present construction, the two-monotonicity interval structure   (i.e., whose domain can be divided into two closed intervals with disjoint interiors and  opposite-signed derivatives) and Morse condition  (i.e., lower bound of  $|E'|+|E''|$)  of Rellich functions, together with the  Diophantine condition provide the crucial separation of resonant sites. The bad set of double resonant phases is   defined via the preimage of double resonant energy interval of each Rellich function $E_n$. By carefully controlling the number of Rellich functions of   each scale,  we can again construct a uniform bad set of phases  at each scale  (c.f. \eqref{bad}), whose limit superior has zero Lebesgue measure such that localization holds on its complement. Moreover, if $v$ satisfies the symmetry  condition,  all the Rellich functions can be constructed symmetrically and the bad set of phases we ultimately  eliminate is  exactly that in \cite{CSZ23a}, which gives another proof of the arithmetic  Anderson localization for the symmetric case. However,  without the symmetry  assumption of $v$ and since the symmetry of the Rellich functions is not clear, one may not hope an  explicitly arithmetic relation of the bad set of phases with the frequency.
	\subsubsection{Collapsed gaps of  Rellich curves in  multi-dimensions}
	In \cite{FV21}, the authors proved almost-sure Anderson localization and Cantor spectrum in the one-dimensional  case. In their paper, they established  a uniform local eigenvalue separation (i.e., the openness of  gaps)
	$$\inf E_{n+1,>}-\sup E_{n+1,<}\gg \delta_{n+1}$$
	for the Rellich children $E_{n+1,>}>E_{n+1,<}$ resolved from a double resonant region of a Rellich function $E_n$ via a Cauchy interlacing argument and the classical eigenvalue separation lemma for one-dimensional  Schr\"odinger operators (c.f. \cite{FSW90}).
	
	Forman-VandenBoom \cite{FV21}  emphasized  two  crucial benefits  of  the above  uniform local separation property:  {``First, this separation guarantees that different Rellich curves cannot resonate with one another, which enables the induction to proceed on each single Rellich function. The second important consequence of the separation is that the size of  gap  from it is sufficiently large to remain open throughout the inductive procedure, which yields Cantor spectrum.''}
	{    Obviously, the uniform local  separation is necessary for Cantor spectrum. Since Cantor spectrum is not a generic phenomenon in the  multi-dimensional case   (c.f. Bourgain \cite{Bou05}),  one may not hope such uniform local separation still holds in higher  dimensions. However, we will show that without  the uniform local  separation, the iteration procedure still works and  the   two-monotonicity interval structure and   Morse condition   of  Rellich functions are enough for  the iteration  and  localization.} In our induction, we  establish  the Morse condition and two-monotonicity interval structure of  Rellich functions as in \cite{FV21}, however, we allow a Rellich function may resonate with another one  (called its Rellich brother, c.f. \textbf{Type} \ref{t3}) in the vertical direction (i.e., gap collapsed). {Noticing that this type of  resonance  (from different curves) is caused by a double resonance of a single Rellich curve in a previous scale  (c.f. Lemma \ref{Ta}), we  consider
		the scale $m$ when the resonance first appears and carefully  show the next  generation (scale $n+1$) of  Rellich functions maintains  the  inductive structure (c.f. Remark \ref{xin}). The main technical point  is the ``multi-scale interlacing argument'' (Proposition \ref{mul}), where we extend Forman-VandenBoom's  interlacing argument to  a new context.} 
	\subsection{Outline of the proof}
	{ In the remainder of this section, we will show the  general  perturbative scheme of the proof.}
	By the perturbation theory of differentiable one-parameter self-adjoint
	operator families, $H_\Lambda(\theta)$ exhibits $|\Lambda|$  Rellich functions with $C^1$ smoothness, which are locally $C^2$ when being simple eigenvalues of $H_\Lambda(\theta)$. 
	
	For localization, from the well-known Schnol's Lemma, generalized eigenvalues of   $H$  (i.e., energy $E$ with solutions $\psi$ to $H\psi=E\psi$ with polynomial bound) are spectrally dense. Thus to prove Anderson localization, it suffices to show  every generalized eigenfunction $\psi$ in fact decays exponentially. Dynamical  localization can be  proved via a Hilbert-Schmidt argument if we have established  quantitative exponential decay of  eigenfunctions. We recall the Poisson formula for discrete Schr\"odinger operators 
	$$\psi(x)=-\sum_{z,z'}G_{\Lambda}(\theta,E;x,z)\Gamma_{z,z'} \psi(z'), \ x\in \Lambda , E\notin \sigma(H_\Lambda(\theta)),$$  
	where 
	the Green's function 
	$$G_{\Lambda}(\theta,E;x,z):=\left\langle \textbf{e}_x, G_{\Lambda}(\theta,E)\textbf{e}_z\right\rangle= \langle \textbf{e}_x, (H_\Lambda(\theta)-E)^{-1}\textbf{e}_z\rangle $$
	and the coupling  operator 
	\[\Gamma_{z,z'}:=\left\{\begin{aligned}
		&\varepsilon \quad \text{if } z\in \Lambda, z'\notin\Lambda, \|z-z'\|_1=1, \\
		&0 \quad \text{otherwise.} 
	\end{aligned}\right. \]
	If we can inductively construct ``good'' sets of increasing size on which we have  exponential off-diagonal  decay of Green's function for all generalized eigenvalues $E$, we can thus prove exponential decay of generalized eigenfunctions since the at most polynomial growth of generalized eigenfunctions can be absorbed by the  exponential   decay of Green's function, hence localization.
	
	For regularity of IDS,  we use ideas of Bourgain \cite{Bou00}:  By  a Hilbert-Schmidt argument, one can relate the operator norm of Green's function    to the number of eigenvalues  inside a certain interval $[E-\eta,E+\eta]$ of a finite restriction of $H$   (c.f. Lemma \ref{IDSL}). If we can construct ``good'' sets of increasing size on which we can handle the operator norm  of Green's function, then we can  prove the regularity of IDS.
	
	We recall the by-now classical resolvent identity argument of proving Green's function estimates   (results of this type  are well established in \cite{FS83} and \cite{Bou05}).  If one can cover $ \Lambda$ by  blocks   with resolvent bound (both operator norm growth and off-diagonal exponential decay estimates),  then one  can iterate the resolvent identity to prove  exponential off-diagonal  decay and operator norm bound of Green's function on  $\Lambda$.
	
	In our construction, the blocks are chosen to be translates of $B_n \ (n\geq 0)$, where $B_n$ is a $n^{\rm th}$ scale block centered at origin by carefully constructed.  Noticing  that  the  transition invariance of our model
	$$H_{B_n+x}(\theta)=H_{B_n}(\theta+x\cdot \omega )$$ and the  resolvent bound for self-adjoint operators
	$$\|(H_{B_n}(\theta)-E)^{-1}\|=\frac{1}{\operatorname{dist}(\sigma(H_{B_n}(\theta)), E)},$$
	we can relate the resolvent bound of $B_n+x$ to  the  Rellich functions of $H_{B_n}$ near the energy $E$. This is why we spend lots of efforts to construct such collection of Rellich functions.   
	
	At the $0^{\text{th}}$ scale, the block $B_0$ is exactly the single origin  site and the Rellich function  of $H_{B_0}(\theta)$ is exactly the sampling function $v(\theta)$. In each  step of  our induction, we  construct a bigger block $B_{n+1}$ and extend the locally defined  Rellich function $E_n$ of $H_{B_n}$ by a small perturbation to $E_{n+1}$ of $H_{B_{n+1}}$. The difficult aspect of the induction is verifying how the  Rellich functions inherit the Morse condition and two-monotonicity interval structure from the sampling function $v$. 
	
	At the first scale, we choose a scale length $l_1^{(1)}=|\log\varepsilon_0|^4$  and divide the energy axis into several overlapping intervals of size $\delta_0^{\frac{1}{100}}$, where $\delta_0:=\varepsilon_0^\frac{1}{20}$. Each energy interval can be characterized as double resonant, if it contains some $e_k$ satisfying $e_{k}=v(\theta)=v(\theta+k\cdot \omega )$ for some $\theta\in \T$ and $0<\|k\|\leq 10l_1^{(1)}$, or simple resonant if it  does not  (c.f. Proposition \ref{coverpr}). By the  two-monotonicity interval structure and   Morse condition of $v$,   it follows that any third $x$ with $\|x\|_1\leq (l_1^{(1)})^4$ such that $v(\theta+x\cdot \omega )$ recurs  to a $\delta_0^{\frac{1}{400}}$-neighborhood of $v(\theta)$ is excluded.    Each function $E_1$ is a Rellich function  of $H_{B_1}$ where $B_1\subset\Z^d$ is a cube of length $l_1^{(1)}$ if the energy interval is simple resonant, or $l_1^{(2)}=(l_1^{(1)})^2$ if the energy interval is double resonant.   For simple resonant interval, we show that any eigenfunction of $H_{B_1}$ with eigenvalue near $v$ is almost localized at origin. By Feynman-Hellman type  formulas  (c.f. Lemma \ref{daoshu}), it follows   that $H_{B_1}$ has a single Rellich function $E_1$ which is $C^2$ well approximated by $v$. For double resonant interval, we show that any eigenfunction of $H_{B_1}$ with eigenvalue near $v$ is almost localized at origin and the above $k$ satisfying $v(\theta+k\cdot \omega)=v(\theta)$. In this case, $v$ resolves as a pair of cosine-like Rellich functions $E_{1,>}\geq E_{1,<}$ of $H_{B_1}$. Unlike  one-dimensional case  as in \cite{FV21}, we are not able to show  the uniform local separation of these two Rellich functions. However, relying on the interlacing idea of \cite{FV21}, we can prove $E_{1,>}$ and  $E_{1,<}$ are interlaced by two auxiliary curves $\lambda_{\pm}$  (c.f. Proposition \ref{inter}), which enables our induction. 
	
	Assuming the constructed Rellich functions satisfy the  Morse condition and two-monotonicity interval structure, we classify  these functions into three types  (c.f. Hypothesis \ref{h4}).  The inductive argument analogous to the first scale proceeds on each such constructed Rellich function. We analyze all  cases to prove the next generation of Rellich functions maintain  the  Morse condition and two-monotonicity interval structure and belong to one of  the above three types, which  thus completes  the induction  (c.f. Figure \ref{f1}).  
	
	Having constructed the $n^{{\rm th}}$ generation of Rellich functions  class $\mathcal{C}_n$, we  give a criteria  of  $n$-good set $\Lambda\subset\Z^d$ by these Rellich functions  (c.f. Hypothesis \ref{h6}) and establish Green's function estimates on any finite $n$-good set via a multi-scale resolvent iteration. 
	
	The  final step to  establish localization is  eliminating  double resonances. At each scale,  we define the bad sets of phases where  double resonances occur  (c.f. \eqref{bad}). By utilizing the Morse condition of Rellich functions and carefully controlling the number of Rellich functions of  each scale, we ensure the bad sets of phases have summable measures.   Thus their limit   superior $K$ where double resonances occur infinitely often  has zero Lebesgue measure by Borel-Cantelli lemma. Our full-measure set on which localization occurs is the complement $\Theta=\T\setminus K$.  Given a phase $\theta\in \Theta$ and  a generalized eigenvalue $E$ of $H(\theta)$, we show that eventually there exists  an $n$-resonant site in the  $n$-scale cube centered at origin. Since double resonances have been eliminated, we show that there exists an $n$-good annulus of $(n+2)$-scale on which we have off-diagonal Green's function decay. Thus we can employ Poisson formula on this  annulus to exhibit the exponential decay of generalized eigenfunctions and complete the proof of  localization by Schnol's Lemma.  
	
	Moreover, fixing $\theta\in \T,  E\in \R$ and $\eta>0$,  for any sufficiently large cube $\Lambda_N$, we can  construct  an  $\eta^{\frac{1}{2}-}|\Lambda_N|$-deformation $\Lambda'$ with resolvent bound $\|G_{\Lambda'}(\theta,E)\|\leq \frac{1}{2}\eta^{-1}$, which together with Lemma \ref{IDSL} shows the  (finite volume version) $(\frac{1}{2}-)$-H\"older regularity of the IDS.
	\begin{figure}[htp]
		\begin{tikzpicture}[>=latex, scale=0.95]
			\draw  (-7,0)node{$\mathcal{C}_0$};
			\draw  (-7,-2)node{$\mathcal{C}_1$};
			\draw  (-7,-4)node{$\mathcal{C}_2$};
			\draw  (-7,-6)node{$\mathcal{C}_3$};
			\draw  (0,0)node{{\Large $v$}};
			\draw  (-4,-0.9)node{$\mathcal{C}_1^{(1)}(L=l_1^{(1)})$};
			\draw  (4,-0.9)node{{$\mathcal{C}_1^{(2)}(L=l_1^{(2)})$}};
			\draw  (-3,-2) ellipse  (1.5 and 0.8);
			\draw[dashed]  (-3,-1.2)-- (-3,-2.8);
			\draw  (-3.6,-1.3)[below]node{{\footnotesize \textbf{Type}1}};
			\draw  (-2.4,-1.3)[below]node{{\footnotesize \textbf{Type}2}};
			\draw  (0.2,-0.1)to (2.8,-1.2);
			\draw  (-0.2,-0.1)to (-2.8,-1.2);
			\draw  (-3.8,-2.1)node{{\footnotesize  $E_1^{(1)},\cdots$}};
			\draw  (-2.2,-2.1)node{{\footnotesize $\tilde{E}_1^{(1)},\cdots $}};		
			\draw  (3,-2) ellipse  (2 and 0.8);
			\draw[dashed]  (3,-1.2)-- (3,-2.8);
			\draw  (2.4,-1.3)[below]node{{\footnotesize \textbf{Type}2}};
			\draw  (3.6,-1.3)[below]node{{\footnotesize \textbf{Type}3}};
			\draw  (2,-2.1)node{{\footnotesize  $E_1^{(2)},\mathcal{E}_1^{(2)},\cdots$}};
			\draw  (4.05,-2.1)node{{\footnotesize $E_{1,<}^{(2)},E_{1,>}^{(2)}\ \cdots$}};
			\draw (3.03,-1.8 ) rectangle (4.5,-2.4);
			\draw  (-5.5,-4) ellipse  (0.7 and 0.4); \draw  (-3.8,-4) ellipse  (0.7 and 0.4); \draw  (-1.5,-4) ellipse  (0.7 and 0.4);  \draw  (1,-4) ellipse  (0.7 and 0.4);  \draw  (4,-4) ellipse  (1.2 and 0.5);   \draw  (5.4,-4) ellipse  (1.2 and 0.5);  
			\draw  (-5.5,-4)node{{\scriptsize $\mathcal{C}^{(1)} (E_1^{(1)})$}};		
			\draw  (-3.8,-4)node{{\scriptsize $\mathcal{C}^{(2)} (E_1^{(1)})$}};		
			\draw  (-4.2,-2.25)to (-5.4,-3.7);\draw  (-4.15,-2.25)to (-3.7,-3.7);		
			\draw (-1.5,-4)node{{\scriptsize$\mathcal{C}^{(1)} (\tilde{E}_1^{(1)})$}};
			\draw (-2.57,-2.28)to (-1.6,-3.7);
			\draw (1,-4)node{{\scriptsize$\mathcal{C}^{(1)} (E_1^{(2)})$}};
			\draw (1.25,-2.25)to (0.8,-3.7); \draw (2,-2.35)to (2.25,-3.9);
			\draw (2.25,-4)node{$\cdots$};
			\draw (3.5,-4)node{ {\tiny $\mathcal{C}^{(1)} (E_{1,<}^{(2)})$}};
			\draw (5.9,-4)node{ {\tiny $\mathcal{C}^{(1)} (E_{1,>}^{(2)})$}};
			\draw (4.7,-4)node{{\tiny\textbf{Type}3} };
			\draw (3.57,-2.32)to (3.8,-3.55);
			\draw (4.27,-2.32)to (5.5,-3.55);
			\draw (-5.7,-4.35)to (-6,-5.8)node[below]{$\cdots$}; \draw (-5.4,-4.35)to (-5.1,-5.8)node[below]{$\cdots$}; 
			\draw (-3.8,-4.35)to (-3.8,-5.8)node[below]{$\cdots$};	\draw (-1.5,-4.35)to (-1.5,-5.8)node[below]{$\cdots$};	
			\draw (1.1,-4.35)to (1.1,-5.8)node[below]{$\cdots$}; \draw (3.8,-4.5)to (3.5,-5.8)node[below]{$\cdots$};
			\draw (4.7,-4.3)to (4.7,-5.8)node[below]{$\cdots$}; \draw (5.6,-4.5)to (5.9,-5.8)node[below]{$\cdots$}; 
		\end{tikzpicture}
		\caption{After completing the construction, we have a tree  of Rellich functions $\{\mathcal{C}_n\}_{n\geq 0}$.  The superscript $ (j),j\in \{1,2\}$, of each child indicates the resonance  type relative to the parent.}
		\label{f1}
	\end{figure}
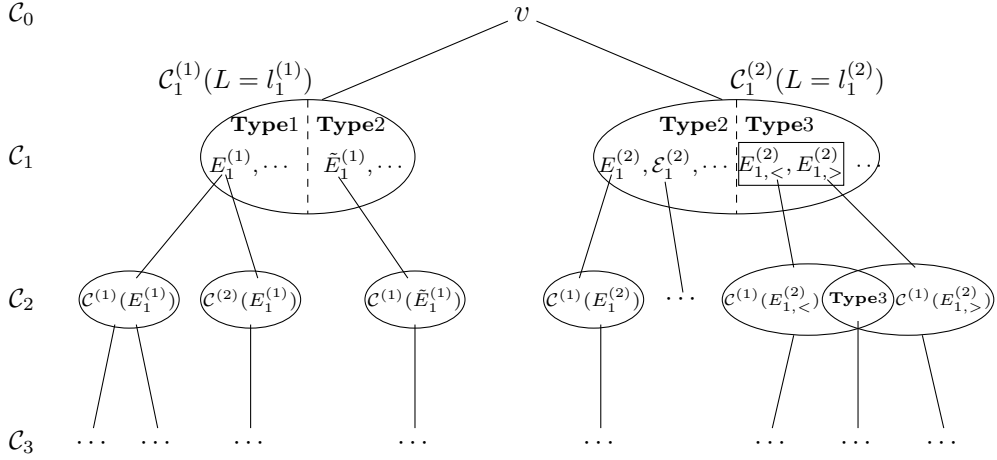
	\subsection{Organization of the paper}
	The paper is organized  as follows. We will introduce some basic notation and  foundational lemmas for use throughout the paper in \S \ref{pre}. The center arguments of this paper, constructing the Rellich functions and establishing Green's function estimates via a multi-scale induction analysis are   presented in \S \ref{multi}. In \S \ref{mainp}, we employ the results from \S \ref{multi} to finish the proof of the Main Theorem.
	\section{Preliminaries}\label{pre} 
	In this section, we explain the meaning of some notation  and  collect the foundational lemmas in this paper. The proofs of these lemmas can be found in Appendix. 
	
	\begin{rem}\label{ftnote}
		Without loss of  generality, we  assume that $\theta_M$ is the maxima point and $\theta_m$ the minima one of  $v$. Since we are considering  small $\varepsilon$, we  further assume that there exists some $a $ with   $0<a<\|\theta_M-\theta_m\|/10$, such that $|v''(\theta)|>3$ in $\{\theta\in \mathbb{T}:\ \|\theta-\theta_M\|<a\}\cup\{\theta\in \mathbb{T}:\ \|\theta-\theta_m\|<a\}$,  and $|v'(\theta)|>3$ in  $\{\theta\in \mathbb{T}:\ \|\theta-\theta_M\|\geq a\}\cap\{\theta\in \mathbb{T}:\ \|\theta-\theta_m\|\geq a\}$. Under these assumptions, we denote $$D=\sup_{\theta\in\T}\max(|v(\theta)|,|v'(\theta)|,|v''(\theta)|).$$ 
	\end{rem}

	We   denote by $\Lambda_L$  the set $\{x\in \Z^d:\ \|x\|_1\leq L\}$ for any $L>0$,  and   $A\lesssim B$ if $A\leq C (d,\tau,\gamma,D) B$, where  $C (d,\tau,\gamma,D)$ is an absolutely constant only depending on the dimension $d$, {Diophantine parameters} $\tau,\gamma$ and the $C^2$ norm of the sampling function $v$.
	
	We now list  some important lemmas, whose proofs   are appended.
	
	We begin with a lemma for finite-dimensional self-adjoint operators, which is useful to show the existence of  eigenvalue near a certain energy $E^*$.  
	\begin{lem}[Trial wave function]\label{ll} Let $H$ be a self-adjoint operator on a finite dimensional   Hilbert space  and $E^*\in \R$. If there exist $m$ normalized orthonormal functions $\psi_k\  (1\leq k\leq m )$ such that  $\| (H-E^*)\psi_k\|\leq\delta$ for some $\delta>0$ and all $1\leq k\leq m$, then $H$ has $m$ eigenvalues $ E_{k}\  (1\leq k\leq m )$ counted in multiplicity satisfying $\sum_{k=1}^{m} (E_k-E^*)^2\leq m\delta^2$. We call these  $\psi_k$   trial functions.
	\end{lem}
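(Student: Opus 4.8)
The plan is to prove this via the standard trial-function / variational argument. The hypothesis gives us $m$ orthonormal vectors $\psi_1,\dots,\psi_m$ with $\|(H-E^*)\psi_k\|\leq\delta$, and we want to locate $m$ eigenvalues of $H$ (with multiplicity) near $E^*$. First I would let $V=\mathrm{span}\{\psi_1,\dots,\psi_m\}$, an $m$-dimensional subspace, and let $P$ denote the orthogonal projection onto $V$. Since $H$ is self-adjoint on a finite-dimensional space, diagonalize it: write $H=\sum_j \mu_j Q_j$ where $\mu_j$ are the eigenvalues and $Q_j$ the spectral projections, so that for any unit vector $\phi$ one has $\|(H-E^*)\phi\|^2=\sum_j(\mu_j-E^*)^2\|Q_j\phi\|^2$.

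The key step is to relate $\sum_{k=1}^m(E_k-E^*)^2$, where $E_1,\dots,E_m$ are the $m$ eigenvalues of $H$ that are closest to $E^*$ (counted with multiplicity), to the quantity $\sum_{k=1}^m\|(H-E^*)\psi_k\|^2$. I would do this by a min-max / interlacing argument applied to the operator $(H-E^*)^2$, which is positive semidefinite with eigenvalues $(\mu_j-E^*)^2$. Indeed, let the eigenvalues of $(H-E^*)^2$ be ordered $0\leq\nu_1\leq\nu_2\leq\cdots$; these are exactly the numbers $(\mu_j-E^*)^2$ reordered, and $\nu_1,\dots,\nu_m$ correspond to the $m$ eigenvalues of $H$ nearest $E^*$, i.e. $\nu_k=(E_k-E^*)^2$ after suitable labeling. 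By the Courant–Fischer characterization of the sum of the smallest $m$ eigenvalues of a positive semidefinite operator,
\[
\sum_{k=1}^m\nu_k=\min_{\dim W=m}\ \mathrm{Tr}\bigl(P_W(H-E^*)^2P_W\bigr)\le \mathrm{Tr}\bigl(P(H-E^*)^2P\bigr),
\]
and taking $W=V$ with orthonormal basis $\{\psi_k\}$ gives $\mathrm{Tr}(P(H-E^*)^2P)=\sum_{k=1}^m\langle\psi_k,(H-E^*)^2\psi_k\rangle=\sum_{k=1}^m\|(H-E^*)\psi_k\|^2\le m\delta^2$. Combining, $\sum_{k=1}^m(E_k-E^*)^2=\sum_{k=1}^m\nu_k\le m\delta^2$, which is exactly the claim. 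One subtlety to address: the $E_k$ in the statement should be understood as the $m$ eigenvalues (with multiplicity) minimizing $\sum(E_k-E^*)^2$, equivalently the $m$ eigenvalues closest to $E^*$; I would state this identification explicitly so the conclusion is unambiguous.

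An alternative, perhaable more elementary route avoids the trace/Courant–Fischer machinery: argue by contradiction. Suppose fewer than $m$ eigenvalues of $H$ lie in the interval $I=[E^*-R,E^*+R]$ for a suitable $R$; then the spectral projection $\mathbf 1_I(H)$ has rank $<m$, so $V\cap \ker\mathbf 1_I(H)\neq\{0\}$ contains a unit vector $\phi=\sum c_k\psi_k$ with $\sum|c_k|^2=1$. For this $\phi$, on one hand $\|(H-E^*)\phi\|\ge R$ since all spectral mass of $\phi$ is outside $I$; on the other hand $\|(H-E^*)\phi\|\le\sum|c_k|\,\|(H-E^*)\psi_k\|\le\delta\sum|c_k|\le \delta\sqrt m$ by Cauchy–Schwarz. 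Choosing $R>\delta\sqrt m$ yields a contradiction, giving at least $m$ eigenvalues within distance $\delta\sqrt m$ of $E^*$; a slightly more careful bookkeeping of which spectral blocks are hit recovers the sharper $\ell^2$-sum bound $\sum_{k=1}^m(E_k-E^*)^2\le m\delta^2$. I expect the only real obstacle to be purely expository: pinning down precisely which $m$ eigenvalues "$E_k$" refers to and making the counting-with-multiplicity rigorous when eigenvalues are degenerate — the analytic content is just self-adjointness plus Cauchy–Schwarz. I would present the Courant–Fischer version as the clean proof, since it handles multiplicities automatically.
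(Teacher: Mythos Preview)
Your proposal is correct and takes essentially the same approach as the paper: reduce to the positive operator $(H-E^*)^2$, compare its smallest $m$ eigenvalues to those of the compression $P(H-E^*)^2P$ onto $\mathrm{span}\{\psi_k\}$ via the min-max principle, and then bound the trace of the compression by $\sum_k\|(H-E^*)\psi_k\|^2\le m\delta^2$. Your remark that the $E_k$ should be identified as the $m$ eigenvalues of $H$ closest to $E^*$ (with multiplicity) is exactly the implicit convention in the paper's proof.
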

	
	As a corollary,  Lemma \ref{ll} immediately gives us 	  
	\begin{lem}\label{trialcor}
		If there  exists a trial function such that $\|\psi\|=1$ and   $\| (H-E^*)\psi\|\leq\delta$, then $H$ has at least one eigenvalue in $|E-E^*|\leq\delta.$ If there exist two orthogonal trial functions such that $\|\psi_1\|=\|\psi_2\|=1,\langle\psi_1,\psi_2\rangle=0$ and $\| (H-E^*)\psi_1\|\leq\delta,
		\| (H-E^*)\psi_2\|\leq\delta$,  then $H$ has at least two eigenvalues in $|E-E^*|\leq\sqrt{2}\delta$.
	\end{lem}
	Next we provide a lemma concerning  $C^2$ Morse functions. This  lemma,  not only together with Diophantine condition, provides   crucial separation of the resonant sites in our induction but also gives an upper bound estimate of  the measure of bad phases in proving localization. 
	\begin{lem}[Properties of Morse function]\label{C2}
		Let $E (\theta)\in C^2([a,b])$ with a critical point $\theta_s\in  (a,b)$.  Suppose  that there exists $\delta>0$ such that  $|E' (\theta)| \leq \delta$ implies $|E'' (\theta)| \geq 2$ with  a unique sign for these $\theta${,
			that is, if we denote $X:=\{\theta\in [a,b]: \  |E' (\theta)| \leq \delta \}$, then either $\inf_{\theta\in X}E'' (\theta)\geq 2$ or $\sup_{\theta\in X}E'' (\theta)\leq -2$}. If $\theta_1$ and $\theta_2$ are both on the right of $\theta_s$ or both on the left of $\theta_s$,   then 
		$$
		\left|E\left(\theta_2\right)-E\left(\theta_1\right)\right| \geq \frac{1}{2} \left|\theta_2-\theta_1\right|^2
		$$
		provided $|\theta_1-\theta_2|\leq\delta$.
		Moreover,
		$$
		|E'(\theta)| \geq \min 
		(\delta,|\theta-\theta_s|) .
		$$\end{lem}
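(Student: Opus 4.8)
The plan is to exploit the second-order Taylor expansion of $E$ around points where the derivative is small, using the hypothesis that $|E'(\theta)|\leq\delta$ forces $|E''(\theta)|\geq 2$ with a fixed sign, say $E''\geq 2$ (the case $E''\leq -2$ is symmetric, replacing $E$ by $-E$). First I would treat the chord estimate. Assume $\theta_s\leq\theta_1<\theta_2$ (the left-of-$\theta_s$ case is identical after reflection $\theta\mapsto -\theta$). The key observation is that on $[\theta_s,b]$ the derivative $E'$ is nonnegative: indeed $E'(\theta_s)=0$, and if $E'$ were negative somewhere to the right of $\theta_s$, then by continuity there is a point with $E'(\theta)\in(-\delta,0)$, hence $|E'(\theta)|\leq\delta$, forcing $E''(\theta)\geq 2>0$; a short argument (looking at the first place $E'$ dips below a small negative value, or invoking that $E'$ can only cross zero downward where $E''\leq 0$) shows $E'\geq 0$ throughout $[\theta_s,b]$. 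Consequently, for $\theta\in[\theta_s,b]$ with $E'(\theta)\leq\delta$ we have $E''(\theta)\geq 2$.

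Now for the chord: write, for $\theta\in[\theta_1,\theta_2]$,
\[
E'(\theta)=E'(\theta_1)+\int_{\theta_1}^{\theta}E''(s)\,ds.
\]
I would argue that on all of $[\theta_1,\theta_2]$ we in fact have $E'(\theta)\leq\delta$: if not, let $\theta^\dagger$ be the first point where $E'(\theta^\dagger)=\delta$; on $[\theta_1,\theta^\dagger]$ the derivative stays $\leq\delta$ so $E''\geq 2$ there, but then $E'$ is increasing on that interval and $E'(\theta_1)\geq E'(\theta^\dagger)-2(\theta^\dagger-\theta_1)\geq\delta-2(\theta_2-\theta_1)$; combined with the size constraint $|\theta_1-\theta_2|\leq\delta$ one controls things — actually the cleanest route is: since $E'\geq 0$ on $[\theta_s,b]$ and $E$ is monotone there, $E(\theta_2)-E(\theta_1)=\int_{\theta_1}^{\theta_2}E'(\theta)\,d\theta\geq 0$, and on the subinterval where $E'\leq\delta$ we have $E''\geq 2$, giving $E'(\theta)\geq E'(\theta_1)+2(\theta-\theta_1)\geq 2(\theta-\theta_1)$. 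Hence
\[
E(\theta_2)-E(\theta_1)=\int_{\theta_1}^{\theta_2}E'(\theta)\,d\theta\geq\int_{\theta_1}^{\theta_2}2(\theta-\theta_1)\,d\theta=(\theta_2-\theta_1)^2\geq\tfrac12(\theta_2-\theta_1)^2,
\]
where the step $E'(\theta)\geq 2(\theta-\theta_1)$ is legitimate because $|\theta_2-\theta_1|\leq\delta$ keeps $E'$ in the regime $E'\leq\delta$ throughout $[\theta_1,\theta_2]$ — this is the point needing a careful continuity/bootstrap argument, and it is the main obstacle: one must rule out $E'$ escaping above $\delta$ inside $[\theta_1,\theta_2]$, which follows since $E'(\theta_1)\geq 0$ and $E'$ grows at rate $E''$ but any escape above $\delta$ would require $E'$ to have already exceeded $\delta$, contradicting minimality of the escape point together with $\theta_2-\theta_1\leq\delta$ and $E'(\theta_1)$ being as small as $0$.

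Finally, for the gradient lower bound $|E'(\theta)|\geq\min(\delta,|\theta-\theta_s|)$: fix $\theta$ and suppose $|E'(\theta)|<\delta$ (otherwise the bound with $\delta$ is immediate). Then on the segment between $\theta_s$ and $\theta$, wherever $|E'|\leq\delta$ we have $E''\geq 2$ (fixed sign), so $E'$ is strictly increasing there with $E'(\theta_s)=0$; I would show the whole segment stays in the regime $|E'|\leq\delta$ (again by a continuity argument using $|E'(\theta)|<\delta$ at the endpoint $\theta$ and monotonicity), so that integrating $E''\geq 2$ from $\theta_s$ gives $|E'(\theta)|=|E'(\theta)-E'(\theta_s)|\geq 2|\theta-\theta_s|\geq|\theta-\theta_s|$. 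Combining the two cases yields $|E'(\theta)|\geq\min(\delta,|\theta-\theta_s|)$, completing the proof. The recurring technical subtlety throughout — and what I expect to be the crux — is the bootstrap showing that the relevant interval never leaves the region $\{|E'|\leq\delta\}$, which is exactly where the smallness hypothesis $|\theta_1-\theta_2|\leq\delta$ (resp. $|E'(\theta)|<\delta$) is used.
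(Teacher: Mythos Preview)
Your bootstrap claim for the chord estimate is the gap. You assert that $|\theta_2-\theta_1|\leq\delta$ ``keeps $E'$ in the regime $E'\leq\delta$ throughout $[\theta_1,\theta_2]$,'' but this is simply false: nothing prevents $E'(\theta_1)$ itself from already exceeding $\delta$. Take for instance $E''\equiv 2$, so $E'(\theta)=2(\theta-\theta_s)$; if $\theta_1=\theta_s+\delta$ then $E'(\theta_1)=2\delta>\delta$ and your Taylor step $E'(\theta)\geq 2(\theta-\theta_1)$ via $E''\geq 2$ is not justified on $[\theta_1,\theta_2]$ (you have no control on $E''$ there). Even when $E'(\theta_1)<\delta$, the derivative can escape above $\delta$ well before $\theta_2$: if $E'(\theta_1)=\delta/2$ and $E''\geq 2$ near $\theta_1$, then $E'$ reaches $\delta$ already at $\theta_1+\delta/4$. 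The circular-sounding sentence you flag as ``the main obstacle'' does not resolve this.

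What is missing is the one-line observation the paper uses: once $E'$ exceeds $\delta$ (to the right of $\theta_s$), it stays above $\delta$ forever, because at any first return to level $\delta$ one would have $E''\leq 0$ while the hypothesis forces $E''\geq 2$. Equivalently, the paper sets $\theta_d=\sup\{\theta:\ E''\geq 2 \text{ on }[\theta_s,\theta]\}$ and proves $E'\geq\delta$ for all $\theta>\theta_d$. With this in hand one must do a genuine case split (both points in $[\theta_s,\theta_d]$; both in $[\theta_d,\infty)$; straddling), and it is precisely the straddling case that costs the factor $\tfrac12$ in the statement---your argument, if it worked, would give the stronger bound $(\theta_2-\theta_1)^2$, which is a hint that a case is being missed. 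Your treatment of the derivative lower bound is essentially correct, but note that the ``continuity argument'' you invoke there is exactly this same ``once above $\delta$, always above $\delta$'' fact; you should state and prove it once at the outset and then use it in both parts.
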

	
	The final lemma  that we recall is a classical computation of  derivatives of simple eigenvalue for $C^2$ one-parameter families of symmetric matrices.
	\begin{lem}[Feynman-Hellman formulas]\label{daoshu}
		Let  $H(\theta)$ be a family of  finite dimensional self-adjoint operators with $C^2$ parametrization. Assuming  $E(\theta^*)$ is a simple eigenvalue of $H(\theta^*)$ and $\psi(\theta^*)$ is its corresponding   eigenfunction, then $E(\theta),\psi(\theta)$ can be $C^2$ parameterized in a neighbourhood of $\theta^*$. Moreover, for $\theta$ belonging to  the neighborhood, we have:
		\begin{itemize}
			\item[\textbf{(1)}.] $E'=\langle\psi,H'\psi\rangle$.
			\item[\textbf{(2)}.] $E''=\langle\psi,H''\psi\rangle-2\langle H'\psi,G^\perp(E)H'\psi\rangle$, where $G^\perp(E)$ denotes the Green's function $(H-E)^{-1}$ on the orthogonal complement of $\psi$.
			\item[\textbf{(3)}.] Let  $\mathcal{E}\neq E$ is another simple eigenvalue and $\Psi$ its   eigenfunction, then  we have 
			$$\langle H'\psi,G^\perp(E)H'\psi\rangle=-\frac{\langle\Psi,H'\psi\rangle^2}{E-\mathcal{E}}+\langle H'\psi,G^{\perp\perp}(E)H'\psi\rangle$$
			where $G^{\perp\perp}(E)$ denotes the Green's function $(H-E)^{-1}$ on the orthogonal complement of the space spanned by $\psi$ and $\Psi$.
		\end{itemize} 
	\end{lem}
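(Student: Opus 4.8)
\emph{Proof proposal.} The plan is to first record the $C^2$ dependence of the eigendata and then obtain all three identities by differentiating the eigenvalue equation $H(\theta)\psi(\theta)=E(\theta)\psi(\theta)$ and pairing against $\psi$. For the regularity statement, since $E(\theta^*)$ is a \emph{simple} eigenvalue of the self-adjoint operator $H(\theta^*)$ it is isolated; picking a small circle $\Gamma\subset\C$ enclosing $E(\theta^*)$ but no other eigenvalue of $H(\theta^*)$, the Riesz projection $P(\theta)=\frac{1}{2\pi i}\oint_\Gamma (z-H(\theta))^{-1}\,dz$ is well defined, of rank one, and $C^2$ in $\theta$ for $\theta$ near $\theta^*$ (the resolvent is analytic in $z$ and $C^2$ in $\theta$). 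Then $E(\theta)=\operatorname{tr}(H(\theta)P(\theta))$ and $\psi(\theta)=P(\theta)\psi(\theta^*)/\|P(\theta)\psi(\theta^*)\|$ furnish the desired $C^2$ branch with $\|\psi(\theta)\|=1$ and $H(\theta)\psi(\theta)=E(\theta)\psi(\theta)$; alternatively one simply invokes the standard differentiable perturbation theory for one-parameter self-adjoint families. Since $H(\theta)$ is self-adjoint for every $\theta$, so are $H'=\p_\theta H$ and $H''=\p_\theta^2 H$; and differentiating $\|\psi(\theta)\|^2\equiv1$ gives $\langle\psi,\psi'\rangle\equiv0$ (we work with real symmetric matrices, as in our model), i.e. $P\psi'=0$, where $P$ denotes the orthogonal projection onto $\operatorname{span}\{\psi\}$ and $Q=I-P$.

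Next I would establish (1) and the formula for $\psi'$. Differentiating $H\psi=E\psi$ once yields $H'\psi+H\psi'=E'\psi+E\psi'$; pairing with $\psi$ and using $\langle\psi,H\psi'\rangle=\langle H\psi,\psi'\rangle=E\langle\psi,\psi'\rangle$ together with $\langle\psi,\psi'\rangle=0$ and $\|\psi\|=1$ cancels the $\psi'$-terms and leaves $E'=\langle\psi,H'\psi\rangle$, which is \textbf{(1)}. Rewriting the same identity as $(H-E)\psi'=-(H'-E')\psi$ and observing that $(H'-E')\psi=H'\psi-\langle\psi,H'\psi\rangle\psi=QH'\psi\in\{\psi\}^\perp=\operatorname{Ran}(H-E)$, we apply $Q$ and use $(H-E)\psi'=(H-E)Q\psi'$ to get $(H-E)Q\psi'=-QH'\psi$. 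Since $H-E$ restricted to $\{\psi\}^\perp$ is a bijection of that subspace with inverse $G^\perp(E)$, and $P\psi'=0$, this pins down $\psi'=Q\psi'=-G^\perp(E)H'\psi$.

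Then \textbf{(2)} follows from one further differentiation: $H''\psi+2H'\psi'+H\psi''=E''\psi+2E'\psi'+E\psi''$; pairing with $\psi$, cancelling $\langle\psi,H\psi''\rangle=E\langle\psi,\psi''\rangle$ and the term $2E'\langle\psi,\psi'\rangle=0$, and using that $H'$ is self-adjoint, gives $E''=\langle\psi,H''\psi\rangle+2\langle H'\psi,\psi'\rangle$, into which we substitute $\psi'=-G^\perp(E)H'\psi$. For \textbf{(3)}, since $\psi$ and $\Psi$ are eigenvectors of the self-adjoint $H$ for distinct eigenvalues they are orthogonal, so $Q=Q_\Psi+Q'$ with $Q_\Psi$ the orthogonal projection onto $\operatorname{span}\{\Psi\}$ and $Q'$ onto $\{\psi,\Psi\}^\perp$; as $(H-E)^{-1}$ acts by the scalar $(\mathcal{E}-E)^{-1}$ on $\operatorname{span}\{\Psi\}$ we get $G^\perp(E)=(\mathcal{E}-E)^{-1}Q_\Psi+G^{\perp\perp}(E)$. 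Testing both sides against $H'\psi$, noting $\langle H'\psi,Q_\Psi H'\psi\rangle=\langle\Psi,H'\psi\rangle^2$ by reality and $(\mathcal{E}-E)^{-1}=-(E-\mathcal{E})^{-1}$, yields $\langle H'\psi,G^\perp(E)H'\psi\rangle=-\langle\Psi,H'\psi\rangle^2/(E-\mathcal{E})+\langle H'\psi,G^{\perp\perp}(E)H'\psi\rangle$.

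The computation is entirely classical and I expect no genuine obstacle; the only points needing care are the normalization (ensuring $\langle\psi,\psi'\rangle=0$, hence $P\psi'=0$) and the bookkeeping with the complementary projection $Q$ and the reduced resolvent $G^\perp(E)$—in particular the identifications $G^\perp(E)H'\psi=G^\perp(E)QH'\psi$ and the fact that $QH'\psi\in\operatorname{Ran}(H-E)$, which make the determination of $\psi'$ legitimate—both of which are handled above.
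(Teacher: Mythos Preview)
Your proof is correct and follows essentially the same route as the paper: differentiate the eigenvalue equation, use the normalization constraint $\langle\psi,\psi'\rangle=0$ to identify $\psi'=-G^\perp(E)H'\psi$, and for \textbf{(3)} split $G^\perp(E)$ according to the spectral decomposition on $\{\psi\}^\perp$. The only cosmetic differences are that you give more detail on the $C^2$ regularity via the Riesz projection and differentiate $H\psi=E\psi$ directly rather than $E=\langle\psi,H\psi\rangle$; the substance is identical.
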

	\section{Multi-scale   analysis}\label{multi}
	In this section, we will inductively construct collections  of Rellich functions $\{\mathcal{C}_n\}_{n\geq0}$. Each element  $E_n\in \mathcal{C}_n$ is a locally defined Rellich function of the $n$-scale restriction operator $H_{B_n}$. The definition of  $E_n(\theta)\in \sigma(H_{B_n}(\theta))$ depends on a certain element  $E_{n-1}\in \mathcal{C}_{n-1}$, which is called the parent of $E_n$. 
	The main aims  of this section are  \begin{itemize}
		\item[\textbf{(1).}]   Construct the collections of Rellich function  $\{\mathcal{C}_n\}_{n\geq0}$ by a multi-scale induction argument.
		\item[\textbf{(2).}]  Verify that  the  Rellich functions inherit the Morse condition and two-monotonicity interval structure from the sampling function $v$,  thus we can employ the properties of   Morse functions from Lemma \ref{C2}.
		\item[\textbf{(3).}]   Describe  the resonance  (operator norm of Green's function) by  these Rellich functions and  establish Green's function estimates  for nonresonant sets with  regularity condition. 
	\end{itemize}   
	Let $\varepsilon_0>0$ be sufficiently small and $0<\varepsilon\leq \varepsilon_0$.  Define $ \delta_{0}:=\varepsilon_0^{\frac{1}{20}}$.
	\subsection{The initial scale}
	The $0^{\text{th}}$ generation of  Rellich functions  is defined as $\mathcal{C}_0:=\{v\}$.   We first establish Green's function estimates on $0$-nonresonant  sets   by a Neumann series argument.
	\begin{thm}\label{0ge}
		Given  $\theta^*\in \T, E^*\in \R$, we define \begin{equation}\label{S0}
			S_0 (\theta^*,E^*)=\{x\in \Z^d:\ |v (\theta^*+x\cdot \omega)-E^*|<\delta_0\}.
		\end{equation} Let $\Lambda\subset \Z^d$ be a finite set.  If 	$\Lambda\cap S_0 (\theta^*,E^*)=\emptyset$,
		then  for  $|\theta-\theta^*|<\delta_0/ (10D)$, $|E-E^*|<\delta_0/5$ and  $\gamma_0=\frac{1}{2}|\log\varepsilon|,$
		\begin{align*}
			\|G_\Lambda (\theta,E)\|:=\|\left ( H_\Lambda (\theta)-E\right) ^{-1}\|&\leq10\delta_0^{-1}, \\
			|G_\Lambda (\theta,E;x,y)|:=|\left ( H_\Lambda (\theta)-E\right) ^{-1} (x,y)|&\leq e^{-\gamma_0\|x-y\|_1}, \ \|x-y\|_1\geq 1.  \end{align*}
	\end{thm}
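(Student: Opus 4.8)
The plan is to exploit the fact that on a $0$-nonresonant set $\Lambda$ the diagonal part of $H_\Lambda(\theta)-E$ is uniformly invertible with controlled norm, while the off-diagonal part $\varepsilon\Delta$ is tiny (of size $\varepsilon\ll\delta_0=\varepsilon_0^{1/20}$), so that a Neumann series converges. Concretely, write $H_\Lambda(\theta)-E = D_\Lambda(\theta,E) + \varepsilon\Delta_\Lambda$, where $D_\Lambda(\theta,E)$ is the diagonal matrix with entries $v(\theta+x\cdot\omega)-E$ for $x\in\Lambda$. First I would show that the hypothesis $\Lambda\cap S_0(\theta^*,E^*)=\emptyset$, combined with the perturbations $|\theta-\theta^*|<\delta_0/(10D)$ and $|E-E^*|<\delta_0/5$, forces $|v(\theta+x\cdot\omega)-E|\ge \delta_0 - D\cdot\frac{\delta_0}{10D} - \frac{\delta_0}{5} = \delta_0(1 - \tfrac1{10} - \tfrac15) = \tfrac{7}{10}\delta_0 > \tfrac12\delta_0$ for every $x\in\Lambda$, using the Lipschitz bound $|v(\theta+x\cdot\omega)-v(\theta^*+x\cdot\omega)|\le D|\theta-\theta^*|$. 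Hence $\|D_\Lambda(\theta,E)^{-1}\|\le 2\delta_0^{-1}$.

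Next I would set up the Neumann series. Since $\|\Delta_\Lambda\|\le \|\Delta\|_{\ell^2\to\ell^2}\le 2d$ (each row/column of $\Delta$ has at most $2d$ nonzero entries equal to $1$), we get $\|\varepsilon D_\Lambda^{-1}\Delta_\Lambda\|\le 2d\varepsilon\cdot 2\delta_0^{-1} = 4d\varepsilon\,\varepsilon_0^{-1/20}\le 4d\,\varepsilon_0^{19/20}<\tfrac12$ for $\varepsilon_0$ small enough depending on $d$. Therefore $H_\Lambda(\theta)-E = D_\Lambda(I + \varepsilon D_\Lambda^{-1}\Delta_\Lambda)$ is invertible with
$$\|G_\Lambda(\theta,E)\| = \|(I+\varepsilon D_\Lambda^{-1}\Delta_\Lambda)^{-1}D_\Lambda^{-1}\| \le \frac{\|D_\Lambda^{-1}\|}{1-\|\varepsilon D_\Lambda^{-1}\Delta_\Lambda\|} \le \frac{2\delta_0^{-1}}{1/2} = 4\delta_0^{-1}\le 10\delta_0^{-1}.$$

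For the off-diagonal exponential decay I would expand $G_\Lambda = \sum_{k\ge0}(-\varepsilon D_\Lambda^{-1}\Delta_\Lambda)^k D_\Lambda^{-1}$ entrywise. The term of order $k$ contributes to the $(x,y)$ entry only through paths $x=z_0,z_1,\dots,z_k=y$ in $\Lambda$ with $\|z_{i+1}-z_i\|_1=1$, so it vanishes unless $k\ge\|x-y\|_1$; each such path contributes at most $(2\delta_0^{-1})^{k+1}\varepsilon^k$ in absolute value, and the number of length-$k$ paths from $x$ is at most $(2d)^k$. Summing the geometric series in $k\ge\|x-y\|_1=:r$ gives
$$|G_\Lambda(\theta,E;x,y)| \le 2\delta_0^{-1}\sum_{k\ge r}(4d\varepsilon\delta_0^{-1})^k \le 4\delta_0^{-1}(4d\varepsilon\delta_0^{-1})^r \le (4d\varepsilon\,\varepsilon_0^{-1/20})^{r/2}\le e^{-\gamma_0 r}$$
for $r\ge1$ and $\varepsilon_0$ small, since $4d\varepsilon\,\varepsilon_0^{-1/20}\le 4d\varepsilon_0^{19/20}\le \varepsilon^{1/2}$ eventually, whence $(4d\varepsilon\varepsilon_0^{-1/20})^{r/2}\le \varepsilon^{r/4}\le e^{-\frac14|\log\varepsilon|\,r}\le e^{-\gamma_0 r}$ with $\gamma_0=\tfrac12|\log\varepsilon|$ after absorbing the prefactor $4\delta_0^{-1}$ into one more power (again legitimate once $\varepsilon_0$ is small, and using $r\ge1$).

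There is no real obstacle here; the only point requiring a little care is bookkeeping the constants so that the final bounds come out as the clean $10\delta_0^{-1}$ and $e^{-\gamma_0\|x-y\|_1}$ stated, and checking that all smallness requirements on $\varepsilon_0$ are of the form "$\varepsilon_0$ small depending only on $d$" (they are, since $D$ enters only through the harmless factor in the perturbation estimate). I would state the threshold on $\varepsilon_0$ explicitly at the end.
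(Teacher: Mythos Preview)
Your approach is exactly the paper's: bound the diagonal from below via the nonresonance hypothesis plus Lipschitz perturbation, then run a Neumann series and count nearest-neighbor paths. The only slip is in your final chain: from $4\delta_0^{-1}(4d\varepsilon\delta_0^{-1})^r$ you pass to $\varepsilon^{r/4}=e^{-\frac14|\log\varepsilon|r}$ and then claim this is $\le e^{-\gamma_0 r}$ with $\gamma_0=\tfrac12|\log\varepsilon|$, which is the wrong direction since $\varepsilon^{r/4}>\varepsilon^{r/2}$; instead, as the paper does, check directly that $4\delta_0^{-1}(4d\varepsilon\delta_0^{-1})^r\le \varepsilon^{r/2}$ for $r\ge 1$ (equivalent to $16d\sqrt{\varepsilon}\le\delta_0^2=\varepsilon_0^{1/10}$, which holds for small $\varepsilon_0$).
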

	\begin{proof}
		For $x\in \Lambda$,  $|\theta-\theta^*|<\delta_0/ (10D)$ and  $|E-E^*|<\delta_0/5$, one has 
		$$|v (\theta+x\cdot \omega) -E|\geq |v (\theta^*+x\cdot \omega) -E^*|-|E-E^*|-D|\theta-\theta^*|>\delta_0/2. $$ Thus $V (\theta)$,  the diagonal term of $H_\Lambda (\theta)$, satisfies   $\| (V (\theta)-E)^{-1}\|\leq2\delta_0^{-1}$.  Since $\|\Delta\|\leq2d, \ 4d\varepsilon\delta_0^{-1}\leq\frac{1}{2}$,  we have by the Neumann series  argument  
		\begin{align*}
			\left(H_\Lambda (\theta)-E\right)^{-1} & =\left (\varepsilon \Delta+V (\theta)-E\right)^{-1} \\
			& =\sum_{l=0}^{\infty} (-1)^l \varepsilon^l\left[\left (V (\theta)-E\right)^{-1} \Delta\right]^l\left (V (\theta)-E\right)^{-1}.
		\end{align*}
		Thus
		$$\|\left(H_\Lambda (\theta)-E\right)^{-1}\|\leq2\|\left (V (\theta)-E\right)^{-1}\|\leq4\delta_0^{-1}$$
		and 
		\begin{align*}
			\left|\left(H_\Lambda (\theta)-E\right)^{-1} (x, y)\right| & \leq \frac{4}{\delta_0}\left (\frac{4 d\varepsilon}{\delta_0}\right)^{\|x-y\|_1}\leq\sqrt{\varepsilon}^{\|x-y\|_1}=e^{-\gamma_0\|x-y\|_1}
		\end{align*} provided $\|x-y\|_1\geq1$.
	\end{proof}
	\subsection{The first scale}\label{n=1} In this section, we will construct the first  generation of   Rellich functions $\mathcal{C}_1=\mathcal{C}_1^{(1)}\cup\mathcal{C}_1^{(2)}$ to describe the resonance  at this scale,  where the superscript ``$ (1)$'' indicates the simple resonance case and ``$ (2)$''   the  double resonance one. 
	
	For this purpose, the first step is to  divide the interval $J (v):=\operatorname{Image}v$ (i.e., the range of $v$) into simple resonant intervals $\mathcal{J}^{(1)}$  and    double  resonant intervals  $\mathcal{J}^{(2)}$. By the assumption of $v$, there are two closed intervals $I_\pm$ with disjoint interiors, such that 
	$$v_\pm:=v|_{I\pm} ,\ \  \pm v'_\pm\geq0, \ \ v_\pm (I_\pm)=J (v). $$
	Define $$l_1^{(1)}:=|\log\varepsilon_0|^{4},\  l_1^{(2)}:= (l_1^{(1)})^2\gg 10l_1^{(1)}.$$
	Since $v$ is monotonic on each $I_\pm$, for $0<\|k\|_1\leq 10 l_1^{(1)}$, there is a unique point $\theta_{k,-}\in I_-$ and its relevant translate $\theta_{k,+}:=\theta_{k,-}+k\cdot \omega$,  such that 
	$$e_{k}:=v (\theta_{k,-})=v (\theta_{k,+}).$$
	To construct the double resonant intervals $\mathcal{J}^{(2)}$, we need the following lemmas which give the lower bound of $|e_k-e_{k'}|$ for $k\neq k'$  and $|e_k-e_c|$, where $e_c$ is a critical value of $v$.
	\begin{lem}\label{a}If   $\theta_1,\theta_2$ belong to the same monotonicity interval $I_\pm$,  then 
		\begin{equation}\label{wh}
			|v (\theta_1)-v (\theta_2)|\geq \|\theta_1-\theta_2\|^2.
		\end{equation}
	\end{lem}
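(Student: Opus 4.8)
The plan is to exploit the quantitative non-degeneracy assumptions on $v$ collected in Remark~\ref{ftnote}: on the $a$-neighbourhoods of the critical points $\theta_M,\theta_m$ one has $|v''|>3$, while away from those neighbourhoods one has $|v'|>3$. The statement to be proved is a ``uniform convexity-type'' estimate $|v(\theta_1)-v(\theta_2)|\geq\|\theta_1-\theta_2\|^2$ whenever $\theta_1,\theta_2$ lie in a common monotonicity interval $I_\pm$. The natural tool is Lemma~\ref{C2} (Properties of Morse function), which, applied to $E=v$ on the closure of $I_\pm$ with critical point $\theta_s\in\{\theta_M,\theta_m\}$ and with the choice $\delta=a$ (legitimate since $|v'|\leq\delta=a$ forces $\theta$ into the critical $a$-neighbourhood, where $|v''|>3\geq 2$ with a fixed sign dictated by whether $\theta_s$ is a max or min), yields exactly $|v(\theta_1)-v(\theta_2)|\geq\frac12|\theta_1-\theta_2|^2$ provided $|\theta_1-\theta_2|\leq a$ and $\theta_1,\theta_2$ are on the same side of $\theta_s$. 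But in the present setting $\theta_s$ is an endpoint of $I_\pm$ (the max and min are the common endpoints of the two monotonicity intervals), so any two points of $I_\pm$ are automatically ``on the same side'' of $\theta_s$, and the constant can be improved to $1$ since $|v''|>3>2$; I will track constants to get the cleaner bound $\geq\|\theta_1-\theta_2\|^2$ rather than $\geq\frac12\|\theta_1-\theta_2\|^2$.

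The key steps, in order: (i) Reduce to one monotonicity interval, say $I_+$, with endpoint critical points; relabel so that $\theta_s$ is the endpoint nearest to both points, and note $v$ is strictly monotone on $I_+$, so WLOG $\theta_s\le\theta_1<\theta_2$. (ii) \emph{Close points}: if $\|\theta_1-\theta_2\|\le a$, split into the subcase where both points lie in the $a$-neighbourhood of the critical endpoint and the subcase where at least $\theta_2$ has left it. In the first subcase, Taylor-expand $v$ at $\theta_1$: $v(\theta_2)-v(\theta_1)=v'(\theta_1)(\theta_2-\theta_1)+\frac12 v''(\xi)(\theta_2-\theta_1)^2$ for some $\xi$ between them; since $v'(\theta_1)$ and $v''(\xi)$ have the same sign (monotonicity direction consistent with the sign of $v''$ near a min/max), both terms contribute with the same sign, and $\frac12|v''(\xi)|>\frac32>1$ gives the bound. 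In the second subcase, once a point is outside the critical neighbourhood we have $|v'|>3$ on the relevant portion, and a mean value / integration argument gives a \emph{linear} lower bound $|v(\theta_1)-v(\theta_2)|\gtrsim\|\theta_1-\theta_2\|$ which dominates the quadratic bound since $\|\theta_1-\theta_2\|\le a< \|\theta_M-\theta_m\|/10\le 1/10<1$. (iii) \emph{Far points}: if $\|\theta_1-\theta_2\|>a$, cut $[\theta_1,\theta_2]$ at the point $\theta_3$ at distance $a$ from $\theta_1$ (or use $\theta_2$ directly), apply the close-points estimate on $[\theta_1,\theta_3]$ when both are near the critical endpoint, and otherwise use the global lower bound $|v'|>3$ on the non-critical part together with monotonicity (all increments have the same sign, so they add up without cancellation); in every case one wins because $\|\theta_1-\theta_2\|^2\le\|\theta_1-\theta_2\|\cdot\mathrm{diam}(\T)\le\|\theta_1-\theta_2\|$ and the linear gain beats it. (iv) Assemble the cases.

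The main obstacle is purely bookkeeping rather than conceptual: organizing the case analysis (both points near the critical endpoint / one point near, one far / both far) and making sure the monotonicity sign of $v'$ on $I_\pm$ is coherent with the sign of $v''$ near the adjacent critical point, so that in the Taylor expansion the linear and quadratic terms \emph{reinforce} rather than cancel. A secondary subtlety is handling the metric on $\T=\R/\Z$ correctly: since $a<\|\theta_M-\theta_m\|/10$, every monotonicity interval has length $<1/2$, so $\|\cdot\|$ on $I_\pm$ agrees with ordinary distance on a lifted interval in $\R$, and there is no wraparound ambiguity; I will note this once and then work on the line. Given these reductions the estimate follows directly from Lemma~\ref{C2} and elementary Taylor estimates, so I do not expect to need anything beyond what is already in the excerpt.
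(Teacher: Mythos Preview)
Your proposal is correct and follows essentially the same approach as the paper's proof: a direct case analysis using the Taylor expansion (with $|v''|>3$) near the critical endpoints and the mean-value theorem (with $|v'|>3$) on the complementary middle region, exploiting that on a monotonicity interval all increments have the same sign and hence add without cancellation. The paper organizes the cases by the positions of $\theta_1,\theta_2$ relative to the three sub-intervals $[\theta_m,\theta_m+a]$, $[\theta_m+a,\theta_M-a]$, $[\theta_M-a,\theta_M]$ (six cases by symmetry), whereas you pivot first on whether $\|\theta_1-\theta_2\|\le a$; these are equivalent decompositions. One small remark: your appeal to Lemma~\ref{C2} is only heuristic, since that lemma assumes an \emph{interior} critical point and yields the constant $\tfrac12$; you correctly note this and in effect redo its proof with the sharper constant, which is exactly what the paper does.
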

	\begin{proof}[Proof of Lemma \ref{a}]
		Without loss of  generality, we assume $\theta_1,\theta_2\in I_+$ and  $\theta_1<\theta_2$. By our assumption,    $v$ is strictly increasing  on $I_+=[\theta_m,\theta_M]$ and satisfies   $v' (\theta)>2$ for $\theta\in[\theta_m+a,\theta_M-a]$ and  $v'' (\theta)>2$  (resp. $<2$) for $\theta\in [\theta_m,\theta_m+a]$  (resp. $[\theta_M-a,\theta_M]$).\\
		{\it Case }1: $\theta_m\leq\theta_1<\theta_2\leq \theta_m+a$. We have  in this case $$v (\theta_2)-v (\theta_1)=v' (\theta_1) (\theta_2-\theta_1)+\frac{1}{2}v'' (\xi) (\theta_2-\theta_1)^2\geq (\theta_2-\theta_1)^2.$$\\
		{\it Case }2: $\theta_m\leq\theta_1\leq\theta_m+ a\leq\theta_2\leq\theta_M-a$. We have in this case
		\begin{align*}
			v (\theta_2)-v (\theta_1)&= (v (\theta_2)-v (a))+ (v (a)-v (\theta_1))\\
			&\geq2 (\theta_2-a)+ (a-\theta_1)^2\\&\geq (\theta_2-\theta_1)^2.
		\end{align*}\\
		{\it Case} 3: $\theta_m+ a\leq\theta_1<\theta_2\leq\theta_M-a$. We have 
		$$	v (\theta_2)-v (\theta_1)=v' (\xi) (\theta_2-\theta_1)\geq (\theta_2-\theta_1)^2.$$\\
		{\it Case} 4: $\theta_m<\theta_1\leq \theta_m+ a<\theta_M-a\leq\theta_2$. We have 
		$$	v (\theta_2)-v (\theta_1)\geq v (\theta_M-a)-v (\theta_m+ a)\geq2 (|I_+|-2a)\geq (\theta_2-\theta_1)^2.$$\\
		{\it Case} 5: $\theta_M-a\leq\theta_1<\theta_2\leq\theta_M$. This case  is similar to {\it Case} 1.\\
		{\it Case} 6: $\theta_m+a\leq\theta_1\leq \theta_M-a\leq\theta_2\leq\theta_M$. This case  is similar to {\it Case} 2.
	\end{proof}
	\begin{lem}\label{sep}
		For $0<\|k\|_1, \|k'\|_1\leq 10 l_1^{(1)}$,  $k\neq k'$, we have 
		\begin{equation}\label{16}
			|e_k-e_{k'}|\gtrsim{(l_1^{(1)})^{-2\tau}}\gg\delta_{0}^{\frac{1}{100}},
		\end{equation}
		\begin{equation}\label{160}
			|e_k-e_{c}|\gtrsim{(l_1^{(1)})^{-2\tau}}\gg\delta_{0}^{\frac{1}{100}}.
		\end{equation}
		
	\end{lem}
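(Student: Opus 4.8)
The plan is to deduce both bounds from Lemma~\ref{a} together with the Diophantine condition, handling \eqref{16} and \eqref{160} in turn. For \eqref{16}: since the monotonicity intervals $I_\pm$ cover $\T$ and overlap only at the two critical points, and $e_k$ is not a critical value (because $k\cdot\omega\neq 0$ forces $\theta_{k,-}\neq\theta_{k,+}$), the points $\theta_{k,-}$ and $\theta_{k,+}=\theta_{k,-}+k\cdot\omega$ are the two distinct $v$-preimages of $e_k$, one in $I_-$ and one in $I_+$; likewise for $\theta_{k',\pm}$. Set $\alpha:=\theta_{k,-}-\theta_{k',-}$ and $\beta:=\theta_{k,+}-\theta_{k',+}$, so $\beta-\alpha=(k-k')\cdot\omega$ in $\T$. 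Since $0<\|k-k'\|_1\leq 20\,l_1^{(1)}$, the condition $\omega\in{\rm DC}_{\tau,\gamma}$ gives $\|(k-k')\cdot\omega\|\geq\gamma(20\,l_1^{(1)})^{-\tau}$, and the triangle inequality on $\T$, $\|\alpha\|+\|\beta\|\geq\|\beta-\alpha\|$, forces $\max(\|\alpha\|,\|\beta\|)\geq\tfrac12\gamma(20\,l_1^{(1)})^{-\tau}$. Applying Lemma~\ref{a} inside $I_-$ (if the maximum is $\|\alpha\|$) or inside $I_+$ (if it is $\|\beta\|$), and using $v(\theta_{k,\pm})=e_k$, $v(\theta_{k',\pm})=e_{k'}$, we get $|e_k-e_{k'}|\geq\max(\|\alpha\|^2,\|\beta\|^2)\gtrsim (l_1^{(1)})^{-2\tau}$.

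For \eqref{160}: we may assume $e_c=v(\theta_M)$ is the maximal value of $v$ (the minimal case is identical) and that $|e_k-e_c|<\tfrac32 a^2$, since $(l_1^{(1)})^{-2\tau}$ is smaller than the fixed constant $\tfrac32 a^2$ once $\varepsilon_0$ is small, so otherwise \eqref{160} is trivial. By Remark~\ref{ftnote} one has $|v''|>3$ on the $a$-neighbourhood of $\theta_M$, so a second-order Taylor expansion at $\theta_M$ gives both $v(\theta_M)-v(\theta)\geq\tfrac32 a^2$ for $\|\theta-\theta_M\|\geq a$ and $v(\theta_M)-v(\theta)\geq\tfrac32\|\theta-\theta_M\|^2$ for $\|\theta-\theta_M\|<a$. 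The first inequality forces both preimages $\theta_{k,-},\theta_{k,+}$ of $e_k$ to lie within $a$ of $\theta_M$, and the second then yields $\|\theta_{k,\pm}-\theta_M\|\leq\sqrt{2|e_k-e_c|/3}$. Hence $\|k\cdot\omega\|=\|\theta_{k,+}-\theta_{k,-}\|\leq\|\theta_{k,+}-\theta_M\|+\|\theta_M-\theta_{k,-}\|\leq 2\sqrt{2|e_k-e_c|/3}$, while $\|k\cdot\omega\|\geq\gamma(10\,l_1^{(1)})^{-\tau}$ by the Diophantine condition; combining, $|e_k-e_c|\geq\tfrac38\gamma^2(10\,l_1^{(1)})^{-2\tau}\gtrsim (l_1^{(1)})^{-2\tau}$.

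Finally, the comparison $(l_1^{(1)})^{-2\tau}\gg\delta_0^{1/100}$ is immediate from $l_1^{(1)}=|\log\varepsilon_0|^4$ and $\delta_0=\varepsilon_0^{1/20}$: the left side equals $|\log\varepsilon_0|^{-8\tau}$, only polynomially small in $|\log\varepsilon_0|$, whereas $\delta_0^{1/100}=\varepsilon_0^{1/2000}$ is power-small in $\varepsilon_0$ and hence negligible by comparison. I expect the only step needing genuine care to be the middle of the argument for \eqref{160}, where one invokes the non-degeneracy of the critical point (Remark~\ref{ftnote}) twice — once to trap the preimages of a near-critical value inside a small neighbourhood of $\theta_M$, and once for the quadratic lower bound $v(\theta_M)-v(\theta)\gtrsim\|\theta-\theta_M\|^2$ there; the rest is a routine combination of Lemma~\ref{a} with the Diophantine inequality.
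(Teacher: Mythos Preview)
Your proof is correct and, for \eqref{16}, essentially identical to the paper's: both use the triangle inequality on $\theta_{k,\pm}-\theta_{k',\pm}$ to force one of the two same-branch differences to be at least $\tfrac12\|(k-k')\cdot\omega\|$, then apply Lemma~\ref{a} and the Diophantine bound.

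For \eqref{160} your argument is correct but slightly more hands-on than needed. The paper simply observes that the critical point $\theta_c$ is an endpoint of \emph{both} monotonicity intervals $I_\pm$, so Lemma~\ref{a} applies directly to the pairs $(\theta_c,\theta_{k,-})\subset I_-$ and $(\theta_c,\theta_{k,+})\subset I_+$, giving $|e_k-e_c|\geq\max(\|\theta_{k,-}-\theta_c\|,\|\theta_{k,+}-\theta_c\|)^2$; then the same triangle-inequality step with $\theta_{k,+}-\theta_{k,-}=k\cdot\omega$ and the Diophantine bound finishes it. Your separate Taylor-expansion argument near $\theta_M$ reproduces exactly what Lemma~\ref{a} already covers in that case, so you can shorten this part by invoking Lemma~\ref{a} once more instead of redoing the quadratic lower bound by hand.
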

	
	\begin{proof}
		First we prove \eqref{16}.	By triangle inequality and Diophantine condition, we have $$\max (\|\theta_{k,-}-\theta_{k',-}\|,\|\theta_{k,+}-\theta_{k',+}\|)\geq\| (k-k')\cdot \omega\|/2\gtrsim\|k-k'\|^{-\tau}\gtrsim  (l_1^{(1)})^{-\tau}.$$
		If $\|\theta_{k,-}-\theta_{k',-}\|\gtrsim  (l_1^{(1)})^{-\tau}$ holds, by Lemma \ref{a} we get 	$$|e_k-e_{k'}|=|v (\theta_{k,-})-v (\theta_{k',-})|\geq\|\theta_{k,-}-\theta_{k',-}\|^2\gtrsim  (l_1^{(1)})^{-2\tau}.$$ 
		Likewise, if $\|\theta_{k,+}-\theta_{k',+}\|\gtrsim  (l_1^{(1)})^{-\tau}$  holds, we get $$|e_k-e_{k'}|=|v (\theta_{k,+})-v (\theta_{k',+})|\geq\|\theta_{k,+}-\theta_{k',+}\|^2\gtrsim  (l_1^{(1)})^{-2\tau}.$$ 
		To see \eqref{160}, without loss of  generality, we assume $e_c=v (\theta_m)$ is the minima of $v$. As above, we have $$\max (\|\theta_{k,-}-\theta_m\|,\|\theta_{k,+}-\theta_m\|)\geq\| (k-k')\cdot \omega\|/2\gtrsim\|k-k'\|^{-\tau}\gtrsim  (l_1^{(1)})^{-\tau}.$$
		Thus 
		$$|e_k-e_{c}|\geq\max (\|\theta_{k,-}-\theta_m\|,\|\theta_{k,+}-\theta_m\|)^2\gtrsim  (l_1^{(1)})^{-2\tau}.$$
	\end{proof}
	\begin{figure}[htp]
		\begin{tikzpicture}[>=latex, scale=1]
			\draw (0,0)to (3*pi,0);
			\draw[->]  (0,-1)-- (0,6)node[right]{$y$};
			\draw  (0,0);
			\draw[domain=0:2*pi, samples=100] plot  ({\x},{2*cos (\x/2 r) +3});
			\draw[domain=2*pi:3*pi, samples=100] plot  ({\x},{3-2*cos ((\x-2*pi) r) });
			\draw[dashed]  (2*pi,0)-- (2*pi,1);
			\draw[<->] (0,0)-- (2*pi,0); \draw[<->] (2*pi,0)-- (3*pi,0);
			\draw (pi,0)node[below]{$I_-$};\draw (2.5*pi,0)node[below]{$I_+$};
			\draw[dashed]  (0.45*pi,0)-- (0.45*pi,4.5); 	\draw[dashed]  (0,4.5)-- (2.775*pi,4.5);
			\draw[dashed]  (0.68*pi,0)-- (0.68*pi,4);	\draw[dashed]  (0,4)-- (2.66*pi,4); \draw[dashed]  (2.66*pi,0)-- (2.66*pi,4); \draw[dashed]  (2.775*pi,0)-- (2.775*pi,4.5);
			\draw[thick,red] (0.45*pi,0)-- (0.68*pi,0); 	\draw[thick,red] (2.66*pi,0)-- (2.775*pi,0);
			\draw (0.56*pi,0)node[below]{\color{red}{\scriptsize $I_{k,-}^{(2)}$ }};
			\draw (2.715*pi,0)node[below]{\color{red}{\scriptsize $I_{k,+}^{(2)}$ }};
			\fill  (0.57*pi,4.25)circle (1pt); 	\fill  (2.715*pi,4.25)circle (1pt);
			\draw[dashed]  (0.57*pi,4.25)-- node[fill=white]{$k\cdot \omega $} (2.715*pi,4.25);
			\draw[dashed,<-]  (0.57*pi,4.25)--  (0.7*pi,5); \draw (0.67*pi,5.1)node[right]{{ $ (\theta_{k,-},e_k)$}};
			\draw[thick,red] (0,4)-- (0,4.5); 	\draw (0,4.25)node[left]{\color{red}{\tiny $J_k^{(2)}$}}; 
			\draw[dashed]  (1.25*pi,2.25)-- node[fill=white]{$k'\cdot \omega $} (2.375*pi,2.25);
			\fill  (1.25*pi,2.25)circle (1pt); 	\fill  (2.375*pi,2.25)circle (1pt); 
			\draw[dashed,<-]  (1.25*pi,2.25)--  (1.37*pi,3); \draw (1.34*pi,3)node[right]{{ $ (\theta_{k',-},e_{k'})$}};		\draw (1.5*pi,1.3)node{$v$}; 
		\end{tikzpicture}
		\caption{Separation of the double resonant intervals and critical values.}
	\end{figure}
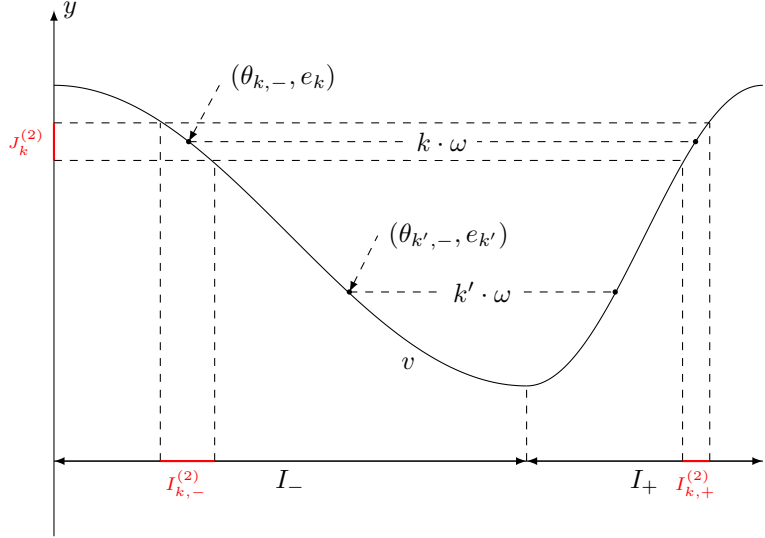
	Let $B_{\delta}(x_0)=\{x\in\R: |x-x_0|<\delta\}$ and let $\bar B_{\delta}(x_0)$ be its closure. Define the double resonant intervals $$J^{(2)}_k:=\bar{B}_{\delta_{0}^{\frac{1}{100}}} (e_k), \ 0<\|k\|_1\leq10l_1^{(1)}$$
	and   the complementary region 
	\begin{equation}\label{1SR}
		J^{SR}:=J\setminus \bigcup_{0<\|k\|\leq10l_1^{(1)}}B_{\delta_{0}^{\frac{1}{100}}-3\delta_0} (e_k).
	\end{equation}
	\begin{rem}\label{se}
		The separation from  Lemma \ref{sep} ensures that   $J^{(2)}_k\cap J^{(2)}_{k'}=\emptyset$ for $k\neq k'$ and the length of  each  connected component of $J^{SR}$ exceeds $\delta_{0}^{\frac{1}{100}}$.
	\end{rem}		
	From this observation, we  deduce
	\begin{prop}\label{coverpr}
		$J (v)$ can be covered by closed intervals $J_i^{(1)}$ and $J^{(2)}_k$, such that 
		\begin{itemize}
			\item[\textbf{(1)}.]  $\delta_{0}^{\frac{1}{100}}\leq |J_i^{(1)}|,  |J^{(2)}_k|\leq 2\delta_{0}^{\frac{1}{100}}$. 
			\item[\textbf{(2)}.] $J_i^{(1)}\subset J^{SR}.$
			\item[\textbf{(3)}.] $J_i^{(1)}\cap J_{i'}^{(1)}\neq\emptyset\Rightarrow |J_i^{(1)}\cap J_{i'}^{(1)}|=3\delta_0.$
			\item[\textbf{(4)}.] $ J_i^{(1)}\cap J_{k}^{(2)}\neq\emptyset\Rightarrow |J_i^{(1)}\cap J_{k}^{(2)}|=3\delta_0. $
			\item[\textbf{(5)}.] The total number of such intervals does not exceed $2|J (v)|\delta_0^{-\frac{1}{100}}$.
		\end{itemize}
	\end{prop}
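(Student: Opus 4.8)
The plan is to take the double resonant intervals $J_k^{(2)}=\bar{B}_{\delta_0^{1/100}}(e_k)$ ($0<\|k\|_1\leq 10l_1^{(1)}$) as already built, and to manufacture the simple resonant intervals $J_i^{(1)}$ by tiling the complementary closed set $J^{SR}$ from \eqref{1SR}. First I would record the structure of $J^{SR}$. By \eqref{160} each $e_k$ lies at distance $\gg\delta_0^{1/100}$ from the two critical values of $v$, i.e.\ from the endpoints of $J(v)$, so the excised open balls $B_{\delta_0^{1/100}-3\delta_0}(e_k)$ are contained in the interior of $J(v)$; by \eqref{16} (equivalently by Remark \ref{se}) they are pairwise disjoint, indeed well separated. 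Hence $J^{SR}$ is a finite disjoint union of closed subintervals $K_1,\dots,K_p$ of $J(v)$ (its connected components), and Remark \ref{se} gives $|K_j|>\delta_0^{1/100}$ for every $j$. Since $J(v)=J^{SR}\cup\bigcup_k B_{\delta_0^{1/100}-3\delta_0}(e_k)$ and $B_{\delta_0^{1/100}-3\delta_0}(e_k)\subset J_k^{(2)}$, any family of intervals $J_i^{(1)}\subset J^{SR}$ that covers all the $K_j$ will, together with the $J_k^{(2)}$, cover $J(v)$, and (2) is then automatic.

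The geometric point that pins every overlap to $3\delta_0$ is that $J^{SR}$ excises balls of radius $\delta_0^{1/100}-3\delta_0$ while each $J_k^{(2)}$ has radius $\delta_0^{1/100}$: consequently, whenever a component $K_j$ abuts $J_k^{(2)}$, the relevant endpoint of $K_j$ sits exactly $3\delta_0$ inside $J_k^{(2)}$, so $K_j\cap J_k^{(2)}$ is a closed interval of length exactly $3\delta_0$. I would therefore cover each $K_j$ by consecutive closed subintervals $J_1^{(1)},\dots,J_{m_j}^{(1)}\subset K_j$ with $a_1=\min K_j$, $b_{m_j}=\max K_j$, each of length in $[\delta_0^{1/100},2\delta_0^{1/100}]$, and with $a_{i+1}=b_i-3\delta_0$. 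Granting such a tiling, (1) is built in (using also $|J_k^{(2)}|=2\delta_0^{1/100}$); (3) follows since consecutive tiles overlap in exactly $3\delta_0$, non-consecutive tiles in one component are disjoint because $b_{i+1}-b_i=|J_{i+1}^{(1)}|-3\delta_0\geq\delta_0^{1/100}-3\delta_0>0$, and tiles from different components are disjoint because the components are separated by the excised balls; and (4) follows since $|J_i^{(1)}|\geq\delta_0^{1/100}>3\delta_0$ forces the two extreme tiles of $K_j$ to contain the whole length-$3\delta_0$ piece $K_j\cap J_k^{(2)}$ (so they meet $J_k^{(2)}$ in exactly $3\delta_0$), while the same inequality keeps every non-extreme tile at positive distance from $J_k^{(2)}$.

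It remains to carry out the tiling and to count, both of which are elementary. Each step advances the right endpoint $b_i$ by an arbitrary amount in $[\delta_0^{1/100}-3\delta_0,\,2\delta_0^{1/100}-3\delta_0]$; since the overlap $3\delta_0$ is tiny compared with the window $\delta_0^{1/100}$, the attainable right endpoints after $m=1,2,\dots$ tiles overlap from one $m$ to the next and fill a half-line, so starting inside $[\min K_j+\delta_0^{1/100},\min K_j+2\delta_0^{1/100}]$ one can steer $b_{m_j}$ onto $\max K_j$ exactly (possible because $|K_j|>\delta_0^{1/100}$ by Remark \ref{se}), using $m_j\leq|K_j|\delta_0^{-1/100}+2$ tiles. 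Summing, $\sum_j m_j\leq|J(v)|\delta_0^{-1/100}+2p$. Finally $p$ and the number of $J_k^{(2)}$'s are both $\lesssim\#\{k\in\Z^d:\,0<\|k\|_1\leq10l_1^{(1)}\}\lesssim(l_1^{(1)})^d=|\log\varepsilon_0|^{4d}$, which is $\ll\delta_0^{-1/100}=\varepsilon_0^{-1/2000}$; since $|J(v)|=v(\theta_M)-v(\theta_m)$ is a fixed positive number, for $\varepsilon_0$ small the grand total of intervals is $\leq2|J(v)|\delta_0^{-1/100}$, which is (5).

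As for the main obstacle: there is no deep difficulty here — the proposition is a soft packing statement whose genuine content has already been isolated in Lemma \ref{sep} and Remark \ref{se}. The only step requiring care is to realize all the overlaps simultaneously — both $J_i^{(1)}$–$J_{i'}^{(1)}$ and $J_i^{(1)}$–$J_k^{(2)}$ — as length exactly $3\delta_0$ while keeping every $|J_i^{(1)}|$ in the window $[\delta_0^{1/100},2\delta_0^{1/100}]$; this succeeds precisely because the window width $\delta_0^{1/100}$ is vastly larger than the overlap $3\delta_0$ (ample slack to land the right endpoint on each component's boundary exactly) and because every component of $J^{SR}$ is longer than $\delta_0^{1/100}$ (so each admits at least one admissible tile) — and both facts come from the quantitative separation \eqref{16}, \eqref{160}.
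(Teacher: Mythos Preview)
Your proposal is correct and follows essentially the same approach as the paper's proof, which is a one-sentence assertion that one can tile $J^{SR}$ by intervals of the required size with $3\delta_0$ overlaps, appealing to Remark \ref{se}. You simply supply the details the paper omits: the component structure of $J^{SR}$, the observation that the radius mismatch between the excised balls and the $J_k^{(2)}$ forces the boundary overlaps to be exactly $3\delta_0$, the explicit tiling of each component, and the counting.
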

	\begin{proof}
		By Remark \ref{se}, we can construct a collection of closed intervals $J_i^{(1)}$ of size between $\delta_{0}^{\frac{1}{100}}$ and $2\delta_{0}^{\frac{1}{100}}$ overlapping with only their nearest neighbors by exactly $3\delta_0$ covering $J^{SR}$. Together with the collection of  intervals	$J_{k}^{(2)} (0<\|k\|_1\leq10l_1^{(1)})$, we have a total of at most  $2|J (v)|\delta_0^{-\frac{1}{100}}$ intervals satisfying the conditions laid
		out in the proposition. 
	\end{proof}
	We denote by  $\mathcal{J}^{(1)}$  the collection of  simple resonant intervals $J_i^{(1)}$   and  $\mathcal{J}^{(2)}$ the collection of  double  resonant intervals $J_k^{(2)}$. The corresponding preimages of $J^{(j)}_{\bullet}$ are 
	$$I^{(j)}_{\bullet,\pm}:=v_\pm^{-1} (J^{(j)}_{\bullet}), \ I^{(j)}_{\bullet}:=I^{(j)}_{\bullet,-}\cup I^{ (j)}_{\bullet,+},  $$ where  $j\in \{1,2\}$ and $\bullet\in\{i,k\}.$ 
	
	By Lemma \ref{a} and the uniform bound $|v'|\leq D$, we have 
	$$|I^{(j)}_{\bullet,\pm}|^2\leq |v_\pm (I^{(j)}_{\bullet,\pm})|\leq D|I^{(j)}_{\bullet,\pm}|.$$ 
	Since $|v_\pm (I^{(j)}_{\bullet,\pm})|=|J^{(j)}_{\bullet}|\in [\delta_{0}^{\frac{1}{100}},2\delta_{0}^{\frac{1}{100}}]$, 
	we deduce  \begin{equation}\label{I1len}
		\frac{1}{D}	\delta_{0}^{\frac{1}{100}}\leq  |I^{(j)}_{\bullet,\pm}|\leq2\delta_{0}^{\frac{1}{200}}.
	\end{equation}
	
	The remainder of this section  is devoted to constructing  the $1^{{\rm st}}$ generation of  Rellich functions   $E^{(j)}_{1}$. We do so by case analysis.
	\subsubsection{The simple resonance case}
	In this section, we fix an interval $J_i^{(1)}\in \mathcal{J}^{(1)}$ and its preimage $I_i^{(1)}$.
	\begin{prop}\label{SRpro}
		If $\theta\in I^{(1)}_i$, then for any $0<\|x\|_1\leq l_1^{(1)}$, we have $$|v (\theta)-v (\theta+x\cdot \omega)|\geq 3\delta_0.$$
	\end{prop}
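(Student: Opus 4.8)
\emph{Proof plan.} The plan is to argue by contradiction. Suppose there were $\theta\in I_i^{(1)}$ and $x\in\Z^d$ with $0<\|x\|_1\le l_1^{(1)}$ and $|v(\theta)-v(\theta')|<3\delta_0$, where $\theta':=\theta+x\cdot\omega$. Since $v(\theta)\in J_i^{(1)}\subset J^{SR}$, in view of \eqref{1SR} it suffices to exhibit a $k$ with $0<\|k\|_1\le 10l_1^{(1)}$ such that $v(\theta)\in B_{\delta_0^{1/100}-3\delta_0}(e_k)$, which contradicts $v(\theta)\in J^{SR}$. I would split into two cases according to whether $\theta$ and $\theta'$ lie in the same monotonicity interval ($I_+$ or $I_-$) or in different ones. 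If they lie in the same one, Lemma \ref{a} together with the Diophantine condition gives
\[3\delta_0>|v(\theta)-v(\theta')|\ge\|\theta-\theta'\|^2=\|x\cdot\omega\|^2\ge\gamma^2\|x\|_1^{-2\tau}\ge\gamma^2(l_1^{(1)})^{-2\tau},\]
which is impossible for $\varepsilon_0$ small, since $(l_1^{(1)})^{-2\tau}=|\log\varepsilon_0|^{-8\tau}$ dominates $\delta_0=\varepsilon_0^{1/20}$.

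The substantive case is when $\theta$ and $\theta'$ lie in different monotonicity intervals; say $\theta\in I_-$, $\theta'\in I_+$ (the reverse being symmetric), and take $k=x$. The idea is to compare $v(\theta)$ with the genuine recurrence value $e_x=v(\theta_{x,-})=v(\theta_{x,-}+x\cdot\omega)$ through the ``swap function''
\[\Phi(e):=v_+^{-1}(e)-v_-^{-1}(e),\qquad e\in J(v),\]
lifted so that it becomes a strictly increasing continuous bijection of $J(v)$ onto an interval of length $1$ whose two endpoints correspond to the two critical values of $v$. The key point is that, since $v'>0$ on the interior of $I_+$, $v'<0$ on the interior of $I_-$, and $|v'|\le D$ (Remark \ref{ftnote}),
\[\Phi'(e)=\frac{1}{v'(v_+^{-1}(e))}+\frac{1}{|v'(v_-^{-1}(e))|}\ge\frac1D,\]
so $\Phi^{-1}$ is $D$-Lipschitz even though $\Phi$ and the branches $v_\pm^{-1}$ blow up at the critical values. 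By construction $\Phi(e_x)\equiv x\cdot\omega\pmod 1$, and by the Diophantine condition this value stays at distance $\ge\gamma(l_1^{(1)})^{-\tau}$ from both endpoints of the range of $\Phi$. On the other hand, choose $\hat\theta\in I_-$ with $v(\hat\theta)=v(\theta')$; then Lemma \ref{a} applied in $I_-$ to $\theta$ and $\hat\theta$ gives $\|\theta-\hat\theta\|<\sqrt{3\delta_0}$, and since $v(\hat\theta)=v(\theta')$ one has $\Phi(v(\hat\theta))=\theta'-\hat\theta$, which is congruent mod $1$ to $x\cdot\omega$ up to the error $\theta-\hat\theta$. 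Hence $\Phi(v(\hat\theta))$ and $\Phi(e_x)$ are within $\sqrt{3\delta_0}$ on the circle; because $\Phi(e_x)$ sits $\gg\sqrt{3\delta_0}$ away from the endpoints, this circular closeness is genuine closeness inside the interval, so the Lipschitz bound on $\Phi^{-1}$ yields $|v(\hat\theta)-e_x|\le D\sqrt{3\delta_0}$. Consequently
\[|v(\theta)-e_x|\le|v(\theta)-v(\hat\theta)|+|v(\hat\theta)-e_x|<3\delta_0+D\sqrt{3\delta_0}<\delta_0^{1/100}-3\delta_0\]
for $\varepsilon_0$ small (as $\sqrt{\delta_0}=\varepsilon_0^{1/40}\ll\varepsilon_0^{1/2000}=\delta_0^{1/100}$); since $0<\|x\|_1\le l_1^{(1)}\le 10l_1^{(1)}$, the index $k=x$ is admissible, and the contradiction is complete.

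I expect the main obstacle to be exactly the behaviour near the critical values of $v$ in this second case: there $v_\pm^{-1}$ fails to be Lipschitz, so one cannot directly transfer the $3\delta_0$-closeness of $v(\theta)$ and $v(\theta')$ into closeness of the associated preimages. The remedy is twofold: first, one works with the ratio-type quantity $\Phi$, for which the degeneracies of the two branches cancel and one gets the uniform lower bound $\Phi'\ge 1/D$; second, one invokes the Diophantine condition (in the same spirit as \eqref{160} in Lemma \ref{sep}) to keep $\Phi(e_x)$ uniformly away from the endpoints of its range, so that closeness on the circle coincides with closeness in the interval. Everything else is routine bookkeeping: the quadratic behaviour of $v$ at its nondegenerate critical points is already packaged into Lemma \ref{a}, and $\delta_0=\varepsilon_0^{1/20}$, $\sqrt{\delta_0}=\varepsilon_0^{1/40}$ beat any fixed power of $l_1^{(1)}=|\log\varepsilon_0|^4$.
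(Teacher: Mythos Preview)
Your proof is correct, and for the ``same monotonicity interval'' case it coincides with the paper's argument. In the ``different interval'' case, however, the paper takes a much shorter route that avoids the swap function $\Phi$ and the Lipschitz estimate on $\Phi^{-1}$ entirely. The key observation you are missing is a simple monotonicity sandwich: with $\theta\in I_-$ and $\theta':=\theta+x\cdot\omega\in I_+$, one compares directly with $\theta_{x,-}\in I_-$ and $\theta_{x,+}=\theta_{x,-}+x\cdot\omega\in I_+$. If $\theta\le\theta_{x,-}$ in $I_-$, then $v(\theta)\ge e_x$ (since $v$ is decreasing on $I_-$), and simultaneously $\theta'=\theta_{x,+}+(\theta-\theta_{x,-})\le\theta_{x,+}$ in $I_+$, so $v(\theta')\le e_x$; the case $\theta\ge\theta_{x,-}$ is symmetric. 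Thus $e_x$ always lies between $v(\theta)$ and $v(\theta')$, and
\[
|v(\theta)-v(\theta')|\ \ge\ |v(\theta)-e_x|\ \ge\ \delta_0^{1/100}-3\delta_0\ >\ 3\delta_0
\]
immediately from $v(\theta)\in J_i^{(1)}\subset J^{SR}$ and \eqref{1SR}. This is a direct proof, not by contradiction, and in fact yields the stronger lower bound $\delta_0^{1/100}-3\delta_0$ in this case.

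Your $\Phi$-argument does work and has its own appeal: it packages the interaction between the two monotone branches into a single monotone function whose inverse is uniformly Lipschitz, so the degeneracies of $v_\pm^{-1}$ at the critical values never have to be confronted. But here it is overkill; the sandwich argument uses only the ordering on $I_\pm$ and the definition of $J^{SR}$, with no quantitative control on $\Phi$ needed.
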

	\begin{proof}
		Suppose $\theta\in I_{i,-}^{(1)}$ (the case $\theta\in I_{i,+}^{(1)}$ is completely analogous). If $\theta+x\cdot \omega\in I_-$,  since $0<\|x\|_1\leq l_1^{(1)}$, we can apply Lemma \ref{a} to get $$|v (\theta)-v (\theta+x\cdot \omega)|\geq \|x\cdot \omega\|^2\gtrsim  (l_1^{(1)})^{-2\tau}\gg 3\delta_0.$$
		
		We thus suppose $\theta+x\cdot \omega\in I_+$.
		
		Assume $\theta\leq \theta_{x,-}$. Then  $\theta+x \cdot \omega \leq \theta_{x,+}$. By the monotonicity of $v$ on $I_{\pm}$, $v (\theta)\geq e_{x} $ and $v (\theta+x\cdot \omega )\leq e_{x} $. Thus by \eqref{1SR} and item \textbf{(2)} of  Proposition \ref{coverpr},  
		$$v (\theta)-v (\theta+x\cdot \omega)\geq v (\theta)-e_x> \delta_{0}^{\frac{1}{100}}-3\delta_0  >3\delta_0.$$
		If $\theta\geq \theta_{x,-}$, then similarly $v (\theta)\leq e_{x} $ and   $v (\theta+x\cdot \omega )\geq e_{x}$. Thus  	$$|v (\theta)-v (\theta+x\cdot \omega)|=v (\theta+x\cdot \omega)-v (\theta)\geq e_x -v (\theta)  >3\delta_0.$$
	\end{proof}
	\begin{rem}
		From Proposition \ref{SRpro} it follows that for $\theta\in I^{(1)}_i,E=v (\theta)$, the origin  $o\in S_0 (\theta,E)$ is the only resonant site at $l_1^{(1)}$-scale. 
	\end{rem}
	For the simple resonance case, we define  $B_1^{(1)}:=\Lambda_{l_1^{(1)}}$. The following proposition shows that the restriction operator  $H_{B_1^{(1)}} (\theta)$ has   a unique  Rellich function (of  order $\delta_0$),  $E_1^{(1)} (\theta)$ defined on $ I^{ (1)}_i$ near $v (\theta)$. 
	\begin{prop}\label{k1}
		For $\theta\in I^{(1)}_i$, we have 
		\begin{itemize}
			\item[\textbf{(a).}]  $H_{B_1^{(1)}} (\theta)$ has a unique eigenvalue  $E_1^{(1)} (\theta)$ such that $|E_1^{(1)} (\theta)-v (\theta)|\lesssim\varepsilon$. Moreover, any other $\hat{E}\in\sigma (H_{B_1^{(1)}} (\theta)) $ must obey $|\hat{E}-v (\theta)|\geq2\delta_0$.
			\item[\textbf{(b).}] The  eigenfunction $\psi_1$ corresponding to  $E_1^{(1)}$ satisfies  
			\begin{equation}\label{decay1}
				|\psi_1 (x)|\leq e^{-\gamma_0\|x\|_1} .
			\end{equation}
			\item[\textbf{(c).}] $\|G_1^\perp (E_1^{(1)})\|\leq \delta_0^{-1}$, where $G_1^\perp$ is the Green's function for $B_1^{(1)}$ on the orthogonal complement of $\psi_1$.
		\end{itemize} 
	\end{prop}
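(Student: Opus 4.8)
The plan is to treat the origin $o$ as the unique resonant site inside $B_1^{(1)}$ and to analyze $H_{B_1^{(1)}}(\theta)$ by a rank-one Schur (Feshbach) reduction to $\Lambda':=B_1^{(1)}\setminus\{o\}$, on which the Green's function is controlled by a Neumann series. The first step is to record that control. By Proposition \ref{SRpro}, for $\theta\in I^{(1)}_i$ and every $x\in\Lambda'$ one has $|v(\theta+x\cdot\omega)-v(\theta)|\geq 3\delta_0$, so for every $E$ with $|E-v(\theta)|<2\delta_0$ the diagonal of $H_{\Lambda'}(\theta)-E$ exceeds $\delta_0$ in modulus. The Neumann series argument from the proof of Theorem \ref{0ge} then gives, uniformly on this window,
\[
\|(H_{\Lambda'}(\theta)-E)^{-1}\|\leq 10\delta_0^{-1},\qquad |G_{\Lambda'}(\theta,E;x,y)|\leq e^{-\gamma_0\|x-y\|_1}\ \ (\|x-y\|_1\geq 1),
\]
and in particular $E\notin\sigma(H_{\Lambda'}(\theta))$ whenever $|E-v(\theta)|<2\delta_0$.

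For \textbf{(a)}, existence is immediate from the trial function $\mathbf{e}_o$: since $\|(H_{B_1^{(1)}}(\theta)-v(\theta))\mathbf{e}_o\|=\varepsilon\|\Delta\mathbf{e}_o\|\leq\sqrt{2d}\,\varepsilon$, Lemma \ref{trialcor} produces an eigenvalue within $\sqrt{2d}\,\varepsilon$ of $v(\theta)$. For uniqueness I would decompose $\ell^2(B_1^{(1)})=\mathbb{C}\mathbf{e}_o\oplus\ell^2(\Lambda')$, write $H_{B_1^{(1)}}(\theta)$ in block form with off-diagonal coupling $\Gamma=\varepsilon\sum_{\|z\|_1=1}\mathbf{e}_z\in\ell^2(\Lambda')$, and use that for $|E-v(\theta)|<2\delta_0$ the eigenvalue equation is equivalent to vanishing of the scalar Schur complement
\[
f(E):=v(\theta)-E-\langle\Gamma,(H_{\Lambda'}(\theta)-E)^{-1}\Gamma\rangle .
\]
Since $\|\Gamma\|^2\leq 2d\varepsilon^2$, the last term is $\lesssim\varepsilon^2\delta_0^{-1}\lesssim\varepsilon$ and $f'(E)=-1-\langle\Gamma,(H_{\Lambda'}(\theta)-E)^{-2}\Gamma\rangle\leq-1$, so $f$ is a strictly decreasing $O(\varepsilon)$-perturbation of $v(\theta)-E$ on the window, whence it has exactly one zero there, necessarily equal to the trial eigenvalue. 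Calling it $E_1^{(1)}(\theta)$, we obtain $|E_1^{(1)}(\theta)-v(\theta)|\leq\sqrt{2d}\,\varepsilon\lesssim\varepsilon$, simplicity of $E_1^{(1)}(\theta)$, and $|\hat E-v(\theta)|\geq 2\delta_0$ for every other $\hat E\in\sigma(H_{B_1^{(1)}}(\theta))$.

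For \textbf{(b)}, reading off the $\Lambda'$-block of $(H_{B_1^{(1)}}(\theta)-E_1^{(1)})\psi_1=0$ gives $\psi_1|_{\Lambda'}=-(H_{\Lambda'}(\theta)-E_1^{(1)})^{-1}\Gamma\,\psi_1(o)$, i.e.
\[
\psi_1(x)=-\varepsilon\sum_{\|z\|_1=1}G_{\Lambda'}(\theta,E_1^{(1)};x,z)\,\psi_1(o),\qquad x\in\Lambda' .
\]
Plugging in the off-diagonal decay from the first step (legitimate since $|E_1^{(1)}-v(\theta)|<2\delta_0$), using $\|x-z\|_1\geq\|x\|_1-1$ for $\|z\|_1=1$, $|\psi_1(o)|\leq 1$, and $\varepsilon e^{\gamma_0}=\sqrt\varepsilon\leq 1$, one gets $|\psi_1(x)|\leq e^{-\gamma_0\|x\|_1}$ when $\|x\|_1\geq 2$; the case $\|x\|_1=1$ follows from the norm bound $\|G_{\Lambda'}\|\leq10\delta_0^{-1}$ together with $\varepsilon\delta_0^{-1}\ll\sqrt\varepsilon=e^{-\gamma_0}$, and $x=o$ is trivial. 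Finally \textbf{(c)} is a corollary of \textbf{(a)}: $E_1^{(1)}$ being simple, $\|G_1^\perp(E_1^{(1)})\|$ is the reciprocal of $\operatorname{dist}(E_1^{(1)},\sigma(H_{B_1^{(1)}}(\theta))\setminus\{E_1^{(1)}\})\geq 2\delta_0-|E_1^{(1)}-v(\theta)|\geq\delta_0$.

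The construction is essentially routine once Proposition \ref{SRpro} is available; what requires care is the calibration of constants at two points. First, the window half-width $2\delta_0$ must be large enough that the Schur complement detects every eigenvalue near $v(\theta)$, yet small enough (below $3\delta_0-\delta_0$) that the Neumann resolvent bound on $\Lambda'$ persists throughout it — which is exactly what the $3\delta_0$ separation furnished by Proposition \ref{SRpro}, hence the calibration $l_1^{(1)}=|\log\varepsilon_0|^4$, $\delta_0=\varepsilon_0^{1/20}$, provides. Second, in \textbf{(b)} the coupling factor $\varepsilon$ must be absorbed into the exponential off-diagonal decay of $G_{\Lambda'}$, which works precisely because $\gamma_0=\tfrac{1}{2}|\log\varepsilon|$ makes $\varepsilon e^{\gamma_0}=\sqrt\varepsilon\leq 1$. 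There is no deeper obstacle at this stage: the genuinely delicate phenomena of the paper (collapsed gaps, auxiliary interlacing curves, the multi-scale iteration) all lie beyond this base case.
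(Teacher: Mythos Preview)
Your proof is correct and essentially parallels the paper's, with one genuine difference in the uniqueness step of \textbf{(a)}. The paper first establishes the decay estimate \textbf{(b)} for \emph{any} eigenfunction whose eigenvalue lies in the window $|E-v(\theta)|\leq 2\delta_0$, and then argues that two such eigenfunctions would both be $O(\varepsilon/\delta_0)$-close to $\mathbf{e}_o$, contradicting orthogonality. You instead invoke the rank-one Feshbach/Schur reduction and the strict monotonicity of the scalar function $f(E)=v(\theta)-E-\langle\Gamma,(H_{\Lambda'}-E)^{-1}\Gamma\rangle$ on the window to get uniqueness (and simplicity) directly, and only afterwards read off \textbf{(b)} from the $\Lambda'$-block of the eigenvalue equation. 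Both routes are short and rigorous here. Your Schur argument is arguably cleaner in this rank-one situation and yields simplicity for free; the paper's orthogonality argument, on the other hand, is the template that scales without modification to the double-resonance cases (Propositions \ref{k2}, \ref{k2n}, \ref{k2n+}), where the analogue of your $f$ would be a $2\times 2$ matrix and monotonicity is no longer available, whereas ``localized eigenfunctions force bounded multiplicity by orthogonality'' still works verbatim.
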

	\begin{proof}
		The existence of $E_1^{(1)} (\theta)$ follows from Lemma 	 \ref{trialcor} and the following observation {
			$$\| (H_{B_1^{(1)}} (\theta)-v (\theta)){\bm e}_{o}\|=\|\sum_{\|y\|_1=1}\varepsilon {\bm e}_y\|\lesssim\varepsilon.$$}
		Denote $\Lambda=B_1^{(1)}\setminus\{o\}$. Let  $|E-v (\theta)|\leq 2\delta_0$.  By Proposition \ref{SRpro}, 
		we have  $\Lambda\cap 
		S_{0} (\theta,E)=\emptyset$. 	Thus by Theorem \ref{0ge}, $$|G_{\Lambda} (\theta,E; x, y)|\leq\delta_0^{-1}e^{-\gamma_0\|x-y\|_1}.$$  
		Let $E\in \sigma (H_{B_1^{(1)}} (\theta))$ such that $|E-v (\theta)|\leq 2\delta_0$.  We determine the value of $\psi_1 (x)$ for $x\neq o$  by 
		$$\psi_1 (x)= \sum_{\|y\|_1=1} G_{\Lambda} (\theta,E; x, y) \Gamma_{y,o} \psi_1\left(o\right).$$
		Thus \begin{equation}\label{g}
			|\psi_1 (x)|\lesssim\frac{\varepsilon}{\delta_0}e^{-\gamma_0 (\|x\|_1-1)}\leq e^{-\gamma_0\|x\|_1}
		\end{equation} and we finish the proof of   \textbf{(b)}.
		
		If there is another $\hat{E}\in \sigma (H_{B_1^{(1)}} (\theta))$ satisfying $|\hat{E}-v (\theta)|\leq2\delta_0$, its   eigenfunction $\hat{\psi}$ also satisfies \eqref{decay1}, which violates the orthogonality of $\psi_1 $ and $\hat{\psi}$. Thus we prove the uniqueness part of \textbf{(a)}. The item \textbf{(c)} follows from any other $E_1^{(1)} (\theta)\neq\hat{E}\in \sigma (H_{B_1^{(1)}} (\theta))$ must obey $$|\hat{E}-E_1^{(1)} (\theta)|\geq |\hat{E}-v (\theta)|-|v (\theta)-E_1^{(1)} (\theta)|\geq2\delta_0-O (\varepsilon)\geq\delta_0.$$
	\end{proof}
	\begin{prop}\label{ap} We have for $\theta\in I_i^{(1)},$   
		$$|\frac{d^s}{d\theta^s} (E_1^{(1)} (\theta)-v (\theta))|\lesssim\frac{\varepsilon}{\delta_0^s}\ll\delta_0^{10} \ {\rm for\ }s=0,1,2.$$
	\end{prop}
	\begin{proof}
		By the previous proposition, we have 
		\begin{equation}\label{C0}
			|E_1^{(1)} (\theta)-v (\theta)|\lesssim\varepsilon.
		\end{equation} Recalling \eqref{g}, we have \begin{equation}\label{d1}
			\|\psi_1-\psi_0\|\lesssim\frac{\varepsilon}{\delta_0}.
		\end{equation}
		For $s=1,2$,  we use the Feynman-Hellman formulas \textbf{(1)} and \textbf{(2)} in Lemma \ref{daoshu}. Thus  
		\begin{align*}|(E^{(1)}_1)' (\theta)- v' (\theta)|&=|\left\langle\psi_1,V' \psi_1\right\rangle-\left\langle\psi_0,V' \psi_0\right\rangle|\lesssim\frac{\varepsilon}{\delta_0},\\
			(E^{(1)}_1)'' (\theta)&=\left\langle\psi_1, V^{\prime \prime} \psi_1\right\rangle-2\left\langle\psi_1, V' G_1^{\perp} (E_1) V' \psi_1\right\rangle. 
		\end{align*}
		Since \eqref{d1}, we can write $$V' \psi_1=A\psi_1+O (\frac{\varepsilon}{\delta_0}).$$ Thus we get $$\|P_1^\perp  (V'\psi_1)\|\lesssim \frac{\varepsilon}{\delta_0},$$ where $P_1^\perp$ denotes projection  onto  the orthogonal complement  of  $\psi_1$.   Thus 
		\begin{align*}
			|  (E^{(1)}_1)'' (\theta)- v'' (\theta)|&=|\left\langle\psi_1, V^{\prime \prime} \psi_1\right\rangle-\left\langle\psi_0, V^{\prime \prime} \psi_0\right\rangle-2\left\langle\psi_1, V' G_1^{\perp} (E_1^{(1)}) V' \psi_1\right\rangle|\\
			&\lesssim  \frac{\varepsilon}{\delta_0} +\|G_1^\perp (E_1^{(1)})\| \cdot\|P_1^\perp  (V'\psi_1)\|\\
			&\lesssim \frac{\varepsilon}{\delta_0^2},
		\end{align*}
		where we have  used the estimate from Proposition \ref{k1} \textbf{(c)} to bound the term $\|G_1^\perp (E_1^{(1)})\|$.
	\end{proof}
	The next proposition verifies the Morse condition of $E_1^{(1)} (\theta)$.
	\begin{prop}\label{31}
		Assume $\theta\in I_{i}^{(1)}$ such that $| (E^{(1)}_1)' (\theta)|\leq\delta_0^{\frac{1}{1000}}$. Then $| (E^{(1)}_1)'' (\theta)|\geq3-\delta_0^{10}>2$ with a unique sign for all such $\theta$. 
	\end{prop}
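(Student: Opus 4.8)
The plan is to transfer the Morse condition from the sampling function $v$ to the perturbed Rellich function $E_1^{(1)}$ using the $C^2$-closeness established in Proposition \ref{ap}. First I would recall that, by the standing assumption on $v$ (Remark \ref{ftnote}), the potential $v$ satisfies a quantitative Morse condition: there is a constant $a>0$ such that $|v''(\theta)|>3$ whenever $\theta$ lies in the $a$-neighborhood of a critical point $\theta_M$ or $\theta_m$, while $|v'(\theta)|>3$ outside these two neighborhoods. Consequently, if $|v'(\theta)|\le 2$ (which is weaker than needed), then $\theta$ must lie in one of the two $a$-neighborhoods, and there $|v''(\theta)|>3$ with a sign that is constant on each neighborhood (positive near the minimum $\theta_m$, negative near the maximum $\theta_M$).

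The key step is then a two-line estimate. Suppose $\theta\in I_i^{(1)}$ with $|(E_1^{(1)})'(\theta)|\le\delta_0^{1/1000}$. By Proposition \ref{ap} with $s=1$ we have $|(E_1^{(1)})'(\theta)-v'(\theta)|\lesssim \varepsilon/\delta_0\ll\delta_0^{10}$, hence
\[
|v'(\theta)|\le |(E_1^{(1)})'(\theta)|+|(E_1^{(1)})'(\theta)-v'(\theta)|\le \delta_0^{1/1000}+\delta_0^{10}<3,
\]
so $\theta$ lies in the $a$-neighborhood of one of the two critical points of $v$, where $|v''(\theta)|>3$ with a fixed sign. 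Applying Proposition \ref{ap} with $s=2$,
\[
|(E_1^{(1)})''(\theta)|\ge |v''(\theta)|-|(E_1^{(1)})''(\theta)-v''(\theta)|\ge 3-\delta_0^{10}>2,
\]
and the sign of $(E_1^{(1)})''(\theta)$ agrees with that of $v''(\theta)$ since the perturbation $\delta_0^{10}$ is far smaller than the lower bound $3$; as the sign of $v''$ is constant on each of the two $a$-neighborhoods, and $I_i^{(1)}$ (being an interval of length $\lesssim\delta_0^{1/200}\ll a$, cf. \eqref{I1len}) meets at most one such neighborhood, the sign of $(E_1^{(1)})''$ is the same for all such $\theta$ in $I_i^{(1)}$.

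I do not anticipate a serious obstacle here — the proposition is essentially a perturbative corollary of Proposition \ref{ap} — but the one point that needs a little care is the \emph{uniqueness of the sign}: one must make sure that a single simple resonant interval $I_i^{(1)}$ cannot straddle both the neighborhood of $\theta_M$ and that of $\theta_m$. This is guaranteed by the length bound \eqref{I1len} together with the separation $a<\|\theta_M-\theta_m\|/10$ from Remark \ref{ftnote}, which forces every small interval to meet at most one of the two critical neighborhoods; combined with the sign-stability of the perturbed second derivative just noted, this yields the claimed unique sign.
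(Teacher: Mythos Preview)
Your proposal is correct and follows essentially the same approach as the paper: transfer the smallness of $(E_1^{(1)})'$ to smallness of $v'$ via Proposition \ref{ap}, invoke the Morse dichotomy of Remark \ref{ftnote} to get $|v''|>3$ with a fixed sign, then transfer back to $(E_1^{(1)})''$ using Proposition \ref{ap} again. The paper's proof is terser on the uniqueness-of-sign point (it just writes ``since $\delta_0^{1/1000}\ll a$''), while you spell out explicitly that $|I_i^{(1)}|\lesssim\delta_0^{1/200}\ll a$ forces the interval to meet at most one critical neighborhood; both arguments amount to the same observation.
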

	\begin{proof}
		If $| (E^{(1)}_1)' (\theta)|\leq\delta_0^{\frac{1}{1000}}$, then  by Proposition \ref{ap}, we have 
		$$|v' (\theta)|\leq| (E^{(1)}_1)' (\theta)|+\delta_0^{10}\leq 2\delta_0^{\frac{1}{1000}}.$$
		By  our assumption of $v$  (c.f. Remark \ref{ftnote}) and since  $\delta_0^{\frac{1}{1000}}\ll a$, we deduce that 
		$$|v'' (\theta)|\geq3$$ with a unique sign for all such $\theta$. Thus by Proposition \ref{ap}, $$| (E^{(1)}_1)'' (\theta)|\geq|v'' (\theta)|-\delta_0^{10}\geq3-\delta_0^{10}>2$$ with a unique sign for all such $\theta$.
	\end{proof}
	At the end of this part, we show each Rellich  child $E_1^{(1)}$ defined on $I^{(1)}_i$ inherits the two-monotonicity interval structure. {
		The two-monotonicity interval structure  is crucial in the whole induction since it ensures that there are at most two resonant Rellich functions at
		each scale.}  For this purpose, we need a lemma concerning the derivative of $v$.  
	\begin{lem}\label{derv}
		For any $\theta\in \T$, we have 
		$$|v' (\theta)|\geq\min (\|\theta-\theta_m\|,\|\theta-\theta_M\|).$$
	\end{lem}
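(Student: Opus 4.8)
The plan is to argue by a three-way case split according to the distance of $\theta$ to the two critical points, using only the quantitative assumptions on $v$ recorded in Remark \ref{ftnote}. Since $a<\|\theta_M-\theta_m\|/10$, the two arcs $\{\theta\in\mathbb{T}:\ \|\theta-\theta_m\|<a\}$ and $\{\theta\in\mathbb{T}:\ \|\theta-\theta_M\|<a\}$ are disjoint, so every $\theta\in\mathbb{T}$ lies in exactly one of: (i) $\|\theta-\theta_m\|<a$; (ii) $\|\theta-\theta_M\|<a$; (iii) $\|\theta-\theta_m\|\geq a$ and $\|\theta-\theta_M\|\geq a$.

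In case (iii) the inequality is trivial: on $\mathbb{T}=\mathbb{R}/\mathbb{Z}$ one always has $\|\theta-\theta_m\|\leq\tfrac12$, hence $\min(\|\theta-\theta_m\|,\|\theta-\theta_M\|)\leq\tfrac12$, whereas Remark \ref{ftnote} gives $|v'(\theta)|>3$ on this region. In case (i) (case (ii) is identical with $\theta_M$ replacing $\theta_m$), I would pass to a lift in which the short segment joining $\theta_m$ to $\theta$ stays inside the arc $\{\|\cdot-\theta_m\|<a\}$; this is possible because $\|\theta-\theta_m\|<a$ with $a$ small. On that segment $v''$ has a fixed sign — it cannot vanish, since $\theta_m$ is a non-degenerate minimum and $|v''|>3$ throughout the arc by Remark \ref{ftnote} — and $|v''|\geq 3$ there. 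Because $v'(\theta_m)=0$, integrating $v''$ along the segment (equivalently, Taylor's theorem with Lagrange remainder) yields
$$|v'(\theta)| = \Bigl|\int_{\theta_m}^{\theta} v''(s)\,ds\Bigr| \geq 3\,\|\theta-\theta_m\| \geq \|\theta-\theta_m\| \geq \min(\|\theta-\theta_m\|,\|\theta-\theta_M\|),$$
which is the claim.

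There is no serious obstacle here; the only point requiring a little care — and the closest thing to a difficulty — is the torus bookkeeping in case (i)/(ii): one must check that the argument is performed on a single lift where the connecting path stays in the good arc, where $v''$ keeps a constant sign, and where the torus distance $\|\cdot\|$ agrees with ordinary length along that short path. All three are guaranteed by the smallness of $a$, so the lemma follows. Alternatively, one could simply invoke the ``moreover'' part of Lemma \ref{C2} around each of $\theta_m$ and $\theta_M$ with $\delta$ taken of order $1$, but the direct computation above is shorter.
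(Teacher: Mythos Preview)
Your proof is correct and follows essentially the same approach as the paper's: the same three-way case split, with $|v'|>3$ handling the region away from both critical points and a first-order Taylor/mean-value argument near each critical point. The only cosmetic difference is that you integrate $v''$ (using its constant sign on the arc) whereas the paper invokes the mean value theorem directly; both yield $|v'(\theta)|\geq 3\|\theta-\theta_c\|$ near the relevant critical point $\theta_c$.
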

	\begin{proof}
		By the assumption of $v$,  
		if $\theta\in\{\theta\in \mathbb{T}:\ \|\theta-\theta_M\|\geq a\}\cap\{\theta\in \mathbb{T}:\ \|\theta-\theta_m\|\geq a\}$, we have 
		$$|v' (\theta)|\geq3\geq\min (\|\theta-\theta_m\|,\|\theta-\theta_M\|).$$
		If $\|\theta-\theta_M\|<a$, by the mean value theorem,  we have 
		$$|v' (\theta)|\geq|v'' (\xi)||\theta-\theta_M|\geq\|\theta-\theta_M\|.$$
		Likewise, if $\|\theta-\theta_m\|<a$,  then 
		$$|v' (\theta)|\geq|v'' (\xi)||\theta-\theta_m|\geq\|\theta-\theta_m\|.$$
	\end{proof}
	\begin{prop}\label{str1} We have the following:
		\begin{itemize}
			\item [\textbf{(a).}] Suppose   $v$ has a critical point in $I^{(1)}_i$.  (Thus $I^{(1)}_i$ is a closed interval.) Then $E_1^{(1)}$ also has a critical point in $I^{(1)}_i$. Moreover, $E_1^{ (1)}$ has  the two-monotonicity interval structure, that is,   $I^{(1)}_i$ can be divided into two closed intervals $\tilde{I}^{ (1)}_{i,+}$  and  $\tilde{I}^{ (1)}_{i,-}$ with disjoint interiors, such that 
			$$\pm  (E^{(1)}_1)'|_{\tilde{I}^{(1)}_{i,\pm}}\geq0, \ \tilde{I}^{(1)}_{i,+}\cup\tilde{I}^{(1)}_{i,-}=I^{(1)}_i.$$
			\item [\textbf{(b).}] Suppose   $v$ has no  critical point in $I^{(1)}_i$  (Thus  $I^{(1)}_i=I^{(1)}_{i,+}\cup I^{ (1)}_{i,-}$ is a union of two closed disjoint intervals.) Then 
			$$\pm  (E^{(1)}_1)'|_{I^{(1)}_{i,\pm}}\gtrsim \delta_0^{\frac{1}{100}}.$$
		\end{itemize}
	\end{prop}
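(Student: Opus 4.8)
Throughout, recall from Proposition~\ref{k1} that $E_1^{(1)}$ is a \emph{simple} eigenvalue of $H_{B_1^{(1)}}(\theta)$ on $I_i^{(1)}$ (its distance to the rest of the spectrum is $\ge 2\delta_0-O(\varepsilon)\ge\delta_0$), so it is genuinely $C^2$ there and Proposition~\ref{ap} applies: $|\frac{d^s}{d\theta^s}(E_1^{(1)}-v)|\lesssim\varepsilon/\delta_0^s\ll\delta_0^{10}$ for $s=0,1,2$ on $I_i^{(1)}$. The plan in both cases is to first extract the needed Morse/monotonicity information about $v$ on $I_i^{(1)}$ and then transfer it to $E_1^{(1)}$ using this $C^2$-closeness.

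For \textbf{(a)}, I would use that the two critical values of $v$ are exactly the endpoints of $J(v)=[v(\theta_m),v(\theta_M)]$, so a critical point of $v$ lying in $I_i^{(1)}$ forces $J_i^{(1)}$ to contain one of these endpoints; assume without loss of generality it is $v(\theta_m)$ (the $\theta_M$-case being symmetric, with all derivative signs reversed). Then $\theta_m\in I_i^{(1)}=I_{i,-}^{(1)}\cup I_{i,+}^{(1)}$, the two pieces meeting only at $\theta_m$, and by \eqref{I1len} the whole interval $I_i^{(1)}$ has length $\le 4\delta_0^{1/200}\ll a$, so $I_i^{(1)}$ is contained in $\{\|\theta-\theta_m\|<a\}$, where $v''>2$ by Remark~\ref{ftnote}. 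Hence $(E_1^{(1)})''=v''+O(\delta_0^{10})>1$ on all of $I_i^{(1)}$, so $(E_1^{(1)})'$ is strictly increasing on $I_i^{(1)}$. At the left endpoint $\theta_L$ of $I_i^{(1)}$ one has $v(\theta_L)-v(\theta_m)=|J_i^{(1)}|\ge\delta_0^{1/100}$; a Taylor expansion at $\theta_m$ (with $v''\le D$) then gives $\|\theta_L-\theta_m\|\gtrsim\delta_0^{1/200}$, hence $|v'(\theta_L)|\ge 2\|\theta_L-\theta_m\|\gtrsim\delta_0^{1/200}$, so (as $\theta_L\in I_-$) $(E_1^{(1)})'(\theta_L)=v'(\theta_L)+O(\delta_0^{10})<0$; symmetrically $(E_1^{(1)})'(\theta_R)>0$ at the right endpoint $\theta_R$. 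Strict monotonicity of $(E_1^{(1)})'$ now produces a unique zero $\theta^*\in(\theta_L,\theta_R)\subset I_i^{(1)}$; setting $\tilde I^{(1)}_{i,-}:=[\theta_L,\theta^*]$ and $\tilde I^{(1)}_{i,+}:=[\theta^*,\theta_R]$ yields the claimed two-monotonicity interval structure.

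For \textbf{(b)}, $J_i^{(1)}$ now contains neither critical value, and the key step is the quantitative separation $\operatorname{dist}\big(J_i^{(1)},\{v(\theta_m),v(\theta_M)\}\big)\ge\tfrac12\delta_0^{1/100}$. To see this I would note that the covering interval containing $v(\theta_m)$ must be some $J_{i_0}^{(1)}\in\mathcal J^{(1)}$ and not a $J_k^{(2)}$ (by \eqref{160}, every $e_k$ is $\gg\delta_0^{1/100}$ away from critical values), that $v(\theta_m)$ lies at the extreme end of $J_{i_0}^{(1)}$ (being the left endpoint of $J(v)$, hence of $J^{SR}$), and that by Proposition~\ref{coverpr} and Remark~\ref{se} distinct simple-resonant intervals overlap by only $3\delta_0$ while each has length $\ge\delta_0^{1/100}$; hence every other $J_i^{(1)}$ starts at distance $\ge\delta_0^{1/100}-3\delta_0\ge\tfrac12\delta_0^{1/100}$ from $v(\theta_m)$, and symmetrically from $v(\theta_M)$. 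Consequently, for $\theta\in I^{(1)}_{i,\pm}$ one has $|v(\theta)-v(\theta_m)|,|v(\theta)-v(\theta_M)|\ge\tfrac12\delta_0^{1/100}$, which via Taylor expansions near $\theta_m$ and $\theta_M$ (again using $|v''|\le D$, or the trivial bound when $\theta$ is $a$-far from a critical point) forces $\min(\|\theta-\theta_m\|,\|\theta-\theta_M\|)\gtrsim\delta_0^{1/200}$, and then Lemma~\ref{derv} gives $|v'(\theta)|\gtrsim\delta_0^{1/200}\gg\delta_0^{1/100}$. Since $\pm v'\ge0$ on $I^{(1)}_{i,\pm}\subset I_\pm$, Proposition~\ref{ap} finishes the job: $\pm(E_1^{(1)})'=\pm v'+O(\delta_0^{10})\gtrsim\delta_0^{1/100}$ on $I^{(1)}_{i,\pm}$.

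The main obstacle is the separation estimate in \textbf{(b)} — establishing that a simple-resonant interval which does not contain a critical value is in fact quantitatively ($\gtrsim\delta_0^{1/100}$) bounded away from it — since this forces one to unpack the covering construction of Proposition~\ref{coverpr} and combine it with the $e_k$-versus-critical-value separation of Lemma~\ref{sep}. The rest is routine Taylor-expansion and perturbation bookkeeping, the only mild subtlety in \textbf{(a)} being the observation that $I_i^{(1)}$ is short enough to lie entirely inside the Morse neighborhood of the critical point it contains, which is what upgrades the qualitative ``$E_1^{(1)}$ has a critical point'' to the full two-monotonicity structure.
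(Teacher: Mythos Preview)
Your proof is correct and follows essentially the same route as the paper. The only minor differences are: in \textbf{(a)} the paper invokes the packaged Morse machinery (Proposition~\ref{31} plus Lemma~\ref{C2}) after locating the critical point by IVT, whereas you short-circuit this by observing directly that $I_i^{(1)}$ is short enough to sit inside the $a$-neighborhood of the critical point, forcing $(E_1^{(1)})''>1$ throughout and hence strict monotonicity of $(E_1^{(1)})'$; in \textbf{(b)} the paper converts the $\delta_0^{1/100}$ energy-separation into a $\theta$-separation of the same order via the mean value theorem with $|v'|\le D$, while you use a Taylor bound with $|v''|\le D$ to get $\gtrsim\delta_0^{1/200}$ instead --- harmless, since Lemma~\ref{derv} then still delivers $|v'|\gtrsim\delta_0^{1/100}$.
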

	\begin{proof}
		To prove \textbf{(a)}, assume that  $\theta_c$ is   the critical point of $v$ in $I^{(1)}_i$. By \eqref{I1len}, Lemma \ref{derv} and Proposition \ref{ap}, we deduce  $| (E^{(1)}_1)'|_{\partial I^{(1)}_i}|\gtrsim\delta_0^{\frac{1}{100}}$ with different sign on the edges of $I^{(1)}_i$. Since $ (E^{(1)}_1)'$ is continuous and  $I^{(1)}_i$ is connected, there is point $\theta_s\in I^{(1)}_i$ such that $ (E^{(1)}_1)' (\theta_s)=0$.  The two monotonicity structure immediately follows from Proposition \ref{31} and Lemma \ref{C2}. { 
			The uniqueness of $\theta_s$ on $I_i^{(1)}$ follows from the two monotonicity structure of $E_1^{(1)}$ restricted to  $I_i^{(1)}$.}
		
		To prove \textbf{(b)}, we  notice that by the construction (c.f. Proposition \ref{coverpr}), if $J_i^{(1)}$ does not contain a critical value of $v$, then $J_i^{(1)}$ has a  distance exceeding $\delta_0^{\frac{1}{100}}-3\delta_0$  from the critical value. Thus its preimage $I_i^{(1)}$  has a distance on order $\delta_0^{\frac{1}{100}}$  from the critical point. By  Lemma \ref{derv}, we get $$ \pm v|_{I^{ (1)}_{i,\pm}}\gtrsim \delta_0^{\frac{1}{100}}.$$   The desired estimate for  $\pm  (E^{(1)}_1)'|_{I^{(1)}_{i,\pm}}$ follows from the above  estimate and Proposition \ref{ap} immediately.
	\end{proof}
	We finish the proof in the simple resonance case and consider the double resonance case in the next part.
	\subsubsection{The double resonance case} In this section, we fix $J_k^{(2)}\in \mathcal{J}^{(2)} \ (0<\|k\|_1\leq10l_1^{(1)})$ and define the domain 
	$$ I^{(2)}_{k,\cup}:= I^{(2)}_{k,-}\cup  (I^{(2)}_{k,+}-k\cdot \omega), $$
	which is a single interval with  $\delta_0^{\frac{1}{100}}\lesssim|I^{(2)}_{k,\cup}|\leq 4\delta_0^{\frac{1}{200}}$ and $\operatorname{dist}(k,\partial I^{(2)}_{k,\cup})\gtrsim \delta_0^{\frac{1}{100}}$. 
	\begin{prop}\label{DRpro}
		If $\theta\in I^{(2)}_{k,\cup}$, then for any $\|x\|_1\leq  (l_1^{(1)})^4$, $x\notin\{o,k\}$,  we have 
		$$|v (\theta)-v (\theta+x\cdot \omega)|\geq 2\delta_0^{\frac{1}{400}}.$$
	\end{prop}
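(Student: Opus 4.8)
The plan is to reduce the statement to two sub-cases according to whether the shifted phase $\theta + x\cdot\omega$ lands in the same monotonicity interval as $\theta$ or in the other one, mirroring the structure of the proof of Proposition \ref{SRpro}, but being more careful with the small scales since now $\|x\|_1$ may be as large as $(l_1^{(1)})^4$ and the tolerance is only $2\delta_0^{\frac{1}{400}}$ rather than $3\delta_0$. Fix $\theta \in I^{(2)}_{k,\cup}$ and $x$ with $\|x\|_1 \leq (l_1^{(1)})^4$, $x \notin \{o,k\}$. Since $I^{(2)}_{k,\cup} = I^{(2)}_{k,-} \cup (I^{(2)}_{k,+} - k\cdot\omega)$, the point $\theta$ lies in $I_-$ or in a $\delta_0^{\frac{1}{100}}$-neighbourhood of a point of $I_-$ (after the shift by $-k\cdot\omega$); in either situation both $v(\theta)$ and $v(\theta+k\cdot\omega)$ are within $O(\delta_0^{\frac{1}{100}})$ of $e_k$, which is the structural meaning of the double resonance.

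\emph{Case 1: $\theta$ and $\theta+x\cdot\omega$ are in the same monotonicity interval $I_\pm$.} Then by Lemma \ref{a}, $|v(\theta)-v(\theta+x\cdot\omega)| \geq \|x\cdot\omega\|^2$. If $x \neq o$, the Diophantine condition gives $\|x\cdot\omega\| \geq \gamma \|x\|_1^{-\tau} \gtrsim (l_1^{(1)})^{-4\tau}$, hence $|v(\theta)-v(\theta+x\cdot\omega)| \gtrsim (l_1^{(1)})^{-8\tau} \gg 2\delta_0^{\frac{1}{400}}$, using that $l_1^{(1)} = |\log\varepsilon_0|^4$ is only polylogarithmic in $\varepsilon_0$ while $\delta_0 = \varepsilon_0^{1/20}$ is a genuine power. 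This is the easy case and requires only the two cited ingredients.

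\emph{Case 2: $\theta \in I_\mp$ but $\theta + x\cdot\omega \in I_\pm$ (opposite intervals).} Here I would split further according to the position of $\theta$ relative to $\theta_{x,\mp}$, exactly as in the proof of Proposition \ref{SRpro}: by monotonicity of $v$ on each $I_\pm$, $v(\theta) - e_x$ and $v(\theta+x\cdot\omega) - e_x$ have opposite signs (or one vanishes), so $|v(\theta)-v(\theta+x\cdot\omega)| \geq \max(|v(\theta)-e_x|, |v(\theta+x\cdot\omega)-e_x|)$. Now the key point is that $v(\theta)$ is within $O(\delta_0^{\frac{1}{100}})$ of $e_k$ (since $\theta$ is in $I^{(2)}_{k,\cup}$), so $|v(\theta) - e_x| \geq |e_k - e_x| - O(\delta_0^{\frac{1}{100}})$. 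If $x \neq k$ (and $x$ is one of the finitely many resonant indices with $0 < \|x\|_1 \leq 10 l_1^{(1)}$ for which $e_x$ is defined — note $(l_1^{(1)})^4 \gg 10 l_1^{(1)}$, so for larger $x$ the point $\theta + x\cdot\omega$ cannot satisfy $v(\theta+x\cdot\omega)$ close to $v(\theta)$ by the monotonicity/Lemma \ref{a} argument of Case 1 applied to the nearest recurrence, and I would handle that exclusion separately), Lemma \ref{sep} gives $|e_k - e_x| \gtrsim (l_1^{(1)})^{-2\tau} \gg \delta_0^{\frac{1}{100}}$, and also $|e_k - e_c| \gtrsim (l_1^{(1)})^{-2\tau}$ for the critical value handles the sub-case where $\theta+x\cdot\omega$ is near a critical point. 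Combining, $|v(\theta)-v(\theta+x\cdot\omega)| \gtrsim (l_1^{(1)})^{-2\tau} \gg 2\delta_0^{\frac{1}{400}}$.

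\textbf{Main obstacle.} The delicate point is the bookkeeping for intermediate-scale $x$ with $10 l_1^{(1)} < \|x\|_1 \leq (l_1^{(1)})^4$: for such $x$ the index $e_x$ is not defined, and I must rule out $v(\theta+x\cdot\omega)$ being within $2\delta_0^{\frac{1}{400}}$ of $v(\theta)$ directly. The argument is that if $\theta + x\cdot\omega \in I_\pm$ and $\theta \in I_\mp$ with $v(\theta+x\cdot\omega)$ that close to $v(\theta)$, then there would be a point $\theta_{x,\mp}$ with $v(\theta_{x,\mp}) = v(\theta_{x,\mp}+x\cdot\omega)$ within $O(\delta_0^{\frac{1}{400}})$ along the curve; but then $\|\theta - \theta_{x,\mp}\|^2 \lesssim \delta_0^{\frac{1}{400}}$ by Lemma \ref{a}, so $\|\theta - \theta_{x,\mp}\| \lesssim \delta_0^{\frac{1}{800}}$, and the self-resonance $v(\theta_{x,\mp}) = v(\theta_{x,\mp}+x\cdot\omega)$ forces $\|x\cdot\omega\|$ to be at most $O(\delta_0^{\frac{1}{800}})$-close to a point where $v$ revisits its value across the two branches — but $I^{(2)}_{k,\cup}$ has $\operatorname{dist}(k, \partial I^{(2)}_{k,\cup}) \gtrsim \delta_0^{\frac{1}{100}}$ and the construction excludes such recurrences of intermediate order at distance less than $\delta_0^{\frac{1}{400}}$. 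I would make this precise by comparing $\|x\cdot\omega\|$ with $\|k\cdot\omega\|$ via the Diophantine lower bound $\|(x-k)\cdot\omega\| \gtrsim \|x\|_1^{-\tau} \gtrsim (l_1^{(1)})^{-4\tau}$, which dominates $\delta_0^{\frac{1}{400}}$, yielding the required contradiction. Once this exclusion is in place, assembling the two cases gives the stated bound with room to spare.
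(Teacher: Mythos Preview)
Your Case 1 matches the paper. Your Case 2 is where you diverge, and the ``main obstacle'' you flag is self-inflicted: you split according to whether $\|x\|_1 \leq 10 l_1^{(1)}$ (so that $e_x$ is defined and Lemma~\ref{sep} applies) or not, and then struggle with the intermediate range $10 l_1^{(1)} < \|x\|_1 \leq (l_1^{(1)})^4$. The paper avoids this split entirely with a one-line substitution.

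The paper first works at the specific point $\theta^* = \theta_{k,-}$, where by definition $v(\theta^*) = v(\theta^* + k\cdot\omega)$. If $\theta^* + x\cdot\omega \in I_+$, rewrite
\[
|v(\theta^*) - v(\theta^* + x\cdot\omega)| \;=\; |v(\theta^* + k\cdot\omega) - v(\theta^* + x\cdot\omega)|,
\]
and now \emph{both} arguments $\theta^*+k\cdot\omega$ and $\theta^*+x\cdot\omega$ lie in $I_+$, so Lemma~\ref{a} applies directly and gives the lower bound $\|(k-x)\cdot\omega\|^2 \gtrsim (l_1^{(1)})^{-8\tau}$, since $x \neq k$ and $\|k-x\|_1 \lesssim (l_1^{(1)})^4$. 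This handles every $x$ at once, with no appeal to $e_x$ or Lemma~\ref{sep}. (If instead $\theta^* + x\cdot\omega \in I_-$, Lemma~\ref{a} applies to $\theta^*, \theta^* + x\cdot\omega$ directly---your Case~1.) The extension from $\theta^*$ to general $\theta \in I^{(2)}_{k,\cup}$ is then just the mean value theorem: $|I^{(2)}_{k,\cup}| \leq 4\delta_0^{1/200}$ and $|v'| \leq D$, so the perturbation is $O(\delta_0^{1/200}) \ll (l_1^{(1)})^{-8\tau}$. Your final sentence does stumble onto the key quantity $\|(x-k)\cdot\omega\|$, but through a vague contradiction argument that never becomes precise; the substitution above delivers it in one line and removes the obstacle.
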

	\begin{proof}
		We first consider the case that $\theta^*=\theta_{k,-}$. If $\theta^*+x\cdot \omega\in I_-$, then by Lemma \ref{wh} and Diophantine condition, we have 
		$$|v (\theta^*)-v (\theta^*+x\cdot \omega)|\geq \|x\cdot \omega\|^2 \gtrsim  (l_1^{(1)})^{-8\tau}\gg \delta_0^{\frac{1}{400}}.$$
		If instead $\theta^*+x\cdot \omega\in I_+$, then 
		$$|v (\theta^*)-v (\theta^*+x\cdot \omega)|=|v (\theta^*+k\cdot \omega)-v (\theta^*+x\cdot \omega)|\geq\| (k-x)\cdot \omega\|^2\gtrsim  (l_1^{(1)})^{-8\tau} \gg\delta_0^{\frac{1}{400}}.$$
		The claim for $\theta\in I^{(2)}_{k,\cup}$ follows from the mean value theorem, \eqref{I1len} and the uniform bound $|v'|\leq D$.
	\end{proof}
	\begin{rem}
		Proposition \ref{DRpro} ensures that for $\theta\in I^{(2)}_{k,\cup},E=v (\theta)$, the origin  $o$ and $k$  are the only resonant sites at $ (l_1^{(1)})^4$-scale. 
	\end{rem}
	For the double resonance case, we define  $B_1^{(2)}:=\Lambda_{l_1^{(2)}}$. The following proposition shows that the restriction operator  $H_{B_1^{(2)}} (\theta)$ has only  two  Rellich functions (of  order $\delta_0^{\frac{1}{400}}$),  $E_1^{(2)} (\theta)$ and $\mathcal{E}_1^{(2)} (\theta)$ defined on $I^{(2)}_{k,\cup}$  near $v (\theta)$.
	\begin{prop}\label{k2}
		The following holds  for $\theta\in I^{(2)}_{k,\cup}$:
		\begin{itemize}
			\item[\textbf{(a).}]  $H_{B_1^{(2)}} (\theta)$ has exactly two  eigenvalues $E_1^{(2)} (\theta)$ and $\mathcal{E}_1^{(2)} (\theta)$ in  $[v (\theta)-5D\delta_0^{\frac{1}{200}},v (\theta)+5D\delta_0^{\frac{1}{200}}]$. Moreover, any other $\hat{E}\in\sigma (H_{B_1^{(2)}} (\theta)) $ must obey $|\hat{E}-v (\theta)|\geq2\delta_0^{\frac{1}{400}}$.
			\item [\textbf{(b).}] The    eigenfunction $\psi_1$\ {\rm (resp. $\Psi_1$)} corresponding to $E_1^{(2)}$ {\rm  (resp. $\mathcal{E}_1^{(2)}$)}  decays exponentially fast away from $o$ and $k$, i.e.,
			$$|\psi_1 (x)|\leq e^{-\gamma_0\|x\|_1}+e^{-\gamma_0\|x-k\|_1},$$
			$$|\Psi_1 (x)|\leq e^{-\gamma_0\|x\|_1}+e^{-\gamma_0\|x-k\|_1}.$$
			Moreover, the two eigenfunctions can be expressed as 
			\begin{align}\label{y}
				\begin{split}
					\psi_1=A\delta (x-o)+B\delta (x-k)+O (\delta_0^{10}),\\
					\Psi_1=B\delta (x-o)-A\delta (x-k)+O (\delta_0^{10}),
				\end{split}	
			\end{align}
			where $A,B$ depend on $\theta$ satisfying $A^2+B^2=1$.
			\item [\textbf{(c).}] $\|G_1^{\perp\perp} (E_1^{(2)})\|,\|G_1^{\perp\perp} (\mathcal{E}_1^{(2)})\|\leq\delta_0^{-\frac{1}{400}}$, where $G_1^{\perp\perp}$ denotes the Green's function for $B_1^{(2)}$ on the orthogonal complement of the space spanned by $\psi_1$ and $\Psi_1$.
		\end{itemize}
	\end{prop}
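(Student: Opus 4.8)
The plan is to run the argument of Proposition \ref{k1}, but now with the \emph{two} resonant centres $o$ and $k$ in place of the single centre $o$; the point of taking $B_1^{(2)}=\Lambda_{l_1^{(2)}}$ with $l_1^{(2)}=(l_1^{(1)})^2$ is that this cube is large enough that the off-diagonal decay furnished by Theorem \ref{0ge} on $\Lambda:=B_1^{(2)}\setminus\{o,k\}$ makes the boundary contributions at $o$ and $k$ the only relevant ones. Throughout I will use freely that $\varepsilon\le\varepsilon_0=\delta_0^{20}$ is far smaller than every positive power of $\delta_0$ appearing, and that $\varepsilon^{1/2}/\delta_0=e^{\gamma_0}\varepsilon/\delta_0\ll 1$ with $\gamma_0=\tfrac12|\log\varepsilon|$.

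\emph{Step 1: two eigenvalues near $v(\theta)$.} First I would record the elementary estimate
$$|v(\theta+k\cdot\omega)-v(\theta)|\lesssim\delta_0^{\frac1{200}}\qquad(\theta\in I^{(2)}_{k,\cup}),$$
which follows because $I^{(2)}_{k,\cup}$ is a single interval of length $\lesssim\delta_0^{1/200}$ (by \eqref{I1len}) containing $\theta_{k,-}$, where $v(\theta_{k,-})=v(\theta_{k,+})=e_k$, together with $|v'|\le D$ (one in fact gets the much smaller bound $\delta_0^{1/100}(l_1^{(1)})^{\tau}$ using that double resonant intervals are bounded away from critical values by \eqref{160}, which is what makes the constant $5D$ comfortable). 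Consequently $\delta_o$ and $\delta_k$ are two orthonormal trial functions with $\|(H_{B_1^{(2)}}(\theta)-v(\theta))\delta_o\|=\varepsilon\|\Delta\delta_o\|\lesssim\varepsilon$ and $\|(H_{B_1^{(2)}}(\theta)-v(\theta))\delta_k\|\le|v(\theta+k\cdot\omega)-v(\theta)|+\varepsilon\|\Delta\delta_k\|\lesssim\delta_0^{1/200}$, so Lemma \ref{ll} with $m=2$ yields two eigenvalues of $H_{B_1^{(2)}}(\theta)$ in $[v(\theta)-5D\delta_0^{1/200},v(\theta)+5D\delta_0^{1/200}]$.

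\emph{Step 2: a priori localization on $\{o,k\}$, and the rest of (a), (b), (c).} Fix $E\in\sigma(H_{B_1^{(2)}}(\theta))$ with $|E-v(\theta)|\le 2\delta_0^{1/400}$. Since $\|x\|_1\le l_1^{(2)}<(l_1^{(1)})^4$ for $x\in\Lambda$, Proposition \ref{DRpro} gives $|v(\theta+x\cdot\omega)-E|\gg\delta_0>\delta_0$ there (the separation in Proposition \ref{DRpro} is in fact $\gg\delta_0^{1/400}$), so $\Lambda\cap S_0(\theta,E)=\emptyset$ and Theorem \ref{0ge} applies on $\Lambda$ at $(\theta,E)$: $\|G_\Lambda(\theta,E)\|\le 10\delta_0^{-1}$ and $|G_\Lambda(\theta,E;x,y)|\le e^{-\gamma_0\|x-y\|_1}$ for $\|x-y\|_1\ge1$. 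Feeding this into the Poisson formula for $\Lambda\subset B_1^{(2)}$, an eigenfunction $\psi$ with eigenvalue $E$ obeys, for $x\in\Lambda$,
$$\psi(x)=\varepsilon\!\!\sum_{\substack{z\in\Lambda\\ \|z-o\|_1=1}}\!\!G_\Lambda(\theta,E;x,z)\psi(o)+\varepsilon\!\!\sum_{\substack{z\in\Lambda\\ \|z-k\|_1=1}}\!\!G_\Lambda(\theta,E;x,z)\psi(k),$$
hence $|\psi(x)|\le e^{-\gamma_0\|x\|_1}|\psi(o)|+e^{-\gamma_0\|x-k\|_1}|\psi(k)|$ for $x\in\Lambda$; since $|\psi(o)|,|\psi(k)|\le1$ this is part (b). Summing the squares over $x\in\Lambda$ and using $\|\psi\|=1$ gives $|\psi(o)|^2+|\psi(k)|^2=1-O(\varepsilon)$ and $\|\psi-\psi(o)\delta_o-\psi(k)\delta_k\|=O(\varepsilon^{1/2})=O(\delta_0^{10})$, i.e.\ every such eigenfunction lies within $O(\delta_0^{10})$ of $\mathrm{span}\{\delta_o,\delta_k\}$. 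Self-adjointness then forbids three mutually orthonormal such eigenfunctions, so at most two eigenvalues lie in $|E-v(\theta)|\le2\delta_0^{1/400}$; combined with Step 1 (and $5D\delta_0^{1/200}\le2\delta_0^{1/400}$) there are exactly two, $E_1^{(2)}\ge\mathcal E_1^{(2)}$, and every other eigenvalue is at distance $\ge2\delta_0^{1/400}$ from $v(\theta)$, which is (a). Writing $A=\psi_1(o),B=\psi_1(k)$ (renormalized so $A^2+B^2=1$, the correction absorbed into $O(\delta_0^{10})$) gives the first line of \eqref{y}; the orthogonality $\langle\psi_1,\Psi_1\rangle=0$ together with $\Psi_1=\Psi_1(o)\delta_o+\Psi_1(k)\delta_k+O(\delta_0^{10})$ and $\Psi_1(o)^2+\Psi_1(k)^2=1-O(\varepsilon)$ forces $(\Psi_1(o),\Psi_1(k))=\pm(B,-A)+O(\delta_0^{10})$, which after a sign choice is the second line of \eqref{y}. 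Finally, $\mathrm{span}\{\psi_1,\Psi_1\}^{\perp}$ is invariant, and the eigenvalues of $H_{B_1^{(2)}}(\theta)$ on it are exactly the ones other than $E_1^{(2)},\mathcal E_1^{(2)}$, hence at distance $\ge 2\delta_0^{1/400}-5D\delta_0^{1/200}\ge\delta_0^{1/400}$ from $E_1^{(2)}$ and from $\mathcal E_1^{(2)}$; therefore $\|G_1^{\perp\perp}(E_1^{(2)})\|=\operatorname{dist}(\cdot)^{-1}\le\delta_0^{-1/400}$ and likewise for $\mathcal E_1^{(2)}$, which is (c).

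\emph{Main obstacle.} The genuinely new ingredient beyond Proposition \ref{k1} is handling two resonant centres simultaneously: showing that \emph{every} low-lying eigenfunction is an $O(\delta_0^{10})$-perturbation of a linear combination of $\delta_o$ and $\delta_k$, and then extracting the orthonormal normal form \eqref{y} from the $2\times2$ linear-algebra/orthogonality argument while keeping all errors uniformly negligible. One must also take mild care of the degenerate configuration where $\|k\|_1$ is small, so that $o$ and $k$ are close or even adjacent and some nearest neighbours of $k$ lie outside $\Lambda$ only by coinciding with $o$; but the Poisson-formula bookkeeping is unaffected. The scale hierarchy $\varepsilon\ll\delta_0\ll\delta_0^{1/200}\ll\delta_0^{1/400}\ll(l_1^{(1)})^{-\tau}$ is used precisely in Step 1 (the closeness of $v(\theta)$ and $v(\theta+k\cdot\omega)$) and in invoking Proposition \ref{DRpro} in Step 2.
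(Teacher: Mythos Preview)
Your proof is correct and follows essentially the same route as the paper: two trial functions $\delta_o,\delta_k$ for existence, Poisson formula on $\Lambda=B_1^{(2)}\setminus\{o,k\}$ for decay and the $2$-dimensional normal form \eqref{y}, then orthogonality to exclude a third eigenvalue and read off (c). The only cosmetic differences are that the paper first works at the special point $\theta^*=\theta_{k,-}$ (where both trial errors are $O(\varepsilon)$) and extends by the Lipschitz bound $|v'|\le D$, and that the paper redoes the Neumann series directly with the $\delta_0^{1/400}$ separation rather than quoting Theorem~\ref{0ge}; your remark that the \emph{proof} of Proposition~\ref{DRpro} actually gives $|v(\theta)-v(\theta+x\cdot\omega)|\gtrsim (l_1^{(1)})^{-8\tau}\gg\delta_0^{1/400}$ is exactly what is needed to make your invocation of Theorem~\ref{0ge} legitimate for arbitrary $E$ with $|E-v(\theta)|\le 2\delta_0^{1/400}$.
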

	\begin{proof}
		We first consider the case $\theta^*=\theta_{k,-}$.  In this case, since  $v (\theta^*)=v (\theta^*+k\cdot \omega)$,  $\delta (x-o)$ and $\delta (x-k)$ serve as two trial wave functions  for the operator $H_{B_1^{(2)}} (\theta^*)-v (\theta^*)$. By Lemma \ref{trialcor}, we deduce  $H_{B_1^{(2)}} (\theta^*)$ has two  eigenvalues in  $[v (\theta^*)-O (\varepsilon),v (\theta^*)+O (\varepsilon)]$. The claim for $\theta\in I^{(2)}_{k,\cup}$ follows from $|I^{(2)}_{k,\cup}|\leq 4\delta_0^{\frac{1}{200}}$ and  the uniform bound $|v'|\leq D$.
		
		To establish the decay of eigenfunctions, we  denote  $\Lambda=B^{(2)}_1\setminus\{o,k\}$. By Proposition \ref{DRpro}, for $x\in \Lambda$, 
		$$  |E_1^{(2)} (\theta)-v (\theta+x\cdot \omega)|\geq  |v (\theta)-v (\theta+x\cdot \omega)|-|E_1^{(2)} (\theta)-v (\theta)|\geq 2\delta_0^{\frac{1}{400}}-O (\varepsilon)>\delta_0^{\frac{1}{400}}.$$
		By Neumann series argument, we get 
		$$\|G_\Lambda (E_1^{(2)})\|\leq \delta_0^{-\frac{1}{400}}, \  |G_\Lambda (E_1^{(2)};x,y)|\leq\delta_0^{-\frac{1}{400}}e^{-\gamma_0\|x-y\|_1},$$ which together with restricting the equation $ (H_{B_1^{(2)}}-E_1^{(2)})\psi_1=0$ to $\Lambda$  yields the exponential decay of $\psi_1$. The proof of exponential decay  of  $\Psi_1$ is analogous.
		The expression \eqref{y} follows from the above estimates and  the orthogonality of $\psi_1$ and $\Psi_1$. 
		
		If there is a third  eigenvalue $|\hat{E}-v (\theta)|<2\delta_0^{\frac{1}{400}}$, the above argument shows that its eigenfunction  can also be expressed as the form \eqref{y}. This   violates the orthogonality relation  of   $\psi_1,\Psi_1$ and leads to  the contradiction. 
		
		Item \textbf{(c)}  follows from \textbf{(a)} immediately.
	\end{proof}
	\begin{rem}
		Thanks  to the perturbation theory  for self-adjoint operators, the two Rellich functions can be  properly parameterized so that  $E_1^{(2)} (\theta),\mathcal{E}_1^{(2)} (\theta)\in C^1 (I^{(2)}_{k,\cup})$, although the level crossing issue may occur. We will  use the   $C^1$-parameterization for $E_1^{(2)}$ and $\mathcal{E}_1^{(2)}$ in the following discussion. 
	\end{rem}
	Verifying the  Morse condition of   $E_1^{(2)} (\theta)$ and $\mathcal{E}_1^{(2)} (\theta)$ in double resonance case needs more efforts. 
	
	For conciseness, we define $$E_{0,-} (\theta)=v (\theta), \ E_{0,+} (\theta)=v (\theta+k\cdot \omega),$$ 
	$$\psi_{0,-}:=\delta (x-o), \  \psi_{0,+}:=\delta (x-k),$$
	and
	$$E_{1,>}^{(2)} (\theta)=\max (E_1^{(2)} (\theta), \mathcal{E}_1^{(2)} (\theta)), \   E_{1,<}^{(2)} (\theta)=\min (E_1^{(2)} (\theta), \mathcal{E}_1^{(2)} (\theta)).$$
	\begin{lem}\label{transe}
		For $\theta\in I_{k,\cup}^{(2)}$, we have
		$$\pm E_{0,\pm}' (\theta)\gtrsim  (l_1^{(1)})^{-2\tau}.$$
	\end{lem}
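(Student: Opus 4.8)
The plan is to recall that $E_{0,-}(\theta) = v(\theta)$ and $E_{0,+}(\theta) = v(\theta + k\cdot\omega)$, and to exploit the fact that on the domain $I_{k,\cup}^{(2)}$ the two points $\theta$ and $\theta + k\cdot\omega$ lie in \emph{opposite} monotonicity intervals of $v$. More precisely, $I_{k,\cup}^{(2)} = I_{k,-}^{(2)} \cup (I_{k,+}^{(2)} - k\cdot\omega)$, where $I_{k,-}^{(2)} = v_-^{-1}(J_k^{(2)}) \subset I_-$ and $I_{k,+}^{(2)} = v_+^{-1}(J_k^{(2)}) \subset I_+$. Thus for $\theta \in I_{k,\cup}^{(2)}$ one has $\theta \in I_-$ (so $v'(\theta) \le 0$, i.e.\ $v'_-(\theta) \le 0$) while $\theta + k\cdot\omega \in I_+$ (so $v'(\theta + k\cdot\omega) \ge 0$, i.e.\ $v'_+(\theta+k\cdot\omega) \ge 0$), or the symmetric statement with the roles of $\pm$ reversed depending on the orientation; in any case $E_{0,-}'$ and $E_{0,+}'$ carry opposite signs throughout $I_{k,\cup}^{(2)}$. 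This accounts for the "opposite-signed" half of the claim.

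For the quantitative lower bound, I would use Lemma \ref{derv}, which gives $|v'(\theta)| \ge \min(\|\theta - \theta_m\|, \|\theta - \theta_M\|)$ for every $\theta$, so it suffices to show that both $\theta$ and $\theta + k\cdot\omega$ stay a distance $\gtrsim (l_1^{(1)})^{-2\tau}$ (indeed much more) away from the critical points $\theta_m, \theta_M$. For $\theta \in I_{k,\cup}^{(2)}$: by construction $J_k^{(2)} = \bar B_{\delta_0^{1/100}}(e_k)$ and, by Lemma \ref{sep}, $|e_k - e_c| \gtrsim (l_1^{(1)})^{-2\tau} \gg \delta_0^{1/100}$ for every critical value $e_c$; hence $J_k^{(2)}$ is a definite distance from every critical value of $v$, and pulling back through the monotone branches $v_\pm$ (whose derivative is bounded by $D$, so distances in $\theta$ are at least $1/D$ times distances in the image near a critical point, using Lemma \ref{derv} once more) shows $I_{k,-}^{(2)}$ and $I_{k,+}^{(2)}$ each lie a distance $\gtrsim (l_1^{(1)})^{-2\tau}$ from $\theta_m$ and $\theta_M$. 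Translating $I_{k,+}^{(2)}$ by $-k\cdot\omega$ does not change its distance from critical points as measured after adding back $k\cdot\omega$, which is exactly what enters $E_{0,+}'$. Combining, $|E_{0,-}'(\theta)| = |v'(\theta)| \gtrsim (l_1^{(1)})^{-2\tau}$ and $|E_{0,+}'(\theta)| = |v'(\theta + k\cdot\omega)| \gtrsim (l_1^{(1)})^{-2\tau}$ for all $\theta \in I_{k,\cup}^{(2)}$.

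I expect the only mildly delicate point is bookkeeping the sign: one must confirm that $\theta \in I_{k,-}^{(2)}$ \emph{and} $\theta + k\cdot\omega \in I_{k,+}^{(2)}$ simultaneously hold on the whole glued interval $I_{k,\cup}^{(2)}$ — this is immediate from the definition $\theta_{k,+} = \theta_{k,-} + k\cdot\omega$ and $e_k = v(\theta_{k,-}) = v(\theta_{k,+})$ together with monotonicity of $v_\pm$, so the preimages match up under translation by $k\cdot\omega$ — and then that the two branches $v_\pm$ have genuinely opposite-signed derivatives (which is the content of $\pm v'_\pm \ge 0$ from the construction of $I_\pm$). Once this is pinned down, the derivative bound is a one-line application of Lemma \ref{derv} plus the separation of $e_k$ from critical values in Lemma \ref{sep}, with the factor $D^{-1}$ from $|v'| \le D$ absorbed into the implicit constant in $\lesssim$. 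No genuine obstacle; the estimate is softer than the ones already proved for Propositions \ref{SRpro} and \ref{DRpro}.
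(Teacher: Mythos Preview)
Your proposal is correct and follows essentially the same route as the paper: invoke \eqref{160} from Lemma~\ref{sep} to separate $e_k$ from the critical values, convert this via $|v'|\le D$ into separation of $\theta_{k,\pm}$ (and hence all of $I_{k,\cup}^{(2)}$, using its small diameter) from $\theta_m,\theta_M$, and then apply Lemma~\ref{derv}. The only cosmetic difference is that you argue the whole preimage intervals $I^{(2)}_{k,\pm}$ stay far from the critical points, whereas the paper first pins down the center points $\theta_{k,\pm}$ and then uses $|I_{k,\cup}^{(2)}|\le 4\delta_0^{1/200}\ll (l_1^{(1)})^{-2\tau}$ to extend; and you make the opposite-sign bookkeeping explicit, which the paper leaves implicit.
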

	\begin{proof}
		Recalling \eqref{160} and  $|v'|\leq D$, we have 
		$$\|\theta_{k,\pm}-\theta_m\|,\|\theta_{k,\pm}-\theta_M\|\gtrsim   (l_1^{(1)})^{-2\tau}.$$
		Since $\theta\in I_{k,\cup}^{(2)}$ and   $|I_{k,\cup}^{(2)}|\leq4\delta_0^{\frac{1}{200}}\ll  (l_1^{(1)})^{-2\tau} $, we have 
		$$\|\theta-\theta_m\|,\|\theta-\theta_M\|,\|\theta+k\cdot \omega -\theta_m\|,\|\theta+k\cdot \omega -\theta_M\|\gtrsim   (l_1^{(1)})^{-2\tau}.$$
		The above estimates together with   Lemma \ref{derv}   yield  the desired lower bound.
	\end{proof}
	
	\begin{thm}[Cauchy Interlacing Theorem]\label{cahuchy}
		{ 
			Let $A$ be a self-adjoint operator on a Hilbert space of dimension $n$, let $m \leqslant n$, and let $P$ be an $m$-dimensional  orthogonal projection. Let $B=P A P|_{\operatorname{Image}P}$ be a compression of $A$,} and denote the  (ordered) eigenvalues of $A$  (resp. $B$) by $\alpha_1 \leqslant \alpha_2 \leqslant \cdots \leqslant \alpha_n$  (resp. $\beta_1 \leqslant \beta_2 \leqslant \cdots \leqslant \beta_m$). Then
		$$
		\alpha_k \leqslant \beta_k \leqslant \alpha_{k+n-m} .
		$$
	\end{thm}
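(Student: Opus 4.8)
The plan is to derive both interlacing inequalities directly from the Courant--Fischer min--max characterization of eigenvalues of Hermitian matrices, with no induction needed. First I would record the two dual forms of Courant--Fischer: for an $n\times n$ Hermitian matrix $A$ with ordered eigenvalues $\alpha_1\le\cdots\le\alpha_n$ and every $1\le k\le n$,
$$
\alpha_k=\min_{\dim V=k}\ \max_{0\ne v\in V}\frac{\langle v,Av\rangle}{\langle v,v\rangle}
=\max_{\dim V=n-k+1}\ \min_{0\ne v\in V}\frac{\langle v,Av\rangle}{\langle v,v\rangle},
$$
where $V$ ranges over linear subspaces of $\C^n$ of the stated dimension, and the analogous formulas for $B$ with subspaces of $\C^m$.

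The key structural observation I would then isolate is that, since $PP^*=I_{m\times m}$, the adjoint $P^*\colon\C^m\to\C^n$ is a linear isometry onto the $m$-dimensional subspace $W=\operatorname{range}(P^*)$: indeed $\langle P^*u,P^*u\rangle=\langle u,PP^*u\rangle=\langle u,u\rangle$. Moreover, for every $u\in\C^m$ one has $\langle u,Bu\rangle=\langle u,PAP^*u\rangle=\langle P^*u,A\,P^*u\rangle$, so the Rayleigh quotient of $B$ at $u$ equals that of $A$ at $P^*u$. Hence, if $U\subseteq\C^m$ has dimension $j$, then $P^*U\subseteq\C^n$ also has dimension $j$, and the max (resp.\ min) of the Rayleigh quotient of $B$ over $U$ coincides with the max (resp.\ min) of the Rayleigh quotient of $A$ over $P^*U$.

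With this in hand, for the lower bound $\alpha_k\le\beta_k$ I would use the min--max form: for any $k$-dimensional $U\subseteq\C^m$, the subspace $P^*U$ is $k$-dimensional in $\C^n$, so $\max_{0\ne v\in P^*U}\langle v,Av\rangle/\langle v,v\rangle\ge\alpha_k$; rewriting the left side via $B$ and taking the infimum over all such $U$ yields $\beta_k\ge\alpha_k$. For the upper bound $\beta_k\le\alpha_{k+n-m}$ I would use the max--min form: for any $(m-k+1)$-dimensional $W\subseteq\C^m$, the subspace $P^*W$ is $(m-k+1)$-dimensional in $\C^n$, and since $m-k+1=n-(k+n-m)+1$ we get $\min_{0\ne v\in P^*W}\langle v,Av\rangle/\langle v,v\rangle\le\alpha_{k+n-m}$; rewriting via $B$ and taking the supremum over all such $W$ yields $\beta_k\le\alpha_{k+n-m}$.

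There is no genuine obstacle here, as this is a classical fact; the only thing demanding care is the dimension bookkeeping — one must match the codimension $m-k+1$ appearing in the max--min formula for $B$ with the codimension $n-\ell+1$ in the formula for $A$ to read off $\ell=k+n-m$, and one must remember to invoke the min--max form for one inequality and the max--min form for the other. An alternative route would be to first establish the case $m=n-1$ by the same argument and then iterate along a flag $\C^n\supset\cdots\supset\operatorname{range}(P^*)$, but the direct min--max argument above is shorter and self-contained.
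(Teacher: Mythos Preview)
Your proof is correct and is precisely the ``standard Min--Max argument'' that the paper invokes without spelling out; you have supplied the details the paper omits, using exactly the Courant--Fischer characterization the paper has in mind.
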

	\begin{proof}
		The interlacing theorem is proven via a standard min-max argument.
	\end{proof}
	
	To settle the level crossing issue, we use Theorem \ref{cahuchy} to construct a pair of auxiliary functions $\lambda_\pm$. For our purpose, let 
	$$P_\pm:=\langle\psi_{0,\pm}|\psi_{0,\pm}\rangle,\ Q_\pm:=\operatorname{Id}_{B_1^{(2)}}-P_\pm,$$
	and consider the  compressed operators 
	$$H_\pm (\theta):=Q_\pm H_{B_1^{(2)}} (\theta) Q_\pm.$$
	\begin{lem}\label{inter1}
		For $\theta\in I^{(2)}_{k,\cup}$, the operators $H_\pm (\theta)$ have unique eigenvalues $\lambda_\mp (\theta)$ interlacing $E_{1,>}^{(2)} (\theta)$ and $E_{1,<}^{(2)} (\theta)$: 
		\begin{equation}\label{inter}
			E_{1,<}^{(2)} (\theta)\leq \lambda_\mp (\theta) \leq E_{1,>}^{(2)} (\theta).
		\end{equation}
		Moreover, $\lambda_\pm (\theta)$ are $C^1$-closed  to $E_{0,\pm} (\theta)$:
		\begin{equation}\label{Epl}
			|\frac{d^s}{d\theta^s} (\lambda_\pm (\theta)-E_{0,\pm} (\theta))|\lesssim\frac{\varepsilon}{\delta_0^s}\ll \delta_0^{10},\  s=0,1.
		\end{equation}
		In particular, we have 
		\begin{equation}\label{tranl}
			\pm \lambda_\pm' (\theta)\gtrsim   (l_1^{(1)})^{-2\tau}\gg \delta_0^{\frac{1}{1000}} .
		\end{equation}
	\end{lem}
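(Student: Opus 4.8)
The plan is to analyze the two compressed operators $H_\pm(\theta) = Q_\pm H_{B_1^{(2)}}(\theta) Q_\pm$ separately, treating $\lambda_\mp$ as a single locally-defined Rellich function of each, and then to deduce the interlacing \eqref{inter} purely formally from the Cauchy Interlacing Theorem \ref{cahuchy}. For the interlacing, note that $H_+(\theta)$ is the compression of $H_{B_1^{(2)}}(\theta)$ to the codimension-one subspace $Q_+\ell^2(B_1^{(2)})$, so Theorem \ref{cahuchy} with $n=|B_1^{(2)}|$, $m=n-1$ gives, for the ordered spectra $\alpha_1\le\cdots\le\alpha_n$ of $H_{B_1^{(2)}}(\theta)$ and $\beta_1\le\cdots\le\beta_{n-1}$ of $H_+(\theta)$, the relations $\alpha_k\le\beta_k\le\alpha_{k+1}$. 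Applied to the two consecutive eigenvalues $E_{1,<}^{(2)}(\theta)$ and $E_{1,>}^{(2)}(\theta)$ — which by Proposition \ref{k2}(a) are the two eigenvalues of $H_{B_1^{(2)}}(\theta)$ nearest $v(\theta)$, separated from the rest of the spectrum by $\geq 2\delta_0^{1/400}$ — exactly one $\beta_k$ lies in the gap $[E_{1,<}^{(2)}(\theta),E_{1,>}^{(2)}(\theta)]$; this is the eigenvalue we name $\lambda_-(\theta)$, and symmetrically $\lambda_+(\theta)$ for $H_-(\theta)$. Uniqueness of this interlaced eigenvalue follows because the restriction $\Lambda = B_1^{(2)}\setminus\{o\}$ (resp. $\setminus\{k\}$) is $0$-nonresonant by Proposition \ref{DRpro}, so the Neumann-series bound of Theorem \ref{0ge} forces any eigenfunction of $H_+$ with eigenvalue in a $2\delta_0^{1/400}$-window of $v(\theta)$ to be exponentially localized at $k$, and two such would violate orthogonality — this is the same mechanism as in Proposition \ref{k2}(b).

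Next I would establish the $C^1$-closeness \eqref{Epl}. By construction $H_+(\theta)$ decouples the site $o$: relative to the splitting $\ell^2(B_1^{(2)}) = \mathbb{C}\psi_{0,+}\oplus Q_-\ell^2$... more precisely, $H_-(\theta) = Q_- H_{B_1^{(2)}}(\theta)Q_-$ removes the site $k$, leaving the operator on $B_1^{(2)}\setminus\{k\}$, whose diagonal entry at $o$ is $v(\theta) = E_{0,-}(\theta)$ and which is $0$-nonresonant away from $o$ by Proposition \ref{DRpro}. So on $H_-(\theta)$ the vector $\psi_{0,-}=\delta(x-o)$ is an approximate eigenfunction with $\|(H_-(\theta)-E_{0,-}(\theta))\psi_{0,-}\|\lesssim\varepsilon$, and Lemma \ref{trialcor} gives an eigenvalue within $O(\varepsilon)$; the Neumann bound gives a spectral gap $\gtrsim\delta_0^{1/400}$ to the rest, so $\lambda_+(\theta)$ is simple and $C^2$ there, and Feynman–Hellman formulas \textbf{(1)},\textbf{(2)} in Lemma \ref{daoshu} together with $\|(H_-(\theta))\text{-eigenfunction} - \psi_{0,-}\|\lesssim\varepsilon/\delta_0$ yield $|\frac{d^s}{d\theta^s}(\lambda_+(\theta)-E_{0,-}(\theta))|\lesssim\varepsilon/\delta_0^s$ for $s=0,1$, exactly as in Proposition \ref{ap}. (The bookkeeping here: $\lambda_+$ interlaces and is close to $E_{0,+}=v(\theta+k\cdot\omega)$ because $H_-$ kills the site $k$ — I must be careful to match the $\pm$ subscripts so that $H_\pm$ retains the site whose decoupled energy is $E_{0,\mp}$; I would fix this convention at the start.) The same argument applied to $H_+(\theta)$, which retains site $k$ with diagonal entry $E_{0,+}(\theta)=v(\theta+k\cdot\omega)$, gives the estimate for $\lambda_-$.

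Finally, \eqref{tranl} is immediate: combining \eqref{Epl} for $s=1$ with the transversality Lemma \ref{transe}, $\pm\lambda_\pm'(\theta) \geq \pm E_{0,\pm}'(\theta) - |\lambda_\pm'(\theta)-E_{0,\pm}'(\theta)| \gtrsim (l_1^{(1)})^{-2\tau} - \delta_0^{10} \gtrsim (l_1^{(1)})^{-2\tau} \gg \delta_0^{1/1000}$, using $(l_1^{(1)})^{-2\tau} = |\log\varepsilon_0|^{-8\tau} \gg \varepsilon_0^{1/20000} = \delta_0^{1/1000}$. The main obstacle I anticipate is not any single estimate but the careful setup of the compression: one must verify that $Q_\pm H_{B_1^{(2)}}Q_\pm$, viewed as an operator on the full space, has the spectrum of the genuine $(|B_1^{(2)}|-1)$-dimensional compression plus a harmless zero eigenvalue from the discarded direction, and ensure this spurious zero does not accidentally land in the interlacing window $[E_{1,<}^{(2)},E_{1,>}^{(2)}]$ — it cannot, since that window is an $O(\delta_0^{1/200})$-neighborhood of $v(\theta)$ and $v(\theta)$ is bounded away from $0$ generically, but if not one works with the honest compression on $Q_\pm\ell^2$ from the outset. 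Equally delicate is checking that the interlaced eigenvalue is genuinely \emph{unique} in the window (so that $\lambda_\mp$ is well-defined as a function), which rests on the exponential-localization-plus-orthogonality argument and hence on Proposition \ref{DRpro} excluding all resonant sites other than $\{o,k\}$ at scale $(l_1^{(1)})^4 \gg l_1^{(2)}$.
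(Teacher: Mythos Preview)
Your approach is essentially the same as the paper's: existence of $\lambda_\mp$ via the trial function $\psi_{0,\mp}$ and Lemma \ref{trialcor}, uniqueness via the localization-plus-orthogonality mechanism of Proposition \ref{k2}(b), interlacing via Theorem \ref{cahuchy} and Proposition \ref{k2}(a), the $C^1$-estimate \eqref{Epl} via Feynman--Hellman and $\|\phi_\mp-\psi_{0,\mp}\|\lesssim\varepsilon/\delta_0$, and \eqref{tranl} from Lemma \ref{transe}. One bookkeeping point: you have the subscripts reversed in your detailed argument --- since $\psi_{0,-}=\delta(x-o)$, the projector $Q_-$ removes the site $o$ (not $k$), so $H_-$ retains site $k$ and admits $\psi_{0,+}$ as trial function with eigenvalue $\lambda_+\approx E_{0,+}$; you flag this yourself, but the correct assignment is $H_\pm$ removes site $\pm$ (i.e.\ $o$ for $-$, $k$ for $+$) and has eigenvalue $\lambda_\mp\approx E_{0,\mp}$.
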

	\begin{proof}
		We claim that  $H_\pm (\theta)$ has  a  unique eigenvalue  $\lambda_\mp (\theta)$ in  $[v (\theta)-\delta_0^{\frac{1}{400}},v (\theta)+\delta_0^{\frac{1}{400}}]$ satisfying 
		$$|\lambda_\pm (\theta)-E_{0,\pm} (\theta)|\lesssim\varepsilon.$$
		For   the existence, we notice that 
		$$\| (H_\pm (\theta)-E_{0,\mp} (\theta))\psi_{0,\mp}\|=\|Q_\pm (H (\theta)-E_{0,\mp} (\theta))\psi_{0,\mp}\|\lesssim\varepsilon$$ and employ Lemma \ref{trialcor} to finish the proof.

		For the uniqueness, assuming   $\hat{\lambda}$ is an eigenvalue  of $H_\pm (\theta)$ satisfying $|\hat{\lambda}-v (\theta)|\leq \delta_0^{\frac{1}{400}}$, then its  eigenfunction $\hat{\phi}\in \operatorname{Image}Q_\pm$ satisfies $$Q_\pm  (H_{B_1^{(2)}} (\theta)-\hat{\lambda})\hat{\phi}=0.$$
		Thus $$ (H_{B_1^{(2)}} (\theta)-\hat{\lambda})\hat{\phi}=a\psi_{0,\pm}.$$
		By the same argument as the proof of item \textbf{(b)} of Proposition \ref{k2}, one can prove that $\hat{\phi}=\psi_{0,\mp}+O (\varepsilon\delta_0^{-1})$ since  $\hat{\phi}\in \operatorname{Image}Q_\pm$. Thus such an eigenvalue must be unique or else it will violate the  orthogonality. By Theorem \ref{cahuchy} and item \textbf{(a)} of Proposition \ref{k2}, $\lambda_\mp (\theta)$ must lie between $E_{1,<}^{(2)} (\theta)$ and  $ E_{1,>}^{(2)} (\theta)$.
		Let $\phi_\mp\in \operatorname{Image}Q_\pm $ be the eigenfunctions of $H_\pm$ corresponding to $\lambda_\mp$. Since $\|\phi_\mp-\psi_{0,\mp}\|\lesssim \varepsilon \delta_0^{-1}$, by the Feynman-Hellman formula \textbf{(1)} from  Lemma \ref{daoshu}, we have 
		\begin{align*}
			|\lambda_\mp'-E_{0,\mp}'|&=|\left\langle\phi_\mp, (H_\pm)' \phi_\mp
			\right\rangle-\left\langle\psi_{0,\mp},H' \psi_{0,\mp}\right\rangle|\\
			&=|\left\langle\phi_\mp,V' \phi_\mp
			\right\rangle-\left\langle\psi_{0,\mp},V' \psi_{0,\mp}\right\rangle|\\&\lesssim\frac{\varepsilon}{\delta_0}\ll \delta_0^{10}.
		\end{align*}		
		Finally, \eqref{tranl} follows from the above inequality and Lemma \ref{transe}.
	\end{proof}
	{
		\begin{rem}
			The interlacing argument (c.f. Theorem \ref{cahuchy} and Lemma \ref{inter1}) was first introduced by Forman-VandenBoom \cite{FV21} to study a pair of double resonant Rellich functions. In the one-dimensional case, the interlacing argument together with the classical eigenvalue separation lemma for  Jacobi matrix (c.f. \cite{FV21}, Lemma 1.5) provides  a uniform gap of  the two Rellich graphs in the vertical direction. Since the gap  is much larger than the next resonance strength, energies near the codomain of one resonant Rellich function  avoid  the resonance with the other one, crucially enabling their induction. The present work extends  the interlacing idea  to  the multi-dimensional setting,  where the  level crossing issue of  a pair of double resonant Rellich functions may present. We introduce a new type of Rellich functions (c.f. {\bf Type} \ref{t3}) to handle this issue and enable our induction.
	\end{rem}}
	\begin{lem}\label{cross}
		There is a unique point $\theta_s\in I^{(2)}_{k,\cup}$ satisfying $|\theta_s-\theta_{k,-}|<\delta_0^{10}$, such that 
		$$\lambda_+ (\theta_s)=\lambda_- (\theta_s).$$	
		Moreover, the Rellich children have  quantitative separation away from $\theta_s$: 
		\begin{equation}\label{E1s}
			E_{1,>}^{(2)} (\theta)-E_{1,<}^{(2)} (\theta)\geq |\lambda_+ (\theta)-\lambda_- (\theta)|\geq \delta_0^{\frac{1}{1000}}|\theta-\theta_s|.
		\end{equation}
		As a corollary, if $\theta\in I^{(2)}_{k,\cup}$ satisfies    $E_1^{(2)} (\theta)=\mathcal{E}_1^{(2)} (\theta)$, then $\theta=\theta_s$.
	\end{lem}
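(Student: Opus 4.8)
The plan is to prove Lemma \ref{cross} in three stages: first locating the crossing point $\theta_s$ of the two auxiliary curves, then establishing the quantitative separation \eqref{E1s}, and finally deducing the corollary about coincidence of the genuine Rellich children.

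\textbf{Step 1: Existence and uniqueness of $\theta_s$.}
By Lemma \ref{inter1}, the two auxiliary functions $\lambda_\pm(\theta)$ are $C^1$ on $I^{(2)}_{k,\cup}$, satisfy $|\lambda_\pm(\theta)-E_{0,\pm}(\theta)|\lesssim \varepsilon\ll\delta_0^{10}$ together with the same bound on their first derivatives, and obey $\pm\lambda_\pm'(\theta)\gtrsim (l_1^{(1)})^{-2\tau}$. I would consider the difference $g(\theta):=\lambda_+(\theta)-\lambda_-(\theta)$. Then $g'(\theta)=\lambda_+'(\theta)-\lambda_-'(\theta)\gtrsim (l_1^{(1)})^{-2\tau}>0$ on the whole interval, so $g$ is strictly increasing and has at most one zero. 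At the special phase $\theta^*=\theta_{k,-}$ we have $E_{0,+}(\theta^*)=v(\theta^*+k\cdot\omega)=v(\theta^*)=E_{0,-}(\theta^*)$, hence $|g(\theta^*)|=|\lambda_+(\theta^*)-\lambda_-(\theta^*)|\leq|\lambda_+(\theta^*)-E_{0,+}(\theta^*)|+|E_{0,-}(\theta^*)-\lambda_-(\theta^*)|\lesssim\varepsilon$. Since $\operatorname{dist}(\theta_{k,-},\partial I^{(2)}_{k,\cup})\gtrsim\delta_0^{\frac1{100}}$ and $g$ grows at rate $\gtrsim (l_1^{(1)})^{-2\tau}\gg\delta_0^{\frac1{1000}}$, moving a distance $\delta_0^{10}$ in either direction from $\theta^*$ changes $g$ by at least $\gtrsim\delta_0^{10}(l_1^{(1)})^{-2\tau}\gg\varepsilon$, so by the intermediate value theorem $g$ has a (unique) zero $\theta_s$ with $|\theta_s-\theta_{k,-}|<\delta_0^{10}$; this is the crossing point $\lambda_+(\theta_s)=\lambda_-(\theta_s)$.

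\textbf{Step 2: The separation estimate.}
For the lower bound $|\lambda_+(\theta)-\lambda_-(\theta)|\geq\delta_0^{\frac1{1000}}|\theta-\theta_s|$, I would integrate $g'$ from $\theta_s$: since $g(\theta_s)=0$ and $g'(\xi)\gtrsim (l_1^{(1)})^{-2\tau}\gg\delta_0^{\frac1{1000}}$ uniformly, the mean value theorem gives $|g(\theta)|=|g'(\xi)||\theta-\theta_s|\geq\delta_0^{\frac1{1000}}|\theta-\theta_s|$ for all $\theta\in I^{(2)}_{k,\cup}$. For the first inequality $E_{1,>}^{(2)}(\theta)-E_{1,<}^{(2)}(\theta)\geq|\lambda_+(\theta)-\lambda_-(\theta)|$, I invoke the interlacing relation \eqref{inter}: both $\lambda_+(\theta)$ and $\lambda_-(\theta)$ lie in the closed interval $[E_{1,<}^{(2)}(\theta),E_{1,>}^{(2)}(\theta)]$, so the distance between them cannot exceed the length of that interval, i.e. $|\lambda_+(\theta)-\lambda_-(\theta)|\leq E_{1,>}^{(2)}(\theta)-E_{1,<}^{(2)}(\theta)$. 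Chaining the two bounds yields \eqref{E1s}.

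\textbf{Step 3: The corollary.}
If $\theta\in I^{(2)}_{k,\cup}$ satisfies $E_1^{(2)}(\theta)=\mathcal{E}_1^{(2)}(\theta)$, then $E_{1,>}^{(2)}(\theta)=E_{1,<}^{(2)}(\theta)$, so the left side of \eqref{E1s} vanishes; the chain forces $\delta_0^{\frac1{1000}}|\theta-\theta_s|\leq 0$, hence $\theta=\theta_s$. The main obstacle I anticipate is Step 1: one must be careful that the rough bound $|g(\theta^*)|\lesssim\varepsilon$ at the base point together with the uniform growth rate genuinely pins the zero inside the tiny window $|\theta-\theta_{k,-}|<\delta_0^{10}$ and not merely somewhere in $I^{(2)}_{k,\cup}$ — this needs the quantitative comparison $\varepsilon\ll\delta_0^{10}(l_1^{(1)})^{-2\tau}$, which holds because $\delta_0=\varepsilon_0^{1/20}$ and $l_1^{(1)}=|\log\varepsilon_0|^4$ make $\varepsilon$ super-polynomially small relative to any power of $\delta_0$ times any power of $l_1^{(1)}$. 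Everything else is a direct application of the mean value theorem and the already-established properties of $\lambda_\pm$.
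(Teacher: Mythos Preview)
Your proposal is correct and follows essentially the same approach as the paper: define the difference $d(\theta)=\lambda_+(\theta)-\lambda_-(\theta)$, use $|d(\theta_{k,-})|\lesssim\varepsilon$ together with the uniform lower bound $d'(\theta)\geq\delta_0^{1/1000}$ (from \eqref{tranl}) to locate the unique zero $\theta_s$ via the mean value theorem, then derive \eqref{E1s} from the interlacing \eqref{inter} and the same derivative bound. Your additional remark verifying $\varepsilon\ll\delta_0^{10}(l_1^{(1)})^{-2\tau}$ makes explicit a comparison the paper treats as immediate.
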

	\begin{proof}
		We consider the difference function 
		$$d (\theta):=\lambda_+ (\theta)-\lambda_- (\theta).$$
		Since $E_{0,+} (\theta_{k,-})=E_{0,-} (\theta_{k,-})$, by  \eqref{Epl} (the version $s=0$),  we have 
		$$|d (\theta_{k,-})|\leq |\lambda_+ (\theta_{k,-})-E_{0,+} (\theta_{k,-})|+|\lambda_- (\theta_{k,-})-E_{0,-} (\theta_{k,-})|\lesssim \varepsilon.$$
		Recalling    \eqref{tranl}, we have 
		$$d' (\theta)\geq\delta_0^{\frac{1}{1000}}. $$
		Thus by mean value theorem  there exists  a unique  point $\theta_s$ satisfying $|\theta_s-\theta_{k,-}|\lesssim \varepsilon \delta_0^{-\frac{1}{1000}}<\delta_0^{10}$, such that 
		$d (\theta_s)=0$, that is 	$$\lambda_+ (\theta_s)=\lambda_- (\theta_s).$$	
		Finally, \eqref{E1s} follows from \eqref{inter} and  
		$$	|\lambda_+ (\theta)-\lambda_- (\theta)|=|d (\theta)|=|d (\theta)-d (\theta_s)|\geq \delta_0^{\frac{1}{1000}}|\theta-\theta_s|$$
		using mean value theorem.
	\end{proof}
	\begin{figure}[htp]
		\begin{tikzpicture}[>=latex, scale=0.88]
			\draw[domain=-4:0, samples=100] plot  ({\x},{\x*\x/5});
			\draw[domain=0:2.1, samples=100] plot  ({\x},{\x*\x/1.25});
			\draw[domain=-2.1:0, samples=100] plot  ({\x-0.5},{-1-\x*\x/1.25});
			\draw[domain=0:4, samples=100] plot  ({\x-0.5},{-1-\x*\x/5});
			\draw[domain=-4:-0.25, samples=100,red,dashed,thick] plot  ({\x},{-0.4*(\x+0.25)+0.12*(\x+0.25)*(\x+0.25)-0.55});
			\draw[domain=-0.25:3.5, samples=100,red,dashed,thick] plot  ({\x},{-0.4*(\x+0.25)-0.12*(\x+0.25)*(\x+0.25)-0.55});
			\draw[domain=-0.25:2.1, samples=100,red,dashed,thick] plot  ({\x},{0.3*(\x+0.25)+0.45*(\x+0.25)*(\x+0.25)-0.55});
			\draw[domain=-2.7:-0.25, samples=100,red,dashed,thick] plot  ({\x},{0.3*(\x+0.25)-0.45*(\x+0.25)*(\x+0.25)-0.55});
			\fill  (-0.25,-0.55)circle (2pt); \draw (-0.25,-0.55)[below]node{$\theta_s$};
			\draw[<->]  (3.5,-5)-- node[above]{{$I_{k,\cup}^{(2)}$}} (-4,-5); 
			\draw (1.8,1)node{$\lambda_+$}; 		\draw (1.1,2)node{$E_{1,>}^{(2)}$};  		\draw (1.8,-1.2)node{$\lambda_-$}; \draw (1,-2)node{$E_{1,<}^{(2)}$}; 
			\draw[domain=-1.7:0, samples=100] plot  ({\x+8},{1.95*\x-\x*\x/5-0.2});
			\draw[domain=1.7:0, samples=100] plot  ({\x+8},{1.95*\x+\x*\x/5-0.2});
			\draw[domain=-2.3:0, samples=100] plot  ({\x+8},{-1.2*\x+\x*\x/5-0.2});
			\draw[domain=2.3:0, samples=100] plot  ({\x+8},{-1.2*\x-\x*\x/5-0.2});
			\draw[domain=-2:0, samples=100,red,dashed,thick ] plot  ({\x+8},{1.5*\x-\x*\x/5-0.2});
			\draw[domain=2:0, samples=100,red,dashed,thick ] plot  ({\x+8},{1.5*\x+\x*\x/5-0.2});
			\draw[domain=-2.3:0, samples=100,red,dashed,thick ] plot  ({\x+8},{-0.9*\x+\x*\x/5-0.2});
			\draw[domain=2.3:0, samples=100,red,dashed,thick ] plot  ({\x+8},{-0.9*\x-\x*\x/5-0.2});
			\fill  (8,-0.2)circle (2pt); \draw (8,-0.3)[below]node{$\theta_s$};
			\draw[<->]  (5.7,-5)-- node[above]{{$I_{k,\cup}^{(2)}$}} (10.3,-5); 
			\draw (9.8,1.4)node{$\lambda_+$}; 		\draw (8.4,2)node{$E_{1,>}^{(2)}$};  		\draw (9.8,-1.7)node{$\lambda_-$}; \draw (8.7,-2)node{$E_{1,<}^{(2)}$}; 
		\end{tikzpicture}
		\caption{A cartoon illustration of  Lemma \ref{inter1}: the graph on the left shows the eigenvalue separation case and the  graph on the  right shows the level crossing case.}
	\end{figure}
	\begin{rem}\label{label}
		In the following, we assume that the Rellich functions are chosen so that $E_1^{(2)}>\mathcal{E}_1^{(2)}$ on the right of $\theta_s$.
	\end{rem}
	Suppose that $|E_{0,+}' (\theta_{k,-})|\geq|E_{0,-}' (\theta_{k,-})|$  (the argument is similar to the opposite case). Recalling Lemma  \ref{transe}, we  define $1\leq r\lesssim  (l_1^{(1)})^{2\tau}$ such that 
	$$    |E_{0,+}' (\theta_{k,-})|=r|E_{0,-}' (\theta_{k,-})|.$$
	\begin{lem}\label{cha}
		For $\theta\in I_{k,\cup}^{(2)}$, we have 
		$$| (E_{0,+}'+rE_{0,-}') (\theta)|\leq \delta_0^{\frac{1}{400}}.$$
	\end{lem}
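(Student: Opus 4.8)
The plan is to view $f(\theta):=\bigl(E_{0,+}'+rE_{0,-}'\bigr)(\theta)$ as a function that vanishes at the base point $\theta_{k,-}$ and whose derivative is (crudely) bounded, so that its size on the short interval $I_{k,\cup}^{(2)}$ is controlled by the product of the derivative bound and the length of that interval.

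First I would check that $f(\theta_{k,-})=0$. By Lemma \ref{transe}, on $I_{k,\cup}^{(2)}$ the derivatives $E_{0,+}'$ and $E_{0,-}'$ have opposite signs, with $E_{0,+}'\ge 0\ge E_{0,-}'$ (this reflects $\theta_{k,-}\in I_-$, $\theta_{k,-}+k\cdot\omega\in I_+$). Combined with the defining relation $|E_{0,+}'(\theta_{k,-})|=r|E_{0,-}'(\theta_{k,-})|$, this forces $E_{0,+}'(\theta_{k,-})=-\,rE_{0,-}'(\theta_{k,-})$, i.e. $f(\theta_{k,-})=0$.

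Next I would bound $f'$ uniformly. Since $E_{0,-}''(\theta)=v''(\theta)$ and $E_{0,+}''(\theta)=v''(\theta+k\cdot\omega)$ are both bounded by $D$, we get $|f'(\theta)|=|E_{0,+}''(\theta)+rE_{0,-}''(\theta)|\le(1+r)D\lesssim(l_1^{(1)})^{2\tau}$, using the bound $r\lesssim(l_1^{(1)})^{2\tau}$ recorded just before the statement (itself a consequence of Lemma \ref{transe}). Finally I would apply the mean value theorem on $I_{k,\cup}^{(2)}$, whose length is at most $4\delta_0^{\frac{1}{200}}$: for $\theta\in I_{k,\cup}^{(2)}$,
$$|f(\theta)|=|f(\theta)-f(\theta_{k,-})|\le\Bigl(\sup_{I_{k,\cup}^{(2)}}|f'|\Bigr)\,|\theta-\theta_{k,-}|\lesssim(l_1^{(1)})^{2\tau}\,\delta_0^{\frac{1}{200}}.$$
Because $l_1^{(1)}=|\log\varepsilon_0|^4$ while $\delta_0=\varepsilon_0^{\frac{1}{20}}$, the factor $(l_1^{(1)})^{2\tau}$ is only poly-logarithmic in $\varepsilon_0^{-1}$, whereas $\delta_0^{-\frac{1}{400}}$ is a genuine power of $\varepsilon_0^{-1}$; hence for $\varepsilon_0$ small enough $(l_1^{(1)})^{2\tau}\delta_0^{\frac{1}{200}}\le\delta_0^{\frac{1}{400}}$, which is exactly the claimed bound.

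I do not expect a genuine obstacle here: the only points requiring care are the sign/magnitude bookkeeping that makes $f$ vanish at $\theta_{k,-}$, and checking that the poly-logarithmic factor $r$ is absorbed by the gain in interval length $\delta_0^{\frac{1}{200}}\to\delta_0^{\frac{1}{400}}$. In the opposite case $|E_{0,-}'(\theta_{k,-})|\ge|E_{0,+}'(\theta_{k,-})|$ one runs the identical argument with $E_{0,-}'+r'E_{0,+}'$ for the analogously defined $r'$.
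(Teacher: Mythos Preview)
Your proposal is correct and follows essentially the same approach as the paper: both show $(E_{0,+}'+rE_{0,-}')(\theta_{k,-})=0$ from the opposite signs of $E_{0,\pm}'$, bound the derivative of this expression by $(1+r)D$ via $|E_{0,\pm}''|\le D$, and then apply the mean value theorem on the interval of length $\lesssim\delta_0^{1/200}$, absorbing the poly-logarithmic factor $(1+r)\lesssim (l_1^{(1)})^{2\tau}$ into the gap between $\delta_0^{1/200}$ and $\delta_0^{1/400}$.
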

	\begin{proof}
		Since $E_{0,+}'$ and $E_{0,-}'$ have opposite signs, we have 
		$$E_{0,+}' (\theta_{k,-})+rE_{0,-}' (\theta_{k,-})=0 .  $$
		By mean value  theorem and  $|E_{0,\pm}''|\leq D$, we have 
		$$ | (E_{0,+}'+rE_{0,-}') (\theta)|\leq  (1+r)D |\theta-\theta_{k,-}|\leq  (1+r)D \delta_0^{\frac{1}{200}}\leq\delta_0^{\frac{1}{400}}.$$
	\end{proof}
	\begin{prop}\label{1215}
		For $\theta\in I_{k,\cup}^{(2)}$, we have the following:
		\begin{itemize}
			\item[\textbf{(a).}] If $E_1^{(2)} (\theta)\neq\mathcal{E}_1^{(2)} (\theta)$, then 
			\begin{align}
				(E_1^{(2)})'& = (A^2-rB^2) E_{0,-}'+O (\delta_0^{\frac{1}{400}} ), \label{de}\\
				(\mathcal{E}_1^{(2)})'& = (B^2-rA^2) E_{0,-}'+O (\delta_0^{\frac{1}{400}} ),\label{dee}
			\end{align}
			and 	
			\begin{align}
				( E_1^{(2)})'' & =\frac{2\left\langle\psi_1, V' \Psi_1\right\rangle^2}{E_1^{(2)}-\mathcal{E}_1^{(2)}}+O (\delta_0^{-\frac{1}{400}} ),\label{df} \\
				( \mathcal{E}_1^{(2)})''& =\frac{2\left\langle\psi_1, V' \Psi_1\right\rangle^2}{\mathcal{E}_1^{(2)}-E_1^{(2)}}+O (\delta_0^{-\frac{1}{400}} )\label{ddf}
			\end{align}
			with the notation from Proposition \ref{k2} and $r$ as above.
			\item [\textbf{(b).}] If  $E_1^{(2)} (\theta)\neq \mathcal{E}_1^{(2)} (\theta)$ and  $| (E_1^{(2)})' (\theta)|\leq\delta_0^{\frac{1}{1000}}$, then $| (E_1^{(2)})'' (\theta)|\geq2$. Moreover, the  sign of $ (E_1^{(2)})'' (\theta)$ is the same as that of $ E_1^{ (2)} (\theta)- \mathcal{E}_1^{(2)} (\theta)$. The analogous conclusion holds if we exchange   $E_1^{(2)} (\theta)$ and $\mathcal{E}_1^{ (2)} (\theta)$.
			\item[\textbf{(c).}] If there is a level crossing  (thus $E_1^{(2)} (\theta_s)=\mathcal{E}_1^{(2)} (\theta_s)$ by Lemma \ref{cross}), then for all $\theta\in I_{k,\cup}^{ (2)}$,
			\begin{equation}\label{da}
				(E_1^{(2)})' (\theta)>\delta_0^{\frac{1}{1000}},\  (\mathcal{E}_1^{(2)})' (\theta)<-\delta_0^{\frac{1}{1000}}.
			\end{equation}
		\end{itemize}
	\end{prop}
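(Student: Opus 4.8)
The plan is to prove the three items in order. The common inputs are the eigenfunction expansions \eqref{y}, the outer gap $|\hat{E}-v(\theta)|\ge 2\delta_0^{\frac{1}{400}}$ and the bound $\|G_1^{\perp\perp}(\cdot)\|\le\delta_0^{-\frac{1}{400}}$ of Proposition~\ref{k2}, the magnitude bound $|E_{0,-}'|\gtrsim(l_1^{(1)})^{-2\tau}$ of Lemma~\ref{transe}, and the near-proportionality $E_{0,+}'=-rE_{0,-}'+O(\delta_0^{\frac{1}{400}})$ of Lemma~\ref{cha}. For \textbf{(a)}, \eqref{de}--\eqref{dee} come from the Feynman--Hellman formula \textbf{(1)} of Lemma~\ref{daoshu}: substituting \eqref{y} into $E'=\langle\psi_1,V'\psi_1\rangle$ gives $(E_1^{(2)})'=A^2E_{0,-}'+B^2E_{0,+}'+O(\delta_0^{10})$ since $V'$ is diagonal, and Lemma~\ref{cha} together with $A^2,B^2\le1$ rewrites this as \eqref{de}; the case of $\mathcal{E}_1^{(2)}$ is the same with $A,B$ swapped. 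For \eqref{df}--\eqref{ddf} I would use formulas \textbf{(2)} and \textbf{(3)} of Lemma~\ref{daoshu} with the auxiliary simple eigenvalue taken to be $\mathcal{E}_1^{(2)}$: the diagonal term $\langle\psi_1,V''\psi_1\rangle$ and the tail $\langle V'\psi_1,G_1^{\perp\perp}(E_1^{(2)})V'\psi_1\rangle$ are $O(\delta_0^{-\frac{1}{400}})$ by $|v''|\le D$ and Proposition~\ref{k2}\textbf{(c)}, while $-\langle\Psi_1,V'\psi_1\rangle^2/(E_1^{(2)}-\mathcal{E}_1^{(2)})$ produces the stated main term (using $\langle\psi_1,V'\Psi_1\rangle=\langle\Psi_1,V'\psi_1\rangle$); swapping the two eigenvalues gives \eqref{ddf}.

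For \textbf{(b)}, suppose $|(E_1^{(2)})'(\theta)|\le\delta_0^{\frac{1}{1000}}$. By \eqref{de} and $|E_{0,-}'|\gtrsim(l_1^{(1)})^{-2\tau}$ this forces $|A^2-rB^2|\lesssim\delta_0^{\frac{1}{1000}}(l_1^{(1)})^{2\tau}=o(1)$, since a fixed power of $\delta_0=\varepsilon_0^{1/20}$ beats any power of $l_1^{(1)}=|\log\varepsilon_0|^4$; with $A^2+B^2=1$ this gives $A^2=\frac{r}{1+r}+o(1)$, $B^2=\frac{1}{1+r}+o(1)$, so $A^2B^2(1+r)^2=r+o(1)\ge\tfrac12$ for small $\varepsilon_0$. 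Since $\langle\psi_1,V'\Psi_1\rangle=AB(E_{0,-}'-E_{0,+}')+O(\delta_0^{10})=AB(1+r)E_{0,-}'+O(\delta_0^{\frac{1}{400}})$ by \eqref{y} and Lemma~\ref{cha}, we get $\langle\psi_1,V'\Psi_1\rangle^2=r(E_{0,-}')^2(1+o(1))\gtrsim(l_1^{(1)})^{-4\tau}$; combined with $|E_1^{(2)}-\mathcal{E}_1^{(2)}|\le 10D\delta_0^{\frac{1}{200}}$ from Proposition~\ref{k2}\textbf{(a)}, the main term in \eqref{df} has modulus $\gtrsim(l_1^{(1)})^{-4\tau}\delta_0^{-\frac{1}{200}}$, which swamps the $O(\delta_0^{-\frac{1}{400}})$ remainder (the surviving power $\delta_0^{-\frac{1}{400}}$ beats the log-factors). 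Hence $|(E_1^{(2)})''(\theta)|\ge2$ with the sign of $E_1^{(2)}(\theta)-\mathcal{E}_1^{(2)}(\theta)$; interchanging $A,B$ (which leaves $A^2B^2(1+r)^2=r+o(1)$ fixed) and using \eqref{ddf} gives the analogous statement for $\mathcal{E}_1^{(2)}$.

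For \textbf{(c)}, a level crossing means $E_1^{(2)}(\theta_s)=\mathcal{E}_1^{(2)}(\theta_s)$, whence $E_{1,>}^{(2)}(\theta_s)=E_{1,<}^{(2)}(\theta_s)$; by Lemma~\ref{cross} and the interlacing $E_{1,<}^{(2)}\le\lambda_\pm\le E_{1,>}^{(2)}$ of Lemma~\ref{inter1}, the four $C^1$ functions $E_1^{(2)},\mathcal{E}_1^{(2)},\lambda_+,\lambda_-$ all agree at $\theta_s$. By Remark~\ref{label}, $E_1^{(2)}=E_{1,>}^{(2)}$ to the right of $\theta_s$; since the crossing of $E_1^{(2)},\mathcal{E}_1^{(2)}$ is transversal by \eqref{E1s} (it forces $(E_1^{(2)})'(\theta_s)\neq(\mathcal{E}_1^{(2)})'(\theta_s)$) and they meet only at $\theta_s$, also $E_1^{(2)}=E_{1,<}^{(2)}$ to the left of $\theta_s$. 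Thus $E_1^{(2)}-\lambda_+\ge0$ to the right and $\le0$ to the left, vanishing at $\theta_s$, so one-sided differentiation at $\theta_s$ combined with \eqref{tranl} yields $(E_1^{(2)})'(\theta_s)\ge\lambda_+'(\theta_s)\gtrsim(l_1^{(1)})^{-2\tau}\gg\delta_0^{\frac{1}{1000}}$; symmetrically $(\mathcal{E}_1^{(2)})'(\theta_s)\le\lambda_-'(\theta_s)\lesssim-(l_1^{(1)})^{-2\tau}\ll-\delta_0^{\frac{1}{1000}}$. I would then propagate by a barrier argument using part \textbf{(b)}: were $(E_1^{(2)})'$ to first reach the value $\delta_0^{\frac{1}{1000}}$ at a point $\theta_1\neq\theta_s$, then (since $E_1^{(2)}$ is a simple eigenvalue, hence $C^2$, away from $\theta_s$, and $(E_1^{(2)})'$ approaches $\delta_0^{\frac{1}{1000}}$ monotonically there) $(E_1^{(2)})''(\theta_1)$ would be $\le0$ if $\theta_1>\theta_s$ and $\ge0$ if $\theta_1<\theta_s$; but part \textbf{(b)} makes $\sgn((E_1^{(2)})''(\theta_1))=\sgn(E_1^{(2)}(\theta_1)-\mathcal{E}_1^{(2)}(\theta_1))$, which is $+$ on the right of $\theta_s$ and $-$ on the left --- a contradiction. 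Hence $(E_1^{(2)})'>\delta_0^{\frac{1}{1000}}$ on all of $I_{k,\cup}^{(2)}$, and the same argument with $\lambda_-$ and the $\mathcal{E}_1^{(2)}$-version of \textbf{(b)} gives $(\mathcal{E}_1^{(2)})'<-\delta_0^{\frac{1}{1000}}$, i.e.\ \eqref{da}.

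The routine part is \textbf{(a)}. In \textbf{(b)} the only delicate step is the bookkeeping that the gain $\delta_0^{-\frac{1}{200}}$ in the main term of \eqref{df} genuinely dominates both the remainder $\delta_0^{-\frac{1}{400}}$ and the powers of $l_1^{(1)}=|\log\varepsilon_0|^4$ entering through $r\in[1,(l_1^{(1)})^{2\tau}]$. The main obstacle is \textbf{(c)}: one must correctly extract the derivative lower bound at the crossing point from the interlacing with $\lambda_\pm$ (this is where Remark~\ref{label} and the transversality in \eqref{E1s} enter) and then run the barrier argument, for which the precise sign of $(E_1^{(2)})''$ from part \textbf{(b)} --- not just the lower bound $|(E_1^{(2)})''|\ge2$ --- is indispensable.
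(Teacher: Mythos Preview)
Your proposal is correct and follows essentially the same approach as the paper: Feynman--Hellman formulas plus the expansion \eqref{y} for \textbf{(a)}, the smallness of $|A^2-rB^2|$ forcing a lower bound on $|AB(1+r)E_{0,-}'|$ for \textbf{(b)}, and the interlacing with $\lambda_\pm$ at $\theta_s$ followed by a barrier/first-hitting-time argument via \textbf{(b)} for \textbf{(c)}. The only cosmetic differences are that in \textbf{(b)} you solve $A^2=\tfrac{r}{1+r}+o(1)$, $B^2=\tfrac{1}{1+r}+o(1)$ exactly (the paper instead records the cruder bounds $A^2\ge\tfrac14$, $B^2\ge\tfrac{1}{4r}$, which give $|AB|(1+r)\ge\tfrac{1+r}{4\sqrt r}$), and in \textbf{(c)} you invoke \eqref{E1s} to read off transversality and hence the sign of $E_1^{(2)}-\mathcal{E}_1^{(2)}$ on the left of $\theta_s$, whereas the paper extracts $(E_1^{(2)})'(\theta_s)\ge\lambda_+'(\theta_s)$ from the right-side interlacing alone (using that $E_1^{(2)}$ is $C^1$) and declares the left-side barrier ``analogous''; both orderings are valid.
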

	\begin{proof}
		We prove \eqref{de}  (the proof of \eqref{dee} is analogous). 	 By \eqref{y} and Lemma \ref{cha}, we  apply Feynman-Hellman formula \textbf{(1)} in  Lemma \ref{daoshu} to obtain 
		\begin{align*}
			( E_1^{(2)})'
			& =\left\langle\psi_1, V' \psi_1\right\rangle =A^2 \left\langle\psi_{0,-}, V' \psi_{0,-}\right\rangle +B^2  \left\langle\psi_{0,+}, V' \psi_{0,+}\right\rangle+O (\delta_0^{10} ) \\
			&=A^2  E_{0,-}'+B^2  E_{0,+}'+O (\delta_0^{10} ) \\
			& = (A^2-rB^2) E_{0,-}'+B^2  (E_{0,+}'+rE_{0,-}')+O (\delta_0^{10} )\\
			& = (A^2-rB^2) E_{0,-}'+O (\delta_0^{\frac{1}{400}} ).
		\end{align*}
		To prove \eqref{df} and \eqref{ddf}, we  use Feynman-Hellman formulas  \textbf{(2)} and \textbf{(3)} to obtain 	
		\begin{equation*}
			(E_1^{(2)})''=\left\langle\psi_1, V'' \psi_1\right\rangle+2 \frac{\left\langle\psi_1, V' \Psi_1\right\rangle^2}{E_1^{(2)}-\mathcal{E}_1^{(2)}}-2\left\langle V' \psi_1,G^{\perp \perp}_1 (E_1^{(2)}) V' \psi_1\right\rangle .
		\end{equation*}
		The first term is bounded by $D$ and the  third  term is bounded by $$2\|G^{\perp \perp}_1 (E_1^{(2)})\|\cdot\|V' \psi_1\|^2\lesssim \delta_0^{-\frac{1}{400}},$$ where we  use the estimate $\|G^{\perp \perp}_1 (E_1^{(2)})\| \leq\delta_0^{-\frac{1}{400}}$ from  item \textbf{(c)} of   Proposition \ref{k2}. Thus we finish the proof of item \textbf{(a)}.
		
		Now we are going to prove \textbf{(b)}. We will show the first term in \eqref{df} is large if  $| (E_1^{(2)})' (\theta)|\leq\delta_0^{\frac{1}{1000}}$. 
		Assume  $| (E_1^{(2)})' (\theta)|\leq\delta_0^{\frac{1}{1000}}$,  by \eqref{de}, we have 
		$$|A^2-rB^2|\cdot |E_{0,-}' (\theta)|\leq| ( E_1^{(2)})' (\theta)|+O (\delta_0^{\frac{1}{400}})\leq 2\delta_0^{\frac{1}{1000}}.$$
		From Lemma \ref{transe}, it follows that 
		$$|A^2-rB^2|\leq 2\delta_0^{\frac{1}{1000}} (l_1^{(1)})^{2\tau}<\frac{1}{100}.$$
		Since $A^2+B^2=1$ and $r\geq1$, we obtain  
		$$ (1+r)B^2\geq A^2+B^2-|A^2-rB^2|\geq \frac{99}{100}$$ and 
		$$ (1+\frac{1}{r})A^2\geq A^2+B^2-\frac{1}{r}|A^2-rB^2|\geq \frac{99}{100}.$$
		Thus $$B^2\geq\frac{1}{4r},\ A^2\geq\frac{1}{4}.$$
		By \eqref{y},  Lemmas \ref{transe}, \ref{cha}  and the  lower bounds on $A,B$,  we have 
		\begin{align*}
			|\langle\psi_1, V' \Psi_1\rangle|&=|AB (\langle\psi_{0,-}, V' \psi_{0,-}\rangle-\langle\psi_{0,+}, V' \psi_{0,+}\rangle)+O (\delta_0^{10} )| \\
			&=|AB   (E_{0,-}' -E_{0,+}')+O (\delta_0^{10})|\\
			&=|AB \left( (1+r) E_{0,-}'- (E_{0,+}'+r E_{0,-}')\right)+O (\delta_0^{10})|\\
			& = \frac{1+r }{4\sqrt{r}} |E_{0,-}'|-O (\delta_0^{\frac{1}{400}} )\\
			&\gtrsim  (l_1^{(1)})^{-2\tau}\gg \delta_0^{\frac{1}{1000}}.
		\end{align*}
		By item \textbf{(a)} of  Proposition \ref{k2}, we obtain  the estimate of the denominator  $|E_1^{(2)}-\mathcal{E}_1^{(2)}|\lesssim \delta_0^{\frac{1}{200}}$. Combining  the previous estimate of numerator, by \eqref{df},  we obtain  $$| (E_1^{(2)})'' (\theta)|\gtrsim     \delta_0^{\frac{1}{500}}   \delta_0^{-\frac{1}{200}}-O (\delta_0^{-\frac{1}{400}})>2,$$ 
		whose sign is the same as  that of $ E_1^{(2)} (\theta)- \mathcal{E}_1^{(2)} (\theta)$.
		
		To prove \textbf{(c)}, we first show \eqref{da} holds for $\theta=\theta_s$. Since $E_1^{(2)} (\theta_s)=\mathcal{E}_1^{(2)} (\theta_s)$ and  \eqref{inter},  we have 
		$$E_1^{(2)} (\theta_s)=\mathcal{E}_1^{(2)} (\theta_s)=\lambda_+ (\theta_s)=\lambda_- (\theta_s),$$ 
		and since $E_1^{(2)}>\mathcal{E}_1^{(2)}$ on the right of $\theta_s$  (c.f. Remark \ref{label}), we have for $\theta>\theta_s$, 
		$$E_1^{(2)} (\theta)\geq\lambda_+ (\theta)>\lambda_- (\theta) \geq \mathcal{E}_1^{(2)} (\theta).$$
		Thus by \eqref{tranl},
		we get $$   (E_1^{(2)})' (\theta_s)\geq \lambda_+' (\theta_s) >\delta_0^{\frac{1}{1000}},\   (\mathcal{E}_1^{(2)})' (\theta_s)\leq \lambda_-' (\theta_s)<-\delta_0^{\frac{1}{1000}}.$$
		We next claim  the inequalities hold for all $\theta\in I_{k,\cup}^{(2)}$. We prove the case that  $ (E_1^{(2)})' (\theta)$ and $\theta>\theta_s$  (the other cases are analogous). If it is not true, then the set $$\{\theta \in I_{k,\cup}^{(2)}:\  \theta>\theta_s, \ \  (E_1^{(2)})' (\theta)\leq  \delta_0^{\frac{1}{1000}} \}\neq \emptyset.$$ Let $\theta^*$ be its infimum. Since $E_1' (\theta)$ is continuous and $  (E_1^{(2)})' (\theta_s) >\delta_0^{\frac{1}{1000}}$, we have $\theta^*>\theta_s$ and  $ (E_1^{(2)})' (\theta)>\delta_0^{\frac{1}{1000}}\geq  (E_1^{(2)})' (\theta^*)$ for $\theta\in [\theta_s,\theta^*)$, which implies $  (E_1^{(2)})'' (\theta^*)\leq 0$. However, by item \textbf{(b)} and $E_1^{(2)} (\theta^*)>\mathcal{E}_1^{ (2)} (\theta^*)$, we get    $ (E_1^{(2)})'' (\theta^*)>2$, a contradiction. Thus we finish the proof of the claim. 
	\end{proof}
	The following proposition shows that 	$E_{1,<}^{(2)} (\theta)$ and $E_{1,>}^{(2)} (\theta)$ are uniformly closed   (of order $\varepsilon$) to the previous scale functions $E_{0,\pm} (\theta)$, providing a more completed  picture for the  structure of  the Rellich functions.
	
	Define 
	$$E_{0,\vee} (\theta):=\max (E_{0,+} (\theta), E_{0,-} (\theta)), \   E_{0,\wedge} (\theta):=\min (E_{0,+} (\theta), E_{0,-} (\theta)).$$
	\begin{prop}
		For  $\theta\in I_{k,\cup}^{(2)}$, we have 
		\begin{equation}\label{apbp}
			|E_{1,>}^{(2)} (\theta)-E_{0,\vee} (\theta)|,| E_{1,<}^{(2)} (\theta)-E_{0,\wedge} (\theta)|\lesssim\varepsilon.
		\end{equation}
	\end{prop}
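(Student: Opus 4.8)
The plan is to prove the estimate by a dichotomy on the size of $|E_{0,+}(\theta)-E_{0,-}(\theta)|$ relative to $\varepsilon$, using the site $\delta$-functions $\psi_{0,-}=\delta(x-o)$ and $\psi_{0,+}=\delta(x-k)$ as trial vectors, together with the isolation statement of Proposition \ref{k2} \textbf{(a)}. The starting observation is elementary: since $B_1^{(2)}=\Lambda_{l_1^{(2)}}$ contains all unit neighbors of $o$ and of $k$, the vector $(H_{B_1^{(2)}}(\theta)-E_{0,\pm}(\theta))\psi_{0,\pm}$ is exactly $\varepsilon$ times the sum of the $\delta$-functions on the unit neighbors of $o$ (resp.\ $k$), so that
$$\|(H_{B_1^{(2)}}(\theta)-E_{0,-}(\theta))\psi_{0,-}\|\leq\sqrt{2d}\,\varepsilon,\qquad \|(H_{B_1^{(2)}}(\theta)-E_{0,+}(\theta))\psi_{0,+}\|\leq\sqrt{2d}\,\varepsilon,$$
while $\langle\psi_{0,-},\psi_{0,+}\rangle=0$. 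I also record that $|E_{0,+}(\theta)-E_{0,-}(\theta)|\lesssim\delta_0^{1/200}$ throughout $I_{k,\cup}^{(2)}$ (from $|I_{k,\cup}^{(2)}|\leq 4\delta_0^{1/200}$, $|v'|\leq D$, and $E_{0,+}(\theta_{k,-})=E_{0,-}(\theta_{k,-})$), so that any eigenvalue of $H_{B_1^{(2)}}(\theta)$ within $O(\varepsilon)$ of $E_{0,+}(\theta)$ or of $E_{0,-}(\theta)$ automatically lies inside the isolating interval of Proposition \ref{k2} \textbf{(a)}, hence equals $E_{1,>}^{(2)}(\theta)$ or $E_{1,<}^{(2)}(\theta)$.

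First I would treat the case $|E_{0,+}(\theta)-E_{0,-}(\theta)|\geq C\varepsilon$ for a suitably large absolute constant $C$. Applying Lemma \ref{trialcor} to $\psi_{0,-}$ with target $E_{0,-}(\theta)$ and to $\psi_{0,+}$ with target $E_{0,+}(\theta)$ produces eigenvalues $\lambda_1,\lambda_2$ of $H_{B_1^{(2)}}(\theta)$ with $|\lambda_1-E_{0,-}(\theta)|,\,|\lambda_2-E_{0,+}(\theta)|\lesssim\varepsilon$; taking $C$ larger than twice the implied constant forces $\lambda_1\neq\lambda_2$, so by the uniqueness part of Proposition \ref{k2} \textbf{(a)} we get $\{\lambda_1,\lambda_2\}=\{E_{1,<}^{(2)}(\theta),E_{1,>}^{(2)}(\theta)\}$, and the same separation forces the ordering of $\{\lambda_1,\lambda_2\}$ to agree with that of $\{E_{0,-}(\theta),E_{0,+}(\theta)\}$. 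Hence $|E_{1,>}^{(2)}(\theta)-E_{0,\vee}(\theta)|,\,|E_{1,<}^{(2)}(\theta)-E_{0,\wedge}(\theta)|\lesssim\varepsilon$. In the complementary case $|E_{0,+}(\theta)-E_{0,-}(\theta)|<C\varepsilon$, I would instead use $\psi_{0,-}$ and $\psi_{0,+}$ as two orthonormal trial functions for the single energy $E^\ast:=v(\theta)=E_{0,-}(\theta)$: here $\|(H_{B_1^{(2)}}(\theta)-E^\ast)\psi_{0,+}\|\leq|E_{0,+}(\theta)-E_{0,-}(\theta)|+\sqrt{2d}\,\varepsilon\lesssim\varepsilon$, so Lemma \ref{ll} with $m=2$ yields two eigenvalues of $H_{B_1^{(2)}}(\theta)$ within $O(\varepsilon)$ of $v(\theta)$, which by Proposition \ref{k2} \textbf{(a)} must be $E_{1,>}^{(2)}(\theta)$ and $E_{1,<}^{(2)}(\theta)$; since moreover $|v(\theta)-E_{0,\vee}(\theta)|,\,|v(\theta)-E_{0,\wedge}(\theta)|\leq|E_{0,+}(\theta)-E_{0,-}(\theta)|<C\varepsilon$, the desired bound follows.

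The argument is short and the only mildly delicate point is the bookkeeping in the first case: one must guarantee that the two eigenvalues produced by Lemma \ref{trialcor} are genuinely distinct and inherit the correct order, and that both fall well inside the isolating window of Proposition \ref{k2} \textbf{(a)} so that they can only be $E_{1,>}^{(2)}$ and $E_{1,<}^{(2)}$. This is handled purely by fixing $C$ large enough and by the scale separation $2\delta_0^{1/400}\gg\delta_0^{1/200}\gg\varepsilon$ built into the first-scale construction, so no new idea is needed beyond the trial-function machinery already set up.
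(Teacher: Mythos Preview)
Your proof is correct and follows essentially the same approach as the paper's: both arguments use $\psi_{0,\pm}$ as trial functions, then split according to whether the $O(\varepsilon)$-windows around $E_{0,\vee}$ and $E_{0,\wedge}$ overlap (equivalently, whether $|E_{0,+}-E_{0,-}|$ exceeds a fixed multiple of $\varepsilon$), invoking Proposition \ref{k2}\textbf{(a)} to identify the resulting eigenvalues with $E_{1,>}^{(2)}$ and $E_{1,<}^{(2)}$. Your treatment of the ordering issue in the separated case is slightly more explicit than the paper's, but the substance is the same.
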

	\begin{proof}
		Fix  $\theta\in I_{k,\cup}^{(2)}$. We have
		$$\| (H_{B_1^{(2)}}-E_{0,\pm})\psi_{0,\pm}\|\lesssim \varepsilon. $$ 
		By Lemma \ref{trialcor}, we deduce  $H_{B_1^{(2)}}$ must have an eigenvalue in $[E_{0,\vee}-O (\varepsilon), E_{0,\vee}+O (\varepsilon)]$ and in $[E_{0,\wedge}-O (\varepsilon), E_{0,\wedge}+O (\varepsilon)]$. By  item \textbf{(a)} of proposition \ref{k2}, the eigenvalue must be $E_1^{(2)}$ or $\mathcal{E}_1^{(2)}$.  If the two intervals are disjoint, \eqref{apbp} must hold.
		Otherwise, we have \begin{equation}\label{other}
			E_{0,\vee}-E_{0,\wedge}\lesssim \varepsilon.
		\end{equation} Thus we have two trial functions  for  $E_{0,\vee}$: 
		$$\| (H_{B_1^{(2)}}-E_{0,\vee})\psi_{0,\pm}\|\lesssim \varepsilon.$$ 
		By Lemma \ref{trialcor},  $H_{B_1^{(2)}}$ must have two eigenvalues in $[E_{0,\vee}-O (\varepsilon), E_{0,\vee}+O (\varepsilon)].$ They are $E_1^{(2)}$ and  $\mathcal{E}_1^{(2)}$. Combining \eqref{other} shows \eqref{apbp}.
	\end{proof}
	At the end of this part, we  show that the two Rellich children $E_{1,>}^{(2)}$ and $E_{1,<}^{(2)}$ have   the two-monotonicity interval structure.
	\begin{prop}\label{str2n} We have the following:
		\begin{itemize}
			\item [\textbf{(a).}] If there is no level crossing, then $E_1^{(2)}=E_{1,>}^{(2)}$ and $\mathcal{E}_1^{(2)}=E_{1,<}^{(2)}$. Moreover, $E_1^{ (2)}$   (resp. $\mathcal{E}_1^{(2)}$) has the two-monotonicity interval structure with a  critical point $\theta_>$   (resp. $\theta_<$) satisfying $|\theta_>-\theta_{k,-}|<\delta_0^{10}$  (resp. $|\theta_<-\theta_{k,-}|<\delta_0^{10}$).
			\item [\textbf{(b).}] If there is a level crossing  (thus $E_1^{(2)} (\theta_s)=\mathcal{E}_1^{(2)} (\theta_s)$ by Lemma \ref{cross}),  then 
			\begin{align*}
				E_1^{(2)} (\theta)=	\left\{\begin{aligned}
					&E_{1,<}^{(2)} (\theta) &\text{if $\theta<\theta_s$,}\\
					&E_{1,>}^{(2)} (\theta) &\text{if $\theta>\theta_s$,}
				\end{aligned}\right. 
			\end{align*}
			and 
			\begin{align*}
				\mathcal{E}_1^{(2)} (\theta)=	\left\{\begin{aligned}
					&E_{1,>}^{(2)} (\theta) &\text{if $\theta<\theta_s$,}\\
					&E_{1,<}^{(2)} (\theta) &\text{if $\theta>\theta_s$.}
				\end{aligned}\right.	\end{align*}
			Moreover, $E_{1,>}^{(2)}$ and $E_{1,<}^{(2)}$ are piecewise $C^1$ functions  (except at the point $\theta_s$) with two-monotonicity interval structure.
		\end{itemize}
	\end{prop}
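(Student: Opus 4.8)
The plan is to handle the two cases separately, feeding on the Morse-type information of Proposition~\ref{1215}\,\textbf{(b)}--\textbf{(c)}, the crossing data of Lemma~\ref{cross}, and the $C^0$-proximity \eqref{apbp} to the exactly ``$V$-shaped'' and ``$\wedge$-shaped'' model functions $E_{0,\vee}=\max(E_{0,+},E_{0,-})$ and $E_{0,\wedge}=\min(E_{0,+},E_{0,-})$, which by Lemma~\ref{transe} (together with $E_{0,+}(\theta_{k,-})=E_{0,-}(\theta_{k,-})=e_k$) are strictly monotone on either side of $\theta_{k,-}$, with minimum (resp.\ maximum) $e_k$ at $\theta_{k,-}$.

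For part \textbf{(a)} (no level crossing, i.e.\ $E_1^{(2)}(\theta)\neq \mathcal{E}_1^{(2)}(\theta)$ on all of $I^{(2)}_{k,\cup}$): since the two curves never meet and $E_1^{(2)}>\mathcal{E}_1^{(2)}$ on the right of $\theta_s$ (Remark~\ref{label}), continuity forces $E_1^{(2)}>\mathcal{E}_1^{(2)}$ throughout, so $E_1^{(2)}=E_{1,>}^{(2)}$ and $\mathcal{E}_1^{(2)}=E_{1,<}^{(2)}$; moreover both are simple eigenvalues of $H_{B_1^{(2)}}$ (the rest of $\sigma(H_{B_1^{(2)}})$ is at distance $\geq 2\delta_0^{1/400}$ by Proposition~\ref{k2}\,\textbf{(a)}), hence $C^2$ on $I^{(2)}_{k,\cup}$. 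By Proposition~\ref{1215}\,\textbf{(b)}, $|(E_{1,>}^{(2)})'(\theta)|\leq\delta_0^{1/1000}$ implies $(E_{1,>}^{(2)})''(\theta)\geq 2$ with the single sign $+$ (that of $E_1^{(2)}-\mathcal{E}_1^{(2)}>0$); thus $E_{1,>}^{(2)}$ has at most one critical point, any such is a strict local minimum, and Lemma~\ref{C2} then yields the two-monotonicity interval structure once a critical point is produced. To produce it I invoke \eqref{apbp}: $E_{1,>}^{(2)}$ is $O(\varepsilon)$-close to the $V$-shaped $E_{0,\vee}$; since $\operatorname{dist}(\theta_{k,-},\partial I^{(2)}_{k,\cup})\gtrsim\delta_0^{1/100}$ and $\varepsilon\ll (l_1^{(1)})^{-2\tau}\delta_0^{1/100}$, the value of $E_{1,>}^{(2)}$ at each endpoint of $I^{(2)}_{k,\cup}$ strictly exceeds $E_{1,>}^{(2)}(\theta_{k,-})$, so $E_{1,>}^{(2)}$ attains an interior minimum at some $\theta_>$. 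For the location, note $|I^{(2)}_{k,\cup}|\leq 4\delta_0^{1/200}\ll\delta_0^{1/1000}$, so Lemma~\ref{C2} (applied with $\delta=\delta_0^{1/1000}$ and critical point $\theta_>$) gives $E_{1,>}^{(2)}(\theta_{k,-})-E_{1,>}^{(2)}(\theta_>)\geq\frac12|\theta_{k,-}-\theta_>|^2$, while by \eqref{apbp} and the fact that $\theta_{k,-}$ minimizes $E_{0,\vee}$ the left side is $E_{0,\vee}(\theta_{k,-})-E_{0,\vee}(\theta_>)+O(\varepsilon)\leq O(\varepsilon)$; hence $|\theta_{k,-}-\theta_>|\lesssim\sqrt{\varepsilon}<\delta_0^{10}$. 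The assertion for $\mathcal{E}_1^{(2)}=E_{1,<}^{(2)}$ is obtained verbatim with $E_{0,\vee}$ replaced by the $\wedge$-shaped $E_{0,\wedge}$ and the exchanged form of Proposition~\ref{1215}\,\textbf{(b)} (so $(E_{1,<}^{(2)})''\leq-2$ where its derivative is small), producing a unique interior maximum $\theta_<$ with $|\theta_<-\theta_{k,-}|<\delta_0^{10}$.

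For part \textbf{(b)} (level crossing; then $E_1^{(2)}(\theta_s)=\mathcal{E}_1^{(2)}(\theta_s)$ by the corollary in Lemma~\ref{cross}): Proposition~\ref{1215}\,\textbf{(c)} gives $(E_1^{(2)})'>\delta_0^{1/1000}>0$ and $(\mathcal{E}_1^{(2)})'<-\delta_0^{1/1000}<0$ on all of $I^{(2)}_{k,\cup}$, so $E_1^{(2)}$ is strictly increasing and $\mathcal{E}_1^{(2)}$ strictly decreasing. For $\theta<\theta_s$ this forces $E_1^{(2)}(\theta)<E_1^{(2)}(\theta_s)=\mathcal{E}_1^{(2)}(\theta_s)<\mathcal{E}_1^{(2)}(\theta)$, hence $E_1^{(2)}=E_{1,<}^{(2)}$ and $\mathcal{E}_1^{(2)}=E_{1,>}^{(2)}$ there; symmetrically $E_1^{(2)}=E_{1,>}^{(2)}$ and $\mathcal{E}_1^{(2)}=E_{1,<}^{(2)}$ for $\theta>\theta_s$ — exactly the claimed piecewise identification. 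Consequently $E_{1,>}^{(2)}=\max(E_1^{(2)},\mathcal{E}_1^{(2)})$ equals the decreasing branch $\mathcal{E}_1^{(2)}$ for $\theta\leq\theta_s$ and the increasing branch $E_1^{(2)}$ for $\theta\geq\theta_s$: it is continuous (a max of continuous functions), is $C^1$ (indeed $C^2$) away from $\theta_s$, where the two eigenvalues are distinct hence simple, and it has the two-monotonicity structure with a single corner at $\theta_s$; the argument for $E_{1,<}^{(2)}$ (increasing then decreasing, corner at $\theta_s$) is identical.

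The main obstacle is the existence-and-location step in part \textbf{(a)}: the Morse condition of Proposition~\ref{1215}\,\textbf{(b)} by itself only forbids more than one critical point, so one genuinely needs the $C^0$-comparison \eqref{apbp} with the model functions $E_{0,\vee},E_{0,\wedge}$ — together with the quantitative lower bound of Lemma~\ref{C2} and the scale hierarchy $\varepsilon\ll (l_1^{(1)})^{-2\tau}\delta_0^{1/100}$, $|I^{(2)}_{k,\cup}|\ll\delta_0^{1/1000}$ — to simultaneously create the critical point and pin it to within $\delta_0^{10}$ of $\theta_{k,-}$. Part \textbf{(b)} is comparatively routine once Proposition~\ref{1215}\,\textbf{(c)} is available.
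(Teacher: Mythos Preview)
Your proof is correct and follows essentially the same strategy as the paper: part~\textbf{(b)} is handled identically via Proposition~\ref{1215}\,\textbf{(c)}, and in part~\textbf{(a)} both arguments use the continuity/Remark~\ref{label} identification, the Morse information from Proposition~\ref{1215}\,\textbf{(b)}, and the $C^0$-approximation \eqref{apbp} to $E_{0,\vee},E_{0,\wedge}$.

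The one methodological difference worth noting is in the existence-and-location step for $\theta_>$. You first produce an interior minimum by comparing endpoint values of $E_{1,>}^{(2)}$ to its value at $\theta_{k,-}$, and then separately pin down the location via the quadratic lower bound of Lemma~\ref{C2} together with $E_{1,>}^{(2)}(\theta_{k,-})-E_{1,>}^{(2)}(\theta_>)\leq O(\varepsilon)$. The paper instead evaluates $E_{1,>}^{(2)}$ directly at the two nearby test points $\theta_\pm:=\theta_{k,-}\pm\delta_0^{10}$: using \eqref{apbp} and the slope bound of Lemma~\ref{transe} (giving $E_{0,\pm}(\theta_\pm)\geq e_k+\delta_0^{11}$) one gets $E_{1,>}^{(2)}(\theta_\pm)>E_{1,>}^{(2)}(\theta_{k,-})$, so the mean value theorem places a critical point in $[\theta_-,\theta_+]$ in a single stroke. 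The paper's route is a little more economical (one step instead of two, and it avoids invoking Lemma~\ref{C2} for the location), while yours has the minor advantage of making explicit that the Morse condition alone forces uniqueness of the critical point before any location estimate. Either way the conclusion is the same.
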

	\begin{proof}
		If there is no level crossing, then $E_{1,>}^{(2)} (\theta)>E_{1,<}^{(2)} (\theta)$ for all $\theta \in I_{k,\cup}^{(2)}$. Since $E_1^{(2)}>\mathcal{E}_1^{(2)}$ on the right of $\theta_s$, by continuity $E_1^{ (2)} (\theta)>\mathcal{E}_1^{(2)} (\theta)$ for all $\theta \in I_{k,\cup}^{ (2)}$. Thus $E_1^{(2)}=E_{1,>}^{(2)}$ and $\mathcal{E}_1^{ (2)}=E_{1,<}^{(2)}$.
		
		Now we show the existence of the critical point $\theta_>$ for $E_1^{(2)}$  (the proof of $\theta_<$ for $\mathcal{E}_1^{(2)}$ is analogous).  Let $\theta_-:=\theta_{k,-}-\delta_0^{10}$ and $\theta_+:=\theta_{k,-}+\delta_0^{10}$.  Recalling  Lemma \ref{transe} and \eqref{apbp}, we have 
		$$E_1^{ (2)} (\theta_+)\geq  E_{0,+} (\theta_+)-O (\varepsilon)\geq   E_{0,+} (\theta_{k,-})+\delta_0^{11}-O (\varepsilon) 
		> E_1^{ (2)} (\theta_{k,-}) $$ and 
		$$E_1^{ (2)} (\theta_-)\geq  E_{0,-} (\theta_-)-O (\varepsilon)\geq   E_{0,-} (\theta_{k,-})+\delta_0^{11}-O (\varepsilon) 
		> E_1^{ (2)} (\theta_{k,-}). $$
		Thus $E_1^{ (2)}$ must have a critical point $\theta_>\in [\theta_-,\theta_+]$ by mean value theorem.
		The two-monotonicity interval structure follows from  item \textbf{(b)} of Proposition \ref{1215} and Lemma \ref{C2}. Thus we finish the proof of \textbf{(a)}.
		
		If $E_1^{ (2)} (\theta_s)=\mathcal{E}_1^{ (2)} (\theta_s)$, by item \textbf{(c)} of Proposition \ref{1215}, $E_1^{ (2)}$ is strictly increasing and $\mathcal{E}_1^{ (2)}$ is strictly decreasing. Thus $E_1^{ (2)}>\mathcal{E}_1^{ (2)}$ on the right of $\theta_s$ and $E_1^{ (2)}<\mathcal{E}_1^{ (2)}$ on the left  of $\theta_s$. From the above observation,  the proof of \textbf{(b)} is obvious.
	\end{proof}
	{
		\begin{rem}
			While in the previous proposition, we resolve a  pair of double resonant Rellich functions  as two  piecewise $C^1$ Rellich curves  satisfying a Morse condition, we still crucially utilize the $C^2$-smoothness of the potential $v$ to ensure the $C^2$ smoothness of  Rellich functions in order to apply  Feynman-Hellman formulas (c.f. {\rm (2)} of Lemma \ref{daoshu}). It will  be interesting that   the $C^2$-cosine type assumption on $v$ could  be weakened to the  piecewise $C
			^1$  one  with two critical points satisfying  some  Morse condition.
	\end{rem}}
	\subsubsection{Domain adjustment}\label{adjm}
	In the above two parts, for every phase domain  $I_{i}^{ (1)}$ or $I_{k,\cup}^{ (2)}$, we have found a  block $B_1^{ (1)}$ or $B_1^{ (2)}$  such that any resonant Rellich child is Morse with two-monotonicity interval structure  (c.f. Proposition \ref{str1} and \ref{str2n}). To finish the construction of  the first generation Rellich functions $\mathcal{C}_1$, we must modify the domain  so  that the   Rellich child has  the same image on  each of its   monotone  interval to prepare for  the  construction of  the next scale.  {
		 This technical argument was also first introduced in \cite{FV21}.}
	\begin{prop}\label{adj}We have the following:
		\begin{itemize}
			\item[\textbf{(a).}] In the  simple resonance case, for the Rellich function   $E_1^{ (1)}$ defined on  $I^{ (1)}_i$, we can find  $I (E_1^{ (1)})\subset I^{ (1)}_i$ with 
			\begin{equation}\label{error}
				|I^{ (1)}_i\setminus I (E_1^{ (1)})|\leq \delta_0^{5},
			\end{equation}
			such that  $E_1^{ (1)}$ maintains the two-monotonicity interval structure  on  $I (E_1^{ (1)})$ and has the same image $J (E_1^{ (1)}):=E_1^{ (1)} (I (E_1^{ (1)}))$ on each monotone  component of $I (E_1^{ (1)})$.
			\item[\textbf{(b)}.] In the  double resonance case, for the Rellich functions  $E_{1,>}^{(2)},E_{1,<}^{(2)}$ defined on  $I^{ (2)}_{k,\cup}$, we can find $I (E_{1,>}^{(2)})\subset [\operatorname{inf} I^{(2)}_{k,-},\operatorname{sup}I^{(2)}_{k,+}-k\cdot\omega ],I (E_{1,<}^{(2)})\subset [\operatorname{inf} I^{(2)}_{k,+}-k\cdot \omega ,\operatorname{sup}I^{ (2)}_{k,-}]$ with 
			\begin{equation}\label{inside}
				[\theta_{k,-}-\delta_0, \theta_{k,-}+\delta_0]\subset  I (E_{1,>}^{(2)})\cap I (E_{1,<}^{(2)}),
			\end{equation}
			$$|[\operatorname{inf} I^{(2)}_{k,-},\operatorname{sup}I^{(2)}_{k,+}-k\cdot\omega ]\setminus I (E_{1,>}^{(2)}) | \leq \delta_0^{5},$$
			$$|[\operatorname{inf} I^{(2)}_{k,+}-k\cdot \omega ,\operatorname{sup}I^{(2)}_{k,-}]\setminus I (E_{1,<}^{(2)}) | \leq \delta_0^{5},$$
			such that the Rellich function $E_{1,>}^{(2)}$  (resp. $E_{1,<}^{(2)}$)  maintains the two-monotonicity interval structure on  $ I (E_{1,>}^{(2)})$  (resp. $I (E_{1,<}^{(2)})$) and  has the same image $J (E_{1,>}^{(2)}):=E_{1,>}^{(2)} (I (E_{1,>}^{(2)}))$  (resp. $J (E_{1,<}^{(2)}):=E_{1,<}^{(2)} (I (E_{1,<}^{(2)}))$)  on each monotone  component of $ I (E_{1,>}^{(2)})$  (resp. $I (E_{1,<}^{(2)})$).
		\end{itemize}
		We abuse the notation and henceforth denote by $E_1^{(1)}, E_{1,>}^{(2)},E_{1,<}^{(2)}$ the restrictions  of the Rellich functions  to their modified domains $I (E_1^{(1)}), I (E_{1,>}^{(2)}),  I (E_{1,<}^{(2)})$.
	\end{prop}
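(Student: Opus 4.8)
The plan is to isolate one elementary trimming construction and apply it twice. I would first record the following: given a $C^1$ (piecewise-$C^1$, in the level-crossing subcase) function $E$ on a closed interval $I$ with the two-monotonicity interval structure --- $I=I^{\downarrow}\cup I^{\uparrow}$, with $E$ non-increasing on $I^{\downarrow}$ and non-decreasing on $I^{\uparrow}$, the two components meeting at a point $\theta_c$ or being disjoint --- one sets $J(E):=E(I^{\downarrow})\cap E(I^{\uparrow})$. By monotonicity of $E$ on each component this is a closed subinterval sharing the endpoint $E(\theta_c)$ with both branch images, and $I(E):=(E|_{I^{\downarrow}})^{-1}(J(E))\cup(E|_{I^{\uparrow}})^{-1}(J(E))$ is again an interval (resp.\ a disjoint union of two intervals) on which $E$ keeps the two-monotonicity structure and has the common image $J(E)$ on each monotonicity component. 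Exactly one component is actually trimmed, and the removed set is the preimage, under the relevant monotone branch, of an ``image-excess'' subinterval located at one end of $I$; hence $|I\setminus I(E)|\le\bigl(\text{length of image excess}\bigr)/\bigl(\inf|E'|\text{ over the trimmed region}\bigr)$. The whole proof reduces to estimating these two quantities.

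For \textbf{(a)}: if $v$ has a critical point in $I_i^{(1)}$, Proposition \ref{str1}(a) gives the two-monotonicity structure of $E_1^{(1)}$ and Proposition \ref{ap} gives $\|E_1^{(1)}-v\|_{C^0(I_i^{(1)})}\lesssim\varepsilon$. Since $I_i^{(1)}=v_-^{-1}(J_i^{(1)})\cup v_+^{-1}(J_i^{(1)})$, $v$ takes equal values at the two endpoints of $I_i^{(1)}$, so the two branch images of $E_1^{(1)}$ differ by $\lesssim\varepsilon$, i.e.\ the image excess is $\lesssim\varepsilon$. The trimmed region, of length $\lesssim\varepsilon\delta_0^{-1/100}$, lies at an end of $I_i^{(1)}$ where $|(E_1^{(1)})'|\gtrsim\delta_0^{1/100}$ (the boundary derivative estimate from the proof of Proposition \ref{str1}(a), via Lemma \ref{derv}, \eqref{I1len} and Proposition \ref{ap}), and this bound persists over the trimmed region since $|(E_1^{(1)})''|\lesssim 1$. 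Hence $|I_i^{(1)}\setminus I(E_1^{(1)})|\lesssim\varepsilon\delta_0^{-1/100}\ll\delta_0^5$, using $\varepsilon\le\varepsilon_0=\delta_0^{20}$. If $v$ has no critical point in $I_i^{(1)}$, the identical computation applies, now with $|(E_1^{(1)})'|\gtrsim\delta_0^{1/100}$ on all of $I_i^{(1)}$ by Proposition \ref{str1}(b).

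For \textbf{(b)}: I would apply the trimming construction to $E_{1,>}^{(2)}$ restricted to $[\inf I_{k,-}^{(2)},\sup I_{k,+}^{(2)}-k\cdot\omega]\subseteq I_{k,\cup}^{(2)}$ and to $E_{1,<}^{(2)}$ restricted to $[\inf I_{k,+}^{(2)}-k\cdot\omega,\sup I_{k,-}^{(2)}]$, so the stated set inclusions hold by construction. By Proposition \ref{str2n} these functions have the two-monotonicity structure with critical (or break) point within $\delta_0^{10}$ of $\theta_{k,-}$, and by \eqref{apbp} $\|E_{1,>}^{(2)}-E_{0,\vee}\|_{C^0},\ \|E_{1,<}^{(2)}-E_{0,\wedge}\|_{C^0}\lesssim\varepsilon$. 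On $[\inf I_{k,-}^{(2)},\theta_{k,-}]$ one has $E_{0,\vee}=E_{0,-}$, taking value $e_k+\delta_0^{1/100}$ at the left endpoint, while on $[\theta_{k,-},\sup I_{k,+}^{(2)}-k\cdot\omega]$ one has $E_{0,\vee}=E_{0,+}$, taking value $e_k+\delta_0^{1/100}$ at the right endpoint; thus the two branch images of $E_{0,\vee}$ coincide, and adding the $O(\varepsilon)$ perturbation and the $O(\delta_0^{10})$ shift of the critical point, the image excess of $E_{1,>}^{(2)}$ is $\lesssim\varepsilon+\delta_0^{10}\lesssim\delta_0^{10}$ (likewise for $E_{1,<}^{(2)}$). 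The trimmed region sits at distance $\gtrsim\delta_0^{1/100}/D$ from $\theta_{k,-}$ by \eqref{I1len} and \eqref{160}, and there the Morse condition (Proposition \ref{1215}(b) in the eigenvalue-separation subcase, Proposition \ref{1215}(c) in the level-crossing subcase) combined with Lemma \ref{C2} yields $|(E_{1,>}^{(2)})'|\gtrsim\delta_0^{1/100}$; hence the trimmed length is $\lesssim\delta_0^{10}\delta_0^{-1/100}\ll\delta_0^5$. Finally, since the trimming removes at most $\delta_0^5\ll\delta_0^{1/100}/D$ from each end while the critical point stays within $\delta_0^{10}$ of $\theta_{k,-}$, the interval $I(E_{1,>}^{(2)})$ (resp.\ $I(E_{1,<}^{(2)})$) still contains $[\theta_{k,-}-\delta_0,\theta_{k,-}+\delta_0]$, which is \eqref{inside}.

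The main obstacle I anticipate is the bookkeeping in case \textbf{(b)}: tracking the shifted preimage $I_{k,+}^{(2)}-k\cdot\omega$ against $I_{k,-}^{(2)}$, checking that the two branch images of $E_{0,\vee}$ (resp.\ $E_{0,\wedge}$) genuinely coincide so that the only mismatch is the $O(\varepsilon+\delta_0^{10})$ perturbation, and locating the right lower bound for $|E'|$ on the trimmed region uniformly in the two subcases; the remaining steps are routine monotonicity arguments.
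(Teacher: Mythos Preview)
Your construction is identical to the paper's: define $J(E)$ as the intersection of the two branch images, pull back to get $I(E)$, and bound the trimmed length. The only difference is in the last estimate: the paper invokes the \emph{quadratic} part of Lemma~\ref{C2} (the Morse bound $|E(\theta_2)-E(\theta_1)|\ge\tfrac12|\theta_2-\theta_1|^2$) to conclude directly that the preimage of an $O(\varepsilon)$ image excess has length $\lesssim\sqrt{\varepsilon}<\delta_0^5$, without needing to locate the trimmed region relative to the critical point; you instead use a linear derivative lower bound at the boundary to get $\lesssim\varepsilon\delta_0^{-1/100}$. Both work, but the paper's route is slightly cleaner since it sidesteps the bootstrap you flag (showing the derivative bound persists over the trimmed region before you know its length) and applies uniformly in the level-crossing subcase, where your quoted bound $|(E_{1,>}^{(2)})'|\gtrsim\delta_0^{1/100}$ via Lemma~\ref{C2} is not quite justified as stated (there is no interior critical point on each piece, and Proposition~\ref{1215}(c) only gives $\delta_0^{1/1000}$ directly)---though that weaker bound still suffices.
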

	\begin{proof}
		For the simple resonance case, recalling the two-monotonicity interval structure  (c.f. Proposition \ref{str1}),  we denote by $ (E_1^{(1)})_+$  (resp. $ (E_1^{(1)})_-$) the restriction of $E_1^{(1)}$ to the increasing  (resp. decreasing) interval on  $I_{i}^{(1)}$ and define 
		\begin{align*}
			J (E_1^{(1)})&:=\operatorname{Image} (E_1^{(1)})_+ \cap\operatorname{Image} (E_1^{(1)})_-,\\
			I (E_1^{(1)})_+ &:=  ( (E_1^{(1)})_+)^{-1} (J (E_1^{(1)}) ), \  I (E_1^{(1)})_-: =  ( (E_1^{(1)})_-)^{-1} (J (E_1^{(1)}) ),\\
			I (E_1^{(1)})&:=I (E_1^{(1)})_+\cup  I (E_1^{(1)})_-.
		\end{align*}
		We only need to verify \eqref{error}. By \eqref{C0}, we deduce 
		$$|J_i^{(1)}\Delta \operatorname{Image} (E_1^{(1)})_+|,\  |J_i^{(1)}\Delta \operatorname{Image} (E_1^{(1)})_-|\lesssim\varepsilon,$$
		where $X\Delta Y:= (X\setminus Y)\cup (Y\setminus X)$ for sets $X,Y$. 
		Thus $$|\operatorname{Image} (E_1^{(1)})_+\Delta \operatorname{Image} (E_1^{(1)})_-|\lesssim\varepsilon,$$
		from which we obtain 
		$$   |\operatorname{Image} (E_1^{(1)})_+\setminus J (E_1^{(1)})|, \ |\operatorname{Image} (E_1^{(1)})_-\setminus J (E_1^{(1)})| \lesssim\varepsilon.$$
		Thus by the Morse condition of $ (E_1^{(1)})_{\pm}$  (c.f. Proposition \ref{31}) and Lemma \ref{C2}, we get upper bound for    the contraction of   the preimage
		\begin{align*}
			&|I^{(1)}_i\setminus I (E_1^{(1)})|\\
			=&| ( (E_1^{(1)})_+)^{-1}  (\operatorname{Image} (E_1^{(1)})_+\setminus J (E_1^{(1)}))|+ | ( (E_1^{(1)})_-)^{-1}  (\operatorname{Image} (E_1^{(1)})_-\setminus J (E_1^{(1)}))|\\
			\lesssim&\sqrt{\varepsilon}< \delta_0^{5}.
		\end{align*}
		Thus we finish the proof of \textbf{(a)}.  
		
		The proof of \textbf{(b)} is analogous.  Recalling the two-monotonicity interval structure  (c.f. Proposition \ref{str2}),  we denote by $ (E_{1,\bullet}^{(2)})_+$  (resp. $ (E_{1,\bullet}^{(2)})_-$) the restriction of $E_{1,\bullet}^{ (2)}$ to the increasing  (resp. decreasing) interval on  $I_{k,\cup}^{ (2)}$  and  define 
		\begin{align*}
			J (E_{1,\bullet}^{(2)})&:=\operatorname{Image} (E_{1,\bullet }^{(2)})_+ \cap\operatorname{Image} (E_{1,\bullet}^{(2)})_-,\\
			I (E_{1,\bullet}^{(2)})_+ &:=  ( (E_{1,\bullet}^{(2)})_+)^{-1} (J (E_{1,\bullet}^{(2)})), \ I (E_{1,\bullet}^{ (2)})_- :=  ( (E_{1,\bullet}^{ (2)})_-)^{-1} (J (E_{1,\bullet}^{ (2)})), \\
			I (E_{1,\bullet}^{ (2)})&:=I (E_{1,\bullet}^{ (2)})_+\cup I (E_{1,\bullet}^{ (2)})_-, 
		\end{align*}
		where $\bullet\in \{>,<\}.$
		As the above argument,  we can employ  approximation \eqref{apbp} and the Morse condition of  $E_{1,>}^{(2)},E_{1,<}^{(2)}$ to prove
		$$|[\operatorname{inf} I^{(2)}_{k,-},\operatorname{sup}I^{(2)}_{k,+}-k\cdot\omega ]\setminus I (E_{1,>}^{(2)}) | \lesssim\sqrt{\varepsilon}< \delta_0^{5},$$
		$$|[\operatorname{inf} I^{(2)}_{k,+}-k\cdot \omega ,\operatorname{sup}I^{(2)}_{k,-}]\setminus I (E_{1,<}^{(2)}) | \lesssim\sqrt{\varepsilon}< \delta_0^{5}.$$
		Finally, 	\eqref{inside} follows from the uniform bound $D$ of the derivative of the Rellich functions and $|J^{(2)}_k|\sim\delta_0^{\frac{1}{100}}\gg\delta_0.$
	\end{proof}
	\begin{figure}[htp]
		\begin{tikzpicture}[>=latex, scale=1]
			\draw  (-6,0.6)--  (3,0.6);
			\draw[domain=-5.3:-2.7, samples=100] plot  ({\x},{\x*\x/6});
			\draw[domain=1.5:2.5, samples=100] plot  ({\x},{\x*\x/1.5});
			\draw[dashed,]  (-5,4.16)--  (-5,0.6); \draw[dashed]  (-3,1.5)--  (-3,0.6);
			\draw[dashed,]  (-5,4.16)--  (2.5,4.16); \draw[dashed]  (-3,1.5)--  (1.5,1.5);
			\draw[dashed]  (1.5,1.5)--  (1.5,0.6);\draw[dashed]  (2.5,4.16)--  (2.5,0.6);
			\draw[<->,red]  (-5,0.6)-- node[above]{{\tiny $I (E_{1}^{(1)})_-$}} (-3,0.6);
			\draw[<->,red]  (1.5,0.6)-- node[above]{{\tiny $I (E_{1}^{(1)})_+$}} (2.5,0.6); 
			\draw[<->]  (-1,1.5)-- node[fill=white]{$J (E_{1}^{(1)})$} (-1,4.16); 
			\draw (-3.1,2.5)node{$ (E_1^{(1)})_-$}; 	\draw (1.3,2.5)node{$ (E_1^{(1)})_+$};
			\draw[dashed,]  (-5.3,4.7)--  (-5.3,0.6);	\draw[dashed,]  (-2.7,1.22)--  (-2.7,0.6);
			\draw  (-5.3,0.3)--  (-5.3,0.6); 	\draw  (-2.7,0.3)--  (-2.7,0.6); 	
			\draw [<->] (-2.7,0.45)--node[below]{{\tiny $I_{i,-}^{(1)}$}}  (-5.3,0.45);
			\draw  (1.5,0.3)--  (1.5,0.6); 	\draw  (2.5,0.3)--  (2.5,0.6); 	
			\draw [<->] (1.5,0.45)--node[below]{{\tiny $I_{i,+}^{(1)}$}}  (2.5,0.45);
			\draw[green,line width=1.5pt]  (-2.7,0.6)--  (-3,0.6); 	\draw[<-]  (-2.85,0.65)--  (-2.4,0.9); 
			\draw  (-2.5,0.9)node[right]{{\tiny contraction$\leq \delta_0^5$}}; 
		\end{tikzpicture}
		\caption{A carton illustration of the domain adjustment of the simple resonant case with no  critical point: After a domain  contraction smaller than $\delta_0^5$, the   Rellich child $E_1^{(1)}$  has  the same image on  each of its   monotone   interval.}
	\end{figure}
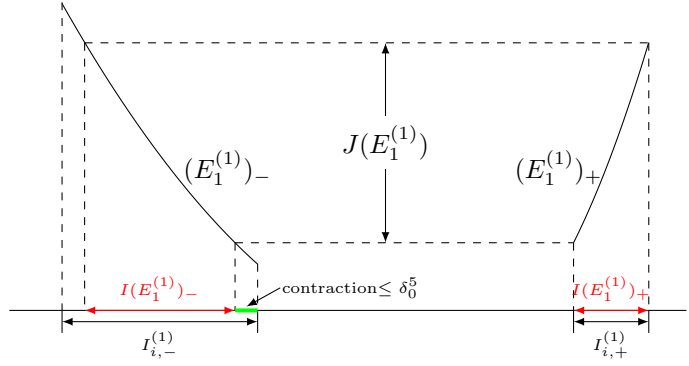
	
	\subsubsection{Green's function estimates}
	In this part, we relate the first scale resonance to the Rellich function collection $\mathcal{C}_1$ we have constructed and establish  the Green's function estimates for $1$-nonresonant sets with $1$-regularity condition.

	In order to handle technicalities that arise near the boundaries of the Rellich functions, we  introduce the following definition for  ``modified codomains''.  With the notation from Proposition \ref{adj}, for $E_1^{(1)}\in \mathcal{C}_1^{(1)}$, we define 
	\begin{align*}
		\tilde{J} (E_1^{(1)}):=\left\{\begin{aligned}
			&[\inf J (E_1^{(1)})+\frac{9}{8}\delta_0,\sup J (E_1^{(1)})-\frac{9}{8}\delta_0] \text{ \  if $E_1^{(1)}$ has no critical point in $ I (E_1^{(1)})$,}\\
			& ( -\infty,\sup J (E_1^{(1)})-\frac{9}{8}\delta_0] \text{ \  if $E_1^{(1)}$ attains  minimum at a  critical point in $ I (E_1^{(1)})$,}\\
			&[\inf J (E_1^{(1)})+\frac{9}{8}\delta_0,+\infty)\text{ \  if $E_1^{(1)}$ attains  maximum at a  critical point in $ I (E_1^{(1)})$,}\\
		\end{aligned}\right. 
	\end{align*}
	and for $E_{1,\bullet}^{(2)}\in \mathcal{C}_1^{(2)} (\bullet\in \{>,<\})$,  we define  
	$$ \tilde{J} (E_{1,\bullet}^{(2)}):=[\inf J (E_{1,<}^{(2)})+\frac{9}{8}\delta_0,\sup J (E_{1,>}^{(2)})-\frac{9}{8}\delta_0].$$
	Fixing $\theta^*, E^*$, let $E_1^{(j)}\in \mathcal{C}_1$ be the  Rellich function such that $E^*\in \tilde{J} (E_1^{(j)}) \  (j\in \{1,2\}).$
	Define the set of $1$-resonant points   (relative to $ (\theta^*,E^*)$ and $E_1^{(j)}$)  as:\\
	\textbf{(1)}. $j=1$,
	$$S_1 (\theta^*,E^*):=\{x\in \Z^d:\ \theta^*+x\cdot\omega \in  I (E_1^{(1)}) , \ |E_1^{(1)} (\theta^*+x\cdot \omega)-E^*|<\delta_1^{(1)}\},$$
	\textbf{(2)}.  $j=2$,
	$$S_1 (\theta^*,E^*):=\{x\in \Z^d:\ \theta^*+x\cdot\omega \in  I (E_{1,>}^{(2)})\cup I (E_{1,<}^{(2)}) , \ \min_{\bullet\in \{>,<\}}|E_{1,\bullet}^{(2)} (\theta^*+x\cdot \omega)-E^*|<\delta_1^{(2)}\},$$
	where  $\delta_1^{(j)}:=e^{- (l_1^{(j)})^{\frac{2}{3}}}$.
	
	We say that a set $\Lambda\subset \Z^d$ is $1$-nonresonant if $\Lambda\cap   S_1 (\theta^*,E^*)=\emptyset$. Moreover, we say that a set $\Lambda\subset \Z^d$ is $1$-regular if $(\Lambda_{2l_1^{(j)}}+x)\subset\Lambda$ for any $x\in S_0 (\theta^*,E^*)\cap\Lambda.$ 
	Finally, we say that a set is $1$-good if it is both $1$-nonresonant and $1$-regular. With the above definitions, we can now prove 
	\begin{thm}\label{1g}  Fix $\theta^*, E^*, E_1^{(j)}\in \mathcal{C}_1$  such that $E^*\in \tilde{J} (E_1^{(j)})$  and a finite set $\Lambda\subset \Z^d$.   If $\Lambda$ is $1$-good, then  for $|\theta-\theta^*|<\delta_1^{(j)}/ (10D), |E-E^*|<\delta_1^{(j)}/5$, 
		\begin{align*}
			\|G_\Lambda (\theta,E)\|&\leq10 (\delta_1^{(j)})^{-1},\\
			\ |G_\Lambda (\theta,E;x,y)|&\leq e^{-\gamma_1\|x-y\|_1}, \  \|x-y\|_1\geq  (l_1^{(j)})^{\frac{5}{6}},
		\end{align*}
		where $\gamma_1= (1-O ( (l_1^{(j)})^{-\frac{1}{30}}))\gamma_0$.
	\end{thm}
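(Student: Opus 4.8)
The plan is to run a standard multi-scale resolvent iteration, using the $0$-scale Green's function bound of Theorem \ref{0ge} as the base block and the structure of the Rellich functions of $\mathcal{C}_1$ to control the boundary blocks. First I would cover $\Lambda$ by two families of blocks: for each resonant site, i.e.\ each $x\in S_0(\theta^*,E^*)\cap\Lambda$, the $1$-regularity hypothesis guarantees $(\Lambda_{2l_1^{(j)}}+x)\subset\Lambda$, so I take the block $B=B_1^{(j)}+x$ sitting well inside $\Lambda$; away from these resonant cubes, every point $y$ has $v(\theta^*+y\cdot\omega)$ bounded away from $E^*$ (up to the small shifts $|\theta-\theta^*|,|E-E^*|<\delta_1^{(j)}$), so small neighbourhoods there are $0$-nonresonant and Theorem \ref{0ge} applies with the bound $10\delta_0^{-1}$ and decay rate $\gamma_0$. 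The point is that these two kinds of blocks cover $\Lambda$, with the inner cubes only needed near the (few) resonant sites.

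The heart of the matter is the resolvent estimate on the inner block $B=B_1^{(j)}+x$. By translation covariance, $H_B(\theta)=H_{B_1^{(j)}}(\theta+x\cdot\omega)$, and by self-adjointness $\|G_B(\theta,E)\|=\operatorname{dist}(\sigma(H_{B_1^{(j)}}(\theta+x\cdot\omega)),E)^{-1}$. Since $x\in S_0$, the phase $\theta+x\cdot\omega$ lies (essentially) in the domain of the relevant Rellich function $E_1^{(j)}$, and Propositions \ref{k1}, \ref{k2} tell us that the \emph{only} eigenvalues of $H_{B_1^{(j)}}(\theta+x\cdot\omega)$ within $\delta_0$ (resp.\ $\delta_0^{1/400}$) of $v(\theta+x\cdot\omega)$ are the Rellich values $E_1^{(1)}$ (resp.\ $E_1^{(2)},\mathcal E_1^{(2)}$). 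Now because $\Lambda$ is $1$-nonresonant and $x\notin S_1(\theta^*,E^*)$, we have $|E_1^{(j)}(\theta+x\cdot\omega)-E|\geq \delta_1^{(j)}$ minus a small error coming from $|\theta-\theta^*|,|E-E^*|$, which gives $\|G_B(\theta,E)\|\lesssim (\delta_1^{(j)})^{-1}$; the off-diagonal decay of $G_B$ at rate $\gamma_0$ comes from the exponential localization of the Rellich eigenfunctions (Proposition \ref{k1}(b) / \ref{k2}(b)) together with the Neumann-series bound $\|G_{B\setminus\{o,k\}}\|$ on the complement of the resonant sites inside the block — the resolvent identity then propagates this decay across the whole block, losing only a factor that is a power of the block size, hence negligible on the scale $\gamma_0$.

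With uniform resolvent bounds $\leq C(\delta_1^{(j)})^{-1}$ (for inner blocks) and $\leq 10\delta_0^{-1}$ (for outer blocks), plus off-diagonal decay at rate $\gamma_0$ on every block, I would iterate the resolvent identity (the Poisson/coupling formula with $\Gamma$) across overlapping blocks: each step pays a factor $\varepsilon$ from the coupling operator, times the block norm, times $e^{-\gamma_0\cdot(\text{overlap length})}$, and summing the geometric series over chains of blocks connecting $x$ to $y$ yields $\|G_\Lambda\|\leq 10(\delta_1^{(j)})^{-1}$ and $|G_\Lambda(x,y)|\leq e^{-\gamma_1\|x-y\|_1}$ for $\|x-y\|_1\geq (l_1^{(j)})^{5/6}$. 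The rate loss $\gamma_1=(1-O((l_1^{(j)})^{-1/30}))\gamma_0$ is exactly the accumulated cost: at each block one loses a length $\sim l_1^{(j)}$ to the norm factor $\log((\delta_1^{(j)})^{-1})\sim (l_1^{(j)})^{2/3}$, and the ratio $(l_1^{(j)})^{2/3}/(l_1^{(j)})$ gives the claimed small relative loss after optimizing the chain geometry.

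The main obstacle, and the step requiring the most care, is the inner-block estimate combined with bookkeeping the geometry: one must check that the resonant sites $x\in S_0(\theta^*,E^*)$ inside $\Lambda$ are separated by at least $2l_1^{(j)}$ (so that the blocks $B_1^{(j)}+x$ are genuinely nonoverlapping and the resolvent identity can be iterated cleanly) — this uses the Diophantine condition via Propositions \ref{SRpro}/\ref{DRpro}, which say that after the origin (and its partner $k$ in the double-resonant case) no further recurrence of $v(\theta^*+\cdot\,\omega)$ into a $\delta_0$-neighbourhood of $E^*$ occurs within the relevant polynomial scale. Relatedly, one must be careful near the edges of the domain $I(E_1^{(j)})$: the ``modified codomain'' $\tilde J(E_1^{(j)})$ was defined precisely so that the $\frac98\delta_0$ buffer guarantees $\theta+x\cdot\omega$ stays in $I(E_1^{(j)})$ whenever $|E_1^{(j)}(\theta+x\cdot\omega)-E|<\delta_1^{(j)}$, so this edge issue is handled by the hypothesis $E^*\in\tilde J(E_1^{(j)})$ rather than requiring new work.
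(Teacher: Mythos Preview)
Your proposal is correct and follows essentially the same strategy as the paper's proof: cover $\Lambda$ by $0$-nonresonant pieces together with translated blocks $B_1^{(j)}+p$ around the $0$-resonant sites, bound the resolvent on each block via the Rellich-function separation (Propositions \ref{k1}, \ref{k2}) and the $1$-nonresonance hypothesis, then iterate the resolvent identity (the paper also inserts a Schur-test step to first establish existence of $G_\Lambda$ before extracting the norm bound). The one place where the paper is more careful than your sketch is the $j=2$ case: there the block must be centered at some $p\in\{x,x-k\}$ rather than at $x$ itself, so that $\theta^*+p\cdot\omega$ actually lands in the shifted domain $I(E_{1,>}^{(2)})\cup I(E_{1,<}^{(2)})\subset I^{(2)}_{k,\cup}$ on which the Rellich children are defined---this is precisely why $1$-regularity demands $\Lambda_{2l_1^{(j)}}+x\subset\Lambda$ rather than merely $B_1^{(j)}+x\subset\Lambda$.
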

	\begin{proof}
		Recalling  the set 	$S_0 (\theta^*,E^*)$  (c.f. \eqref{S0}), we  define $S_0^{\Lambda}:= S_0 (\theta^*,E^*)\cap \Lambda $. If $S_0^{\Lambda}=\emptyset$, then $\Lambda$ is $0$-nonresonant and the theorem follows from Theorem \ref{0ge}. We thus assume $S_0^{\Lambda}$ is nonempty. In order to apply resolvent identity, we wish to  find a finite family $\{B_1^{(j)}+p\}_{p\in \bar{S}_0^{\Lambda} }$ of translations  of $B_1^{(j)}$ such that each $x\in S_0^{\Lambda}$ is near the center of $B_1^{(j)}+p$ for some $p\in \bar{S}_0^{\Lambda}$ and $\|G_{B_1^{(j)}+p} (\theta^*,E^*)\|\leq  (\delta_1^{ (j)})^{-1}$
		for all $p\in \bar{S}_0^{\Lambda}$. We do so by case analysis:\\
		\textbf{(1).} $j=1$: Assume that $E^*\in \tilde{J} (E_1^{(1)})$. 	We first prove  that  for any  $x\in S_0^{\Lambda}$,  $\theta^*+x\cdot \omega \in  I (E_1^{(1)})$. The proof proceeds also by case analysis:
		\begin{itemize}
			\item [\textbf{(a)}.] $E_1^{(1)}$ has no critical point in $ I (E_1^{(1)})$: In this case, $ I (E_1^{(1)})$ is a union of two disjoint intervals. Since $E^*\in\tilde{J} (E_1^{(1)})$ and $x\in S_0^{\Lambda}$, by the definition of $ \tilde{J} (E_1^{(1)})$, we obtain 
			$$v (\theta^*+x\cdot \omega )\in [\inf J (E_1^{(1)})+\frac{1}{8}\delta_0,\sup J (E_1^{(1)})-\frac{1}{8}\delta_0].$$
			Since $j=1$, by approximation \eqref{C0}, we have 
			$$[\inf J (E_1^{(1)})+\frac{1}{8}\delta_0,\sup J (E_1^{(1)})-\frac{1}{8}\delta_0]\subset v_\pm (I (E_1^{(1)})_\pm ).$$ 
			Thus $v (\theta^*+x\cdot \omega )\in v_\pm (I (E_1^{(1)})_\pm )$ and so $\theta^*+x\cdot \omega  \in  I (E_1^{(1)}) $.
			\item [\textbf{(b)}.] $E_1^{(1)}$ has a  critical point in $ I (E_1^{(1)})$: In this case, $ I (E_1^{(1)})$ is a single interval. Suppose that this critical point is a minimum  (the other case is analogous), then $v$ likewise achieves its minimum at a critical point in $ I (E_1^{(1)})$. As above,  we have 
			$$v (\theta^*+x\cdot \omega ) \in  [\inf J^{(1)}_{i},\sup J (E_1^{(1)})-\frac{1}{8}\delta_0].$$
			and 
			$$ [\inf J^{(1)}_{i},\sup J (E_1^{(1)})-\frac{1}{8}\delta_0]   \subset v_\pm (I (E_1^{(1)})_\pm).$$
			Thus  $\theta^*+x\cdot \omega\in I (E_1^{(1)})$.   
		\end{itemize}
		For $j=1$, we define $\bar{S}_0^{\Lambda}=S_0^{\Lambda}$. \\
		Since $|v (\theta^*+x\cdot\omega)-E^*|\leq \delta_0$ and $\Lambda$ is $1$-nonresonant, we have 
		$$ \delta_1^{(1)}\leq |E_1^{(1)} (\theta^*+x\cdot\omega)-E^*|\leq \frac{9}{8}\delta_0.$$
		By the eigenvalue separation estimates from Proposition \ref{k1}, it follows that $|\hat{E}-E^*|\geq \delta_1^{(1)}$ for any eigenvalue $\hat{E}$ of $H_{B_1^{(1)}+x} (\theta^*)$, and so 
		$$\|G_{B_1^{(1)}+x} (\theta^*,E^*)\|\leq  (\delta_1^{(1)})^{-1}.$$
		\textbf{(2).} $j=2$: Assume that $E^*\in \tilde{J} (E_1^{(2)})$ and $E_{1,\bullet}^{(2)}$ $  (\bullet\in\{>,<\})$ are the Rellich children generated from the double resonance interval  $J_k^{(2)}$.  We first prove  that for any  $x\in S_0^{\Lambda}$, $\theta^*+p\cdot \omega \in   I (E_{1,>}^{(2)})\cup I (E_{1,<}^{(2)})$ for some $p\in \{x,x-k\}$.\\ Since $E^*\in\tilde{J} (E_1^{(2)})$ and $x\in S_0^{\Lambda}$, by the definition of $ \tilde{J} (E_1^{(2)})$, we obtain 
		$$v (\theta^*+x\cdot \omega )\in [\inf J (E_{1,<}^{(2)})+\frac{1}{8}\delta_0,\sup J (E_{1,>}^{(2)})-\frac{1}{8}\delta_0].$$
		Since $j=2$, by approximation \eqref{apbp}, we have 
		$$[\inf J (E_{1,<}^{(2)})+\frac{1}{8}\delta_0,\sup J (E_{1,>}^{(2)})-\frac{1}{8}\delta_0]\subset v_-\left ([\inf I (E_{1,>}^{(2)}), \sup I (E_{1,<}^{(2)})]\right) $$
		and 
		$$[\inf J (E_{1,<}^{(2)})+\frac{1}{8}\delta_0,\sup J (E_{1,>}^{(2)})-\frac{1}{8}\delta_0]\subset v_+\left ([\inf I (E_{1,<}^{(2)}), \sup I (E_{1,>}^{(2)})]+k\cdot \omega \right).$$
		Thus $\theta^*+x\cdot \omega\in [\inf I (E_{1,>}^{(2)}), \sup I (E_{1,<}^{(2)})]\cup [\inf I (E_{1,<}^{(2)}), \sup I (E_{1,>}^{(2)})]+k\cdot \omega$. It follows that $\theta^*+x\cdot \omega \in 
		[\inf I (E_{1,>}^{(2)}), \sup I (E_{1,<}^{(2)})]$ or $\theta^*+ (x-k)\cdot \omega \in 
		[\inf I (E_{1,<}^{(2)}), \sup I (E_{1,>}^{(2)})]$. Thus  $\theta^*+p\cdot \omega \in   I (E_{1,>}^{(2)})\cup I (E_{1,<}^{(2)})$ for some $p\in \{x,x-k\}$.\\
		For $j=2$, we define $\bar{S}_0^{\Lambda}$ to be the set of all point $p$ coming from  $x\in S_0^{\Lambda}$ as above. 
		Since $\Lambda$ is $1$-regular and $\|k\|_1\leq 10l_1^{(1)}<2l_1^{(2)}$, it follows that $\bar{S}_0^{\Lambda}\subset \Lambda$ and  since $\Lambda$ is $1$-nonresonant,   by the eigenvalue separation estimates from Proposition \ref{k2}, it follows that  $|\hat{E}-E^*|\geq \delta_1^{(2)}$ for any eigenvalue $\hat{E}$ of $H_{B_1^{(2)}+p} (\theta^*)$, and so 
		$$\|G_{B_1^{(2)}+p} (\theta^*,E^*)\|\leq  (\delta_1^{(2)})^{-1}.$$
		For both $j=1,2$, by the $1$-regularity of $\Lambda$,  $(B_1^{(j)}+p)\subset\Lambda$ for all $p\in \bar{S}_0^{\Lambda}$. Moreover, by Diophantine condition, the separation of $p\in \bar{S}_0^{\Lambda}$ is larger than the size of $B_1^{(j)}$, thus the blocks $\{B_1^{(j)}+p\}_{p\in \bar{S}_0^{\Lambda} }$ are non-overlapping. The Green's function estimates follow  from a standard application of resolvent identity as we will show below. Since $j$ has little effect in the proof, we  omit the superscript ``$ (j)$''.  
		
		First,  we prove the case  that  $\Lambda=B_1+p$ is a  single $1$-nonresonant block. By the previous discussion, we have  
		\begin{equation*}
			\|G_{\Lambda} (\theta^*,E^*)\|\leq\delta_1^{-1}.
		\end{equation*}
		It follows from Neumann series argument that  for $|\theta-\theta^*|<\delta_1/ (10D)$ and  $|E-E^*|<\frac{2}{5}\delta_1$, 
		\begin{equation*}\label{L2}
			\|G_{\Lambda } (\theta,E)\|\leq2\delta_1^{-1}.
		\end{equation*}
		Let $x,y\in \Lambda $ satisfy  $\|x-y\|_1\geq l_1^\frac{4}{5}$. Since $G_{\Lambda }$ is self-adjoint, we may  assume $\|x-p\|_1\geq l_1^\frac{3}{4}$. Let $\Omega_{1,p}$ be an $l_1^\frac{2}{3}$-size cube centered at $p$. Then $\Lambda \setminus \Omega_{1,p}$ is $0$-nonresonant since $S_0^\Lambda\subset \Omega_{1,p}$.  We omit the dependence on $E,\theta$ of Green's functions.  By the  resolvent identity, we obtain 
		\begin{align*}
			|G_{\Lambda } (\theta,E;x,y)|&=|G_{\Lambda \setminus \Omega_{1,p}} (x,y)\chi (y)+\sum_{z,z'}G_{\Lambda \setminus \Omega_{1,p}} (x,z)\Gamma_{z,z'}G_{\Lambda } (z',y)|\\
			&\leq e^{-\gamma_0\|x-y\|_1}+C (d)\sup_{z,z'}e^{-\gamma_0\|x-z\|_1}|G_{\Lambda } (z',y)|\\
			&\leq e^{-\gamma_0\|x-y\|_1}+C (d)\sup_{z,z'}e^{-\gamma_0\|x-z\|_1}e^{-\gamma_0 (\|z'-y\|_1-l_1^\frac{3}{4})}\delta_1^{-1}\\
			&\leq e^{-\gamma'_0\|x-y\|_1}
		\end{align*}
		with  $\gamma'_0= (1-O (l_1^{-\frac{1}{30}}))\gamma_0$, where we used if $\|z'-y\|\leq l_{1}^\frac{3}{4}$, then 
		\begin{equation*}
			|G_{\Lambda} (z',y)|\leq 	\|G_{\Lambda}\|\leq2\delta_{1}^{-1}\leq 2e^{-\gamma_0 (\|z'-y\|_1-l_{1}^\frac{3}{4})}\delta_{1}^{-1}, 
		\end{equation*}
		if $\|z'-y\|\geq l_{1}^\frac{3}{4}$, then 
		\begin{align*}
			|G_{\Lambda} (z',y)|=	|G_{\Lambda} (y,z')|&\leq 	\sum_{w,w'}|G_{\Lambda \setminus \Omega_{1,p}} (y,w)\Gamma_{w,w'}G_{\Lambda} (w',z')|\\
			&\leq C (d)e^{-\gamma_0\|y-w\|_1}\|G_{\Lambda}\| \\
			&\leq C (d)	e^{-\gamma_0 (\|y-z'\|_1-l_{1}^\frac{3}{4})}\delta_{1}^{-1}
		\end{align*} and  $\delta_1^{-1}=e^{l_1^\frac{2}{3}}\ll e^{\gamma_0\|x-y\|_1}$ to bound the second term. Thus  we finish  the case that  $\Lambda$ is a single $1$-nonresonant block. 
		
		Now  assume $\Lambda$ is an arbitrary $1$-good set. We must show that $G_\Lambda$ does exist. By Schur's test, it suffices to prove 
		\begin{equation}\label{Schur}
			\sup_x\sum_{y}|G_\Lambda (\theta,E+io;x,y)|<C<\infty.
		\end{equation}
		Define  $$\Lambda':=\Lambda\setminus\bigcup_{p\in \bar{S}_0^{\Lambda}} \Omega_{1,p}.$$ Then $\Lambda'$ is $0$-nonresonant  since $S_0^{\Lambda}\subset\cup_{p\in \bar{S}_0^{\Lambda}} \Omega_{1,p}.$ Let $\tilde{\Omega}_{1,p}$ be  a  $2l_1^\frac{2}{3}$-size cube centered at $p$.  For $x\in \Lambda\setminus\cup_{p\in \bar{S}_0^{\Lambda}} \tilde{\Omega}_{1,p}$, by resolvent identity, we have
		\begin{align*}
			\sum_y|G_\Lambda (x,y)|&\leq \sum_y|G_{\Lambda'} (x,y)|+\sum_{z,z',y}|G_{\Lambda'} (x,z)\Gamma_{z,z'}G_{\Lambda} (z',y)|\\
			&\leq	C (d)\delta_0^{-1}+	C (d)e^{-l_1^\frac{2}{3}}\sup_{z'}\sum_y|G_{\Lambda} (z',y)|.
		\end{align*}
		For $x\in \tilde{\Omega}_{1,p}$, by resolvent identity, we have
		\begin{align*}
			\sum_y|G_\Lambda (x,y)|&\leq \sum_y|G_{B_1+p} (x,y)|+\sum_{z,z',y}|G_{B_1+p} (x,z)\Gamma_{z,z'}G_{\Lambda} (z',y)|\\
			&\leq \delta_1^{-2}+C (d)e^{-\frac{1}{2}l_1}\sup_{z'}\sum_y|G_{\Lambda} (z',y)|.
		\end{align*}
		Taking supremum for $x$ on the left hand side of the above two inequalities, we get $$\sup_x\sum_y|G_\Lambda (x,y)|\leq \delta_1^{-2}+\frac{1}{2}\sup_x\sum_y|G_\Lambda (x,y)|,$$ thus 
		$$\sup_x\sum_y|G_\Lambda (x,y)|\leq 2\delta_1^{-2},$$
		which gives \eqref{Schur}.
		Note that  for $|\theta-\theta^*|<\delta_1/ (10D)$ and $|E-E^*|<\frac{2}{5}\delta_1$, $G_\Lambda (\theta,E)$ does exist.  Then  we obtain   $\operatorname{dist} (\sigma (H_\Lambda (\theta)),E^*)\geq \frac{2}{5}\delta_1$ and  $\operatorname{dist} (\sigma (H_\Lambda (\theta)),E)\geq \frac{1}{5}\delta_1$ for all $|E-E^*|<\frac{1}{5}\delta_1$. Thus we get the desired  operator norm 
		$$\|G_\Lambda (\theta,E)\|=\|\left ( H_\Lambda (\theta)-E\right) ^{-1}\|=\frac{1}{\operatorname{dist} (\sigma (H_\Lambda (\theta)),E)}\leq10\delta_1^{-1}.$$
		It remains to  prove the off-diagonal decay of $G_\Lambda$.
		Let $x,y\in \Lambda$ be such that $\|x-y\|_1\geq l_1^\frac{5}{6}$. We define 
		\[B_x:=\left\{\begin{aligned}
			&\Lambda_{\l_1^\frac{1}{2}} (x)\cap\Lambda  \quad \text{if }   x\in \Lambda\setminus\bigcup_{p\in \bar{S}_0^{\Lambda}} \tilde{\Omega}_{1,p},\text{ \  (Choice A)}  \\
			&B_1+p\   \quad\quad \  \text{if }  x \in\tilde{\Omega}_{1,p}. \text{\  (Choice B)} \ 
		\end{aligned}\right. \]
		Then  $B_x$ has the following two properties: \textbf{(1).} $B_x$ is either   a   $0$-nonresonant  set (Choice A)  or a $1$-nonresonant  block (Choice B); \textbf{(2).} $x$ is far away from the  relative boundary $\partial_\Lambda B_x$. We can iterate the resolvent identity  to obtain 
		\begin{align}
			|G_\Lambda (x,y)|&\leq\prod_{s=0}^{L-1}  (C (d) l_{1}^d e^{-\gamma_0'\|x_{s}-x_{s+1}\|_1})|G_\Lambda (x_L,y)|\nonumber \\
			&\leq e^{-\gamma_0''\|x-x_L\|_1}|G_\Lambda (x_L,y)|, \label{728}
		\end{align}
		where $x_0:=x$ and  $x_{s+1}\in \partial B_{x_{s}}$ with  $\|x_{s+1}-x_s\|_1\geq l_1^\frac{1}{2}$ for  Choice A and  $\|x_{s+1}-x_s\|_1\geq \frac{1}{2}l_1\gg l_1^{\frac{5}{6}}$ for  Choice B. We stop the iteration until  $y\in B_{x_L}$. 
		By  resolvent identity, we get 
		\begin{align}
			|G_\Lambda (x_L,y)|&\leq|G_{B_{x_L}} (x_L,y)|+\sum_{z,z'}|G_{B_{x_L}} (x_L,z)\Gamma_{z,z'}G_{\Lambda} (z',y)|\nonumber\\
			&\leq C (d) e^{-\gamma_0' (\|x_L-y\|_1-l_1^\frac{4}{5})}\delta_1^{-1},\label{728.}
		\end{align}
		where we have used the off-diagonal exponential decay of $G_{B_{x_L}}$ and the operator norm estimate  $\|G_{\Lambda}\|\leq10\delta_1^{-1}$. Since $\|x-y\|_1\geq l_1^\frac{5}{6}$, \eqref{728} together with  \eqref{728.} gives the desired off-diagonal estimate
		$$|G_\Lambda (x,y)|\leq e^{-\gamma_1\|x-y\|_1}$$
		with $\gamma_1= (1-O (l_1^{-\frac{1}{30}}))\gamma_0$. Thus we finish the proof.
	\end{proof}
	\subsection{Induction  definitions and propositions}\label{H} 
	In this part, we will list the most important properties of the Rellich functions $\mathcal{C}_n$ in our induction hypothesis. The induction will proceed as follows: we suppose that we have  inductively
	constructed   Rellich functions $E_n:I (E_n)\rightarrow J (E_n)$ satisfying Morse condition and two-monotonicity interval structure and established   Green’s function estimates on $n$-good sets  (relative to  $E_n$) for   energies $E^*\in \tilde{J} (E_n)$, where  $\tilde{J} (E_n)$  is an energy interval  near
	the codomain of $E_n$.  Under these assumptions, we first classify  energy regions $J^{(j)}_i\subset J (E_n)$ as being simple or double resonant  (indicated by $j\in \{1,2\}$)  and prevent recurrence to those energy regions for long times. For each energy region, the resonant sites for $E_n$  will be so well-separated that we can find a block $B_{n+1}^{(j)}$ of $l_{n+1}^{(j)}$-size having a large annulus  with Green’s function estimates  by our induction  assumptions. Then we employ the eigenvalue perturbation theory to  construct  the  Rellich function $E_{n+1}^{(j)}$  of $H_{B_{n+1}^{(j)}}$ maintaining the  induction  structure near $E_n$. Finally, we prove  Green's function estimates on $(n+1)$-good sets.

	We say that  $E_{n+1}^{(j)}$ is a child of $E_n$ and  $E_n$ is the parent  of $E_{n+1}^{(j)}$, denoted by $E_{n+1}^{(j)}\in \mathcal{C}^{(j)} (E_n)$.  
	We introduce some notation to refer to different parts of our Rellich tree. Recall that $\mathcal{C} (E_n)$ denotes the collection of all immediate children of the Rellich curve $E_n$, and that $\mathcal{C}_n$ denotes the collection of all  $n$-scale Rellich functions, i.e., the $n^{\text {th }}$ ``generation'' of the tree. For a Rellich function $E_m \in \mathcal{C}_m, m \leqslant n$, we denote by $\mathcal{C}_n (E_m)$ the collection of all the   $n$-scale descendants of $E_m$; i.e., for a $m$-scale Rellich function $E_m$, $\mathcal{C}_m (E_m)=\left\{E_m\right\}$, and for all $s \geqslant m$, if $E_s \in \mathcal{C}_s\left (E_m\right)$ and $E_{s+1} \in \mathcal{C}\left(E_s\right)$, then $E_{s+1} \in \mathcal{C}_{s+1}\left(E_m\right)$.   For $0\leq m\leq n$, we say that  $E_n$ is a descendant of $E_m$ and  $E_m$ is an ancestor   of $E_n$ if $E_n\in \mathcal{C}_n (E_m)$.
	
	Now we define the induction  parameters: 
	$$l_{n}^{(j)}:= (|\log\varepsilon_0|^{4^n})^j, \ n\geq 1, \ j \in \{1,2\} \quad \text{ (the $n$-scale length)}   $$
	and 
	$$\delta_{n}^{(j)}:=e^{- (l_{n}^{(j)})^\frac{2}{3}}, \ n\geq 1, \ j \in \{1,2\} \quad \text{ (the $n$-scale resonance)},$$ 
	where the superscript ``$ (j)$'' may be  omitted if not emphasized.
	
	Subject to these definitions, we suppose  that the following induction  proposition holds.
	\begin{prop}[Induction hypotheses of scale $n$]\label{induc}
		\begin{hp}\label{h1}
			For $1\leq m\leq n$, we have collections
			\begin{align*}
				\mathcal{C}_m^{(j)}&=\bigcup_{E_{m-1}\in \mathcal{C}_{m-1}}\mathcal{C}^{(j)} (E_{m-1}), \ \ j\in \{1,2\},\\
				\mathcal{C}_m&=\mathcal{C}_m^{(1)}\cup\mathcal{C}_m^{(2)}
			\end{align*}
			of Rellich functions $$E_m^{(j)}:I (E_m^{(j)})\rightarrow J (E_m^{(j)})$$ with $\delta_{m-2}^{10}\lesssim |I (E_m^{(j)})|\lesssim \delta_{m-2}^5, |J (E_m^{(j)})|\sim\delta_{m-2}^{10}$ for $m\geq 2$ and $\delta_0^{\frac{1}{100}}\lesssim|I (E_1^{(j)})| \lesssim \delta_0^{\frac{1}{200}}, |J (E_1^{(j)})|\sim \delta_0^{\frac{1}{100}}$ for $m=1$  of  $H_{B_m^{(j)}}$, where $B_m^{(j)}$ is a $l_{m}^{(j)}$-size block centered at origin.
		\end{hp}
		\begin{hp}\label{h2}
			Each $E_m\in \mathcal{C}_m$ has a  two-monotonicity interval structure with the same image on each monotonicity component, that is, there are two closed intervals $I (E_m)_\pm$ with disjoint interiors, such that 
			$$  I (E_m)=I (E_m)_+\cup I (E_m)_-,$$
			$$\pm E_m'|_{I (E_m)_\pm}\geq 0,  $$
			$$E_m (I (E_m)_+)=E_m (I (E_m)_-)=J (E_m).$$
		\end{hp}
		\begin{hp}\label{h3}
			Each $E_m\in \mathcal{C}_m$  satisfies the Morse condition:\\ 
			If $$|E_m' (\theta)|\leq \delta_{m-2}^{2}\  (\delta_{-1}:=\delta_0^\frac{1}{400}),$$ then  $$|E_m'' (\theta)|\geq 3-\sum_{s=0}^{n-1}\delta_s^{10}\geq 2$$
			with a unique sign for all such $\theta$. 
		\end{hp}
		\begin{hp}\label{h4}
			The  Rellich functions can be classified into the following three types:
			\begin{tp}\label{t1}
				$E_m$ has no critical point  and $I (E_m)_+$ and $I (E_m)_-$ are two disjoint intervals, and the derivative of $E_m$ on each interval satisfies 
				$$\pm E_m'|_{I (E_m)_\pm}\gtrsim  \delta_{m-2}^{10}.$$
				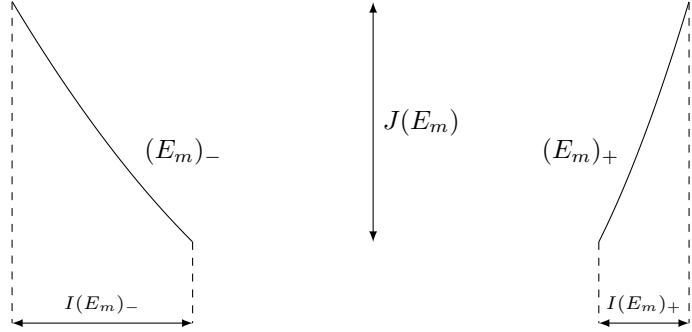
\begin{figure}[htp]
					\begin{tikzpicture}[>=latex, scale=1.2]
						\draw[domain=-5:-3, samples=100] plot  ({\x},{\x*\x/6});
						\draw[domain=1.5:2.5, samples=100] plot  ({\x},{\x*\x/1.5});
						\draw[dashed,]  (-5,4.16)--  (-5,0.6); \draw[dashed]  (-3,1.5)--  (-3,0.6);
						\draw[dashed]  (1.5,1.5)--  (1.5,0.6);\draw[dashed]  (2.5,4.16)--  (2.5,0.6);
						\draw[<->]  (-5,0.6)-- node[above]{{\scriptsize $I (E_m)_-$}} (-3,0.6);
						\draw[<->]  (1.5,0.6)-- node[above]{{\scriptsize $I (E_m)_+$}} (2.5,0.6); 
						\draw[<->]  (-1,1.5)-- node[right]{$J (E_m)$} (-1,4.16); 
						\draw (-3.1,2.5)node{$ (E_m)_-$}; 	\draw (1.3,2.5)node{$ (E_m)_+$};
						
					\end{tikzpicture}
					\caption{A cartoon illustration of $E_m$ belonging to \textbf{Type 1}.}
				\end{figure}
			\end{tp} 
			\begin{tp}\label{t2}
				$E_m$ has a critical point and $I (E_m)$ is a single interval. 
				
				
				\begin{figure}[htp]
					\begin{tikzpicture}[>=latex, scale=1]
						\draw[domain=-4:0, samples=100] plot  ({\x},{\x*\x/5});
						\draw[domain=0:2, samples=100] plot  ({\x},{\x*\x/1.25});
						\draw[dashed,]  (0,0)--  (0,-1);
						\draw[dashed]  (-4,3.2)--  (-4,-1);	\draw[dashed]  (2,3.2)--  (2,-1);
						\draw[<->]  (0,-1)-- node[above]{$I (E_m)_-$} (-4,-1);
						\draw[<->]  (0,-1)-- node[above]{$I (E_m)_+$} (2,-1); 
						\draw[<->]  (0,0)-- node[right]{$J (E_m)$} (0,3.2); 
						\draw (-2,1.5)node{$E_m$};
					\end{tikzpicture}
					\caption{A cartoon illustration of $E_m$ belonging to \textbf{Type 2} (the minimum case).}
				\end{figure}
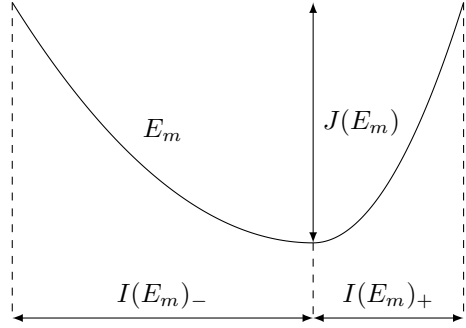
				
			\end{tp}
			\begin{tp}\label{t3}
				$E_m$ has a Rellich  brother and we denote the larger  one by $E_{m,>}$ and the smaller one by $E_{m,<}$.   Both of their domains  are single interval with a critical point (minimum for $E_{m,>}$ and maximum for $E_{m,<}$) in their common domain $I (  E_{m,>}) \cap I (E_{m,<})\neq \emptyset$.\\ The two Rellich functions have resonance near their critical points:   $$0  \leq  \inf J (  E_{m,>}) -\sup J (E_{m,<})\leq 3\delta_m, $$ 
				and they   are separated by  two interlacing functions $\lambda_{m,\pm}$:  
				$$	E_{m,<} (\theta)\leq \lambda_{m,\pm} (\theta) \leq E_{m,>} (\theta),$$
				$$ \pm \lambda_{m,\pm}' (\theta) \geq \delta_{m-2}.$$ 
				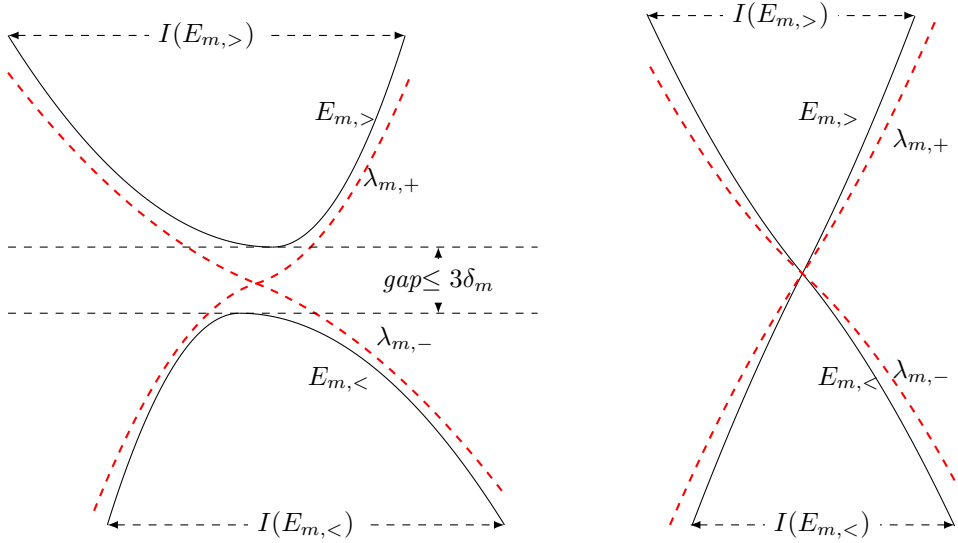
\begin{figure}[htp]
					\begin{tikzpicture}[>=latex, scale=0.88]
						\draw[domain=-4:0, samples=100] plot  ({\x},{\x*\x/5});
						\draw[domain=0:2, samples=100] plot  ({\x},{\x*\x/1.25});
						\draw[<->,dashed]  (-4,3.2)-- node[fill=white]{$I (E_{m,>})$} (2,3.2); 
						\draw[domain=-2:0, samples=100] plot  ({\x-0.5},{-1-\x*\x/1.25});
						\draw[domain=0:4, samples=100] plot  ({\x-0.5},{-1-\x*\x/5});
						\draw[<->,dashed]  (-2.5,-4.2)-- node[fill=white]{$I (E_{m,<})$} (3.5,-4.2); 
						\draw[domain=-4:-0.25, samples=100,red,dashed,thick] plot  ({\x},{-0.4*(\x+0.25)+0.12*(\x+0.25)*(\x+0.25)-0.55});
						\draw[domain=-0.25:3.5, samples=100,red,dashed,thick] plot  ({\x},{-0.4*(\x+0.25)-0.12*(\x+0.25)*(\x+0.25)-0.55});
						\draw[domain=-0.25:2.1, samples=100,red,dashed,thick] plot  ({\x},{0.3*(\x+0.25)+0.45*(\x+0.25)*(\x+0.25)-0.55});
						\draw[domain=-2.7:-0.25, samples=100,red,dashed,thick] plot  ({\x},{0.3*(\x+0.25)-0.45*(\x+0.25)*(\x+0.25)-0.55});
						\draw (1.8,1)node{$\lambda_{m,+}$}; 		\draw (1.1,2)node{$E_{m,>}$};  		\draw (2,-1.4)node{$\lambda_{m,-}$}; \draw (1,-2)node{$E_{m,<}$}; 
						\draw[dashed]  (-4,0)--  (4,0); 	\draw[dashed]  (-4,-1)--  (4,-1); 
						\draw[dashed,<->]  (2.5,0)--node[fill=white]{gap$\leq 3\delta_m$}  (2.5,-1); 
						\draw[domain=-1.67:0, samples=100] plot  ({\x+8},{1.95*\x-\x*\x/5-0.4});
						\draw[domain=1.7:0, samples=100] plot  ({\x+8},{1.95*\x+\x*\x/5-0.4});
						\draw[domain=-2.35:0, samples=100] plot  ({\x+8},{-1.2*\x+\x*\x/5-0.4});
						\draw[domain=2.3:0, samples=100] plot  ({\x+8},{-1.2*\x-\x*\x/5-0.4});
						\draw[<->,dashed]  (5.65,3.5)-- node[fill=white]{$I (E_{m,>})$} (9.7,3.5);
						\draw[<->,dashed]  (6.3,-4.2)-- node[fill=white]{$I (E_{m,<})$} (10.3,-4.2);  
						\draw[domain=-2:0, samples=100,red,dashed,thick ] plot  ({\x+8},{1.5*\x-\x*\x/5-0.4});
						\draw[domain=2:0, samples=100,red,dashed,thick ] plot  ({\x+8},{1.5*\x+\x*\x/5-0.4});
						\draw[domain=-2.3:0, samples=100,red,dashed,thick ] plot  ({\x+8},{-0.9*\x+\x*\x/5-0.4});
						\draw[domain=2.3:0, samples=100,red,dashed,thick ] plot  ({\x+8},{-0.9*\x-\x*\x/5-0.4});
						
						\draw (9.8,1.6)node{$\lambda_{m,+}$}; 		\draw (8.4,2)node{$E_{m,>}$};  		\draw (9.8,-1.9)node{$\lambda_{m,-}$}; \draw (8.7,-2)node{$E_{m,<}$}; 
					\end{tikzpicture}
					\caption{A cartoon illustration of   of $E_m$ belonging to \textbf{Type 3}: the graph on the left shows the eigenvalue separation case and the  graph on the  right shows the level crossing case.}
				\end{figure}
			\end{tp} 
			Moreover, if $E_m (\theta)$ belongs to  \textbf{Type} \ref{t1} or \ref{t2}, then $E_m (\theta)$ is the unique eigenvalue of $H_{B_m^{(1)}} (\theta)$  in the interval $[E_m (\theta)-3\delta_m,E_m (\theta)+3\delta_m]$, and if $E_m (\theta)$ belongs to   \textbf{Type} \ref{t3}, then  $E_{m,<} (\theta)$ and $E_{m,>} (\theta)$ are the only two eigenvalues of $H_{B_m^{(2)}} (\theta)$  in the interval  $ [\inf J  (E_{m,<})-\frac{1}{2}\delta_{m-2},\sup J ( E_{m,>})+\frac{1}{2}\delta_{m-2}]$. In all cases, the corresponding eigenfunction $\psi_m$ decays exponentially: 
			\begin{equation}\label{Hdecay}
				|\psi_m (x)|\leq e^{-\frac{1}{4}\gamma_0\|x\|}, \  \|x\|_1\geq l_m^\frac{6}{7}.
			\end{equation}
		\end{hp}
		\begin{hp}\label{h5}
			We introduce the following  definitions of ``modified domain'' $\tilde{J} (E_m)$ and  $\tilde{\tilde{J}} (E_m)$:\\
			With the notation from {\rm Hypothesis \ref{h4}}, if $E_m$ belongs to  \textbf{Type} \ref{t1} or \ref{t2},  we define 
			\begin{align*}
				\tilde{J} (E_m):=	\left\{\begin{aligned}
					&[\inf J (E_m)+\frac{9}{8}\delta_{m-1},\sup J (E_m)-\frac{9}{8}\delta_{m-1}] \text{ \  if $E_m$ has no critical point in $ I (E_m)$,}\\
					&[\inf J (E_m)-2\delta_m,\sup J (E_m)-\frac{9}{8}\delta_{m-1}] \text{ \  if $E_m$ has  minimum at a  critical point in $ I (E_m)$,}\\
					&[\inf J (E_m)+\frac{9}{8}\delta_{m-1}, \sup J (E_m)+2\delta_m]\text{ \  if $E_m$ has  maximum at a  critical point in $ I (E_m)$,}\\
				\end{aligned}\right. 
			\end{align*}
			and  if $E_m$ belongs to \textbf{Type} \ref{t3},  we define  
			$$ \tilde{J} (E_m):=[\inf J (E_{m,<})+\frac{9}{8}\delta_{m-1},\sup J (E_{m,>})-\frac{9}{8}\delta_{m-1}].$$
			$\tilde{\tilde{J}} (E_m)$ is defined similarly, with $\frac{5}{4}$ replacing $\frac{9}{8}$.\\
			The codomain of $E_{m-1}$ is almost covered by the codomains  of its children in the following sense:\\
			If $E_{m-1}$ does not attain its maximum at a critical point, then
			$$
			\sup \tilde{\tilde{J}} (E_{m-1}) \leq \sup \bigcup_{E_m \in \mathcal{C} (E_{m-1})} \tilde{\tilde{J}} (E_m) .
			$$
			Similarly, if $E_{m-1}$ does not attain its minimum at a critical point, then
			$$
			\inf \tilde{\tilde{J}}\left(E_{m-1}\right) \geqslant \inf \bigcup_{E_m \in \mathcal{C}\left(E_{m-1}\right)} \tilde{\tilde{J}} (E_m) .
			$$
			If $E_m$ does not attain its supremum $\sup J (E_m)$ at a critical point and $\sup J (E_m) \neq$ $\sup _{E_m \in \mathcal{C}\left (E_{m-1}\right)} \sup J (E_m)$, then there is some $\tilde{E}_m \in \mathcal{C} (E_{m-1})$ such that 
			$$
			B_{\frac{11}{4}\delta_{m-1}}\left(\sup J (E_m)\right) \subset J (\tilde{E}_m).
			$$
			The same statement holds with inf in place of sup.
		\end{hp}
		\begin{hp}\label{h6}
			It will be convenient to fix language describing ``nonresonant'' and ``regular'' sets at each scale. Fix $\theta^*,E^*$. Let  $E_m\in \mathcal{C}_m$ be the Rellich function such that $E^*\in \tilde{J} (E_m)$ (if exists). The set of $m$-resonant points is defined as: \\ If  $E_{m}$ belongs  \textbf{Type} \ref{t1} or \ref{t2}, then  
			$$	S_{m} (\theta^*,E^*):=	\{x\in \Z^d:\ \theta^*+x\cdot\omega \in I (E_{m}) , \ |E_{m} (\theta^*+x\cdot \omega)-E^*|<\delta_{m}\},$$
			and if  $E_{m}$ belongs to  \textbf{Type} \ref{t3}, then 
			$$S_{m} (\theta^*,E^*):=\{x\in \Z^d:\ \theta^*+x\cdot\omega \in  I (E_{m,<}) \cup I (E_{m,>}) , \ \min_{\bullet\in\{>,<\}} |E_{m,\bullet} (\theta^*+x\cdot \omega)-E^*|<\delta_{m}\}.$$
			We say that a set $\Lambda\subset \Z^d$ is $m$-nonresonant  (relative to $ (\theta^*,E^*)$ and $E_m$) if $\Lambda\cap S_m (\theta^*,E^*) =\emptyset$ and is   $m$-regular  if  $ (\Lambda_{2l_{i+1}}+x)\subset\Lambda$ for any $x\in S_{i} (\theta^*,E^*)\cap\Lambda$   relative to  each ancestor $E_i \ (0\leq i\leq m-1)$ of $E_m$. Moreover,  we say that a set is   $m$-strongly regular if  $(\Lambda_{2l_{i+1}}+x)\subset\Lambda$ for any $x\in S_{i} (\theta^*,E^*)$ with  $ (\Lambda_{2l_{i+1}}+x)\cap\Lambda\neq \emptyset$   relative to  each ancestor $E_i \ (0\leq i\leq m-1)$ of $E_m$.
			\begin{rem}
				We refer to Appendix \ref{chouti} that  we can always make an at most the present scale length-deformation   on our blocks  so that they are regular when required. Thus we can  assume that all blocks are chosen to be regular. In fact, for a certain set $\Lambda$, the regularity condition implies  that the whole blocks  covering  the previous-scale resonant sites in $\Lambda$  are all contained in $\Lambda$, which ensures the  resolvent identity iteration proceeds, and the nonresonant condition provides the resolvent bound for the blocks contained in $\Lambda$.
			\end{rem}    Finally, we say that a set is $m$-good if it is both $m$-nonresonant and $m$-regular. With the above definition, we have Green's function estimates for $m$-good sets:\\
			Fix $\theta^*, E^*$ and a finite set $\Lambda\subset \Z^d$. Let  $E_m\in \mathcal{C}_m$ be such that $E^*\in \tilde{J} (E_m)$ (if exists).    If $\Lambda$ is $m$-good, then  for $|\theta-\theta^*|<\delta_m/ (10D), |E-E^*|<\delta_m/5$, 
			\begin{align*}
				\|G_\Lambda (\theta,E)\|&\leq10\delta_m^{-1},\\
				|G_\Lambda (\theta,E;x,y)|&\leq e^{-\gamma_m\|x-y\|_1}, \  \|x-y\|_1\geq l_m^{\frac{5}{6}},
			\end{align*}
			where $\gamma_m= (1-O (l_m^{-\frac{1}{30}}))\gamma_{m-1}\geq \frac{1}{2}\gamma_0$.
			The above estimates also hold if $\Lambda$ is $m$-regular and $E^*\notin  \tilde{J} (E_m)$ for any  $E_m\in \mathcal{C}_m$.
		\end{hp}
	\end{prop}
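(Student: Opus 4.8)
The plan is to prove Proposition~\ref{induc} by induction on the scale $n$. The base case $n=1$ is exactly what has been carried out in \S\ref{n=1}: Propositions~\ref{coverpr}--\ref{str1} together with Lemmas~\ref{a} and \ref{derv} dispose of the simple resonant intervals, Propositions~\ref{DRpro}--\ref{str2n} together with Lemmas~\ref{transe}--\ref{cross} dispose of the double resonant intervals and produce the first examples of \textbf{Type}~\ref{t3} curves, Proposition~\ref{adj} performs the domain adjustment, Theorem~\ref{1g} supplies the scale-$1$ Green's function estimates, and the modified codomains $\tilde J,\tilde{\tilde J}$ of Hypothesis~\ref{h5} together with their covering relations follow from the $C^2$-closeness of $E_1^{(j)}$ to $v$ (resp.\ to $E_{0,\pm}$) via Proposition~\ref{coverpr}. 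So assume Proposition~\ref{induc} holds at scale $n$; I will construct $\mathcal{C}_{n+1}$ and verify Hypotheses~\ref{h1}--\ref{h6} at scale $n+1$, working one parent Rellich function $E_n\in\mathcal{C}_n$ at a time.

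First I would repeat the codomain decomposition of Proposition~\ref{coverpr} with $v$ replaced by $E_n$ and $\delta_0$ by $\delta_n$: cover a slight enlargement of $J(E_n)$ by overlapping closed intervals of length $\sim\delta_{n-1}^{10}$, declaring a window $J^{(2)}_k=\bar B_{\delta_{n-1}^{10}}(e_k)$ \emph{double resonant} when it contains a level $e_k=E_n(\theta)=E_n(\theta+k\cdot\omega)$ for some $\theta$ and some $0<\|k\|_1\le 10\,l_{n+1}^{(1)}$, and \emph{simple resonant} otherwise. The disjointness $J^{(2)}_k\cap J^{(2)}_{k'}=\emptyset$ for $k\ne k'$ and the separation of each $J^{(2)}_k$ from the critical value of $E_n$ are the scale-$n$ analogue of Lemma~\ref{sep}; they rest on the two-monotonicity interval structure (Hypothesis~\ref{h2}) and the Morse condition (Hypothesis~\ref{h3}), which through Lemma~\ref{C2} play the role that Lemma~\ref{a} played at scale $1$, together with the Diophantine property of $\omega$. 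The same inputs give the recurrence-exclusion statements (analogues of Propositions~\ref{SRpro} and \ref{DRpro}): for $\theta$ in a simple resonant preimage the origin is the unique $n$-resonant site in $\Lambda_{l_{n+1}^{(1)}}$, while for $\theta$ in a double resonant preimage only $o$ and $k$ are $n$-resonant in $\Lambda_{(l_{n+1}^{(1)})^4}$.

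Next, for each resonant interval I would fix the block $B_{n+1}^{(j)}=\Lambda_{l_{n+1}^{(j)}}$, surround the one or two $n$-resonant sites by $2l_n$-cubes, observe that the complement is $n$-good, and invoke Hypothesis~\ref{h6} at scale $n$ to get operator-norm and off-diagonal decay bounds for the Green's function there. A trial-function argument (Lemmas~\ref{ll} and \ref{trialcor}), with trial vectors the scale-$n$ eigenfunctions $\psi_n$ placed at the resonant sites, then shows $H_{B_{n+1}^{(j)}}$ has exactly one eigenvalue (simple case) or exactly two (double case) in the relevant window; the Poisson/resolvent formula forces the associated eigenfunctions to obey the decay \eqref{Hdecay} away from the resonant sites, and the usual orthogonality contradiction excludes a third such eigenvalue, reproducing Propositions~\ref{k1} and \ref{k2} one scale higher and giving $\|G_{n+1}^\perp\|\lesssim\delta_n^{-1}$, $\|G_{n+1}^{\perp\perp}\|\lesssim\delta_n^{-1}$. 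Plugging these into the Feynman--Hellman identities of Lemma~\ref{daoshu} yields $|\tfrac{d^s}{d\theta^s}(E_{n+1}-E_n)|\lesssim\varepsilon\delta_n^{-s}\ll\delta_n^{10}$ ($s=0,1,2$) in the simple case (analogue of Proposition~\ref{ap}) and, writing $\psi_{n+1}=A\psi_{n,-}+B\psi_{n,+}+O(\delta_n^{10})$ in the double case, the scale-$n$ analogues of \eqref{de}--\eqref{ddf}. The Morse condition at scale $n+1$ then comes out as in Propositions~\ref{31} and \ref{1215}(b): if $|E_{n+1}'|$ is small the approximation makes $|E_n'|$ small, so $|E_n''|\ge 3-\sum_{s<n}\delta_s^{10}$ with a definite sign by Hypothesis~\ref{h3}, forcing $|E_{n+1}''|\ge 2$; in the double case one shows instead that $|A^2-rB^2|$ is tiny, hence $A^2,B^2$ are bounded below, hence the numerator $\langle\psi_{n+1},V'\Psi_{n+1}\rangle^2$ dominates and $|E_{n+1}''|$ is large with the sign of $E_{n+1,>}-E_{n+1,<}$. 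The two-monotonicity interval structure with equal images on the two monotone pieces follows from the Morse condition plus Lemma~\ref{C2}; the domain adjustment of Proposition~\ref{adj} (a contraction of size $\lesssim\sqrt\varepsilon<\delta_n^5$) is carried out verbatim; and the definitions and covering relations for $\tilde J,\tilde{\tilde J}$ in Hypothesis~\ref{h5} follow from $E_{n+1}\approx E_n$ exactly as at scale $1$.

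The genuinely new ingredient — and the step I expect to be the main obstacle — is the classification into \textbf{Type}~\ref{t1}--\ref{t3}, specifically the treatment of \textbf{Type}~\ref{t3}: a pair of Rellich brothers $E_{n+1,>}\ge E_{n+1,<}$ born from a double resonant window whose gap may \emph{collapse}. In one dimension \cite{FV21} enforces a uniform local separation $\inf J(E_{n+1,>})-\sup J(E_{n+1,<})\gg\delta_{n+1}$, which cannot be expected in $d\ge 2$. To proceed I would, as in Lemmas~\ref{transe}--\ref{cross}, apply the Cauchy interlacing theorem (Theorem~\ref{cahuchy}) to the rank-one compressions $H_\pm=Q_\pm H_{B_{n+1}^{(2)}}Q_\pm$, obtaining auxiliary curves $\lambda_\pm$ with $E_{n+1,<}\le\lambda_\pm\le E_{n+1,>}$ that are $C^1$-close to the two shifted copies of $E_n$ — hence strictly monotone with $\pm\lambda_\pm'\gtrsim\delta_{n-1}$ by the analogue of Lemma~\ref{transe} — and cross at a single point $\theta_s$ near the double-resonance point; the monotonicity of $\lambda_\pm$ is what replaces open gaps, both to keep the induction running on the pair and, via the Diophantine condition, to rule out the horizontal self-resonance of the (near $\theta_s$, nearly flat) curve $E_{n+1,>}$ with its own shifts. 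One must still check that such a pair satisfies Hypotheses~\ref{h1}--\ref{h5}, determine whether the generating resonance is fresh at scale $n$ or inherited from an earlier scale — returning, via the later Lemma~\ref{Ta}, to the largest scale at which the underlying double resonance first appeared and re-settling the construction there — and keep careful track of the \emph{number} of Rellich functions produced, which is what later makes the bad sets of double resonant phases (c.f.\ \eqref{bad}) have summable measure. The final, comparatively routine, task is Hypothesis~\ref{h6} at scale $n+1$: cover an arbitrary $(n+1)$-good $\Lambda$ by translates of $B_{n+1}^{(j)}$ placed on the (Diophantine-separated, hence non-overlapping) $n$-resonant sites and by $n$-good cubes elsewhere, iterate the resolvent identity, and run a Schur-test bootstrap precisely as in the proof of Theorem~\ref{1g}; the only points needing extra care in $d\ge 2$ — the construction of the regular covering cubes and the bootstrapped off-diagonal decay — are handled by the arguments of Appendix~\ref{chouti} and Theorem~\ref{rep}. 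Assembling all of this closes the induction.
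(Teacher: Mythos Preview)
Your outline is broadly the paper's outline, but there is a structural gap in how you split the simple resonance case. You describe simple resonance as always producing a \emph{single} child $E_{n+1}$ that is $C^2$-close to $E_n$ (the analogue of Propositions~\ref{k1}--\ref{ap}), and you locate the ``return to an earlier scale via Lemma~\ref{Ta}'' inside the double resonance construction. In the paper the split is different and the distinction matters. Lemma~\ref{sepn} forces any double resonance of $E_n$ to occur only when $E_n$ is of \textbf{Type}~\ref{t1}; consequently a \textbf{Type}~\ref{t3} parent \emph{never} has a double resonant interval. The delicate case is therefore a \emph{simple} resonant interval of a \textbf{Type}~\ref{t3} parent, namely the special window $J_{i_n}^{(1)}$ of \eqref{chongdie} that straddles the (possibly collapsed) gap between $E_{n,>}$ and $E_{n,<}$. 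This is the paper's \textbf{Subcase~B}, treated only in \S3.5.3 (Propositions~\ref{k2n+}, \ref{1215n+} and the surrounding lemmas): here a single $n$-resonant site nonetheless forces \emph{two} eigenvalues $E_{n+1,>}\ge E_{n+1,<}$ of $H_{B_{n+1}^{(1)}}$, the eigenfunctions are linear combinations of $\psi_{m,\pm}$ (not $\psi_{n,\pm}$), the separation is only $\|G_{n+1}^{\perp\perp}\|\lesssim\delta_{n-1}^{-1}$, and the interlacing curves $\lambda_{s,\pm}$ are built by compressing out $\psi_{m,\pm}$ and tracked all the way from scale $m+1$ up to $n+1$. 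Lemma~\ref{Ta} is precisely what identifies this $m$, and it is invoked here, not in the double resonance step.

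Without isolating Subcase~B your induction breaks at the first scale after a \textbf{Type}~\ref{t3} pair is created: your ``one child, $C^2$-close to $E_n$'' picture would force the child to inherit a critical point and hence be \textbf{Type}~\ref{t2}, losing the brother structure and the interlacing functions $\lambda_{m,\pm}$ demanded by Hypothesis~\ref{h4}; conversely, the Morse/sign analysis you sketch (via \eqref{de}--\eqref{ddf}) uses $r$ and $E_{n,\pm}'$ defined from a \textbf{Type}~\ref{t1} parent and is simply unavailable when the parent is \textbf{Type}~\ref{t3}. Two smaller points: the blocks are not exactly $\Lambda_{l_{n+1}^{(j)}}$ but strongly-regular enlargements $\Lambda_{l_{n+1}^{(j)}}\subset B_{n+1}^{(j)}\subset\Lambda_{l_{n+1}^{(j)}+50l_n}$ (needed for the resolvent iteration), and in the double resonance case the third-eigenvalue exclusion and $\|G_{n+1}^{\perp\perp}\|$ live at scale $\delta_{n-1}$, not $\delta_n$, because the relevant separation comes from Proposition~\ref{kn} at the previous scale.
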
 
	
	\subsection{Verification of induction hypotheses for $n=1$}
	In Section \ref{n=1}, we have verified Proposition \ref{induc} for $n=1$:
	
	Hypothesis \ref{h1} by the construction.
	
	Hypothesis \ref{h2} by Proposition \ref{adj}.  
	
	Hypothesis \ref{h3} by  \ref{31}  ($j=1$) and Proposition \ref{1215}  ($j=2$).
	
	Hypothesis \ref{h4} by case analysis: \\ If  $E_1^{(1)}\in \mathcal{C}_1^{(1)}$, then it  belongs to \textbf{Type} \ref{t1} or \ref{t2} (c.f.   Proposition \ref{k1}, \ref{str1}). If the two brothers $E_{1,>}^{(2)}, E_{1,<}^{(2)}\in \mathcal{C}_1^{(2)}$ satisfy  $$ \inf   E_{1,>}^{(2)} -\sup E_{1,<} ^{(2)}\leq 3\delta_1^{ (2)}, $$ then they belong to \textbf{Type} \ref{t3}. Otherwise, each of them belongs to \textbf{Type} \ref{t2} (c.f.  Lemma \ref{inter},  Propositions \ref{k2}, \ref{str2n}).
	
	Hypothesis \ref{h5} holds true  because  the overlap of the adjacent codomains ($\sim 3\delta_0$)  is  much larger  than the codomain contraction in our adjustment (c.f. Propositions \ref{coverpr}, \ref{adj}).
	
	Finally, Hypothesis \ref{h6} by Theorem \ref{1g}. 
	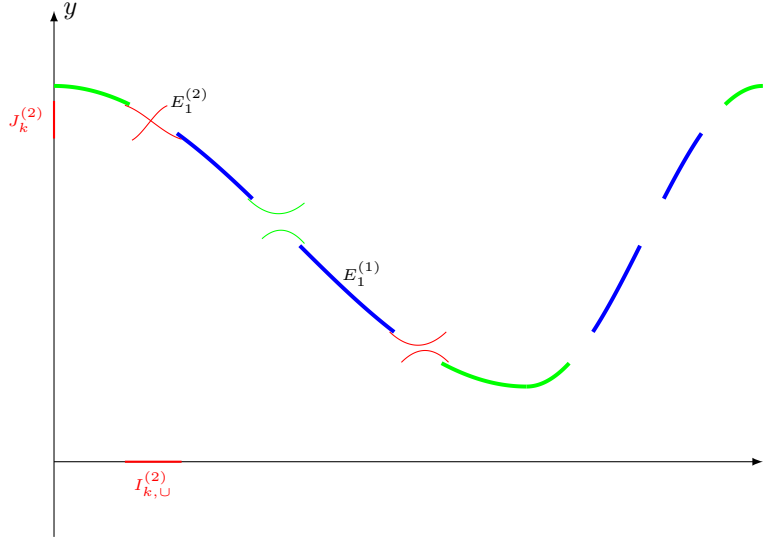
\begin{figure}[htp]
		\begin{tikzpicture}[>=latex, scale=1]
			\draw[->] (0,0)-- (3*pi,0);
			\draw[->]  (0,-1)-- (0,6)node[right]{$y$};
			\draw[thick,red,] (0,4.3)-- (0,4.8); 	\draw (0,4.55)node[left]{\color{red}{\tiny $J_k^{(2)}$}}; 
			\draw[thick,red,] (0.3*pi,0)-- (0.54*pi,0); 	\draw (0.42*pi,0)node[below]{\color{red}{\tiny $I_{k,\cup}^{(2)}$}}; 
			\draw  (0,0); \draw (0.45*pi,4.8)node[right]{{\tiny $E_{1}^{(2)}$}}; 
			\draw  (0,0); \draw (1.18*pi,2.5)node[right]{{\tiny $E_{1}^{(1)}$}}; 
			\draw  (0,0);
			\draw[domain=0:0.32*pi, samples=100,line width=1.5pt,green] plot  ({\x},{2*cos (\x/2 r) +3});
			\draw[domain=2.84*pi:3*pi, samples=100,line width=1.5pt,green] plot  ({\x},{3-2*cos ((\x-2*pi) r) });
			\draw[domain=0.52*pi:0.84*pi, samples=100,line width=1.5pt,blue] plot  ({\x},{2*cos (\x/2 r) +3});
			\draw[domain=2.74*pi:2.58*pi, samples=100,line width=1.5pt,blue] plot  ({\x},{3-2*cos ((\x-2*pi) r) });
			\draw[domain=1.04*pi:1.44*pi, samples=100,line width=1.5pt,blue] plot  ({\x},{2*cos (\x/2 r) +3});
			\draw[domain=2.48*pi:2.28*pi, samples=100,line width=1.5pt,blue] plot  ({\x},{3-2*cos ((\x-2*pi) r) });
			\draw[domain=1.64*pi:2*pi, samples=100,line width=1.5pt,green] plot  ({\x},{2*cos (\x/2 r) +3});
			\draw[domain=2*pi:2.18*pi, samples=100,line width=1.5pt,green] plot  ({\x},{3-2*cos ((\x-2*pi) r) });
			\draw[domain=0.82*pi:1.06*pi, samples=100,green] plot  ({\x},{1.2* (\x-0.95*pi)* (\x-0.95*pi)+3.3});
			\draw[domain=0.88*pi:1.06*pi, samples=100,green] plot  ({\x},{-1.8* (\x-0.96*pi)* (\x-0.96*pi)+3.08});
			\draw[domain=1.42*pi:1.66*pi, samples=100,red] plot  ({\x},{1.25* (\x-1.54*pi)* (\x-1.54*pi)+1.55});
			\draw[domain=1.47*pi:1.67*pi, samples=100,red] plot  ({\x},{-1.6* (\x-1.57*pi)* (\x-1.57*pi)+1.48});	
			\draw[domain=0.33*pi:0.48*pi, samples=100,red] plot  ({\x},{0.25*sin ( (5*\x-2*pi) r) +4.5});	
			\draw[domain=0.3*pi:0.54*pi, samples=100,red] plot  ({\x},{-0.25*sin ( (3*\x-1.25*pi) r) +4.52});	
		\end{tikzpicture}
		\caption{A cartoon output of the first induction step: A collection $\mathcal{C}_1$ of local Rellich functions of  $H_{B_1^{(j)}}$. The thick curves  come from simple resonances, and the thin curves come from double resonances.
			The adjacent domains of the curves have overlap $\gtrsim\delta_0$. The curves in blue (resp. green and red) belong  to \textbf{Type} \textbf{1} (resp. 
			\textbf{2} and \textbf{3}).  	 Note that different curves need not agree on the overlap of their domains.}
	\end{figure}
	
	\subsection{Construction of the next length scale}
	In this section, we suppose that  Proposition \ref{induc} holds for $1\leq m\leq n$, and prove the Proposition for $m=n+1$. The inductive argument proceeds on each constructed function $E_n\in \mathcal{C}_n$. For this purpose, we fix a Rellich function $E_n\in \mathcal{C}_n$ and construct the collection of its children  $\mathcal{C} (E_n)$. By our hypotheses, $E_n$ has a two-monotonicity interval structure and satisfies the  Morse condition. To classify the type of resonance, we need the following lemma: 
	\begin{lem}\label{sepn}
		There exists at most one $0\neq \|{k_n}\|_1\leq 10l_{n+1}^{(1)}$, such that there is some $\theta_{{k_n},-}\in  I (E_n)_-$ satisfying 
		$$\theta_{{k_n},-}+{k_n}\cdot \omega  \in  I (E_n)_+, \ e_{k_n}:= E_n (\theta_{{k_n},-})=E_n (\theta_{{k_n},-}+{k_n}\cdot \omega). $$
		Moreover, if such a  ${k_n}$ exists, then $I (E_n)_+$ and $I (E_n)_-$ are two disjoint intervals and $E_n$ belongs to \textbf{Type} \ref{t1}.
	\end{lem}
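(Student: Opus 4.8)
The plan is to obtain both conclusions from two short contradiction arguments, each pitting the Diophantine condition on $\omega$ against the smallness of $I(E_n)$ recorded in Hypothesis \ref{h1}; neither the Morse condition nor the codomain estimates will be needed. The driving inequality is this: by Hypothesis \ref{h1}, $|I(E_n)|\lesssim\delta_{n-2}^{5}$ when $n\geq 2$ and $|I(E_1)|\lesssim\delta_0^{\frac{1}{200}}$ when $n=1$, and in either case this is incomparably smaller than $(l_{n+1}^{(1)})^{-\tau}$ — the left-hand side is at least a fixed positive power of $\varepsilon_0$ (or of $e^{-|\log\varepsilon_0|^{c}}$ with $c>0$), whereas $(l_{n+1}^{(1)})^{-\tau}=|\log\varepsilon_0|^{-4^{n+1}\tau}$ is only an inverse power of $|\log\varepsilon_0|$ — so $2|I(E_n)|\ll (l_{n+1}^{(1)})^{-\tau}$ once $\varepsilon_0$ is small; meanwhile, for every $0\neq\|k\|_1\leq 20 l_{n+1}^{(1)}$ the Diophantine condition yields $\|k\cdot\omega\|\geq\gamma\|k\|_1^{-\tau}\gtrsim (l_{n+1}^{(1)})^{-\tau}$.

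For the ``at most one'' assertion I would argue by contradiction. Suppose $k_n\neq k_n'$ both satisfy the stated condition, with witnesses $\theta_{k_n,-},\theta_{k_n',-}\in I(E_n)_-$ whose $k_n\cdot\omega$- and $k_n'\cdot\omega$-translates lie in $I(E_n)_+$. Then, computing in $\T$, $(k_n-k_n')\cdot\omega$ equals the difference of the chord $(\theta_{k_n,-}+k_n\cdot\omega)-(\theta_{k_n',-}+k_n'\cdot\omega)$ inside $I(E_n)_+$ and the chord $\theta_{k_n,-}-\theta_{k_n',-}$ inside $I(E_n)_-$, so the triangle inequality gives $\|(k_n-k_n')\cdot\omega\|\leq |I(E_n)_+|+|I(E_n)_-|\leq 2|I(E_n)|$. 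On the other hand $0<\|k_n-k_n'\|_1\leq 20 l_{n+1}^{(1)}$, so the Diophantine bound forces $\|(k_n-k_n')\cdot\omega\|\gtrsim (l_{n+1}^{(1)})^{-\tau}$, which contradicts $2|I(E_n)|\ll (l_{n+1}^{(1)})^{-\tau}$.

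For the ``moreover'' part, assume some admissible $k_n$ exists, with $\theta_{k_n,-}\in I(E_n)_-$ and $\theta_{k_n,+}:=\theta_{k_n,-}+k_n\cdot\omega\in I(E_n)_+$. Since $\theta_{k_n,+}-\theta_{k_n,-}=k_n\cdot\omega$ in $\T$ and $0\neq\|k_n\|_1\leq 10 l_{n+1}^{(1)}$, the Diophantine condition gives $\|\theta_{k_n,-}-\theta_{k_n,+}\|=\|k_n\cdot\omega\|\gtrsim (l_{n+1}^{(1)})^{-\tau}$. By Hypothesis \ref{h4}, $E_n$ is of \textbf{Type} \ref{t1}, \ref{t2} or \ref{t3}. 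If $E_n$ were of \textbf{Type} \ref{t2} or \textbf{Type} \ref{t3}, then, by Hypothesis \ref{h2} together with the definitions of those two types, $I(E_n)=I(E_n)_+\cup I(E_n)_-$ would be a single connected interval of length $\leq |I(E_n)|\lesssim\delta_{n-2}^{5}$ (resp.\ $\lesssim\delta_0^{\frac{1}{200}}$ if $n=1$), so the two points $\theta_{k_n,-},\theta_{k_n,+}$ of $I(E_n)$ would satisfy $\|\theta_{k_n,-}-\theta_{k_n,+}\|\ll (l_{n+1}^{(1)})^{-\tau}$, contradicting the previous sentence. Hence $E_n$ is of \textbf{Type} \ref{t1}, which by definition means that $I(E_n)_+$ and $I(E_n)_-$ are two disjoint intervals.

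I do not anticipate a genuine obstacle: the argument is just two Diophantine obstructions. The only care needed is bookkeeping — verifying the scale comparison $2|I(E_n)|\ll (l_{n+1}^{(1)})^{-\tau}$ separately for $n=1$ (using $|I(E_1)|\lesssim\delta_0^{\frac{1}{200}}$) and for $n\geq 2$ (using $|I(E_n)|\lesssim\delta_{n-2}^{5}$), and observing that the difference between torus distance and ordinary interval length is harmless here because $|I(E_n)|\ll\frac{1}{2}$, so any chord inside $I(E_n)$ has torus length equal to its interval length.
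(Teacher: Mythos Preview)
Your proof is correct, and for the ``moreover'' clause it is essentially the paper's own argument. For the uniqueness clause, however, you take a genuinely different and more economical route. The paper follows the template of Lemma~\ref{sep}: from $\max(\|\theta_{k_n,-}-\theta_{k_n',-}\|,\|\theta_{k_n,+}-\theta_{k_n',+}\|)\gtrsim(l_{n+1}^{(1)})^{-\tau}$ it invokes the Morse condition (via Lemma~\ref{C2}) to convert this phase separation into an energy separation $|e_{k_n}-e_{k_n'}|\gtrsim(l_{n+1}^{(1)})^{-2\tau}$, and then derives a contradiction from $|J(E_n)|\lesssim\delta_{n-2}^{10}$. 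You instead stay entirely in the domain: the identity $(k_n-k_n')\cdot\omega=(\theta_{k_n,+}-\theta_{k_n',+})-(\theta_{k_n,-}-\theta_{k_n',-})$ gives $\|(k_n-k_n')\cdot\omega\|\leq|I(E_n)_+|+|I(E_n)_-|$ directly, contradicting Diophantine without ever touching the Morse condition or the codomain bound. Your approach is strictly simpler here; the paper's route has the minor advantage of reusing the machinery of Lemma~\ref{sep} verbatim, which matters more at scale $0$ (where $I_\pm$ are not short) but is unnecessary once $|I(E_n)|$ itself is already $\ll(l_{n+1}^{(1)})^{-\tau}$.
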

	\begin{proof}
		The proof is analogous to the  proof of  Lemma \ref{sep}. Since  $ |I (E_n)_\pm|\lesssim \delta_{n-2}^5$   (c.f. Hypothesis \ref{h1}) and $E_n$ satisfies the Morse condition  (c.f. Hypothesis \ref{h3}), we can employ  Lemma \ref{C2}  to show  if there is another ${k'_n}$ and $\theta_{{k'_n},-}$, then 
		$$ (l_{n+1}^{(1)})^{-2\tau}\lesssim |e_{k_n}-e_{{k'_n}}|\leq |J (E_n)|\lesssim\delta_{n-2}^{10},$$ a contradiction.

		If $I (E_n)$ is a single interval, then 
		$$  (l_{n+1}^{(1)})^{-2\tau}\lesssim\| (\theta_{{k_n},-}+{k_n}\cdot \omega )-\theta_{{k_n},-}\|\leq|I (E_n)| \lesssim \delta_{n-2}^5$$ since $\theta_{{k_n},-}$ and $ \theta_{{k_n},-}+{k_n}\cdot \omega \in I (E_n)$, a contradiction.
	\end{proof}
	Define the double resonant interval  (if exists)
	$$J^{(2)}_{k_n}:=\bar{B}_{\delta_{n-1}^{10}} (e_{k_n})$$
	and  
	$$	J^{SR}_n:=\tilde{J} (E_n)\setminus \bar{B}_{\delta_{n-1}^{10}-3\delta_n} (e_{k_n}).$$
	We now divide it into simple resonant intervals of comparable size to the double-resonant
	interval:
	\begin{prop}\label{covern}
		$\tilde{\tilde{J}} (E_n)\cap J (E_n)$ can be covered by closed intervals $J_i^{(1)}$ and  $J^{(2)}_{k_n}$ (if needed), such that 
		\begin{itemize}
			\item[\textbf{(1)}.]  $\delta_{n-1}^{10}\leq |J_i^{(1)}|\leq  2\delta_{n-1}^{10}$. 
			\item[\textbf{(2)}.] $J_i^{(1)}\subset J^{SR}_n.$
			\item[\textbf{(3)}.] $J_i^{(1)}\cap J_{i'}^{(1)}\neq\emptyset\Rightarrow |J_i^{(1)}\cap J_{i'}^{(1)}|=3\delta_n.$
			\item[\textbf{(4)}.] $ J_i^{(1)}\cap J^{(2)}_{k_n}\neq\emptyset\Rightarrow |J_i^{(1)}\cap J^{ (2)}_{k_n}|=3\delta_n. $
			\item[\textbf{(5)}.] The total number of such intervals does not exceed $2\delta_{n-1}^{-10}$.
			\item[\textbf{(6)}.]For any $E\in J_i^{(1)}$, $\bar{B}_{2\delta_n} (E)\subset \tilde{J} (E_n)$.
			\item[\textbf{(7)}.]  $\bar{B}_{5D\delta_{n-1}^{5}} (e_{k_n})\subset\tilde{J} (E_n)$.
			\item[\textbf{(8)}.] Moreover, if $E_n$ belongs to \textbf{Type} \ref{t3}, the closed intervals can be constructed such that  \begin{equation}\label{chongdie}
				[\sup J (E_{n,<})-\frac{1}{2}\delta_{n-1}^{10},\inf J (  E_{n,>})+\frac{1}{2}\delta_{n-1}^{10}]\subset J_{i_n}^{(1)}
			\end{equation} for some $i_n$, and  $$ [\sup J (E_{n,<})-\frac{1}{2}\delta_{n-1}^{10},\inf J (  E_{n,>})+\frac{1}{2}\delta_{n-1}^{10}]\cap J_{i}^{(1)}=\emptyset$$ for any $i\neq i_n$.
		\end{itemize}
	\end{prop}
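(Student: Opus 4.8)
The plan is to follow the proof of Proposition~\ref{coverpr} almost verbatim, with the sampling function $v$ replaced by the Rellich function $E_n$ and Lemma~\ref{sep}/Remark~\ref{se} replaced by their inductive analogues. The one substantial input I would isolate is the \emph{interior separation}: if the shift $k_n$ of Lemma~\ref{sepn} exists, then
\[
\bar B_{5D\delta_{n-1}^{5}}(e_{k_n})\subset\tilde J(E_n),
\]
which is item~\textbf{(7)} and, since $\delta_{n-1}^{10}\ll\delta_{n-1}^{5}$, also forces every connected component of $J^{SR}_n=\tilde J(E_n)\setminus\bar B_{\delta_{n-1}^{10}-3\delta_n}(e_{k_n})$ to have length $>\delta_{n-1}^{10}$. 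Granting this, items~\textbf{(1)}--\textbf{(5)} go exactly as in Proposition~\ref{coverpr}: tile the (finitely many) components of $J^{SR}_n\cap\tilde{\tilde J}(E_n)\cap J(E_n)$ by closed intervals $J_i^{(1)}\subset J^{SR}_n$ of length in $[\delta_{n-1}^{10},2\delta_{n-1}^{10}]$, overlapping only nearest neighbours and in length exactly $3\delta_n$, and position the tiles abutting $\bar B_{\delta_{n-1}^{10}-3\delta_n}(e_{k_n})$ so that they meet $J^{(2)}_{k_n}=\bar B_{\delta_{n-1}^{10}}(e_{k_n})$ in an interval of length exactly $3\delta_n$; since $|J(E_n)|\lesssim\delta_{n-2}^{10}$ (Hypothesis~\ref{h1}) and $\delta_{n-2}<1$ while each tile contributes new length $\ge\tfrac12\delta_{n-1}^{10}$, the total number of intervals is $\le 2\delta_{n-1}^{-10}$. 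Item~\textbf{(6)} is immediate from the relation between the $\tfrac98\delta_{m-1}$- and $\tfrac54\delta_{m-1}$-cuts (resp.\ the $\pm2\delta_m$ overhangs) in Hypothesis~\ref{h5}, which — since $2\delta_{n-1}^{10}\ll\tfrac18\delta_{n-1}$ and one simply does not tile past $\inf J(E_n)$, resp.\ $\sup J(E_n)$, on a face carrying an overhang — leaves $\tilde{\tilde J}(E_n)\cap J(E_n)$, hence also the tiles covering it, at distance $\ge2\delta_n$ from $\partial\tilde J(E_n)$.

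The heart of the matter — and the step I expect to be the main obstacle — is the interior separation, the multiscale counterpart of estimate~\eqref{160} in Lemma~\ref{sep}. By Lemma~\ref{sepn}, the existence of $k_n$ forces $E_n$ to belong to \textbf{Type}~\ref{t1}; thus $E_n$ has no critical point, $\pm E_n'|_{I(E_n)_\pm}\gtrsim\delta_{n-2}^{10}$, and $I(E_n)_+,I(E_n)_-$ are disjoint intervals each with image exactly $J(E_n)$ (Hypotheses~\ref{h1},~\ref{h2},~\ref{h4}). If $e_{k_n}$ were within $\tfrac98\delta_{n-1}+5D\delta_{n-1}^{5}$ of an endpoint of $J(E_n)$, say $\sup J(E_n)$, then its preimages $\theta_{k_n,\pm}\in I(E_n)_\pm$ would lie within $O(\delta_{n-1}\delta_{n-2}^{-10})$ — a distance far below $(l_{n+1}^{(1)})^{-\tau}$ by the super-exponential gaps between the scales $l_m^{(j)}$ — of the endpoints $\theta^\pm$ of $I(E_n)_\pm$ sharing that value; since $\theta_{k_n,+}-\theta_{k_n,-}=k_n\cdot\omega$, this would give $\|k_n\cdot\omega-(\theta^+-\theta^-)\|\ll(l_{n+1}^{(1)})^{-\tau}$. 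The contradiction should then come from the inductive control of the positions of $\partial I(E_n)_\pm$ supplied by the domain-adjustment construction (Proposition~\ref{adj} and its scale-$m$ analogues): tracing these endpoints back through the scales relates their difference $\theta^+-\theta^-$, modulo a super-exponentially small error, to $k\cdot\omega$ for some $k\in\Z^d$ with $k\neq k_n$ and $\|k\|_1\lesssim l_{n+1}^{(1)}$, whence $\|k_n\cdot\omega-(\theta^+-\theta^-)\|\gtrsim(l_{n+1}^{(1)})^{-\tau}$ by the Diophantine condition. This bookkeeping of accumulated shifts across scales is the delicate point: in one dimension the analogous difference vanishes at a critical value (the two monotonicity intervals of $v$ meet there), whereas the two monotonicity-interval endpoints of a \textbf{Type}~\ref{t1} curve need not coincide, so a genuine separation argument is required.

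Finally, item~\textbf{(8)} addresses the case where $E_n$ belongs to \textbf{Type}~\ref{t3}; then $k_n$ does not exist (Type~\ref{t3} is not Type~\ref{t1}), so $J^{SR}_n=\tilde J(E_n)$ and all tiles are simple-resonant. I would first set
\[
J_{i_n}^{(1)}:=\bigl[\,\sup J(E_{n,<})-\tfrac12\delta_{n-1}^{10}-4\delta_n,\ \inf J(E_{n,>})+\tfrac12\delta_{n-1}^{10}+4\delta_n\,\bigr].
\]
By $0\le\inf J(E_{n,>})-\sup J(E_{n,<})\le3\delta_n$ (Hypothesis~\ref{h4}) its length lies in $[\delta_{n-1}^{10},2\delta_{n-1}^{10}]$; since $|J(E_{n,<})|,|J(E_{n,>})|\sim\delta_{n-2}^{10}\gg\delta_{n-1}$ it is contained in $\tilde J(E_n)=[\inf J(E_{n,<})+\tfrac98\delta_{n-1},\sup J(E_{n,>})-\tfrac98\delta_{n-1}]$; and it contains the interval in~\eqref{chongdie} with margin $4\delta_n>3\delta_n$ on each side. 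I then extend the overlapping tiling (neighbouring overlaps exactly $3\delta_n$) to the left and right of $J_{i_n}^{(1)}$ over the rest of $\tilde{\tilde J}(E_n)\cap J(E_n)$; the two tiles adjacent to $J_{i_n}^{(1)}$ overlap it by $3\delta_n<4\delta_n$, so neither they nor any later tile meets the interval in~\eqref{chongdie}, which gives the disjointness assertion in~\textbf{(8)}, and items~\textbf{(1)}--\textbf{(6)} are verified for this tiling just as above. This completes the proof.
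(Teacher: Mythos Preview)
Your treatment of items \textbf{(1)}--\textbf{(6)} and \textbf{(8)} is essentially the paper's argument; your choice of margins $4\delta_n$ (rather than the paper's $3\delta_n$) in the definition of $J_{i_n}^{(1)}$ is a harmless variant that actually makes the disjointness claim in \textbf{(8)} cleaner.

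The real divergence is \textbf{(7)}. You read it as an \emph{unconditional} separation of $e_{k_n}$ from $\partial\tilde J(E_n)$ and set out to prove it by a Diophantine tracking of the endpoints $\theta^\pm$ of $I(E_n)_\pm$ through the inductive domain adjustments---a step you yourself flag as delicate and leave incomplete. The paper sidesteps this entirely: \textbf{(7)} is not proved but is taken as the \emph{criterion} for whether $J^{(2)}_{k_n}$ enters the cover at all (this is the force of ``if needed'' in the statement and of the parenthetical ``if $\bar B_{5D\delta_{n-1}^5}(e_{k_n})\subset\tilde J(E_n)$'' in the paper's proof). When \textbf{(7)} fails, $e_{k_n}$ lies within $5D\delta_{n-1}^5\ll\tfrac18\delta_{n-1}$ of $\partial\tilde J(E_n)$, so the whole of $B_{\delta_{n-1}^{10}}(e_{k_n})$ sits inside $\tilde J(E_n)\setminus\tilde{\tilde J}(E_n)$; the target set $\tilde{\tilde J}(E_n)\cap J(E_n)$ then avoids the problematic neighbourhood of $e_{k_n}$ altogether, and one covers it with simple-resonant tiles $J_i^{(1)}\subset J^{SR}_n$ alone. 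The later double-resonance analysis (Proposition~\ref{k2n} and what follows) invokes \textbf{(7)} only when $J^{(2)}_{k_n}$ has actually been placed, so the conditional reading is consistent with the rest of the paper. Your Diophantine argument---even if it could be completed---is therefore unnecessary, and the step you identify as the main obstacle does not arise.
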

	\begin{proof}
		Since $|\tilde{J} (E_n)|\sim \delta_{n-2}^{10}\gg \delta_{n-1}^{10}$, we can construct a collection of closed intervals of size between $\delta_{n-1}^{10}$ and $2\delta_{n-1}^{10}$ overlapping with only their nearest neighbors by exactly $3\delta_n$, together with the interval  (if $\bar{B}_{5D\delta_{n-1}^{5}} (e_{k_n})\subset\tilde{J} (E_n)$)
		$J^{(2)}_{k_n}$  to cover  $\tilde{J} (E_n)\cap J (E_n)$, with the possible exception of intervals of length at most $2\delta_{n-1}^{10}$ at the boundaries  of  $\tilde{J} (E_n)\cap J (E_n)$ where $E_n$ does not attain a critical point. Since $ 2\delta_{n-1}^{10}<\frac{1}{8}\delta_{n-1}$, they thus cover $\tilde{\tilde{J}} (E_n)\cap J (E_n)$. If $E_n$ belongs to \textbf{Type} \ref{t3}, by the assumption, the gap between $\inf J (  E_{m,>}) $ and $\sup J (E_{m,<})$ does not exceed $3\delta_n$. We can define 
		\begin{equation*}
			J_{i_n}^{(1)}=   [\sup J (E_{n,<})-\frac{1}{2}\delta_{n-1}^{10}-3\delta_n,\inf J (  E_{n,>})+\frac{1}{2}\delta_{n-1}^{10}+3\delta_n] 
		\end{equation*}with  $\delta_{n-1}^{10}\leq |J_{i_n}^{(1)}|\leq  2\delta_{n-1}^{10}$ and construct the other intervals.
	\end{proof}
	The  preimages of $J_i^{(j)}$  are 
	$$I^{(1)}_{i,\pm}:= (E_n)_{\pm}^{-1} (J^{(1)}_i), \ I^{(1)}_{i}:=I^{(1)}_{i,-}\cup I^{ (1)}_{i,+}  $$  and $$I^{ (2)}_{k_n,\pm}:= (E_n)_{\pm}^{-1} (J^{(2)}_{k_n}), \ I^{(2)}_{k_n}:=I^{(2)}_{k_n,-}\cup I^{ (2)}_{k_n,+}.$$
	Since $E_n$ satisfies the  Morse condition, by Lemma \ref{C2}, we have 
	$$2|I^{(j)}_{\bullet,\pm}|^2\leq |E_n (I^{(j)}_{\bullet,\pm})|=|J^{(j)}_{\bullet}|\leq D|I^{(j)}_{\bullet, \pm}|.$$ 
	Thus \begin{equation}\label{domainesti}
		\delta_{n-1}^{10}\lesssim |I^{(j)}_{\bullet, \pm}|\leq2\delta_{n-1}^{5},
	\end{equation}
	where $\bullet\in \{i,k_n\}$.
	\subsubsection{The simple resonance case} In this section, we fix an interval  $J_i^{(1)}$ and its preimage $I_i^{(1)}$ to  construct the   Rellich child(ren) locally defined on $I_i^{(1)}$ of   $E_n$. 
	\begin{prop}\label{SRnpro}
		If $\theta\in I^{(1)}_i$, then for any $0<\|x\|_1\leq 10 l_{n+1}^{(1)}$ such that $\theta+ x\cdot \omega \in I (E_n)$, we have $$|E_n (\theta)-E_n (\theta+x\cdot \omega)|\geq 3\delta_n.$$
	\end{prop}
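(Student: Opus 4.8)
The plan is to reproduce, at the $n$th scale, the argument proving Proposition~\ref{SRpro}, with $v$ replaced by $E_n$, the length $l_1^{(1)}$ by $l_{n+1}^{(1)}$, the resonance $\delta_0$ by $\delta_n$, the monotonicity intervals $I_\pm$ of $v$ by the two monotonicity components $I(E_n)_\pm$ of $E_n$ (Hypothesis~\ref{h2}), and with Lemma~\ref{sepn} in the role played at the first scale by the uniqueness of $\theta_{k,-}$. The essential new feature is that, unlike $v$, the components $I(E_n)_\pm$ are now extremely short --- of length $\lesssim\delta_{n-2}^5$ (and $\lesssim\delta_0^{1/200}$ when $n=1$) by Hypothesis~\ref{h1} --- so the Diophantine condition will do most of the work.

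First I would dispose of the case in which $\theta$ and $\theta+x\cdot\omega$ lie in a common monotonicity component of $E_n$; this is automatic when $E_n$ is of \textbf{Type}~\ref{t2} or \ref{t3}, since then $I(E_n)$ is a single interval. In that case $\|x\cdot\omega\|$ is at most the length of such a component, hence smaller than any fixed positive power of $\delta_{n-2}$ (resp.\ $\varepsilon_0$), so the Diophantine condition forces $\|x\|_1$ to exceed a fixed negative power of that length, which is far larger than $10l_{n+1}^{(1)}$ once $\varepsilon_0$ is small (compare $\delta_m=e^{-(l_m)^{2/3}}$, $l_{m+1}=l_m^4$, $\delta_{-1}=\delta_0^{1/400}$). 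This contradicts $\|x\|_1\leq 10l_{n+1}^{(1)}$, so no such $x$ exists and the inequality holds vacuously. Hence I may assume $E_n$ is of \textbf{Type}~\ref{t1} and that $\theta$, $\theta+x\cdot\omega$ lie in opposite components; say $\theta\in I(E_n)_-$, so $\theta\in I^{(1)}_{i,-}$ and $E_n(\theta)\in J^{(1)}_i$, while $\theta+x\cdot\omega\in I(E_n)_+$.

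Next I would show $x$ must be the distinguished translate $k_n$ of Lemma~\ref{sepn}. On $A_x:=I(E_n)_-\cap\bigl(I(E_n)_+-x\cdot\omega\bigr)$, which contains $\theta$, set $g(\theta'):=E_n(\theta')-E_n(\theta'+x\cdot\omega)$. Since $E_n$ is strictly decreasing on $I(E_n)_-$ and strictly increasing on $I(E_n)_+$, with the same image $J(E_n)$ on each component, $g$ is strictly monotone on $A_x$; evaluating $g$ at the two endpoints of $\overline{A_x}$ --- where either $\theta'$ hits an endpoint of $I(E_n)_-$ or $\theta'+x\cdot\omega$ hits an endpoint of $I(E_n)_+$, so that $E_n$ attains $\inf J(E_n)$ or $\sup J(E_n)$ there --- one checks that $g$ changes sign. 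Hence $g$ has a unique zero $\theta_{x,-}$ with $\theta_{x,-}\in I(E_n)_-$ and $\theta_{x,-}+x\cdot\omega\in I(E_n)_+$, i.e.\ $E_n(\theta_{x,-})=E_n(\theta_{x,-}+x\cdot\omega)$; since $0\neq\|x\|_1\leq 10l_{n+1}^{(1)}$, Lemma~\ref{sepn} forces $x=k_n$ and $E_n(\theta_{x,-})=e_{k_n}$. I expect this to be the main obstacle: one must ensure the sign change of $g$ yields a genuine crossing pair inside $I(E_n)_-\times I(E_n)_+$, not merely on the closures, so that Lemma~\ref{sepn} applies --- this is exactly where the ``same image on each monotonicity component'' clause of Hypothesis~\ref{h2}, plus a little boundary bookkeeping, is used.

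Finally I would close the estimate with the simple-resonance separation. Since $\theta\in I^{(1)}_{i,-}$ we have $E_n(\theta)\in J^{(1)}_i\subset J^{SR}_n=\tilde{J}(E_n)\setminus\bar{B}_{\delta_{n-1}^{10}-3\delta_n}(e_{k_n})$ by item \textbf{(2)} of Proposition~\ref{covern}, so $|E_n(\theta)-e_{k_n}|\geq\delta_{n-1}^{10}-3\delta_n$. Comparing $\theta$ with $\theta_{k_n,-}$ and using the monotonicity of $E_n$ on $I(E_n)_-$ and on $I(E_n)_+$ (note $\theta\neq\theta_{k_n,-}$, as $e_{k_n}\notin J^{(1)}_i$), the numbers $E_n(\theta)-e_{k_n}$ and $E_n(\theta+x\cdot\omega)-e_{k_n}$ have opposite signs, whence
$$|E_n(\theta)-E_n(\theta+x\cdot\omega)|=|E_n(\theta)-e_{k_n}|+|E_n(\theta+x\cdot\omega)-e_{k_n}|\geq\delta_{n-1}^{10}-3\delta_n>3\delta_n,$$
the last inequality using $\delta_{n-1}^{10}\gg\delta_n$ from the scale definitions. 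Apart from the crossing-point step, all of this is a direct transcription of the first-scale argument plus the scale arithmetic.
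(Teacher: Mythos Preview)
Your proof is correct and follows essentially the same route as the paper: the same case split (same vs.\ opposite monotonicity component), the same reduction to $x=k_n$ via Lemma~\ref{sepn} in the opposite-component case, and the same final estimate using $J_i^{(1)}\subset J^{SR}_n$. The one genuine difference is in the same-component case: the paper does not argue vacuity but instead applies the Morse condition (Lemma~\ref{C2}) directly to get $|E_n(\theta)-E_n(\theta+x\cdot\omega)|\geq\|x\cdot\omega\|^2\gtrsim(l_{n+1}^{(1)})^{-2\tau}\gg3\delta_n$, whereas you observe that the shortness of each component ($\lesssim\delta_{n-2}^5$) together with the Diophantine condition forces $\|x\|_1\gg 10l_{n+1}^{(1)}$, so the case is empty. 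Both work; your version avoids invoking Lemma~\ref{C2} at the cost of a scale-arithmetic check, while the paper's version gives a quantitative bound that parallels the first-scale argument verbatim. Your extra detail on the existence of the crossing point $\theta_{x,-}$ via the sign change of $g$ is a welcome clarification of what the paper compresses into ``it must be $x=k_n$''.
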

	\begin{proof}
		Suppose $\theta\in I_{i,-}^{(1)}$ (the case $\theta\in I_{i,+}^{(1)}$ is completely analogous). If $\theta+x\cdot \omega\in I (E_n)_-$, since   $0<\|x\|_1\leq  10l_{n+1}^{(1)}$, by the Morse condition of $E_n$, we   get $$|E_n (\theta)-E_n (\theta+x\cdot \omega)|\geq \|x\cdot \omega\|^2\gtrsim  (l_{n+1}^{(1)})^{-2\tau}\gg 3\delta_n.$$
		We thus suppose $\theta+x\cdot \omega\in I (E_n)_+$. Since $E_n$ satisfies  the two-monotonicity interval  structure and  has the same image on each monotonicity component, it must be $x=k_n$ and there exists  some $\theta_{k_n,-}\in  I (E_n)_-$ satisfying 
		$$\theta_{k_n,-}+k_n\cdot \omega  \in  I (E_n)_+, \ e_{k_n}:= E_n (\theta_{k_n,-})=E_n (\theta_{k_n,-}+k_n\cdot \omega). $$  Assume $\theta\leq \theta_{k_n,-}$. Then  $\theta +k_n \cdot \omega \leq \theta_{k_n,+}$. By the monotonicity of $E_n$ on $I (E_n)_{\pm}$, $E_n (\theta)\geq e_{k_n} $ and $E_n (\theta+k_n\cdot \omega )\leq e_{k_n} $. Since $E_n (I_i^{(1)})=J_i^{(1)}\subset J^{SR}_n$,
		$$E_n (\theta)-E_n (\theta+k_n\cdot \omega)\geq E_n (\theta)-e_{k_n}\geq \delta_{n-1}^{10}-3\delta_n  >3\delta_n.$$
		If $\theta\geq \theta_{k_n,-}$, then similarly $E_n (\theta)\leq e_{k_n}, E_n (\theta+k_n\cdot \omega )\geq e_{k_n}$, and 	$$|E_n (\theta)-E_n (\theta+k_n\cdot \omega)|=E_n (\theta+k_n\cdot \omega)-E_n (\theta)\geq e_{k_n} -E_n (\theta)  >3\delta_n.$$
	\end{proof}
	We choose a block  $B_{n+1}^{(1)}$ satisfying 
	$$\Lambda_{l_{n+1}^{(1)}}\subset B_{n+1}^{(1)} \subset \Lambda_{l_{n+1}^{(1)}+50l_n}$$
	such that for all $\theta\in I_i^{(1)}$,  $B_{n+1}^{(1)}$ is $n$-strongly regular relative to $ (\theta,E)$ for all $E\in B_{2\delta_n} (E_n (\theta))$. 
	
	The simple resonance case will be divided into two subcases.
	\\
	
	\underline{\textbf{Subcase A.}}  $E_n$ does not belong  to \textbf{Type} \ref{t3} or  $J_i^{(1)}$ does not satisfy \eqref{chongdie}. \\
	
	In this subcase, by \textbf{Hypothesis} \ref{h4}, $E_n (\theta)$ is the unique eigenvalue of $H_{B_n} (\theta)$  in the interval $[E_n (\theta)-3\delta_n,E_n (\theta)+3\delta_n]$ for any $\theta\in I_i^{(1)}$. Thus we have \begin{equation}\label{Suba}
		\|G_n^\perp (E)\|\lesssim \delta_n^{-1}
	\end{equation}
	for any $|E-E_n (\theta)|\leq 2\delta_n$, where $G_n^\perp$ is the Green's function for $B_n$ on the orthogonal complement of the eigenspace corresponding to $E_n (\theta)$. We will show that  $E_n$ has a  unique Rellich child $E_{n+1}^{(1)}$ of the Dirichlet restriction $H_{B_{n+1}^{(1)}}$, which is $C^2$-closed  to $E_n$ and belongs to \textbf{Type} \ref{t1} or \ref{t2}.
	\begin{prop}\label{kn}
		For $\theta\in I^{(1)}_i$, we have 
		\begin{itemize}
			\item[\textbf{(a).}]  $H_{B_{n+1}^{(1)}} (\theta)$ has a unique eigenvalue $E_{n+1}^{(1)} (\theta)$ such that $|E_{n+1}^{(1)} (\theta)-E_n (\theta)|\lesssim e^{-\frac{\gamma_0}{4}l_n}\ll\delta_n^{10}$. Moreover, any other $\hat{E}\in\sigma (H_{B_{n+1}^{(1)}} (\theta)) $ must obey $|\hat{E}-E_n (\theta)|>2\delta_{n}$.
			\item[\textbf{(b).}]  The corresponding eigenfunction $\psi_{n+1}$   to $E_{n+1}^{(1)}$   satisfies  
			\begin{equation}\label{dec}
				|\psi_{n+1} (x)|\leq e^{-\frac{\gamma_0}{4}\|x\|_1}, \  
				\|x\|_1\geq l_n^{\frac{6}{7}}. 
			\end{equation}
			\item[\textbf{(c).}] Let $\psi_{n}$ be  the eigenfunction corresponding to  $E_n (\theta)$ of  $H_{B_{n}} (\theta)$. Then up to a choice of sign, 
			\begin{equation}\label{closed}
				\|\psi_{n+1}-\psi_{n}\|\lesssim e^{-\frac{\gamma_0}{4}l_n}\delta_n^{-1}.
			\end{equation}
			\item[\textbf{(d).}] $\|G_{n+1}^\perp (E_{n+1}^{(1)})\|\leq\delta_n^{-1}$, where $G_{n+1}^\perp$ denotes the Green's function for $B_{n+1}^{(1)}$ on the orthogonal complement of $\psi_{n+1}$.
		\end{itemize} 
	\end{prop}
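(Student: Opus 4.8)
The plan is to follow the template of Proposition~\ref{k1} essentially verbatim, with the single site $\{o\}=B_0$ replaced by the resonant block $B_n$ and the base-case estimate Theorem~\ref{0ge} replaced by the scale-$n$ Green's function estimate of Hypothesis~\ref{h6}. First, for the existence of $E_{n+1}^{(1)}(\theta)$ I would use the scale-$n$ eigenfunction $\psi_n$, extended by zero to $B_{n+1}^{(1)}$, as a trial function. Since $\Lambda_{l_n}\subset B_n$, the inner boundary of $B_n$ lies at $\ell^1$-distance $\ge l_n\gg l_n^{6/7}$ from the origin, so the decay \eqref{Hdecay} gives $|\psi_n(z)|\le e^{-\frac14\gamma_0 l_n}$ there; the only nonzero entries of $(H_{B_{n+1}^{(1)}}(\theta)-E_n(\theta))\psi_n$ come from the coupling $\Gamma$ across $\partial B_n$, so this vector has norm $\lesssim \varepsilon\, l_n^{(d-1)/2}e^{-\frac14\gamma_0 l_n}\lesssim e^{-\frac14\gamma_0 l_n}\ll\delta_n^{10}$, and Lemma~\ref{trialcor} produces the desired $E_{n+1}^{(1)}(\theta)$ within $\lesssim e^{-\frac14\gamma_0 l_n}$ of $E_n(\theta)$.

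Next I would fix $E$ with $|E-E_n(\theta)|\le 2\delta_n$ (so $E\in\tilde J(E_n)$ by item \textbf{(6)} of Proposition~\ref{covern}) and set $\Lambda:=B_{n+1}^{(1)}\setminus B_n$. Proposition~\ref{SRnpro} says the origin is the only $n$-resonant site in $B_{n+1}^{(1)}$ relative to $(\theta,E)$ and $E_n$, so $\Lambda$ is $n$-nonresonant, and together with the $n$-strong regularity of $B_{n+1}^{(1)}$ (possibly after the harmless block deformation of Appendix~\ref{chouti}) it is $n$-good; Hypothesis~\ref{h6} then furnishes $\|G_\Lambda(\theta,E)\|\le 10\delta_n^{-1}$ and $|G_\Lambda(\theta,E;x,y)|\le e^{-\gamma_n\|x-y\|_1}$ for $\|x-y\|_1\ge l_n^{5/6}$, with $\gamma_n\ge\frac12\gamma_0$. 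For any normalized eigenfunction $\hat\psi$ of $H_{B_{n+1}^{(1)}}(\theta)$ with eigenvalue $\hat E$ in $[E_n(\theta)-2\delta_n,E_n(\theta)+2\delta_n]$, restricting the eigenequation to $\Lambda$ and applying the Poisson formula expresses $\hat\psi|_\Lambda$ through $G_\Lambda$ and the values of $\hat\psi$ on the inner boundary of $B_n$. Turning this into genuine pointwise decay $|\hat\psi(x)|\le e^{-\frac14\gamma_0\|x\|_1}$ for $\|x\|_1\ge l_n^{6/7}$ — statement \textbf{(b)} — I would do by iterating the resolvent identity through the telescoping family of good blocks around the origin cluster provided by the $n$-regularity of $B_{n+1}^{(1)}$ at all scales $\le n-1$, i.e.\ by the higher-dimensional bootstrapped Green's function decay (Theorem~\ref{rep}), feeding in \eqref{Hdecay} at the inner boundary of $B_n$.

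Once \textbf{(b)} is in hand the rest is routine. It forces $\hat\psi$ to be almost supported in $\Lambda_{l_n^{6/7}}$, so the renormalized restriction $\hat\psi|_{B_n}$ is a $\ll\delta_n$-approximate eigenfunction of $H_{B_n}(\theta)$ for $\hat E$; since $E_n(\theta)$ is the unique eigenvalue of $H_{B_n}(\theta)$ in $[E_n(\theta)-3\delta_n,E_n(\theta)+3\delta_n]$ with gap $\ge 3\delta_n$ (Hypothesis~\ref{h4}, equivalently \eqref{Suba}), first-order perturbation theory using $\|G_n^\perp\|\lesssim\delta_n^{-1}$ gives $\|\hat\psi-\pm\psi_n\|\lesssim e^{-\frac14\gamma_0 l_n}\delta_n^{-1}\ll 1$. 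Two distinct such eigenvalues would then have eigenfunctions each $\ll 1$-close to $\pm\psi_n$, contradicting orthogonality, which yields the uniqueness in \textbf{(a)}; specializing $\hat\psi=\psi_{n+1}$ gives \textbf{(c)}, i.e.\ \eqref{closed}. Finally \textbf{(d)} is immediate: any other eigenvalue $\hat E$ of $H_{B_{n+1}^{(1)}}(\theta)$ has $|\hat E-E_n(\theta)|>2\delta_n$ by \textbf{(a)}, hence $|\hat E-E_{n+1}^{(1)}(\theta)|>2\delta_n-e^{-\frac14\gamma_0 l_n}>\delta_n$, so $\|G_{n+1}^\perp(E_{n+1}^{(1)})\|=\operatorname{dist}\!\big(\sigma(H_{B_{n+1}^{(1)}}(\theta))\setminus\{E_{n+1}^{(1)}(\theta)\},\,E_{n+1}^{(1)}(\theta)\big)^{-1}\le\delta_n^{-1}$.

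The hard part will be the decay in \textbf{(b)}. A single application of the Poisson formula only gives $\|\hat\psi|_\Lambda\|\lesssim\varepsilon\delta_n^{-1}\|\hat\psi|_{\partial B_n}\|$, which is useless since $\delta_n^{-1}$ is super-polynomially large, so the exponential decay must be extracted scale by scale from the good annuli surrounding the origin cluster, and the bookkeeping that keeps the per-scale loss at $O(l_m^{-1/30})$ (so that $\gamma_n\ge\frac12\gamma_0$) and all errors below $\delta_n^{10}$ is the delicate point; in $d\ge 2$ this is precisely the ingredient the paper has to treat separately from the one-dimensional interval argument of \cite{FV21}.
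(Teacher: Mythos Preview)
Your overall strategy matches the paper's, and parts \textbf{(a)}, \textbf{(c)}, \textbf{(d)} are handled essentially as in the paper. However, you are making part \textbf{(b)} much harder than it needs to be by setting $\Lambda = B_{n+1}^{(1)}\setminus B_n$. The paper instead removes only a much smaller block $\hat B_n$ of size $O(l_n^{2/3})$ centered at the origin, chosen so that $\Lambda := B_{n+1}^{(1)}\setminus \hat B_n$ is $n$-regular. Since the origin is still the only $n$-resonant site in $B_{n+1}^{(1)}$ (Proposition~\ref{SRnpro}), this $\Lambda$ is $n$-good and Hypothesis~\ref{h6} applies to it directly. The crucial gain is purely geometric: for $\|x\|_1\ge l_n^{6/7}$ one has $\operatorname{dist}(x,\partial\hat B_n)\ge \|x\|_1 - O(l_n^{2/3})\ge \tfrac{2}{3}\|x\|_1 > l_n^{5/6}$, so a \emph{single} application of the Poisson formula
\[
\psi_{n+1}(x)=\sum_{z,z'} G_\Lambda(\theta,E;x,z)\,\Gamma_{z,z'}\,\psi_{n+1}(z'),\qquad z\in\partial\hat B_n,
\]
already gives $|\psi_{n+1}(x)|\le C(d)\,l_n^{d}\,e^{-\frac{1}{3}\gamma_0\|x\|_1}\le e^{-\frac{1}{4}\gamma_0\|x\|_1}$, using nothing more than $\|\psi_{n+1}\|_\infty\le 1$ on the inner boundary. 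No telescoping family of annuli and no multi-scale bootstrapping is needed; the ``hard part'' you anticipate in your last paragraph simply does not arise with this choice.

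The conceptual point you are missing is that Hypothesis~\ref{h6} already packages the full multi-scale Green's function decay for \emph{any} $n$-good set, regardless of shape. Therefore one should excise as little as possible---just enough to remove the single $n$-resonant site at the origin---rather than the entire scale-$n$ block. Your choice $\Lambda=B_{n+1}^{(1)}\setminus B_n$ throws away exactly the region $l_n^{6/7}\le\|x\|_1\le l_n$ where you need the decay, forcing you to rebuild it by hand from lower scales; the paper's choice keeps that region inside the $n$-good set and gets the decay for free.
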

	\begin{proof}
		By  the decay  \eqref{Hdecay} of eigenfunction in the  inductive hypothesis, we have  $$\| (H_{B_{n+1}^{(1)}} (\theta)-E_n (\theta))\psi_n\|=\|\Gamma_{n}\psi_n\|\lesssim e^{-\frac{\gamma_0}{4}l_n}.$$ The above estimate together with  Lemma \ref{trialcor}   shows the existence of $E_{n+1}^{(1)} (\theta)$.
		
		Let $E\in B_{2\delta_n} (E_n (\theta))\subset\tilde{J} (E_n)$ and $\hat{B}_n$  be  a $O (l_n^{2/3})$-size block  centered at origin   so that $\Lambda:=B_{n+1}^{(1)}\setminus\hat{B}_n$ is $n$-regular relative to $ (\theta,E)$. By Proposition \ref{SRnpro}, 
		we have  $\Lambda\cap 
		S_{n} (\theta,E)=\emptyset$.	Thus $\Lambda$ is $n$-good relative to $ (\theta,E)$. By the \textbf{Hypothesis} \ref{h6}, we obtain   $$ \ |G_\Lambda (\theta,E;x,y)|\leq e^{-\gamma_n\|x-y\|_1}, \  \|x-y\|_1\geq l_n^{\frac{5}{6}}.$$  
		Let $E\in \sigma (H_{B_{n+1}^{(1)}} (\theta))$ such that $|E-E_n (\theta)|\leq2 \delta_n$.  We determine the value of $\psi_1 (x)$ for $\|x\|\geq l_n^{\frac{6}{7}}$  by 
		$$\psi_{n+1} (x)= \sum_{z,z'} G_{\Lambda} (E,\theta; x, z) \Gamma_{z,z'} \psi_{n+1} (z').$$
		Since $\operatorname{dist} (x,\partial\hat{B}_n)\geq \|x\|_1-O (l_n^{2/3})\geq\frac{2}{3} \|x\|_1>l_n^{5/6}$, by the off-diagonal exponential decay of  $G_\Lambda$,  we get 
		\begin{align*}
			|\psi_{n+1} (x)|&\leq C (d)\sum_{z'\in \partial^+ \hat{B}_n}e^{-\frac{1}{3}\gamma_0 \|x\|_1}   |\psi_{n+1} (z')|\\
			&\leq e^{-\frac{1}{4}\gamma_0 \|x\|_1}.
		\end{align*} Thus we finish the proof of  \textbf{(b)}.

		To establish \textbf{(c)}, we must show that $\psi_{n+1}$ is close to  $\psi_{n}$ inside $B_n$. To see this, we restrict the equation  $H_{B_{n+1}^{(1)}} (\theta)\psi_{n+1}=E_{n+1}^{(1)} (\theta)\psi_{n+1}$ on $B_n$ to get 
		$$\left(H_{B_{n}}-E_{n+1}^{(1)}\right)\psi_{n+1}=\Gamma_{B_{n}}\psi_{n+1}.$$
		By \eqref{Suba},  \eqref{dec} and the above equation, we get  $$\|P_n^\perp\psi_{n+1}\|=\|G_n^\perp (E_{n+1}^{(1)})P_n^\perp\Gamma_{B_{n}}\psi_{n+1}\|\lesssim e^{-\frac{1}{4}\gamma_0l_n}\delta_n^{-1},$$
		where $P_n^\perp$ is the projection on  the orthogonal complement of $\psi_n$. Since $\psi_{n+1}$ is normalized, we obtain $\|\psi_{n+1}-\psi_{n}\|\lesssim e^{-\frac{1}{4}\gamma_0l_n}\delta_n^{-1}$ up to a choice of sign.
		If there is another $\hat{E}\in \sigma (H_{B_{n+1}^{(1)}} (\theta))$ satisfying $|\hat{E}-E_n (\theta)|\leq2\delta_n$, as above we can  show that the corresponding eigenfunction $\hat{\psi}$  must  also satisfy \eqref{closed} replacing $\psi_{n+1}$  by $\hat{\psi}$,  which violates the  orthogonality. Thus we prove the uniqueness part of \textbf{(a)}.
		
		Finally, \textbf{(d)} follows from any other $\hat{E}\in \sigma (H_{B_{n+1}^{(1)}} (\theta))$ must obey $$|\hat{E}-E_{n+1}^{(1)} (\theta)|\geq |\hat{E}-E_{n} (\theta)|-|E_{n} (\theta)-E_{n+1}^{(1)} (\theta)|\geq2\delta_n-\delta_n\geq\delta_n.$$
	\end{proof}
	\begin{prop}\label{apn} We have the following:  
		$$|\frac{d^s}{d\theta^s} (E_{n+1}^{(1)} (\theta)-E_n (\theta))|\lesssim  e^{-\frac{1}{16}\gamma_0l_n}\delta_n^{-s} \ll\delta_n^{10} \ {\rm for\ }s=0,1,2.$$
	\end{prop}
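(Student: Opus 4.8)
We follow the template of the scale-one computation in Proposition \ref{ap}, with the parent $E_n$ playing the role of $v$ and its eigenfunction $\psi_n$ (which in the present Subcase A is a \emph{simple} eigenvalue of $H_{B_n}(\theta)$, isolated by $\gtrsim\delta_n$, so that Lemma \ref{daoshu} and the bound \eqref{Suba} apply) playing the role of $\psi_0$, while all the quantitative input comes from Proposition \ref{kn}. The case $s=0$ is just item \textbf{(a)} of Proposition \ref{kn}. For $s=1$, the Feynman--Hellman formula \textbf{(1)} of Lemma \ref{daoshu} applied to both operators gives $(E_{n+1}^{(1)})'-E_n'=\langle\psi_{n+1},V'\psi_{n+1}\rangle-\langle\psi_n,V'\psi_n\rangle$ (extend $\psi_n$ by zero to $B_{n+1}^{(1)}$; the tail is harmless by \eqref{Hdecay}), and bounding the right side by $D(\|\psi_{n+1}\|+\|\psi_n\|)\|\psi_{n+1}-\psi_n\|\lesssim e^{-\frac{\gamma_0}{4}l_n}\delta_n^{-1}$ via the closeness estimate \eqref{closed} does it.

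The substantive case is $s=2$. Here I use the Feynman--Hellman formula \textbf{(2)} of Lemma \ref{daoshu} for both operators, $(E_{n+1}^{(1)})''=\langle\psi_{n+1},V''\psi_{n+1}\rangle-2\langle V'\psi_{n+1},G_{n+1}^\perp(E_{n+1}^{(1)})V'\psi_{n+1}\rangle$ and $E_n''=\langle\psi_n,V''\psi_n\rangle-2\langle V'\psi_n,G_n^\perp(E_n)V'\psi_n\rangle$, both valid since the eigenvalues are simple. The difference of the $V''$-terms is again $\lesssim\|\psi_{n+1}-\psi_n\|\lesssim e^{-\frac{\gamma_0}{4}l_n}\delta_n^{-1}$, so everything reduces to comparing the two quadratic forms $\langle V'\psi_{n+1},G_{n+1}^\perp(E_{n+1}^{(1)})V'\psi_{n+1}\rangle$ and $\langle V'\psi_n,G_n^\perp(E_n)V'\psi_n\rangle$, which I carry out in three steps. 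First, since $\psi_{n+1}$ (hence $V'\psi_{n+1}$) is, by \eqref{dec} and \eqref{Hdecay}, supported up to exponentially small error inside $B_n$ and is $O(e^{-\frac{\gamma_0}{4}l_n}\delta_n^{-1})$-close there to $V'\psi_n$ by \eqref{closed}, I may replace $V'\psi_{n+1}$ by $V'\psi_n$ in the argument slots at cost $\lesssim e^{-\frac{\gamma_0}{4}l_n}\delta_n^{-2}$, using $\|G_{n+1}^\perp\|\lesssim\delta_n^{-1}$ from Proposition \ref{kn}\textbf{(d)}. Second, because every eigenvalue of $H_{B_{n+1}^{(1)}}$ other than $E_{n+1}^{(1)}$ lies $\geq 2\delta_n$ from $E_n$ by Proposition \ref{kn}\textbf{(a)} and $\psi_{n+1}$ is the \emph{exact} eigenvector of $E_{n+1}^{(1)}$, the compressed resolvent $G_{n+1}^\perp(E)$ is well defined with norm $\lesssim\delta_n^{-1}$ for all $|E-E_n|\leq 2\delta_n$, so by the second resolvent identity I may replace $E_{n+1}^{(1)}$ by $E_n$ at cost $\lesssim|E_{n+1}^{(1)}-E_n|\delta_n^{-2}\lesssim e^{-\frac{\gamma_0}{4}l_n}\delta_n^{-2}$. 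Third, I compare $\langle V'\psi_n,G_{n+1}^\perp(E_n)V'\psi_n\rangle$ with $\langle V'\psi_n,G_n^\perp(E_n)V'\psi_n\rangle$: splitting $B_{n+1}^{(1)}=B_n\cup(B_{n+1}^{(1)}\setminus B_n)$ and recalling that by Proposition \ref{SRnpro} and the $n$-strong-regularity of $B_{n+1}^{(1)}$ the collar $B_{n+1}^{(1)}\setminus B_n$ is $n$-good, hence has resolvent bound $\lesssim\delta_n^{-1}$ and off-diagonal decay $e^{-\gamma_n\|x-y\|_1}$, I iterate the resolvent identity on the orthogonal complements across $\partial B_n$; since $\psi_n$ and $\psi_{n+1}$ decay like $e^{-\frac{\gamma_0}{4}l_n}$ near $\partial B_n$ by \eqref{Hdecay} while the interface coupling $\Gamma_{B_n}$ is supported there, each crossing of $\partial B_n$ carries such an exponentially small weight, and summing the finitely many resolvent factors gives a discrepancy $\lesssim e^{-\frac{1}{16}\gamma_0 l_n}\delta_n^{-2}$.

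Collecting the three steps yields $|\frac{d^2}{d\theta^2}(E_{n+1}^{(1)}-E_n)|\lesssim e^{-\frac{1}{16}\gamma_0 l_n}\delta_n^{-2}$, and the final inequality $e^{-\frac{1}{16}\gamma_0 l_n}\delta_n^{-s}\ll\delta_n^{10}$ (for $s=0,1,2$) is immediate from $\gamma_0 l_n^{1/3}\gg 1$, which holds for $\varepsilon_0$ small. Throughout, the only places where the gap between this argument and the scale-one one of Proposition \ref{ap} matters are $s=2$: at scale one $V'\psi_0=v'(\theta)\psi_0$ is a scalar multiple of $\psi_0$, so $\|P_1^\perp(V'\psi_1)\|$ is small and the $G^\perp$-term is negligible, whereas here $V'\psi_n$ genuinely spreads and the $G^\perp$-term must be compared scale to scale rather than discarded.

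The main obstacle is exactly that third step — the comparison of the orthogonal-complement Green's functions $G_{n+1}^\perp$ and $G_n^\perp$. One cannot exhibit $G_n^\perp$ as a literal restriction of $G_{n+1}^\perp$, and the naive resolvents $(H-E)^{-1}$ are singular near $E_n$, so the resolvent identity has to be run on the orthogonal complements; this forces one to keep track both of the mismatch between the subspaces $\psi_n^\perp$ and $\psi_{n+1}^\perp$ (controlled by the eigenfunction closeness \eqref{closed}) and of the fact that the $\delta_n$-gap survives the rank-one compression (controlled by Proposition \ref{kn}\textbf{(a)},\textbf{(d)} and \eqref{Suba}), and to make sure that every occurrence of the interface coupling $\Gamma_{B_n}$ is paired with the exponential decay \eqref{Hdecay} of the eigenfunctions near $\partial B_n$ rather than with the dangerous factor $\varepsilon\delta_n^{-1}$.
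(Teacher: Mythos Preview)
Your outline is essentially the paper's, and the cases $s=0,1$ together with the first two substeps of $s=2$ (swapping $\psi_{n+1}\to\psi_n$ in the argument slots, then $E_{n+1}^{(1)}\to E_n$ in the resolvent) match it almost verbatim. The paper even writes down precisely the orthogonal-complement resolvent identity you allude to (equation \eqref{545}), with the projection-mismatch terms $G_{n+1}^\perp P_{n+1}^\perp P_n$ and $P_{n+1}P_n^\perp G_n^\perp$ controlled by $\|\psi_{n+1}-\psi_n\|$ just as you say.

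The gap is in your third step, specifically the claim that ``each crossing of $\partial B_n$ carries an exponentially small weight'' because the \emph{eigenfunctions} decay there. After the resolvent identity the dangerous term is $\langle V'\psi_n,\,G_{n+1}^\perp\Gamma_n G_n^\perp\,V'\psi_n\rangle$, and $\Gamma_n$ is sandwiched between two Green's functions, not paired with a bare $\psi_n$. Decay of $\psi_n$ at $\partial B_n$ only tells you that $V'\psi_n$ is localized near the origin; it says nothing about $G_n^\perp V'\psi_n$ at $\partial B_n$, since a priori $G_n^\perp$ could spread mass over all of $B_n$ at rate $\delta_n^{-1}$. Likewise, the $n$-goodness of the \emph{collar} $B_{n+1}^{(1)}\setminus B_n$ controls $G_{\rm collar}$ but does not bridge the gap between the origin (where $V'\psi_n$ lives) and $\partial B_n$ (where $\Gamma_n$ acts), because the collar lies entirely outside $B_n$.

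What the paper actually does here is establish off-diagonal decay of $G_n^\perp$ itself, from the support of $V'\psi_n$ out to $\partial B_n$. One first cuts $V'\psi_n$ to $\Lambda_{l_n/4}$ via $\chi_n$ (the tail is $O(e^{-\gamma_0 l_n/16})$ by \eqref{Hdecay}), then removes an $O(l_n^{2/3})$-block $\hat B$ around the origin so that $A:=B_n\setminus\hat B$ is $(n-1)$-\emph{good} --- the \emph{inductive} hypothesis at the previous scale, not scale $n$ --- and runs the resolvent identity $\Gamma_n G_n^\perp\chi_n=\Gamma_n G_A\chi_n+\Gamma_n G_A\Gamma_A G_n^\perp\chi_n-\Gamma_n G_A P_n\chi_n$. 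Each piece is then exponentially small: $\Gamma_n G_A\chi_n$ and $\Gamma_n G_A\Gamma_A$ connect $\partial B_n$ to $\Lambda_{l_n/4}$ (resp.\ $\partial\hat B$) across the $(n-1)$-good annulus, and the last term is handled via $\|P_n\chi_n-\chi_nP_n\|\lesssim\|(1-\chi_n)\psi_n\|$. In short, the exponential weight at the interface comes from off-diagonal decay of the inductive Green's function on an annulus \emph{inside} $B_n$, not from decay of eigenfunctions \emph{at} $\partial B_n$ and not from the collar \emph{outside} $B_n$.
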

	\begin{proof}
		By the previous proposition, we have 
		\begin{equation}\label{C0n}
			|E_{n+1}^{(1)} (\theta)-E_n (\theta)|\lesssim e^{-\frac{1}{4}\gamma_0l_n}.
		\end{equation}
		Since \eqref{closed}, by Feynman-Hellman formula,  we obtain 
		$$|  (E_{n+1}^{(1)})' (\theta)- E_n' (\theta)|=|\left\langle\psi_{n+1},V' \psi_{n+1}\right\rangle-\left\langle\psi_n,V' \psi_n\right\rangle|\lesssim e^{-\frac{\gamma_0}{4}l_n}\delta_n^{-1}.$$
		For $s=2$, by Feynman-Hellman formula, we have 
		$$ E_{n}'' (\theta)=\left\langle\psi_{n}, V'' \psi_{n}\right\rangle-2\left\langle\psi_{n}, V' G_{n}^{\perp} (E_{n}) V' \psi_{n}\right\rangle,$$
		$$  (E_{n+1}^{(1)})'' (\theta)=\left\langle\psi_{n+1}, V'' \psi_{n+1}\right\rangle-2\left\langle\psi_{n+1}, V' G_{n+1}^{\perp} (E_{n+1}^{(1)}) V' \psi_{n+1}\right\rangle. $$
		Thus it suffices to estimate 
		\begin{align*}
			&|\left\langle\psi_{n+1}, V' G_{n+1}^{\perp} (E_{n+1}^{(1)}) V' \psi_{n+1}\right\rangle-\left\langle\psi_{n}, V' G_{n}^{\perp} (E_{n}) V' \psi_{n}\right\rangle|\\
			\leq&|\left\langle\psi_{n+1}, V' G_{n+1}^{\perp} (E_{n+1}^{(1)}) V' \psi_{n+1}\right\rangle-\left\langle\psi_{n}, V' G_{n+1}^{\perp} (E_{n+1}^{(1)}) V' \psi_{n}\right\rangle|\\+&|\left\langle\psi_{n}, V' G_{n+1}^{\perp} (E_{n+1}^{(1)}) V' \psi_{n}\right\rangle-\left\langle\psi_{n}, V' G_{n}^{\perp} (E_{n}) V' \psi_{n}\right\rangle|\\
			\leq&O (e^{-\frac{\gamma_0}{4}l_n}\delta_n^{-2})+|\left\langle V'\psi_{n},  \left (G_{n+1}^{\perp} (E_{n+1}^{(1)})-G_{n}^{\perp} (E_{n})\right) V' \psi_{n}\right\rangle|,
		\end{align*}
		where we have used \eqref{closed} and $\|G_{n+1}^{\perp} (E_{n+1}^{(1)})\|\leq \delta_n^{-1}$ to bound the term on the second line.

		It remains to  estimate the operator $G_{n+1}^{\perp} (E_{n+1}^{(1)})-G_{n}^{\perp} (E_{n})$ restricted to $B_n$.  By  resolvent identity, we have 
		\begin{equation}\label{545}
			\begin{aligned}
				& \ \  \ \ G_{n+1}^{\perp} (E_{n+1}^{(1)})-G_{n}^{\perp} (E_{n})\\&
				=G_{n+1}^{\perp} (E_{n+1}^{(1)})P_n^\perp-P_{n+1}^\perp G_{n}^{\perp} (E_{n})+G_{n+1}^{\perp} (E_{n+1}^{(1)})P_n-P_{n+1}G_{n}^{\perp} (E_{n})\\
				&=G_{n+1}^{\perp} (E_{n+1}^{(1)})\left (-\Gamma_n+ (E_{n+1}^{(1)}-E_n)\right)G_{n}^{\perp} (E_{n})\\ &+G_{n+1}^{\perp} (E_{n+1}^{(1)})P_{n+1}^\perp P_n-P_{n+1}P_n^\perp G_{n}^{\perp} (E_{n}).
			\end{aligned}
		\end{equation}
		We have used the orthogonal projection $P_n$ and $P_{n+1}$ onto $\psi_{n}$ and $\psi_{n+1}$ respectively and the relation $P_n+P_n^\perp=\operatorname{Id}_{B_n}$, $P_{n+1}+P_{n+1}^\perp=\operatorname{Id}_{B_{n+1}^{(1)}}$. The last two terms on the right hand side  of \eqref{545} are bounded by $O (e^{-\frac{\gamma_0}{4}l_n}\delta_n^{-2})$  since $$\|P_{n+1}^\perp P_n\|=\|P_{n+1}^\perp\psi_{n}\|\leq2\|\psi_{n+1}-\psi_{n}\|=O (e^{-\frac{\gamma_0}{4}l_n}\delta_n^{-1}),$$
		$$\|P_{n+1}P_n^\perp\|=\|P_n^\perp P_{n+1}\|=\|P_{n}^\perp\psi_{n+1}\|\leq2\|\psi_{n+1}-\psi_{n}\|=O (e^{-\frac{\gamma_0}{4}l_n}\delta_n^{-1})$$
		and $$\|G_{n}^{\perp} (E_{n})\|  , \ \|G_{n+1}^{\perp} (E_{n+1}^{(1)})\|=O (\delta_n^{-1}).$$
		The second term on the right hand side of \eqref{545} is bounded by $O (e^{-\frac{\gamma_0}{4}l_n}\delta_n^{-2})$ since $|E_{n+1}^{(1)}-E_n|=O (e^{-\frac{\gamma_0}{4}l_n})$. Therefore the case $s=2$ follows if we prove 
		\begin{equation}\label{928}
			\|\Gamma_n G_n^\perp (E_n)V'\psi_{n}\|=O (e^{-\frac{\gamma_0}{4}l_n}\delta_n^{-1}).
		\end{equation}
		Let $\chi_n$ be the restriction operator  of the block $\Lambda_{\frac{l_n}{4}}\subset B_n$.  By the estimate \eqref{dec} we have 
		$$\| (1-\chi_n)V'\psi_{n}\|=O (e^{-\frac{\gamma_0}{16}l_n}).$$
		Thus in order to prove \eqref{928}, it suffices to show $\|\Gamma_nG_n^\perp \chi_n\|=O (e^{-\frac{\gamma_0}{16}l_n}\delta_n^{-1})$. To do this, we choose a $O (l_n^{2/3})$-size block $\hat{B}$ centered at origin so that $A:=B_n\setminus \hat{B}$ is $(n-1)$-good relative to $ (\theta,E_n (\theta))$. By resolvent identity, we get 
		\begin{align*}
			\Gamma_nG_n^\perp\chi_n&=\Gamma_nG_A\chi_n+\Gamma_n (\chi_AG_n^\perp-G_AP_n^\perp)\chi_n-\Gamma_nG_AP_n\chi_n\\
			&=\Gamma_nG_A\chi_n+\Gamma_nG_A\Gamma_AG_n^\perp\chi_n-\Gamma_nG_AP_n\chi_n.
		\end{align*}
		Since $A$ is $ (n-1)$-good, we have  $\|G_A (E_n)\|\leq10\delta_{n-1}^{-1}$ and   $G_A (\theta,E_n;x,y)$ decays exponentially  for $\|x-y\|_1\geq l_{n-1}^\frac{5}{6}$. Thus $\|\Gamma_nG_A\chi_n\|,\|\Gamma_nG_A\Gamma_A\|=O (e^{-\frac{\gamma_0}{16}l_n})$. To estimate the final term, we use $\|P_n\chi_n-\chi_nP_n\|\leq \|P_n\chi_n-P_n\|+\|P_n-\chi_nP_n\|\leq2\| (1-\chi_n)P_n\|= 2\| (1-\chi_n)\psi_{n}\|=O (e^{-\frac{\gamma_0}{16}l_n})$ to obtain 
		$$\|\Gamma_nG_AP_n\chi_n\|\leq\|\Gamma_nG_A\chi_nP_n\|+\|\Gamma_nG_A\chi_n (P_n\chi_n-\chi_nP_n)\|=O (e^{-\frac{\gamma_0}{16}l_n}\delta_n^{-1}).$$
	\end{proof}
	The next proposition verifies the Morse condition of $E_{n+1}^{(1)} (\theta)$.
	\begin{prop}\label{31n}
		Assume $\theta\in I_{i}^{(1)}$ such that $| (E_{n+1}^{(1)})' (\theta)|\leq\delta_{n-1}^2$. Then $$| (E_{n+1}^{(1)})'' (\theta)|\geq3-\sum_{s=0}^{n}\delta_s^{10}>2$$ with a unique sign for all such $\theta$. 
	\end{prop}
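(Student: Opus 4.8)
The plan is to mimic the proof of Proposition \ref{31} at the initial scale, transferring the Morse condition from the parent $E_n$ to the child $E_{n+1}^{(1)}$ via the $C^2$-closeness estimate of Proposition \ref{apn}. The key point is that $E_{n+1}^{(1)}-E_n$, together with its first two derivatives, is controlled by $e^{-\frac{1}{16}\gamma_0 l_n}\delta_n^{-2}\ll\delta_n^{10}$, which is negligible against every threshold appearing in Hypothesis \ref{h3}. Note also that on the set in question $E_n(\theta)$ and $E_{n+1}^{(1)}(\theta)$ are simple eigenvalues (Subcase A, and Proposition \ref{kn}\textbf{(a)}), so both $E_n''$ and $(E_{n+1}^{(1)})''$ are well defined there by Lemma \ref{daoshu}.

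First I would fix $\theta\in I_i^{(1)}$ with $|(E_{n+1}^{(1)})'(\theta)|\leq\delta_{n-1}^2$. Applying Proposition \ref{apn} with $s=1$ gives
$$|E_n'(\theta)|\leq|(E_{n+1}^{(1)})'(\theta)|+\delta_n^{10}\leq\delta_{n-1}^2+\delta_n^{10}.$$
Since the scale lengths obey $l_{n-2}<l_{n-1}<l_n$ and $\delta_m=e^{-(l_m)^{2/3}}$ with $l_m$ growing super-exponentially in $m$, we have $\delta_{n-1}^2+\delta_n^{10}\ll\delta_{n-2}^2$; this is the only arithmetic to check, and it is immediate once $\varepsilon_0$ is small (with the convention $\delta_{-1}:=\delta_0^{1/400}$ covering $n=1$). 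Hence the hypothesis of the Morse condition for $E_n$ (Hypothesis \ref{h3}) holds at $\theta$, so $|E_n''(\theta)|\geq 3-\sum_{s=0}^{n-1}\delta_s^{10}$ with a sign independent of the choice of such $\theta$.

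Finally I would invoke Proposition \ref{apn} with $s=2$ to conclude
$$|(E_{n+1}^{(1)})''(\theta)|\geq|E_n''(\theta)|-\delta_n^{10}\geq 3-\sum_{s=0}^{n-1}\delta_s^{10}-\delta_n^{10}=3-\sum_{s=0}^{n}\delta_s^{10}>2,$$
the last inequality holding because $\sum_{s=0}^{n}\delta_s^{10}<1$ for $\varepsilon_0$ small, again by the rapid growth of the scales $l_s$. Since the perturbation error $\delta_n^{10}$ sits far below the lower bound $2$ while $E_n''$ keeps a fixed sign on the relevant set, $(E_{n+1}^{(1)})''$ inherits that sign there, which is exactly the assertion. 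There is no substantive obstacle here: the whole proof is a two-line transfer, and the only care needed is the bookkeeping of the parameters $l_\bullet,\delta_\bullet$ — namely checking that both the individual perturbation errors and the accumulated error $\sum_{s\leq n}\delta_s^{10}$ stay well below the gaps that the Morse condition provides, which is precisely the purpose of the super-exponential growth built into the inductive scales.
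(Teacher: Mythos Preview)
Your proposal is correct and follows essentially the same route as the paper: transfer the small-derivative hypothesis from $E_{n+1}^{(1)}$ to $E_n$ via Proposition~\ref{apn} ($s=1$), invoke Hypothesis~\ref{h3} for $E_n$, and then pull the second-derivative lower bound back to $E_{n+1}^{(1)}$ via Proposition~\ref{apn} ($s=2$). Your bookkeeping of the thresholds $\delta_{n-1}^2+\delta_n^{10}\ll\delta_{n-2}^2$ and the sign inheritance is exactly what the paper does, with no substantive difference.
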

	\begin{proof}
		If $| (E_{n+1}^{(1)})' (\theta)|\leq\delta_{n-1}^2$, by Proposition \ref{apn}, we have 
		$$|E_n' (\theta)|\leq| (E_{n+1}^{(1)})' (\theta)|+\delta_0^{10}\leq 2\delta_{n-1}<\delta_{n-2}^2.$$
		By  Hypothesis \ref{h3}, 
		$$|E_n'' (\theta)|\geq 3-\sum_{s=0}^{n-1}\delta_s^{10}$$ with a unique sign for all such $\theta$.  We employ  Proposition \ref{apn} (the version $s=1$) to the above inequality  to  get  $$| (E_{n+1}^{(1)})'' (\theta)|\geq|E_n'' (\theta)|-\delta_n^{10}\geq3-\sum_{s=0}^{n}\delta_s^{10}>2$$ with a unique sign for all such $\theta$.
	\end{proof}
	In the following proposition,  we show each Rellich child  $E_{n+1}^{(1)}$ defined on $I^{(1)}_i$ inherits  the two-monotonicity interval structure and  its derivative satisfies the estimates  in Hypothesis \ref{h4} for $m=n+1$.
	\begin{prop}\label{strn1} We have the following:
		\begin{itemize}
			\item [\textbf{(a).}] Suppose   $E_n$ has a critical point in $I^{(1)}_i$.  (Thus $I^{(1)}_i$ is a closed interval.) Then $E_{n+1}^{(1)}$ also has a critical point in $I^{(1)}_i$. Moreover, $I^{(1)}_i$ can be divided into two closed  intervals $\tilde{I}^{ (1)}_{i,+}$  and  $\tilde{I}^{ (1)}_{i,-}$ with disjoint interiors, such that 
			$$\pm  (E_{n+1}^{(1)})'|_{\tilde{I}^{(1)}_{i,\pm}}\geq0, \ \tilde{I}^{(1)}_{i,+}\cup\tilde{I}^{(1)}_{i,-}=I^{(1)}_i.$$
			In this case, $E_{n+1}^{(1)}$ belongs to \textbf{Type} \ref{t2}.
			\item [\textbf{(b).}] Suppose   $E_n$ has no  critical point in $I^{(1)}_i$  (Thus  $I^{(1)}_i=I^{(1)}_{i,+}\cup I^{ (1)}_{i,-}$ is a union of two closed disjoint intervals).  Then 
			$$\pm  (E_{n+1}^{(1)})'|_{I^{(1)}_{i,\pm}}\gtrsim \delta_{n-1}^{10}.$$
			In this case, $E_{n+1}^{(1)}$ belongs to \textbf{Type} \ref{t1}.
		\end{itemize}
	\end{prop}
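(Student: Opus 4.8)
The plan is to mirror exactly the argument used at the first scale in Proposition \ref{str1}, now with the sampling function $v$ replaced by the parent Rellich function $E_n$ and with the approximation estimates of Proposition \ref{apn} playing the role of Proposition \ref{ap}. The two key inputs that make this work are already in place: the Morse condition for $E_{n+1}^{(1)}$ (Proposition \ref{31n}, valid wherever $|(E_{n+1}^{(1)})'|\leq\delta_{n-1}^2$, with a definite second-derivative sign there) and the $C^2$-closeness $|\tfrac{d^s}{d\theta^s}(E_{n+1}^{(1)}-E_n)|\ll\delta_n^{10}$ for $s=0,1,2$ from Proposition \ref{apn}. I will also need a lower bound for $|E_n'|$ near the endpoints of $I_i^{(1)}$; this comes from Hypothesis \ref{h4} together with the length estimate \eqref{domainesti}, $\delta_{n-1}^{10}\lesssim|I_{i,\pm}^{(1)}|\leq 2\delta_{n-1}^5$, exactly as Lemma \ref{derv} and \eqref{I1len} were used in Proposition \ref{str1}.

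\medskip
\noindent\textbf{Part (a): $E_n$ has a critical point in $I_i^{(1)}$.} First I would note that since $J_i^{(1)}=E_n(I_i^{(1)})$ contains a critical value of $E_n$ and $E_n$ has the two-monotonicity interval structure (Hypothesis \ref{h2}), $I_i^{(1)}$ is a single closed interval. Next, I evaluate $E_n'$ at the two endpoints $\partial I_i^{(1)}$: by the length bound \eqref{domainesti} and the Morse/Type structure of $E_n$ (Lemma \ref{C2} applied to $E_n$), $|E_n'|_{\partial I_i^{(1)}}|\gtrsim\delta_{n-1}^{10}$, and the two endpoint derivatives have opposite signs because $E_n$ is increasing on one side of its critical point and decreasing on the other. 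Applying Proposition \ref{apn} ($s=1$), $(E_{n+1}^{(1)})'$ inherits the same sign and a comparable lower bound at $\partial I_i^{(1)}$, i.e.\ opposite signs on the two edges. Since $(E_{n+1}^{(1)})'$ is continuous and $I_i^{(1)}$ is connected, the intermediate value theorem yields a point $\theta_s\in I_i^{(1)}$ with $(E_{n+1}^{(1)})'(\theta_s)=0$. Finally, Proposition \ref{31n} supplies the hypothesis of Lemma \ref{C2} (with $\delta=\delta_{n-1}^2$), whose conclusion is precisely that $(E_{n+1}^{(1)})'$ changes sign exactly once, giving the two-monotonicity interval decomposition $I_i^{(1)}=\tilde I^{(1)}_{i,+}\cup\tilde I^{(1)}_{i,-}$ with $\pm(E_{n+1}^{(1)})'|_{\tilde I^{(1)}_{i,\pm}}\geq0$. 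By definition this places $E_{n+1}^{(1)}$ in \textbf{Type} \ref{t2}.

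\medskip
\noindent\textbf{Part (b): $E_n$ has no critical point in $I_i^{(1)}$.} Here $I_i^{(1)}=I_{i,+}^{(1)}\cup I_{i,-}^{(1)}$ is a disjoint union, and $J_i^{(1)}\subset J_n^{SR}$ by Proposition \ref{covern}(2). The point is that $J_i^{(1)}$ stays a definite distance — of order $\delta_{n-1}^{10}$, since $\delta_{n-1}^{10}-3\delta_n\gtrsim\delta_{n-1}^{10}$ — away from any critical value of $E_n$ (either the genuine interior critical value when $E_n$ is \textbf{Type} \ref{t2}, or, when $E_n$ is \textbf{Type} \ref{t1}/\ref{t3}, the endpoint values where $E_n'$ is already bounded below). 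Pulling back through $E_n$ and invoking the Morse/Type lower bounds for $E_n'$ (Hypothesis \ref{h4}: $\pm E_n'|_{I(E_n)_\pm}\gtrsim\delta_{n-2}^{10}$ for \textbf{Type} \ref{t1}, and the analogue via Lemma \ref{C2} away from the critical point for \textbf{Type} \ref{t2}), one gets $\pm E_n'|_{I_{i,\pm}^{(1)}}\gtrsim\delta_{n-1}^{10}$. Then Proposition \ref{apn} ($s=1$), since $e^{-\frac{1}{16}\gamma_0 l_n}\delta_n^{-1}\ll\delta_n^{10}\ll\delta_{n-1}^{10}$, transfers this to $\pm(E_{n+1}^{(1)})'|_{I_{i,\pm}^{(1)}}\gtrsim\delta_{n-1}^{10}$, and $E_{n+1}^{(1)}$ belongs to \textbf{Type} \ref{t1}.

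\medskip
\noindent The step I expect to require the most care is the endpoint derivative lower bound in part (a) and its analogue in part (b): one must uniformly handle all three types of parent $E_n$ — in particular, when $E_n$ is \textbf{Type} \ref{t3}, the two Rellich brothers and the level-crossing possibility mean that ``$I_i^{(1)}$ contains a critical value'' can only happen for the distinguished index $i_n$ of Proposition \ref{covern}(8), and the endpoint derivative bound for the relevant branch $E_{n,>}$ or $E_{n,<}$ near its critical point must be read off from Hypothesis \ref{h4} via Lemma \ref{C2}. Everything else is a routine repetition of the $n=1$ argument with $v\rightsquigarrow E_n$ and the scale parameters shifted by one.
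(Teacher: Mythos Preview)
Your proposal is correct and follows essentially the same route as the paper: for part~(a) you use the endpoint derivative bound $|E_n'|_{\partial I_i^{(1)}}|\gtrsim\delta_{n-1}^{10}$ (via \eqref{domainesti} and Lemma~\ref{C2}), transfer it to $(E_{n+1}^{(1)})'$ by Proposition~\ref{apn}, apply the intermediate value theorem, and then invoke Proposition~\ref{31n} with Lemma~\ref{C2}; for part~(b) you do the same type-by-type case split on $E_n$ (Type~\ref{t1} giving $\gtrsim\delta_{n-2}^{10}$ directly, Type~\ref{t2}/\ref{t3} giving $\gtrsim\delta_{n-1}^{10}$ via distance to the critical value) and transfer by Proposition~\ref{apn}. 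Your remark about Type~\ref{t3} is apt: Proposition~\ref{strn1} is stated only for \textbf{Subcase A}, so the distinguished interval $J_{i_n}^{(1)}$ containing the critical values of $E_{n,>},E_{n,<}$ is excluded here and part~(a) never arises for Type~\ref{t3} parents.
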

	\begin{proof}
		To prove \textbf{(a)}, assume  $\theta_c$ is the  critical point of $E_n$ in $I^{(1)}_i$. By \eqref{domainesti}, we have  $\operatorname{dist} (\theta_c,\partial I^{(1)}_i)\gtrsim\delta_{n-1}^{10}$. By  Hypothesis \ref{h3}, we can employ Lemma \ref{C2} to obtain $| (E_{n})'|_{\partial I^{(1)}_i}|\gtrsim\delta_{n-1}^{10}$ with different signs. By   Proposition \ref{apn}, we get
		$$| (E_{n+1}^{(1)})'|_{\partial I^{(1)}_i}|\geq | (E_{n})'|_{\partial I^{(1)}_i}|-\delta_n^{10} \gtrsim\delta_{n-1}^{10}$$
		with different sign on the edges of $I^{(1)}_i$. Since $ (E_{n+1}^{(1)})'$ is continuous and  $I^{(1)}_i$ is connected, there is point $\theta_s\in I^{(1)}_i$ such that $ (E_{n+1}^{(1)})' (\theta_s)=0$.  The two monotonicity structure  follows from Proposition \ref{31n} and Lemma \ref{C2} immediately.

		We prove \textbf{(b)} by case analysis: 
		\begin{itemize}
			\item[ (1).]  $E_n$ belongs to \textbf{Type} \ref{t1}: In this case, by Hypothesis \ref{h4}, we have 
			$$\pm  (E_{n})'|_{I^{(1)}_{i,\pm}}\gtrsim \delta_{n-2}^{10}.$$
			Thus by Proposition \ref{apn}, 
			$$\pm  (E_{n+1}^{(1)})'|_{I^{(1)}_{i,\pm}}\geq  \pm  (E_{n})'|_{I^{(1)}_{i,\pm}}- \delta_n^{10} \gtrsim \delta_{n-1}^{10}.$$
			\item[ (2).] $E_n$ belongs to \textbf{Type} \ref{t2} or \ref{t3}: In this case, we  notice that from  the construction (c.f. Proposition \ref{covern}), if $J_i^{(1)}$ does not contain a critical value of $E_n$, then $J_i^{(1)}$ has a  distance  on order $\delta_{n-1}^{10}$ away  from the critical value. Thus its preimage $I_i^{(1)}$  has a distance on order $\delta_{n-1}^{10}$ away  from the critical point.  
			Since $E_n$ satisfies  the Morse condition, by Lemma \ref{C2}, we get $$\pm  (E_n)'|_{I^{(1)}_{i,\pm}}\gtrsim \delta_{n-1}^{10}.$$   The desired estimate for  $\pm  (E_{n+1}^{(1)})'|_{I^{(1)}_{i,\pm}}$ follows from the above estimate and Proposition \ref{apn} immediately.
		\end{itemize}   
	\end{proof}
	
	\underline{\textbf{Subcase B.}}  {The negation of  \textbf{Subcase A}:  $E_n$ belongs  to \textbf{Type} \ref{t3}  and  $J_{i_n}^{(1)}$ is the interval  satisfying \eqref{chongdie}}.\\

	Before analyzing \textbf{Subcase B}, we  are going to deal with the double resonance case, where we can extract some helpful information.
	
	\subsubsection{The double resonance case} In this case,    there exists  some $k_n$ with  $0\neq \|k_n\|_1\leq 10l_{n+1}^{(1)}$ and     $\theta_{k_n,-}\in  I^{(2)}_{k_n,-}$ such that  
	\begin{equation}\label{huiyi}
		\theta_{k_n,-}+k_n\cdot \omega  \in  I^{(2)}_{k_n,+}, \ e_{k_n}:= E_n (\theta_{k_n,-})=E_n (\theta_{k_n,-}+k_n\cdot \omega),
	\end{equation} and $E_n$ belongs to \textbf{Type} \ref{t1} by Lemma \ref{sepn}.
	Recall   that  $$J^{(2)}_{k_n}:=\bar{B}_{\delta_{n-1}^{10}} (e_{k_n})$$  and $$I^{(2)}_{k_n,\pm}:= (E_n)_{\pm}^{-1} (J^{(2)}_{k_n}), \ I^{(2)}_{k_n}:=I^{(2)}_{k_n,-}\cup I^{ (2)}_{k_n,+}. $$ The domain of the Rellich children is defined as   $$ I^{ (2)}_{k_n,\cup}:= I^{(2)}_{k_n,-}\cup  (I^{ (2)}_{k_n,+}-k_n\cdot \omega),$$ which is a single interval satisfying $|I^{(2)}_{k_n,\cup}|\leq 4\delta_{n-1}^5$ by \eqref{domainesti}.  By the construction (c.f. \textbf{(7)} of Proposition \ref{covern}) and  $|v'|\leq D$, we have    $I^{ (2)}_{k_n,\cup},I^{ (2)}_{k_n,\cup}+k_n\cdot \omega \subset I (E_n)$. In the following, we denote by $E_{n-1}$ the parent Rellich function of $E_n$. 
	We note that double resonance of $E_n$  can happen only at length greater than $10l_n$:
	\begin{lem}\label{ds}
		If such $k_n$ exists, then  	$E_n$  is   generated from    a simple resonant  interval belonging to \textbf{Subcase A} of its parent $E_{n-1}$.
	\end{lem}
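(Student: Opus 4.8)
The plan is to trace the construction of $E_n$ one scale backwards and then to run a simple connected‑components count on domains. Since such a $k_n$ exists, Lemma \ref{sepn} applies and shows that $E_n$ belongs to \textbf{Type} \ref{t1}; in particular $E_n$ has no critical point, so by Hypothesis \ref{h2} and the description of \textbf{Type} \ref{t1} its domain $I(E_n)=I(E_n)_+\cup I(E_n)_-$ is a union of two \emph{disjoint} closed intervals.

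I would then recall that at the $(n-1)\to n$ step, $E_n$ was produced as a locally defined Rellich function of $H_{B_n}$ over exactly one of three kinds of base interval: (i) the ``$\cup$''-domain $I^{(2)}_{k_{n-1},\cup}=I^{(2)}_{k_{n-1},-}\cup(I^{(2)}_{k_{n-1},+}-k_{n-1}\cdot\omega)$ of a double resonant interval $J^{(2)}_{k_{n-1}}$ of $E_{n-1}$; (ii) the preimage $I^{(1)}_{i_{n-1}}$ of the distinguished simple resonant interval $J^{(1)}_{i_{n-1}}$ occurring in \textbf{Subcase B} (so the parent $E_{n-1}$ is \textbf{Type} \ref{t3} and $J^{(1)}_{i_{n-1}}$ obeys \eqref{chongdie}); or (iii) the preimage $I^{(1)}_{i}=(E_{n-1})_+^{-1}(J^{(1)}_{i})\cup(E_{n-1})_-^{-1}(J^{(1)}_{i})$ of a simple resonant interval $J^{(1)}_i$ handled in \textbf{Subcase A}. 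The domain adjustment step in the construction (cf. Proposition \ref{adj}) only deletes small subintervals adjacent to the endpoints of the monotonicity components, and so never disconnects a connected piece; hence $I(E_n)$ has exactly as many connected components as its base interval, which by the previous paragraph must therefore have two components.

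It remains to rule out (i) and (ii). Case (i) is immediate, since $I^{(2)}_{k_{n-1},\cup}$ is by definition a single interval, its two halves meeting at the resonance point $\theta_{k_{n-1},-}$. For case (ii), when $E_{n-1}$ is \textbf{Type} \ref{t3} with Rellich brothers $E_{n-1,<}\le E_{n-1,>}$, the inclusion \eqref{chongdie} forces $J^{(1)}_{i_{n-1}}$ to contain the minimum $\inf J(E_{n-1,>})$ of $E_{n-1,>}$ and the maximum $\sup J(E_{n-1,<})$ of $E_{n-1,<}$, hence to contain the critical value of whichever brother plays the role of the parent $E_{n-1}$; consequently its preimage $I^{(1)}_{i_{n-1}}$ is a single interval containing the corresponding critical point. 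In both cases the base interval is connected, contradicting the two‑component conclusion. Only (iii) survives, which is precisely the claim. (For $n=1$ one reads ``\textbf{Subcase A}'' as ``the simple resonance case'': there is no \textbf{Subcase B} at the first scale, and the same count, now with $v$ and Propositions \ref{str1} and \ref{str2n} in place of the inductive hypotheses, gives the conclusion.) I expect the only delicate points to be the bookkeeping for a \textbf{Type} \ref{t3} parent — where item (8) of Proposition \ref{covern} and \eqref{chongdie} are used to pin down the unique simple resonant interval carrying a critical value — and verifying that the domain adjustment cannot split a connected component into two.
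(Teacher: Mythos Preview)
Your proof is correct and takes essentially the same approach as the paper: both reduce to showing that in cases (i) and (ii) the domain $I(E_n)$ is a single interval, which contradicts the existence of $k_n$. You phrase the contradiction as a connected-components count via Lemma \ref{sepn}, while the paper repeats the length/Diophantine estimate from the proof of Lemma \ref{sepn} directly (the single interval has length $\lesssim\delta_{n-2}^5\ll\|k_n\cdot\omega\|$, so it cannot contain both $\theta_{k_n,-}$ and $\theta_{k_n,-}+k_n\cdot\omega$); the underlying mechanism is identical.
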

	\begin{proof}
		Suppose that  $E_n$ is  generated from a double  resonant interval  or the simple resonant  interval belonging to \textbf{Subcase B} of its parent $E_{n-1}$.   By the  construction, $I (E_n)$ is a single interval of length  of   order at most $\delta_{n-2}^5\ll \|k_n\cdot \omega \|$ by Diophantine condition, thus $ \theta_{k_n,-}$ and $ \theta_{k_n,-}+k_n\cdot \omega $ can not  both belong to  it,  a contradiction. 
	\end{proof}
	\begin{lem}
		The above $k_n$ satisfies $10l_n<\|k_n\|\leq 10l_{n+1}^{(1)}$.
	\end{lem}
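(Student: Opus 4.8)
The upper bound $\|k_n\|_1\le 10l_{n+1}^{(1)}$ is built into the hypothesis under which $k_n$ is introduced (cf. Lemma \ref{sepn}), so the content of the lemma is the lower bound $\|k_n\|_1>10l_n$ (here $10l_n=10l_n^{(1)}$, the scale at which $E_n$ was constructed, since by Lemma \ref{ds} it comes from Subcase A of $E_{n-1}$). The plan is to argue by contradiction: suppose $0\neq\|k_n\|_1\le 10l_n^{(1)}$, and use the ``no short recurrence'' property that $E_n$ inherited from the step that built it to contradict the resonance relation \eqref{huiyi}.

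First I would record, via Lemma \ref{ds}, that $E_n$ is generated from a simple resonant interval $J_i^{(1)}\subset J(E_{n-1})$ of its parent $E_{n-1}$ belonging to Subcase A (the case $n=1$, in which $E_{n-1}=v$, is entirely analogous and uses the first-scale statements quoted below). In particular the domain of $E_n$ satisfies $I(E_n)\subset I_i^{(1)}\subset I(E_{n-1})_+\cup I(E_{n-1})_-=I(E_{n-1})$. Next, since $E_n$ is of \textbf{Type} \ref{t1} by Lemma \ref{sepn}, one has $J_{k_n}^{(2)}\subset\tilde J(E_n)\subset J(E_n)$ by Proposition \ref{covern}(7) together with Hypothesis \ref{h5}; hence the preimages $I^{(2)}_{k_n,\pm}=(E_n)_\pm^{-1}(J^{(2)}_{k_n})$ are well defined and contained in $I(E_n)\subset I(E_{n-1})$, so by \eqref{huiyi} both $\theta_{k_n,-}$ and $\theta_{k_n,-}+k_n\cdot\omega$ lie in $I(E_{n-1})$.

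The key step is to quote the simple-resonance estimate of the preceding scale. Applying Proposition \ref{SRnpro} with its index shifted down by one (or Proposition \ref{SRpro} when $n=1$), with $\theta=\theta_{k_n,-}\in I_i^{(1)}$ and $x=k_n$ — admissible since $0<\|k_n\|_1\le 10l_n^{(1)}$ and $\theta_{k_n,-}+k_n\cdot\omega\in I(E_{n-1})$ — gives
$$|E_{n-1}(\theta_{k_n,-})-E_{n-1}(\theta_{k_n,-}+k_n\cdot\omega)|\ge 3\delta_{n-1}.$$
On the other hand, $E_n$ is uniformly close to $E_{n-1}$ on $I(E_n)$: by Proposition \ref{apn} (or Proposition \ref{ap} when $n=1$), $\sup_{\theta\in I(E_n)}|E_n(\theta)-E_{n-1}(\theta)|\lesssim e^{-\gamma_0 l_{n-1}/4}\ll\delta_{n-1}^{10}$. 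Combining these via the triangle inequality yields
$$|E_n(\theta_{k_n,-})-E_n(\theta_{k_n,-}+k_n\cdot\omega)|\ge 3\delta_{n-1}-O(\delta_{n-1}^{10})>0,$$
contradicting $E_n(\theta_{k_n,-})=E_n(\theta_{k_n,-}+k_n\cdot\omega)$ from \eqref{huiyi}. Therefore $\|k_n\|_1>10l_n$.

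I do not expect a genuine obstacle here; the statement is essentially bookkeeping once the construction of the preceding scale is in hand. The only points requiring care are (i) keeping straight which scale's version of the simple-resonance estimate and of the $C^2$-closeness estimate one invokes — the base case $n=1$ uses Propositions \ref{SRpro} and \ref{ap} in place of \ref{SRnpro} and \ref{apn} — and (ii) verifying that the two phases appearing in \eqref{huiyi} genuinely lie in the common domain $I(E_n)\subset I(E_{n-1})$ on which $E_{n-1}$ and $E_n$ are both defined and exponentially close.
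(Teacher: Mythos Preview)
Your argument is correct. Both you and the paper exploit the same two facts — that $E_n$ was built from a simple resonant interval $I_i^{(1)}$ of $E_{n-1}$ (Lemma~\ref{ds}), and that $E_n$ is $C^0$-close to $E_{n-1}$ on that interval — but you reach the contradiction more directly. You apply the scale-$(n-1)$ version of Proposition~\ref{SRnpro} at the point $\theta_{k_n,-}$ itself to get $|E_{n-1}(\theta_{k_n,-})-E_{n-1}(\theta_{k_n,-}+k_n\cdot\omega)|\ge 3\delta_{n-1}$, then transfer this via the closeness estimate to contradict \eqref{huiyi}. The paper instead first uses the closeness and the derivative bound $\pm E_{n-1}'|_{I(E_n)_\pm}\gtrsim\delta_{n-2}^{10}$ to produce a nearby crossing point for $E_{n-1}$, and then argues that this crossing lies in a double resonant region of $E_{n-1}$, hence far from $I_i^{(1)}$, contradicting Lemma~\ref{ds}. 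Your route avoids the intermediate step of locating a crossing for $E_{n-1}$ and the geometric separation of double/simple resonant domains; the paper's route is slightly more constructive but needs the extra derivative input. Both are valid and rest on the same inductive ingredients.
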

	\begin{proof}
		Suppose  that $\|k_n\|_1\leq 10l_n$. By the previous lemma and Proposition \ref{apn}, $E_{n}$ is $C^2$-closed  to $E_{n-1}$ of  order at most $\delta_{n-1}^{10}$. Since $E_n$ belongs to \textbf{Type} \ref{t1}, we have by Hypothesis \ref{h4},
		$$\pm E_n'|_{I (E_n)_\pm}\gtrsim  \delta_{n-2}^{10}.$$
		Thus $$\pm E_{n-1}'|_{I (E_n)_\pm}\gtrsim  \delta_{n-2}^{10}.$$
		
		By the two-monotonicity interval structure and mean value theorem, there exists some point $\theta_{k_{n-1},-}$ with 
		\begin{equation}\label{?}
			|\theta_{k_{n-1},-}-{\theta}_{k_n,-}|\lesssim \delta_{n-1}^{10}\delta_{n-2}^{-10}<\delta_{n-1}^9\end{equation} 
		such that    $$E_{n-1} (\theta_{k_{n-1},-})=E_{n-1} (\theta_{k_{n-1},-}+k_{n-1}\cdot \omega). $$
		
		Since $\|k_n\|_1\leq 10l_n$, by the construction, $\theta_{k_{n-1},-}$ has a distance larger than $\delta_{n-1}^9$ from  any simple resonant domain of $E_{n-1}$. Thus by \eqref{?},  ${\theta}_{k,-}$ does not belong to any simple resonant domain of $E_{n-1}$, a contradiction with Lemma \ref{ds}.
	\end{proof}
	
	By the previous lemma, we have $B_n\cap (B_n+k_n)=\emptyset$. Since $H_{B_n+k_n} (\theta)=H_{B_n} (\theta+k_n\cdot \omega)$,  $H_{B_n+k_n} (\theta)$ has an eigenvalue $E_n (\theta+k_n\cdot \omega)$. For convenience, we denote 
	\begin{equation}\label{not}
		B_{n,-}=B_n,\ E_{n,-} (\theta)=E_n (\theta),
	\end{equation}  and  the  eigenfunction of $E_{n,-}$ by $\psi_{n,-}$. 
	Similarly, we denote   
	\begin{equation}\label{not1}
		B_{n,+}=B_n+k_n,\ E_{n,+} (\theta)=E_n (\theta+k_n\cdot \omega ),
	\end{equation} 
	and  the  eigenfunction of $E_{n,+}$ by $\psi_{n,+}$.
	\begin{prop}\label{DRpron}
		If $\theta\in I^{(2)}_{k_n,\cup}$, then for any $\|x\|_1\leq  (l_{n+1}^{(1)})^{4}$, $x\notin\{o,k_n\}$,  we have 
		$$\theta+x\cdot \omega\notin I (E_{n-1}), \ n\geq 2,$$ 
		$$|v (\theta)-v (\theta+x\cdot \omega)|\geq 2\delta_0^{\frac{1}{400}}, \ n=1. $$
	\end{prop}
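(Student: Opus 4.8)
The plan is to mimic the structure of the proof of Proposition \ref{DRpro} at the first scale, replacing the two monotonicity intervals of $v$ by the two monotonicity intervals of the Rellich ancestor chain, and replacing the elementary quadratic lower bound (Lemma \ref{a}) by the Morse condition (Hypothesis \ref{h3}) combined with Lemma \ref{C2}. The case $n=1$ is exactly Proposition \ref{DRpro} (with $E_0=v$), so I would dispose of it by that reference. For $n\geq 2$ the assertion to prove is that no third site $x$ with $\|x\|_1\leq (l_{n+1}^{(1)})^4$, $x\notin\{o,k_n\}$, can have $\theta+x\cdot\omega\in I(E_{n-1})$; I would first reduce to a statement about the special phase $\theta^*=\theta_{k_n,-}$ (where \eqref{huiyi} holds), and then propagate to all $\theta\in I^{(2)}_{k_n,\cup}$ by the mean value theorem together with the size bound $|I^{(2)}_{k_n,\cup}|\leq 4\delta_{n-1}^5$ and the uniform derivative bound $|v'|\leq D$ on the whole torus — the same last-sentence trick used in Proposition \ref{DRpro}. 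Since $|I^{(2)}_{k_n,\cup}|$ and $|I^{(2)}_{k_n,\cup}+k_n\cdot\omega|$ are both $\lesssim\delta_{n-1}^5$ while $\|x\cdot\omega\|\gtrsim (l_{n+1}^{(1)})^{-4\tau}\gg\delta_{n-1}^5$ by the Diophantine condition (for $x\notin\{o,k_n\}$ this uses $\|x\|_1,\|x-k_n\|_1\lesssim (l_{n+1}^{(1)})^4$), it suffices to settle everything at $\theta^*$ and absorb the $O(\delta_{n-1}^5)$ errors.

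So the core claim is: for $x\notin\{o,k_n\}$ with $\|x\|_1\leq (l_{n+1}^{(1)})^4$, the point $\theta^*+x\cdot\omega$ lies outside $I(E_{n-1})$. Here I would use that $I(E_{n-1})=I(E_{n-1})_+\cup I(E_{n-1})_-$ (Hypothesis \ref{h2}), and that by Lemma \ref{ds} and the subsequent lemma $E_n$ is a \textbf{Subcase A} child of $E_{n-1}$ with $10l_n<\|k_n\|_1$, so (by Proposition \ref{apn}) $E_n$ is $C^2$-close to $E_{n-1}$ on $I(E_n)\supset I^{(2)}_{k_n,\cup}$. The argument splits according to whether $\theta^*+x\cdot\omega$ would fall in the same monotonicity component of $I(E_{n-1})$ as $\theta^*$ (then $E_{n-1}$ is monotone there, and Lemma \ref{C2} gives $|E_{n-1}(\theta^*+x\cdot\omega)-E_{n-1}(\theta^*)|\gtrsim\|x\cdot\omega\|^2\gtrsim(l_{n+1}^{(1)})^{-8\tau}$, which dwarfs $|J(E_{n-1})|\sim\delta_{n-3}^{10}$ — forcing $\theta^*+x\cdot\omega\notin I(E_{n-1})$) or in the opposite component (then I use $E_{n-1}(\theta_{k_{n-1},-})=E_{n-1}(\theta_{k_{n-1},-}+k_{n-1}\cdot\omega)$ from the earlier lemma, with $|\theta_{k_{n-1},-}-\theta^*|<\delta_{n-1}^9$, so that $E_{n-1}(\theta^*+x\cdot\omega)$ and $E_{n-1}(\theta^*+k_n\cdot\omega)=E_{n-1}(\theta^*)$ are governed by $\|(x-k_n)\cdot\omega\|^2$ via the monotonicity of the $+$ branch, again giving a gap $\gtrsim(l_{n+1}^{(1)})^{-8\tau}$). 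In both cases one also needs to rule out that $\theta^*+x\cdot\omega$ sits in the tiny preimage $I^{(2)}_{k_n,\pm}$ itself, i.e. exclude $x\in\{o,k_n\}$, which is exactly the hypothesis.

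The main obstacle I anticipate is bookkeeping the two competing monotonicity structures correctly: $\theta^*$ sits in $I(E_n)_-$ (a single monotone branch of $E_n$, since $E_n$ is \textbf{Type} \ref{t1}), but the relevant comparisons must be made against $E_{n-1}$, whose two branches $I(E_{n-1})_\pm$ need not align with the partition of $I(E_n)$; one must carefully track which branch of $E_{n-1}$ contains $\theta^*$ and which contains $\theta^*+k_n\cdot\omega$, and verify (using \eqref{?} and the $C^2$-closeness) that $k_{n-1}=k_n$ or at least that the two translates land in opposite $E_{n-1}$-branches as in the $n=1$ picture. Once that alignment is pinned down, every quantitative estimate is a routine application of Lemma \ref{C2}, the Diophantine lower bound, and the scale inequalities $\delta_{n-1}^5\ll(l_{n+1}^{(1)})^{-8\tau}\ll$ everything relevant; the errors from replacing $E_{n-1}$ by $E_n$ (order $\delta_{n-1}^{10}$) and from moving $\theta^*$ to a general $\theta\in I^{(2)}_{k_n,\cup}$ (order $\delta_{n-1}^5$) are negligible against the $(l_{n+1}^{(1)})^{-8\tau}$ gap, completing the argument.
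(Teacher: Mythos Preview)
Your route can be made to work, but it is genuinely different from and far more laborious than the paper's; the ``main obstacle'' you flag is entirely self-inflicted. For $n\geq 2$ the paper never touches energy values of $E_{n-1}$, the Morse condition, or Lemma~\ref{C2}: it uses only the smallness of the \emph{phase} domain. By Hypothesis~\ref{h1} each connected component of $I(E_{n-1})$ has length $\lesssim\delta_{n-3}^5\ll (l_{n+1}^{(1)})^{-4\tau}$, and since $\theta$ and $\theta+k_n\cdot\omega$ both lie in $I(E_n)\subset I(E_{n-1})$, any third point $\theta+x\cdot\omega\in I(E_{n-1})$ must share a component with one of them. That forces $\min(\|x\cdot\omega\|,\|(x-k_n)\cdot\omega\|)\lesssim\delta_{n-3}^5$, contradicting the Diophantine lower bound $\gtrsim(l_{n+1}^{(1)})^{-4\tau}$. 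That is the entire argument---three lines, no reduction to $\theta^*$, no branch tracking.

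Your plan instead replays the first-scale mechanism of Proposition~\ref{DRpro}: bound $|E_{n-1}(\theta^*+x\cdot\omega)-E_{n-1}(\theta^*)|$ from below via Lemma~\ref{C2} and compare with $|J(E_{n-1})|$. That is the correct idea at scale~$0$, where the phase domain is the whole torus and one must work with values of $v$; but from $n-1\geq 1$ onward the domain $I(E_{n-1})$ is already exponentially small and the phase-level pigeonhole shortcut is available. You even write the key inequality $\|x\cdot\omega\|\gtrsim(l_{n+1}^{(1)})^{-4\tau}\gg\delta_{n-1}^5$ as a technical step in your reduction to $\theta^*$, without noticing that (with $|I(E_{n-1})|$ replacing $|I^{(2)}_{k_n,\cup}|$) it already closes the proof. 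One further caution on your case-2 argument: the inequality~\eqref{?} is derived in the paper under the counterfactual hypothesis $\|k_n\|_1\leq 10l_n$ as part of a proof by contradiction, so it is not available here; to carry out your plan you would have to re-establish the alignment $I(E_n)_\pm\subset I(E_{n-1})_\pm$ directly from the $C^2$-closeness in Proposition~\ref{apn}, which is precisely the bookkeeping the paper's approach bypasses.
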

	\begin{proof}
		For $n\geq 2$, $I (E_{n-1})$ has at most two connected component with  length at most $2\delta_{n-3}^5\ll  (l_{n+1}^{(1)})^{-4\tau}$. Each of the component contains at least one point of $\{o,k_n\}$. The proof follows from  Diophantine condition immediately.
		The  case $n=1$ can be proved   similarly  as  Proposition \ref{DRpro}.
	\end{proof}
	We choose a block   $B_{n+1}^{(2)}$ satisfying 
	$$\Lambda_{l_{n+1}^{(2)}}\subset B_{n+1}^{(2)} \subset \Lambda_{l_{n+1}^{(2)}+50l_n}$$
	such that for all $\theta\in I_{k_n,\cup}^{(2)}$,  $B_{n+1}^{(2)}$ is $(n-1)$-strongly regular relative to $ (\theta,E)$ for all $E\in B_{2\delta_{n-1}} (E_{n,-} (\theta))$. 
	\begin{prop}\label{k2n}
		The following holds  for $\theta\in I^{(2)}_{k_n,\cup}$:
		\begin{itemize}
			\item[\textbf{(a).}]  $H_{B_{n+1}^{(2)}} (\theta)$ has exactly two  eigenvalues $E_{n+1}^{(2)} (\theta)$ and $\mathcal{E}_{n+1}^{(2)} (\theta)$ in  $[E_{n,-} (\theta)-10D\delta_{n-1}^{5},E_{n,-} (\theta)+10D\delta_{n-1}^{5}]$. Moreover, any other $\hat{E}\in\sigma (H_{B_{n+1}^{(2)}} (\theta)) $ must obey $|\hat{E}-E_{n,-} (\theta)|\geq \delta_{n-1}$.
			\item [\textbf{(b).}] The corresponding   eigenfunction of $E_{n+1}^{(2)}${\rm\  (resp. $\mathcal{E}_{n+1}^{(2)}$)}, $\psi_{n+1}${\rm\  (resp. $\Psi_{n+1}$)} decays exponentially fast away from $o$ and $k_n$, i.e.,
			$$|\psi_{n+1} (x)|\leq e^{-\frac{\gamma_0}{4}\|x\|_1}+e^{-\frac{\gamma_0}{4}\|x-k_n\|_1},$$
			$$|\Psi_{n+1} (x)|\leq e^{-\frac{\gamma_0}{4}\|x\|_1}+e^{-\frac{\gamma_0}{4}\|x-k_n\|_1}$$
			for $\operatorname{dist} (x,\{o,k_n\})\geq l_n^{6/7}.$	
			Moreover, the two eigenfunctions can be expressed as 
			\begin{align}\label{yn}
				\begin{split}
					\psi_{n+1}=A\psi_{n,-}+B\psi_{n,+}+O (\delta_n^{10}),\\
					\Psi_{n+1}=B\psi_{n,-}-A\psi_{n,+}+O (\delta_n^{10}),
				\end{split}	
			\end{align}
			where $A,B$ depend on $\theta$ satisfying $A^2+B^2=1$.
			\item [\textbf{(c).}] $\|G_{n+1}^{\perp\perp} (E_{n+1}^{(2)})\|,\|G_{n+1}^{\perp\perp} (\mathcal{E}_{n+1}^{(2)})\|\leq2\delta_{n-1}^{-1}$, where $G_{n+1}^{\perp\perp}$ denotes the Green's function for $B_{n+1}^{(2)}$ on the orthogonal complement of the space spanned by $\psi_{n+1}$ and $\Psi_{n+1}$.
		\end{itemize}
	\end{prop}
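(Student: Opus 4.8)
\emph{Plan.} The proof transcribes the first-scale argument of Proposition \ref{k2}, with the sites $o,k$ replaced by the blocks $B_{n,-}=B_n$, $B_{n,+}=B_n+k_n$, the sampling function $v$ replaced by the Rellich functions $E_{n,\pm}$, and the Neumann series replaced by the inductive Green's function bound of Hypothesis \ref{h6} at scale $n-1$ (or Theorem \ref{0ge} when $n=1$). This substitution is legitimate because, by Proposition \ref{DRpron}, no site of $B_{n+1}^{(2)}$ other than $o$ and $k_n$ can lie in $I(E_{n-1})$, hence in any $(n-1)$-resonant set attached to an energy near $E_n(\theta)$.

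\emph{Step 1: existence and count.} At $\theta^\ast=\theta_{k_n,-}$ one has $E_{n,-}(\theta^\ast)=E_{n,+}(\theta^\ast)=e_{k_n}$ by \eqref{huiyi}, so $\psi_{n,-}$ and $\psi_{n,+}$ serve as trial functions for $H_{B_{n+1}^{(2)}}(\theta^\ast)-e_{k_n}$; they are orthogonal because $B_n$ and $B_n+k_n$ are disjoint (recall $\|k_n\|_1>10l_n$), and the residuals are $\|(H_{B_{n+1}^{(2)}}(\theta^\ast)-e_{k_n})\psi_{n,\pm}\|=\|\Gamma\psi_{n,\pm}\|\lesssim e^{-\gamma_0 l_n/4}$ by the boundary decay \eqref{Hdecay}. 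Lemma \ref{ll} gives two eigenvalues of $H_{B_{n+1}^{(2)}}(\theta^\ast)$ within $O(e^{-\gamma_0 l_n/4})$ of $e_{k_n}$; since $\|\partial_\theta H_{B_{n+1}^{(2)}}\|\le D$, $|I^{(2)}_{k_n,\cup}|\le 4\delta_{n-1}^5$ and $|v'|\le D$, they persist for every $\theta\in I^{(2)}_{k_n,\cup}$ and stay in $[E_{n,-}(\theta)-10D\delta_{n-1}^5,\,E_{n,-}(\theta)+10D\delta_{n-1}^5]$. The uniqueness claim (every further eigenvalue is at least $\delta_{n-1}$ from $E_{n,-}(\theta)$) is deferred to Step 3.

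\emph{Step 2: exponential decay.} Remove from $B_{n+1}^{(2)}$ two blocks $\hat B_{n,\pm}$ of size $O(l_n^{2/3})$ centered at $o$ and $k_n$, chosen (cf. Appendix \ref{chouti}) so that $\Lambda:=B_{n+1}^{(2)}\setminus(\hat B_{n,-}\cup\hat B_{n,+})$ is $(n-1)$-regular; by Proposition \ref{DRpron} it is then $(n-1)$-good relative to $(\theta,E)$ for $E$ near $E_n(\theta)$, so Hypothesis \ref{h6} yields $\|G_\Lambda\|\le 10\delta_{n-1}^{-1}$ and $|G_\Lambda(x,y)|\le e^{-\gamma_{n-1}\|x-y\|_1}$ for $\|x-y\|_1\ge l_{n-1}^{5/6}$. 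Restricting $(H_{B_{n+1}^{(2)}}-E_{n+1}^{(2)})\psi_{n+1}=0$ to $\Lambda$ and propagating the boundary values on $\partial\hat B_{n,\pm}$ via the Poisson formula, exactly as in Proposition \ref{k1}(b), gives $|\psi_{n+1}(x)|\le e^{-\gamma_0\|x\|_1/4}+e^{-\gamma_0\|x-k_n\|_1/4}$ for $\operatorname{dist}(x,\{o,k_n\})\ge l_n^{6/7}$, using $\gamma_{n-1}\ge\gamma_0/2$; the same applies to $\Psi_{n+1}$ and to any eigenfunction whose eigenvalue lies within $\tfrac12\delta_{n-1}$ of $E_n(\theta)$.

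\emph{Step 3: the expansion, uniqueness, and (c).} Restrict the eigenvalue equation to $B_n$: with $P$ the rank-one projection onto $\psi_{n,-}$, write $\psi_{n+1}|_{B_n}=A\psi_{n,-}+\chi$ with $A=\langle\psi_{n,-},\psi_{n+1}\rangle$ and $P\chi=0$; then $(P^\perp H_{B_n}P^\perp-E_{n+1}^{(2)})\chi=P^\perp\Gamma_{B_n}\psi_{n+1}$, whose right-hand side has norm $\lesssim e^{-\gamma_0 l_n/4}$ by Step 2. Here is the one genuinely new point, and the main obstacle: Hypothesis \ref{h4} only records a $3\delta_n$-gap around $E_n(\theta)$, which is useless against a shift of size $\delta_{n-1}^5$; however, by Lemmas \ref{sepn} and \ref{ds} a Rellich function admitting a double resonance at scale $n+1$ is necessarily a Subcase A simple-resonance child of $E_{n-1}$, so the analogue of Proposition \ref{kn}(a) one scale lower applies and gives $\operatorname{dist}(E_n(\theta),\sigma(H_{B_n}(\theta))\setminus\{E_n(\theta)\})\ge\delta_{n-1}\gg\delta_{n-1}^5$. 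Hence $\|(P^\perp H_{B_n}P^\perp-E_{n+1}^{(2)})^{-1}\|\le 2\delta_{n-1}^{-1}$, so $\|\chi\|\lesssim\delta_{n-1}^{-1}e^{-\gamma_0 l_n/4}\ll\delta_n^{10}$; the same holds near $k_n$, while Step 2 bounds $\psi_{n+1}$ by $O(\delta_n^{10})$ on $B_{n+1}^{(2)}\setminus(B_n\cup B_{n,+})$. With the disjointness of supports this yields $\psi_{n+1}=A\psi_{n,-}+B\psi_{n,+}+O(\delta_n^{10})$ with $B=\langle\psi_{n,+},\psi_{n+1}\rangle$ and $A^2+B^2=1$; orthogonality of $\psi_{n+1},\Psi_{n+1}$ then forces $\Psi_{n+1}=B\psi_{n,-}-A\psi_{n,+}+O(\delta_n^{10})$ up to a sign. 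For uniqueness, any eigenfunction with eigenvalue within $\tfrac12\delta_{n-1}$ of $E_n(\theta)$ lies to $O(\delta_n^{10})$ in the two-dimensional space spanned by $\psi_{n,-}$ and $\psi_{n,+}$, so at most two such eigenvalues exist, whence every other one is at least $\delta_{n-1}$ away. Finally (c) is immediate: on the orthogonal complement of $\operatorname{span}(\psi_{n+1},\Psi_{n+1})$ the spectrum of $H_{B_{n+1}^{(2)}}$ is, by (a), at distance $\ge\delta_{n-1}$ from $E_n(\theta)$, hence $\ge\tfrac12\delta_{n-1}$ from $E_{n+1}^{(2)}(\theta)$ and from $\mathcal{E}_{n+1}^{(2)}(\theta)$, giving $\|G_{n+1}^{\perp\perp}(E_{n+1}^{(2)})\|,\|G_{n+1}^{\perp\perp}(\mathcal{E}_{n+1}^{(2)})\|\le 2\delta_{n-1}^{-1}$. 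Apart from this separation step, every estimate is a direct transcription of the $n=1$ case.
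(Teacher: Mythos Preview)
Your proof is correct and follows the paper's approach essentially verbatim: trial functions $\psi_{n,\pm}$ for existence, an $(n-1)$-good complement (the paper removes $B_{n-1}\cup(B_{n-1}+k_n)$ rather than $O(l_n^{2/3})$ blocks, but this is cosmetic) for the decay via Hypothesis~\ref{h6}, and restriction to $B_n$ with the $\delta_{n-1}$-separation coming from Proposition~\ref{kn} at scale $n-1$ (legitimized by Lemma~\ref{ds}) for the expansion \eqref{yn}, uniqueness, and part~(c). Your flagging of that separation step as ``the one genuinely new point'' is exactly right and matches the paper's citation of ``Proposition~\ref{kn} replacing $n$ by $n-1$.''
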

	\begin{proof}
		We first consider the case $\theta^*=\theta_{k_n,-}$.  In this case,  since $E_{n,-} (\theta^*)=E_{n,
			+} (\theta^*)$,  by the decay of the eigenfunctions (c.f. \eqref{Hdecay}), we have
		$$\| (H_{B_{n+1}^{(2)}} (\theta^*)-E_{n,-} (\theta^*))\psi_{n,-}\|=\|\Gamma_{B_{n,-}}\psi_{n,-}\|\lesssim e^{-\frac{\gamma_0}{4}l_n}\ll\delta_n^{10}, $$
		$$\| (H_{B_{n+1}^{(2)}} (\theta^*)-E_{n,+} (\theta^*))\psi_{n,+}\|=\|\Gamma_{B_{n,+}}\psi_{n,+}\|\lesssim e^{-\frac{\gamma_0}{4}l_n}\ll\delta_n^{10}.$$
		Since $B_{n,-}$ and   $B_{n,+}$ are disjoint, $\psi_{n,-}$ and $\psi_{n,+}$ serve as two trial wave functions  for the operator $H_{B_{n+1}^{(2)}} (\theta^*)-E_{n,-} (\theta^*)$. It follows that by  Lemma \ref{trialcor}, $H_{B_{n+1}^{(2)}} (\theta^*)$ has two  eigenvalues in  $[E_{n,-} (\theta^*)-O (\delta_n^{10}),E_{n,-} (\theta^*)+O (\delta_n^{10})]$. The claim for $\theta\in I^{(2)}_{k_n,\cup}$ follows from $|I^{(2)}_{k_n,\cup}|\leq 4\delta_{n-1}^{5}$ and   $|v'|\leq D$.
		
		To establish the decay of eigenfunctions, assume $E\in\sigma (H_{B_{n+1}}^{(2)} (\theta))$ with $|E-E_{n,-} (\theta)|\leq \delta_{n-1}$. We  denote  $\Lambda=B^{(2)}_{n+1}\setminus (B_{n-1}\cup (B_{n-1}+k_n))$, which is an $(n-1)$-regular set relative to $ (\theta,E)$ and $E_{n-1}$   by the construction. Moreover,  by Proposition \ref{DRpron}, $\Lambda$ is $(n-1)$-nonresonant, thus $G_\Lambda (E)$ decays exponentially,  which  together with restricting the eigenfunction  equation  to $\Lambda$   yields the exponential decay of $\psi_{n+1}$.
		
		To prove \eqref{yn}, it suffices to show $\psi_{n+1}$ and $\Psi_{n+1}$ are close to a linear combination of $\psi_{n,-}$ and $\psi_{n,+}$ inside $B_{n,-}\cup B_{n,+}$.
		We restrict the equation $H_{B_{n+1}^{(2)}} (\theta)\psi_{n+1}=E_{n+1}^{(2)} (\theta)\psi_{n+1}$ on $B_{n,-}$ to get 
		$$\left(H_{B_{n,-}}-E_{n+1}^{(2)}\right)\psi_{n+1}=\Gamma_{B_{n,-}}\psi_{n+1}.$$
		Thus  $$\|P_{n,-}^\perp\psi_{n+1}\|=\|G_{B_{n,-}}^\perp (E_{n+1}^{(2)})P_{n,-}^\perp\Gamma_{B_{n,-}}\psi_{n+1}\|=O (\delta_{n-1}^{-1}e^{-\frac{1}{4}\gamma_0l_n})\leq\delta_n^{10},$$
		where $P_{n,-}^\perp=I-\langle\psi_{n,-}|\psi_{n,-}\rangle$ is the projection onto the orthogonal complement of $\psi_{n,-}$ and $G_{B_{n,-}}^\perp (E_{n+1}^{(2)})$ is the Green's function of  $B_{n,-}$ on $\operatorname{Image}P_{n,-}^\perp$ with upper bound $O (\delta_{n-1}^{-1})$ by $|E_{n,-} (\theta)-E_{n+1}^{(2)} (\theta)|\lesssim\delta_{n-1}^5$ together with  the eigenvalue separation result of simple resonance case (c.f. Proposition \ref{kn} replacing $n$ by $n-1$). 
		Therefore inside $B_{n,-}$, we have 
		$$P_{n,-}^\perp\psi_{n+1}=O (\delta_n^{10}).$$ Thus  
		$$\psi_{n+1}\chi_{B_{n,-}}=a\psi_{n,-}+O (\delta_n^{10}).$$
		where $a=\langle \psi_{n+1},\psi_{n,-}\rangle$.\\
		Similarly, 
		$$\psi_{n+1}\chi_{B_{n,+}}=b\psi_{n,+}+O (\delta_n^{10})$$
		with $b=\langle \psi_{n+1},\psi_{n,+}\rangle$.\\ Since the eigenfunction decays exponentially, we have  $$\|\psi_{n+1}\chi_{B^{(2)}_{n+1}\setminus  (B_{n,-}\cup B_{n,+})}\|\leq \delta_n^{10}.$$ Thus we can write 
		$$\psi_{n+1}=a\psi_{n,-}+b\psi_{n,+}+O (\delta_n^{10}).$$
		Taking norm of the above equation, we obtain $k:=a^2+b^2=1-O (\delta_n^{10})$. We set $A=a/k$ and $B=b/k$. Hence $A^2+B^2=1$ and $|A-a|,|B-b|=O (\delta_n^{10})$, which gives the desired expression of $\psi_{n+1}$. A similar argument gives $\Psi_{n+1}=C\psi_{n,-}+D\psi_{n,+}+O (\delta_n^{10})$ with $C^2+D^2=1$. For convenience, we write $A=\cos\alpha, B=\sin\alpha, C=\sin\beta, D=-\cos\beta$. Since $\langle\psi_{n+1} ,\Psi_{n+1}\rangle=0$, we get  $|\sin (\beta-\alpha)|=O (\delta_n^{10})$. We can choose $\beta$ satisfying $|\beta-\alpha|=O (\delta_n^{10})$, thus $|B-C|=|\sin\alpha-\sin\beta|=O (\delta_n^{10})$ and $|A+D|=|\cos\alpha-\cos\beta|=O (\delta_n^{10})$, which gives  the desired expression $\Psi_{n+1}=B\psi_{n,-}-A\psi_{n,+}+O (\delta_n^{10})$.  Assume $\hat{E}\in\sigma (H_{B_{n+1}}^{(2)} (\theta))$ is a third eigenvalue in the interval $|\hat{E}-E_{n,-} (\theta)|\leq \delta_{n-1}$. The above argument  still holds if we replace $E_{n+1}^{(2)}$ by $\hat{E}$. Thus the  eigenfunction $\hat{\psi}$ corresponding to  $\hat{E}$ can also be expressed as 
		$$\hat{\psi}=\hat{A}\psi_{n,-}+\hat{B}\psi_{n,+}+O (\delta_n^{10})$$ 
		with $\hat{A}^2+\hat{B}^2=1.$ By orthogonality, we have $A\hat{A}+B\hat{B}=O (\delta_n^{10})$ and $B\hat{A}-A\hat{B}=O (\delta_n^{10})$, a contradiction since  $ (A\hat{A}+B\hat{B})^2+ (B\hat{A}-A\hat{B})^2=1$.
		Hence any other third eigenvalue must obey $|\hat{E}-E_{n,-} (\theta)|\geq\delta_{n-1}$. Finally,  \textbf{(c)} follows from \textbf{(a)} immediately.
	\end{proof}
	With the notation of the previous proposition, we denote  
	$$E_{n+1,>}^{(2)} (\theta)=\max (E_{n+1}^{(2)} (\theta), \mathcal{E}_{n+1}^{(2)} (\theta)), \   E_{n+1,<}^{(2)} (\theta)=\min (E_{n+1}^{(2)} (\theta), \mathcal{E}_{n+1}^{(2)} (\theta)).$$
	Define 
	$$P_{n,\pm}:=\langle\psi_{n,\pm}|\psi_{n,\pm}\rangle,\ Q_\pm:=\operatorname{Id}_{B_{n+1}^{(2)}}-P_{n,\pm},$$
	and consider the  compressed operators 
	$$H_\pm (\theta):=Q_\pm H_{B_{n+1}^{(2)}} (\theta) Q_\pm.$$
	\begin{lem}\label{it}
		For $\theta\in I^{(2)}_{k_n,\cup}$, the operators $H_\pm (\theta)$ have unique eigenvalues $\lambda_\mp (\theta)$ interlacing $E_{n+1,>}^{(2)} (\theta)$ and $E_{n+1,<}^{(2)} (\theta)$: 
		\begin{equation}\label{intern}
			E_{n+1,<}^{(2)} (\theta)\leq \lambda_\mp (\theta) \leq E_{n+1,>}^{(2)} (\theta).
		\end{equation}
		Moreover, $\lambda_\pm (\theta)$ are $C^1$-close to $E_{n,\pm} (\theta)$:
		\begin{equation}\label{Epln}
			|\frac{d^s}{d\theta^s} (\lambda_\pm (\theta)-E_{n,\pm} (\theta))|\lesssim e^{-\frac{1}{5}\gamma_0l_n}\delta_{n-1}^{-s}\ll \delta_n^{10},\  s=0,1.
		\end{equation}
		In particular, we have 
		\begin{equation}\label{tranln}
			\pm \lambda_\pm' (\theta)\gtrsim  \delta_{n-2}^{10}\gg \delta_{n-1} .
		\end{equation}
	\end{lem}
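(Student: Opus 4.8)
The plan is to run the scale-$n$ analogue of the proof of Lemma \ref{inter1}, with the localized eigenfunctions $\psi_{n,\pm}$, the Rellich eigenvalues $E_{n,\pm}$ and Proposition \ref{k2n} playing the roles that $\psi_{0,\pm}$, $E_{0,\pm}$ and Proposition \ref{k2} played at the first scale. The only genuinely new feature is that $\psi_{n,\pm}$ are now exponentially localized \emph{approximate} trial vectors rather than exact delta functions, so the exact identities of the base case become approximate ones, with errors governed by the eigenfunction decay \eqref{Hdecay} and the disjointness $B_{n,-}\cap B_{n,+}=\emptyset$ (established before Proposition \ref{DRpron}).

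\emph{Existence and uniqueness of $\lambda_\mp$.} Since $B_{n,\mp}\subset B_{n+1}^{(2)}$ and $\psi_{n,\mp}$ is an exact eigenfunction of $H_{B_{n,\mp}}$ with eigenvalue $E_{n,\mp}$, one has $(H_{B_{n+1}^{(2)}}(\theta)-E_{n,\mp}(\theta))\psi_{n,\mp}=\Gamma_{B_{n,\mp}}\psi_{n,\mp}$, whose norm is $O(e^{-\frac{\gamma_0}{4}l_n})$ by \eqref{Hdecay}; since $|\langle\psi_{n,+},\psi_{n,-}\rangle|\lesssim e^{-\frac{\gamma_0}{4}l_n}$ by the same decay and disjointness, the normalized vector $Q_\pm\psi_{n,\mp}/\|Q_\pm\psi_{n,\mp}\|\in\operatorname{Image}Q_\pm$ is a trial function for $H_\pm(\theta)$ with defect $O(e^{-\frac{\gamma_0}{4}l_n})$, so Lemma \ref{trialcor} yields an eigenvalue $\lambda_\mp(\theta)$ of $H_\pm(\theta)$ with $|\lambda_\mp(\theta)-E_{n,\mp}(\theta)|\lesssim e^{-\frac{\gamma_0}{4}l_n}$, which depends $C^1$ (indeed $C^2$) on $\theta$ by perturbation theory. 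For uniqueness, an eigenfunction $\hat\phi\in\operatorname{Image}Q_\pm$ of $H_\pm(\theta)$ with eigenvalue $\hat\lambda$ in a $\delta_{n-1}$-neighbourhood of $E_{n,\mp}(\theta)$ satisfies $Q_\pm(H_{B_{n+1}^{(2)}}(\theta)-\hat\lambda)\hat\phi=0$, hence $(H_{B_{n+1}^{(2)}}(\theta)-\hat\lambda)\hat\phi=a\,\psi_{n,\pm}$ for a scalar $a$; restricting this to $B_{n+1}^{(2)}\setminus(B_{n-1}\cup(B_{n-1}+k_n))$, which is $(n-1)$-good relative to $(\theta,\hat\lambda)$ by Proposition \ref{DRpron} and the choice of the block, and repeating verbatim the argument in the proof of Proposition \ref{k2n}\textbf{(b)} (using the eigenvalue separation of simple-resonance children from Proposition \ref{kn} with $n$ replaced by $n-1$), one gets $\hat\phi=\psi_{n,\mp}+O(e^{-\frac{\gamma_0}{4}l_n}\delta_{n-1}^{-1})$; two orthonormal such $\hat\phi$'s would contradict orthogonality, so $\lambda_\mp$ is the unique eigenvalue of $H_\pm(\theta)$ near $E_{n,\mp}(\theta)$ and is simple.

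\emph{Interlacing, $C^1$-closeness, transversality.} The restrictions of $H_\pm(\theta)$ to $\operatorname{Image}Q_\pm$ are rank-one compressions of $H_{B_{n+1}^{(2)}}(\theta)$, so Theorem \ref{cahuchy} (with $n-m=1$) forces their ordered spectra to interlace; since by Proposition \ref{k2n}\textbf{(a)} the values $E_{n+1,<}^{(2)}(\theta)$ and $E_{n+1,>}^{(2)}(\theta)$ are consecutive eigenvalues of $H_{B_{n+1}^{(2)}}(\theta)$ with every other eigenvalue at distance $\ge\delta_{n-1}$ from $E_{n,-}(\theta)$, the interlacing eigenvalue of $H_\pm$ that lies between them sits in an $O(\delta_{n-1}^5)$-neighbourhood of $E_{n,-}(\theta)$, and (using $|E_{n,-}-E_{n,+}|\lesssim\delta_{n-1}^5$ on $I^{(2)}_{k_n,\cup}$ together with the previous paragraph) must coincide with $\lambda_\mp(\theta)$, giving \eqref{intern}. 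Letting $\phi_\mp\in\operatorname{Image}Q_\pm$ be the normalized eigenfunction for $\lambda_\mp$, the same estimate gives $\|\phi_\mp-\psi_{n,\mp}\|\lesssim e^{-\frac{\gamma_0}{4}l_n}\delta_{n-1}^{-1}$; the case $s=0$ of \eqref{Epln} is the estimate above, and for $s=1$, applying Feynman--Hellman formula \textbf{(1)} of Lemma \ref{daoshu} to $H_\pm$ and to $H_{B_{n,\mp}}$ — noting $(H_\pm)'=Q_\pm V'Q_\pm$ and $\phi_\mp\in\operatorname{Image}Q_\pm$, so $\lambda_\mp'=\langle\phi_\mp,V'\phi_\mp\rangle$ while $E_{n,\mp}'=\langle\psi_{n,\mp},V'\psi_{n,\mp}\rangle$ — yields $|\lambda_\mp'-E_{n,\mp}'|\lesssim\|V'\|\,\|\phi_\mp-\psi_{n,\mp}\|\lesssim e^{-\frac15\gamma_0l_n}\delta_{n-1}^{-1}\ll\delta_n^{10}$. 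Finally, for the scale-$n$ replacement of Lemma \ref{transe}: by Lemma \ref{sepn} the function $E_n$ belongs to \textbf{Type} \ref{t1}, so $I(E_n)_+$ and $I(E_n)_-$ are separated closed intervals with $\pm E_n'|_{I(E_n)_\pm}\gtrsim\delta_{n-2}^{10}$; since $I^{(2)}_{k_n,\cup}$ is a connected subset of $I(E_n)=I(E_n)_+\sqcup I(E_n)_-$ containing $\theta_{k_n,-}\in I(E_n)_-$ it lies entirely in $I(E_n)_-$, and likewise $I^{(2)}_{k_n,\cup}+k_n\cdot\omega\subset I(E_n)_+$ (using $I^{(2)}_{k_n,\cup},\,I^{(2)}_{k_n,\cup}+k_n\cdot\omega\subset I(E_n)$ from Proposition \ref{covern}\textbf{(7)}); hence $-E_{n,-}'(\theta)=-E_n'(\theta)\gtrsim\delta_{n-2}^{10}$ and $E_{n,+}'(\theta)=E_n'(\theta+k_n\cdot\omega)\gtrsim\delta_{n-2}^{10}$ on $I^{(2)}_{k_n,\cup}$, and combining this with the $s=1$ estimate gives $\pm\lambda_\pm'(\theta)\gtrsim\delta_{n-2}^{10}\gg\delta_{n-1}$.

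The main obstacle is the uniqueness claim in the second step: one must rule out spurious eigenvalues of $H_\pm(\theta)$ near $E_{n,\mp}(\theta)$ by showing that every solution of $(H_{B_{n+1}^{(2)}}(\theta)-\hat\lambda)\hat\phi=a\psi_{n,\pm}$ with $\hat\phi\perp\psi_{n,\pm}$ is a small perturbation of $\psi_{n,\mp}$, which forces us to import the full Poisson-expansion argument of Proposition \ref{k2n}\textbf{(b)} together with the inductive Green's-function estimates of Hypothesis \ref{h6} on the $(n-1)$-good complement; everything else is a routine transcription of the first-scale proof with the coupling parameter $\varepsilon$ replaced by $e^{-\frac{\gamma_0}{4}l_n}$ and the powers of $\delta_0$ by the appropriate powers of $\delta_{n-1}$ and $\delta_{n-2}$.
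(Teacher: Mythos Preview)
Your overall strategy matches the paper's, but there is a genuine gap in the $s=1$ case of \eqref{Epln}. You write $(H_\pm)'=Q_\pm V'Q_\pm$, which is correct at the first scale because $P_{0,\pm}=\langle\psi_{0,\pm}|\psi_{0,\pm}\rangle$ are $\theta$-independent delta projections. At scale $n$, however, $P_{n,\pm}=\langle\psi_{n,\pm}|\psi_{n,\pm}\rangle$ depend on $\theta$ through the eigenfunctions $\psi_{n,\pm}(\theta)$ of $H_{B_{n,\pm}}(\theta)$, so $Q_\pm'=-P_{n,\pm}'\neq 0$ and the Feynman--Hellman formula gives
\[
\lambda_\mp'=\langle\phi_\mp,V'\phi_\mp\rangle-2\langle\phi_\mp,(P_{n,\pm})'H_{B_{n+1}^{(2)}}\phi_\mp\rangle.
\]
The extra term is not obviously small: $\psi_{n,\pm}'=-G_{B_{n,\pm}}^\perp(E_{n,\pm})V'\psi_{n,\pm}$ has norm $O(\delta_{n-1}^{-1})$, so a naive bound on $(P_{n,\pm})'$ is useless.

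The paper disposes of this term by writing $P_{n,\pm}'=-G_{B_{n,\pm}}^\perp(E_{n,\pm})V'P_{n,\pm}-P_{n,\pm}V'G_{B_{n,\pm}}^\perp(E_{n,\pm})$ and observing that after sandwiching by $Q_{n,\pm}$ only the first summand survives; the resulting expression $Q_{n,\pm}G_{B_{n,\pm}}^\perp(E_{n,\pm})V'P_{n,\pm}H_{B_{n+1}^{(2)}}Q_{n,\pm}$ is then controlled via
\[
\|P_{n,\pm}H_{B_{n+1}^{(2)}}Q_{n,\pm}\|=\|P_{n,\pm}(H_{B_{n+1}^{(2)}}-E_{n,\pm})Q_{n,\pm}\|\leq\|\Gamma_{B_{n,\pm}}\psi_{n,\pm}\|\lesssim e^{-\frac15\gamma_0 l_n},
\]
together with $\|G_{B_{n,\pm}}^\perp(E_{n,\pm})\|=O(\delta_{n-1}^{-1})$. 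This cancellation---the large derivative of the projection paired against the small off-block coupling---is the missing ingredient in your argument and is precisely the ``genuinely new feature'' you flagged but did not carry through. The rest of your proof (existence, uniqueness, interlacing, and \eqref{tranln}) is essentially the paper's.
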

	\begin{proof}
		We claim  $H_\pm (\theta)$ have  unique eigenvalues $\lambda_\mp (\theta)$ in  $[E_{n,-} (\theta)-\delta_{n-1},E_{n,-} (\theta)+\delta_{n-1}]$ satisfying 
		$$|\lambda_\pm (\theta)-E_{n,\pm} (\theta)|\lesssim e^{-\frac15 \gamma_0l_n}.$$
		Since $Q_\pm\psi_{n,\mp}=\psi_{n,\mp}$, 
		$$\| (H_\pm (\theta)-E_{n,\mp} (\theta))\psi_{n,\mp}\|=\|Q_\pm\Gamma_{B_{n,\mp}}\psi_{n,\mp}\|\lesssim e^{-\frac15 \gamma_0l_n}.$$ The above estimate together with  Lemma \ref{trialcor}. yields the existence of $\lambda_\pm$.
		
		If $\hat{\lambda}$ is an eigenvalue  of $H_\pm (\theta)$ satisfying $|\hat{\lambda}-E_{n,-} (\theta)|\leq \delta_{n-1}$, then its  eigenfunction $\hat{\phi}\in \operatorname{Image}Q_\pm$ satisfies $$Q_\pm  (H_{B_{n+1}^{(2)}} (\theta)-\hat{\lambda})\hat{\phi}=0.$$
		Thus $$ (H_{B_{n+1}^{(2)}} (\theta)-\hat{\lambda})\hat{\phi}=a\psi_{n,\pm}.$$
		Employing  the same argument as the proof of item \textbf{(b)} of Proposition \ref{k2n}  together with the exponential decay of $\psi_{n,\pm}$, one can prove that 
		$$|\hat{\phi} (x)|\leq e^{-\frac{\gamma_0}{5}\|x\|_1}+e^{-\frac{\gamma_0}{5}\|x-k_n\|_1}$$ 
		for $\operatorname{dist} (x,\{o,k_n\})\geq l_n^{7/8}$ and 	 $\hat{\phi}=\psi_{n,\mp}+O ( e^{-\frac15 \gamma_0l_n}\delta_{n-1}^{-1})$ since  $\hat{\phi}\in \operatorname{Image}Q_\pm$.
		Thus such an eigenvalue must be unique or else it will violate the  orthogonality. By Theorem \ref{cahuchy} and item \textbf{(a)} of Proposition \ref{k2n}, $\lambda_\mp (\theta)$ must lie between $E_{n+1,<}^{(2)} (\theta)$ and  $ E_{n+1,>}^{(2)} (\theta)$.
		Let $\phi_\mp\in \operatorname{Image}Q_\pm$ be the eigenfunctions of $H_\pm$ corresponding to $\lambda_\mp$. Since $\|\phi_\mp-\psi_{n,\mp}\|\lesssim  e^{-\frac15 \gamma_0l_n} \delta_{n-1}^{-1}$, by the Feynman-Hellman formula \textbf{(1)} from Lemma \ref{daoshu}, we have 
		\begin{align*}
			|\lambda_\mp'-E_{n,\mp}'|&=|\left\langle\phi_\mp, (H_\pm)' \phi_\mp
			\right\rangle-\left\langle\psi_{n,\mp},H' \psi_{n,\mp}\right\rangle|\\
			&=|\left\langle\phi_\mp,V' \phi_\mp
			\right\rangle-
			\left\langle\psi_{n,\mp},V' \psi_{n,\mp}\right\rangle -2\langle\phi_\mp, (P_{n,\pm})'  H_{B_{n+1}^{(2)}} \phi_\mp
			\rangle|\\
			&\lesssim e^{-\frac15 \gamma_0l_n}\delta_{n-1}^{-1} + |\langle\phi_\mp,Q_{n,\pm} (P_{n,\pm})'  H_{B_{n+1}^{(2)}} Q_{n,\pm}\phi_\mp\rangle| .
		\end{align*}
		Thus it suffices to estimate $Q_{n,\pm}P_{n,\pm}'  H_{B_{n+1}^{(2)}} Q_{n,\pm}$. 
		By differentiating the relation $$ (H_{B_{n,\pm}}-E_{n,\pm})\psi_{n,\pm}=0,$$
		we get 
		$$ (H_{B_{n,\pm}}-E_{n,\pm})\psi_{n,\pm}'=- (V'-E_{n,\pm}')\psi_{n,\pm}.$$
		Since $\langle \psi_{n,\pm}',\psi_{n,\pm}\rangle=0$, we have   $$\psi_{n,\pm}'=-G_{B_{n,\pm}}^\perp (E_{n,\pm})V'\psi_{n,\pm}.$$
		Thus\begin{align*}
			P_{n,\pm}'&=\langle\psi_{n,\pm} |\psi_{n,\pm}'\rangle+\langle\psi_{n,\pm}'|\psi_{n,\pm}\rangle \\
			&=-G_{B_{n,\pm}}^\perp (E_{n,\pm})V'P_{n,\pm}-P_{n,\pm}V'G_{B_{n,\pm}}^\perp (E_{n,\pm}).
		\end{align*}
		Since $ P_{n,\pm}Q_{n,\pm}=Q_{n,\pm}P_{n,\pm}=0$, it follows that 
		\begin{equation}\label{69}
			Q_{n,\pm}P_{n,\pm}'  H_{B_{n+1}^{(2)}} Q_{n,\pm}=-Q_{n,\pm}G_{B_{n,\pm}}^\perp (E_{n,\pm})V'P_{n,\pm}  H_{B_{n+1}^{(2)}} Q_{n,\pm}.
		\end{equation}
		We can employ the  estimates 
		\begin{align*}
			\|	P_{n,\pm}  H_{B_{n+1}^{(2)}} Q_{n,\pm}	\|&=\|P_{n,\pm} (  H_{B_{n+1}^{(2)}}-E_{n,\pm} )Q_{n,\pm}\|	\\
			& \leq \|P_{n,\pm} (  H_{B_{n+1}^{(2)}}-E_{n,\pm})\|\\
			&\leq \|\Gamma_{B_{n,\pm}}\psi_{n,\pm}\|\lesssim e^{-\frac15 \gamma_0l_n}
		\end{align*} 
		and $\|G_{B_{n,\pm}}^\perp (E_{n,\pm})\|=O (\delta_{n-1}^{-1})$ to \eqref{69} and complete the proof of \eqref{Epln}.  
		Since $E_n$ belongs to \textbf{Type} \ref{t1}, by Hypothesis \ref{h4},  we have $$\pm E_{n,\pm}'\gtrsim \delta_{n-2}^{10}. $$
		
		Finally, \eqref{tranln} follows from the above estimate and \eqref{Epln} immediately.
	\end{proof}
	
	\begin{lem}\label{crossn}
		There is a unique point $\theta_s\in I^{(2)}_{k_n,\cup}$ satisfying $|\theta_s-\theta_{k_n,-}|<\delta_n^{10}$, such that 
		$$\lambda_+ (\theta_s)=\lambda_- (\theta_s).$$	
		Moreover, the Rellich children have  quantitative separation away from $\theta_s$:
		\begin{equation}\label{E1sn}
			E_{n+1,>}^{(2)} (\theta)-E_{n+1,<}^{(2)} (\theta)\geq |\lambda_+ (\theta)-\lambda_- (\theta)|\geq \delta_{n-1}|\theta-\theta_s|.
		\end{equation}
		As a corollary, if  $E_{n+1}^{(2)} (\theta)=\mathcal{E}_{n+1}^{(2)} (\theta)$, then $\theta=\theta_s$.
	\end{lem}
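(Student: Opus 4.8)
The plan is to transcribe the one-dimensional argument of Lemma~\ref{cross} (which in turn rested on Lemma~\ref{inter1}) to the present scale, now using Lemma~\ref{it} in place of Lemma~\ref{inter1}. Introduce the difference function
\[
d(\theta):=\lambda_+(\theta)-\lambda_-(\theta),\qquad \theta\in I^{(2)}_{k_n,\cup}.
\]
The whole statement will be read off from three facts already established: the interlacing bound \eqref{intern}, the $C^1$-closeness \eqref{Epln} of $\lambda_\pm$ to $E_{n,\pm}$, and the opposite-signed derivative lower bound \eqref{tranln}.

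First I would estimate $d$ at the resonant point. By \eqref{huiyi} together with the notation \eqref{not}--\eqref{not1} we have $E_{n,-}(\theta_{k_n,-})=E_{n,+}(\theta_{k_n,-})=e_{k_n}$, so the $s=0$ instance of \eqref{Epln} gives
\[
|d(\theta_{k_n,-})|\le |\lambda_+(\theta_{k_n,-})-E_{n,+}(\theta_{k_n,-})|+|\lambda_-(\theta_{k_n,-})-E_{n,-}(\theta_{k_n,-})|\lesssim e^{-\frac15\gamma_0 l_n}.
\]
Next, subtracting the two halves of \eqref{tranln} yields $d'(\theta)=\lambda_+'(\theta)-\lambda_-'(\theta)\gtrsim \delta_{n-2}^{10}\gg\delta_{n-1}$ on $I^{(2)}_{k_n,\cup}$, so $d$ is continuous and strictly increasing with $d'\ge\delta_{n-1}$; in particular it has at most one zero. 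Since $\theta_{k_n,-}$ lies well inside $I^{(2)}_{k_n,\cup}$ (by \eqref{domainesti} the interval has length comparable to $\delta_{n-1}^5$ and $\theta_{k_n,-}$ is at distance $\gtrsim\delta_{n-1}^{10}$ from its boundary) and $|d(\theta_{k_n,-})|$ is far smaller than $\delta_{n-1}\cdot\delta_{n-1}^{10}$, the mean value theorem produces the (unique) zero $\theta_s\in I^{(2)}_{k_n,\cup}$ with
\[
|\theta_s-\theta_{k_n,-}|\le \frac{|d(\theta_{k_n,-})|}{\inf d'}\lesssim e^{-\frac15\gamma_0 l_n}\delta_{n-1}^{-1}<\delta_n^{10},
\]
whence $\lambda_+(\theta_s)=\lambda_-(\theta_s)$. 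Then for any $\theta$, using $d(\theta_s)=0$ and $d'\ge\delta_{n-1}$,
\[
|\lambda_+(\theta)-\lambda_-(\theta)|=|d(\theta)-d(\theta_s)|\ge\delta_{n-1}|\theta-\theta_s|,
\]
and \eqref{intern} gives $E_{n+1,>}^{(2)}(\theta)-E_{n+1,<}^{(2)}(\theta)\ge|\lambda_+(\theta)-\lambda_-(\theta)|$, which is \eqref{E1sn}. Finally, if $E_{n+1}^{(2)}(\theta)=\mathcal E_{n+1}^{(2)}(\theta)$ then $E_{n+1,>}^{(2)}(\theta)=E_{n+1,<}^{(2)}(\theta)$, so \eqref{E1sn} forces $\delta_{n-1}|\theta-\theta_s|\le 0$, i.e. $\theta=\theta_s$.

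The point is that there is essentially no new analytic content here: the substantive work was done in Lemma~\ref{it} (and, at the base scale, Lemma~\ref{inter1}). The only step needing genuine attention is verifying the quantitative inequality $e^{-\frac15\gamma_0 l_n}\delta_{n-1}^{-1}<\delta_n^{10}$ used to locate $\theta_s$; this is a bookkeeping check with the inductive parameters, where one uses $l_n\sim l_{n-1}^4$, $\delta_{n-1}=e^{-(l_{n-1})^{2/3}}$, $\delta_n=e^{-(l_n)^{2/3}}$, and $\gamma_0=\tfrac12|\log\varepsilon|$, so that the $\gamma_0 l_n$-exponential dominates both $\delta_{n-1}^{-1}$ and $\delta_n^{-10}$. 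A secondary but routine point is confirming that the displacement $|\theta_s-\theta_{k_n,-}|$ is small enough to keep $\theta_s$ inside the domain $I^{(2)}_{k_n,\cup}$, which follows immediately from \eqref{domainesti} and the interior location of $\theta_{k_n,-}$.
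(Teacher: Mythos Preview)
Your proof is correct and follows essentially the same approach as the paper: both introduce $d(\theta)=\lambda_+(\theta)-\lambda_-(\theta)$, use \eqref{Epln} at $s=0$ to bound $|d(\theta_{k_n,-})|\lesssim e^{-\frac15\gamma_0 l_n}$, use \eqref{tranln} to get $d'\ge\delta_{n-1}$, apply the mean value theorem to locate the unique zero $\theta_s$ with $|\theta_s-\theta_{k_n,-}|\lesssim e^{-\frac15\gamma_0 l_n}\delta_{n-1}^{-1}<\delta_n^{10}$, and deduce \eqref{E1sn} from \eqref{intern} and the lower bound on $d'$. Your additional remarks on the parameter bookkeeping and on $\theta_s$ remaining in $I^{(2)}_{k_n,\cup}$ are implicit in the paper and do not alter the argument.
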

	\begin{proof}
		We consider the difference function 
		$$d (\theta):=\lambda_+ (\theta)-\lambda_- (\theta).$$
		Since $E_{n,+} (\theta_{k_n,-})=E_{n,-} (\theta_{k_n,-})$, by  \eqref{Epln} (the version $s=0$),  we have 
		$$|d (\theta_{k_n,-})|\leq |\lambda_+ (\theta_{k_n,-})-E_{n,+} (\theta_{k_n,-})|+|\lambda_- (\theta_{k_n,-})-E_{n,-} (\theta_{k_n,-})|\lesssim e^{-\frac{1}{5}\gamma_0l_n}.$$
		By   \eqref{tranln}, we have 
		$$d' (\theta)\geq\delta_{n-1}. $$
		Thus there exists  a unique  point $\theta_s$ satisfying $|\theta_s-\theta_{k_n,-}|\lesssim e^{-\frac{1}{5}\gamma_0l_n} \delta_{n-1}^{-1}<\delta_n^{10}$, such that 
		$d (\theta_s)=0$, that is 	$$\lambda_+ (\theta_s)=\lambda_- (\theta_s).$$	
		Then \eqref{E1sn} follows from \eqref{intern} and 
		$$	|\lambda_+ (\theta)-\lambda_- (\theta)|=|d (\theta)|=|d (\theta)-d (\theta_s)|\geq \delta_{n-1}|\theta-\theta_s|$$
		by mean value theorem.
	\end{proof}
	In the following, we assume that the Rellich functions are chosen so that $E_{n+1}^{(2)}>\mathcal{E}_{n+1}^{(2)}$ on the right of $\theta_s$.
	
	Suppose that $|E_{n,+}' (\theta_{k_n,-})\geq|E_{n,-}' (\theta_{k_n,-})|$  (the argument is similar for the opposite case). Recalling the derivative estimate for $E_n$ belonging  to \textbf{Type} \ref{t1} from   Hypotheses \ref{h4}, we define    $1\leq r\leq  \delta_{n-1}^{-1}$ such that 
	$$    |E_{n,+}' (\theta_{k_n,-})|=r|E_{n,-}' (\theta_{k_n,-})|.$$
	\begin{lem}\label{chan}
		For $\theta\in I_{k_n,\cup}^{(2)}$, we have 
		$$| (E_{n,+}'+rE_{n,-}') (\theta)|\leq \delta_{n-1}^2$$
	\end{lem}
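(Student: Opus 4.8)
The statement is the exact analogue, one scale up, of Lemma \ref{cha}, and the proof will follow the same two-line template: use that $E_{n,+}'$ and $E_{n,-}'$ have opposite signs at the common recurrence point $\theta_{k_n,-}$ to see that the combination $E_{n,+}'+rE_{n,-}'$ vanishes there, and then control how far it can have moved away from that point using a uniform bound on second derivatives together with the smallness of the domain $I_{k_n,\cup}^{(2)}$.

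First I would record that, by the definition of $r$, namely $|E_{n,+}'(\theta_{k_n,-})| = r\,|E_{n,-}'(\theta_{k_n,-})|$, and by Lemma \ref{sepn} (which forces $E_n$ to be of \textbf{Type} \ref{t1}, so that $\pm E_{n,\pm}'\gtrsim \delta_{n-2}^{10}>0$ and in particular $E_{n,+}'$ and $E_{n,-}'$ have opposite signs on the whole of $I_{k_n,\cup}^{(2)}$), one has
$$ E_{n,+}'(\theta_{k_n,-}) + r\,E_{n,-}'(\theta_{k_n,-}) = 0. $$
Next, recall $E_{n,\pm}$ are translates of the Rellich function $E_n$ (see \eqref{not}–\eqref{not1}), hence $|E_{n,\pm}''|\le D$ on their domains by the $C^2$ control built into the induction (the relevant quantitative $C^2$-closeness of $E_n$ to $E_{n-1}$, and ultimately to $v$, comes from Proposition \ref{apn}); in particular the uniform bound $|E_{n,\pm}''|\le D$ holds on $I_{k_n,\cup}^{(2)}$. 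Then for any $\theta\in I_{k_n,\cup}^{(2)}$, the mean value theorem applied to $E_{n,+}'+rE_{n,-}'$ gives
$$ |(E_{n,+}'+rE_{n,-}')(\theta)| \le (1+r)\,D\,|\theta-\theta_{k_n,-}| \le (1+r)\,D\,|I_{k_n,\cup}^{(2)}| . $$

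Finally I would plug in the two quantitative inputs: $r\le \delta_{n-1}^{-1}$ (by the definition of $r$, using the \textbf{Type} \ref{t1} lower bound $\pm E_{n,\pm}'\gtrsim\delta_{n-2}^{10}$ for the denominator and the uniform upper bound $D$ for the numerator), and $|I_{k_n,\cup}^{(2)}|\le 4\delta_{n-1}^{5}$ from \eqref{domainesti}. Then
$$ (1+r)\,D\,|I_{k_n,\cup}^{(2)}| \lesssim \delta_{n-1}^{-1}\cdot \delta_{n-1}^{5} = \delta_{n-1}^{4} \ll \delta_{n-1}^{2}, $$
which yields the claimed bound $|(E_{n,+}'+rE_{n,-}')(\theta)|\le \delta_{n-1}^2$ for $\varepsilon_0$ small. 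The only point requiring a little care — the "main obstacle", such as it is — is making sure the bound $r\le\delta_{n-1}^{-1}$ is legitimate: this is exactly where one invokes that $E_n$ belongs to \textbf{Type} \ref{t1} (guaranteed by Lemma \ref{sepn} whenever the double-resonant $k_n$ exists), so that $|E_{n,-}'(\theta_{k_n,-})|\gtrsim\delta_{n-2}^{10}$ is bounded below while $|E_{n,+}'(\theta_{k_n,-})|\le D$, giving $r = |E_{n,+}'(\theta_{k_n,-})|/|E_{n,-}'(\theta_{k_n,-})| \lesssim \delta_{n-2}^{-10} \le \delta_{n-1}^{-1}$; everything else is the routine mean-value-theorem estimate already used in Lemma \ref{cha}.
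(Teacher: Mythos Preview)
Your overall architecture is exactly that of the paper: vanishing of $E_{n,+}'+rE_{n,-}'$ at $\theta_{k_n,-}$, then mean value theorem plus a second-derivative bound plus $|I_{k_n,\cup}^{(2)}|\le 4\delta_{n-1}^{5}$ and $r\le\delta_{n-1}^{-1}$. The one substantive gap is your claim $|E_{n,\pm}''|\le D$. Proposition~\ref{apn} only gives $|E_n''-E_{n-1}''|\ll\delta_{n-1}^{10}$ for a single Subcase~A step; to reach $v$ you would have to know that \emph{every} ancestor of $E_n$ arose via Subcase~A, and the induction hypotheses do not assert this (Lemma~\ref{ds} guarantees it only for the immediate parent). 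If some earlier ancestor came from a double resonance or from Subcase~B, its second derivative is governed by expressions like \eqref{dfn}, and there is no uniform $D$ bound available. No uniform bound $|E_m''|\le D$ appears anywhere in the inductive hypotheses (Hypotheses~\ref{h1}--\ref{h6}).

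The paper instead bounds $|E_{n,\pm}''|$ directly at this scale via Feynman--Hellman formula \textbf{(2)} in Lemma~\ref{daoshu}:
\[
|E_{n,\pm}''|\lesssim D+\|G_{B_{n,\pm}}^\perp(E_{n,\pm})\|\lesssim \delta_{n-1}^{-1},
\]
where the bound $\|G_{B_{n,\pm}}^\perp(E_{n,\pm})\|\le\delta_{n-1}^{-1}$ is exactly item \textbf{(d)} of Proposition~\ref{kn} applied at scale $n-1$ (legitimate because, by Lemma~\ref{ds}, $E_n$ is a Subcase~A child of $E_{n-1}$). Plugging this weaker second-derivative bound into your mean-value estimate gives
\[
(1+r)\,\delta_{n-1}^{-1}\,|I_{k_n,\cup}^{(2)}|\;\lesssim\;\delta_{n-1}^{-1}\cdot\delta_{n-1}^{-1}\cdot\delta_{n-1}^{5}=\delta_{n-1}^{3}\le\delta_{n-1}^{2},
\]
which is still ample. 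So the fix is a one-line substitution, but as written your justification of the second-derivative bound does not go through.
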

	\begin{proof}
		Since $E_{n,+}'$ and $E_{n,-}'$ have opposite signs, we have 
		$$E_{n,+}' (\theta_{k_n,-})+rE_{n,-}' (\theta_{k_n,-})=0 . $$
		By the Feynman-Hellman formula \textbf{(2)} from  Theorem \ref{daoshu}, we have the bound 
		$$|E_{n,\pm}''|\lesssim D+\|G_{B_{n,\pm}} (E_{n,\pm})\|\lesssim \delta_{n-1}^{-1},$$
		where we employ the eigenvalue separation result of simple resonance case (c.f. Proposition \ref{kn} replacing $n$ by $n-1$) to bound the term $\|G_{B_{n,\pm}} (E_{n,\pm})\|$ in the above inequality.   
		Thus by mean value  theorem, we have 
		\begin{align*}
			| (E_{n,+}'+rE_{n,-}') (\theta)|&\leq  (1+r)\delta_{n-1}^{-1} |\theta-\theta_{k_n,-}| \\ 
			&\lesssim r\delta_{n-1}^{-1} |I_{k_n,\cup}^{(2)}| \\
			&\leq\delta_{n-1}^2.
		\end{align*}
	\end{proof}
	\begin{prop}\label{1215n}
		For $\theta\in I_{k_n,\cup}^{(2)}$, we have the following:
		\begin{itemize}
			\item[\textbf{(a).}] If $E_{n+1}^{(2)} (\theta)\neq\mathcal{E}_{n+1}^{(2)} (\theta)$, then 
			\begin{align}
				(E_{n+1}^{(2)})'& = (A^2-rB^2) E_{n,-}'+O (\delta_{n-1}^2 ), \label{den}\\
				(\mathcal{E}_{n+1}^{(2)})'& = (B^2-rA^2) E_{n,-}'+O (\delta_{n-1}^2 ),\label{deen}
			\end{align}
			and 	
			\begin{align}
				( E_{n+1}^{(2)})'' & =\frac{2\left\langle\psi_{n+1}, V' \Psi_{n+1}\right\rangle^2}{E_{n+1}^{(2)}-\mathcal{E}_{n+1}^{(2)}}+O (\delta_{n-1}^{-1} ),\label{dfn} \\
				( \mathcal{E}_{n+1}^{(2)})''& =\frac{2\left\langle\psi_{n+1}, V' \Psi_{n+1}\right\rangle^2}{\mathcal{E}_{n+1}^{(2)}-E_{n+1}^{(2)}}+O (\delta_{n-1}^{-1} )\label{ddfn}
			\end{align}
			with the notation from Proposition \ref{k2n} and $r$ as above.
			\item [\textbf{(b).}] If  $E_{n+1}^{(2)} (\theta)\neq \mathcal{E}_{n+1}^{(2)} (\theta)$ and  $| (E_{n+1}^{(2)})' (\theta)|\leq\delta_{n-1}^2$, then $| (E_{n+1}^{(2)})'' (\theta)|\geq2$. Moreover, the  sign of $ (E_{n+1}^{(2)})'' (\theta)$ is the same as that of $ E_{n+1}^{(2)} (\theta)- \mathcal{E}_{n+1}^{(2)} (\theta)$. The analogous conclusion holds if we exchange   $E_{n+1}^{(2)} (\theta)$ and $\mathcal{E}_{n+1}^{(2)} (\theta)$.
			\item[\textbf{(c).}] If there is a level crossing  (thus $E_{n+1}^{(2)} (\theta_s)=\mathcal{E}_{n+1}^{(2)} (\theta_s)$ by Lemma \ref{crossn}), then for all $\theta\in I_{k_n,\cup}^{ (2)}$,
			\begin{equation}\label{dan}
				(E_{n+1}^{(2)})' (\theta)>\delta_{n-1}^2,\  (\mathcal{E}_{n+1}^{(2)})' (\theta)<-\delta_{n-1}^2.
			\end{equation}
		\end{itemize}
	\end{prop}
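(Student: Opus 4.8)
The plan is to follow the proof of Proposition \ref{1215} essentially line by line, substituting the inductive objects for the explicit ones available at the first scale; the role of the delta functions $\psi_{0,\pm}$ is now played by the exponentially localized eigenfunctions $\psi_{n,\pm}$ of $H_{B_{n,\pm}}$, and the role of Lemma \ref{transe} by the Type \ref{t1} lower bound $\pm E_{n,\pm}'\gtrsim\delta_{n-2}^{10}$ (available via Lemma \ref{sepn} and Hypothesis \ref{h4}). For part \textbf{(a)}, I would apply Feynman--Hellman formula \textbf{(1)} of Lemma \ref{daoshu} to the expression \eqref{yn}: since $V'$ is diagonal and $\psi_{n,-},\psi_{n,+}$ are localized near $o,k_n$ with $\|k_n\|_1>10l_n$, the cross term $\langle\psi_{n,-},V'\psi_{n,+}\rangle$ is $O(\delta_n^{10})$ by \eqref{Hdecay}, while $\langle\psi_{n,\pm},V'\psi_{n,\pm}\rangle=E_{n,\pm}'$ again by Feynman--Hellman; this gives $(E_{n+1}^{(2)})'=A^2E_{n,-}'+B^2E_{n,+}'+O(\delta_n^{10})$, and rewriting $B^2E_{n,+}'=B^2(E_{n,+}'+rE_{n,-}')-rB^2E_{n,-}'$ and invoking Lemma \ref{chan} yields \eqref{den} (and \eqref{deen} symmetrically). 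For the second derivatives, combine Feynman--Hellman \textbf{(2)} and \textbf{(3)}, bounding $\langle\psi_{n+1},V''\psi_{n+1}\rangle$ by $D$ and $\langle V'\psi_{n+1},G_{n+1}^{\perp\perp}(E_{n+1}^{(2)})V'\psi_{n+1}\rangle$ by $\|G_{n+1}^{\perp\perp}\|\,\|V'\psi_{n+1}\|^2\lesssim\delta_{n-1}^{-1}$ via item \textbf{(c)} of Proposition \ref{k2n}, which produces \eqref{dfn} and \eqref{ddfn}.

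For part \textbf{(b)}: assuming $|(E_{n+1}^{(2)})'(\theta)|\leq\delta_{n-1}^2$, the formula \eqref{den} together with $|E_{n,-}'|\gtrsim\delta_{n-2}^{10}$ forces $|A^2-rB^2|\lesssim\delta_{n-1}^2\delta_{n-2}^{-10}<\tfrac1{100}$; combining with $A^2+B^2=1$ and $r\geq1$ gives $A^2\geq\tfrac14$ and $B^2\geq\tfrac1{4r}$, exactly as at scale one. Then, using \eqref{yn} and Lemma \ref{chan} again, $|\langle\psi_{n+1},V'\Psi_{n+1}\rangle|=|AB(E_{n,-}'-E_{n,+}')|+O(\delta_n^{10})=|AB((1+r)E_{n,-}'-(E_{n,+}'+rE_{n,-}'))|+O(\delta_n^{10})\geq\frac{1+r}{4\sqrt r}|E_{n,-}'|-O(\delta_{n-1}^2)\gtrsim\delta_{n-2}^{10}$. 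Feeding the bound $|E_{n+1}^{(2)}-\mathcal{E}_{n+1}^{(2)}|\lesssim\delta_{n-1}^5$ from item \textbf{(a)} of Proposition \ref{k2n} into \eqref{dfn} gives $|(E_{n+1}^{(2)})''(\theta)|\gtrsim\delta_{n-2}^{20}\delta_{n-1}^{-5}-O(\delta_{n-1}^{-1})$, which beats $2$ and $\delta_{n-1}^{-1}$ since $(l_{n-1})^{2/3}=(l_{n-2})^{8/3}\gg(l_{n-2})^{2/3}$, and whose sign is that of the dominant first term, i.e. of $E_{n+1}^{(2)}(\theta)-\mathcal{E}_{n+1}^{(2)}(\theta)$; the symmetric statement follows by exchanging the two curves.

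For part \textbf{(c)}: I would first establish \eqref{dan} at $\theta_s$. Since $E_{n+1}^{(2)}(\theta_s)=\mathcal{E}_{n+1}^{(2)}(\theta_s)$, the interlacing \eqref{intern} forces $E_{n+1}^{(2)}(\theta_s)=\lambda_+(\theta_s)=\lambda_-(\theta_s)=\mathcal{E}_{n+1}^{(2)}(\theta_s)$, and because $E_{n+1}^{(2)}>\mathcal{E}_{n+1}^{(2)}$ immediately to the right of $\theta_s$ we get $E_{n+1}^{(2)}\geq\lambda_+>\lambda_-\geq\mathcal{E}_{n+1}^{(2)}$ there, hence $(E_{n+1}^{(2)})'(\theta_s)\geq\lambda_+'(\theta_s)$ and $(\mathcal{E}_{n+1}^{(2)})'(\theta_s)\leq\lambda_-'(\theta_s)$; the inequalities \eqref{dan} at $\theta_s$ then follow from \eqref{tranln} since $\delta_{n-2}^{10}\gg\delta_{n-1}^2$. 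To propagate to all of $I_{k_n,\cup}^{(2)}$, run the continuity/contradiction argument of Proposition \ref{1215}\,\textbf{(c)}: if $\{\theta>\theta_s:(E_{n+1}^{(2)})'(\theta)\leq\delta_{n-1}^2\}\neq\emptyset$, its infimum $\theta^*$ satisfies $\theta^*>\theta_s$ (by continuity and the bound at $\theta_s$) and $(E_{n+1}^{(2)})''(\theta^*)\leq0$, whereas part \textbf{(b)} applied at $\theta^*$, where $E_{n+1}^{(2)}(\theta^*)>\mathcal{E}_{n+1}^{(2)}(\theta^*)$ by Lemma \ref{crossn}, gives $(E_{n+1}^{(2)})''(\theta^*)>2$, a contradiction; the case $\theta<\theta_s$ and the statement for $\mathcal{E}_{n+1}^{(2)}$ are handled the same way.

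The main obstacle is the parameter bookkeeping: one must check that every error term — $O(\delta_n^{10})$ from \eqref{yn}, $O(\delta_{n-1}^2)$ from Lemma \ref{chan}, and the $O(\delta_{n-1}^{-1})$ coming from the Green's function in Feynman--Hellman \textbf{(3)} — is negligible against the main terms, and in particular that the numerator-over-denominator estimate $\delta_{n-2}^{20}\delta_{n-1}^{-5}$ genuinely dominates both $2$ and $\delta_{n-1}^{-1}$, which rests on the super-exponential growth $l_{n-1}\approx(l_{n-2})^4$ of the scale lengths. On the analytic side, the one genuinely new input compared to scale one is the cross-term smallness $\langle\psi_{n,-},V'\psi_{n,+}\rangle=O(\delta_n^{10})$, which must be extracted from the inductive eigenfunction decay \eqref{Hdecay} and $\|k_n\|_1>10l_n$ rather than from an exact orthogonality of delta functions.
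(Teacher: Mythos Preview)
Your proposal is correct and follows essentially the same approach as the paper's proof, which indeed mirrors Proposition \ref{1215} with the inductive objects in place of the scale-one ones. One minor refinement: the cross term $\langle\psi_{n,-},V'\psi_{n,+}\rangle$ is in fact exactly zero (since $V'$ is diagonal and $B_{n,-}\cap B_{n,+}=\emptyset$ by $\|k_n\|_1>10l_n$), so the $O(\delta_n^{10})$ error in the first-derivative computation comes purely from the remainder in \eqref{yn}, not from eigenfunction decay; this does not affect your argument.
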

	\begin{proof}
		We prove \eqref{den}  (the proof of \eqref{deen} is analogous). 	 By \eqref{yn} and Lemma \ref{chan}, we apply Feynman-Hellman formula \textbf{(1)} from Lemma \ref{daoshu} to obtain 
		\begin{align*}
			( E_{n+1}^{(2)})'
			& =\left\langle\psi_{n+1}, V' \psi_{n+1}\right\rangle =A^2 \left\langle\psi_{n,-}, V' \psi_{n,-}\right\rangle +B^2  \left\langle\psi_{n,+}, V' \psi_{n,+}\right\rangle+O (\delta_n^{10} ) \\
			&=A^2  E_{n,-}'+B^2  E_{n,+}'+O (\delta_n^{10} ) \\
			& = (A^2-rB^2) E_{n,-}'+B^2  (E_{n,+}'+rE_{n,-}')+O (\delta_n^{10} )\\
			& = (A^2-rB^2) E_{n,-}'+O (\delta_{n-1}^2 ).
		\end{align*}
		To prove \eqref{dfn} and \eqref{ddfn}, we  use Feynman-Hellman formulas  \textbf{(2)} and \textbf{(3)} to obtain 	
		\begin{equation*}
			(E_{n+1}^{(2)})''=\left\langle\psi_{n+1}, V'' \psi_{n+1}\right\rangle+2 \frac{\left\langle\psi_{n+1}, V' \Psi_{n+1}\right\rangle^2}{E_{n+1}^{(2)}-\mathcal{E}_{n+1}^{(2)}}-2\left\langle V' \psi_{n+1},G^{\perp \perp}_{n+1} (E_{n+1}^{(2)}) V' \psi_{n+1}\right\rangle .
		\end{equation*}
		The first term is bounded by $D$ and the  third  term is bounded by $$2\|G^{\perp \perp}_{n+1} (E_{n+1}^{(2)})\|\cdot\|V' \psi_{n+1}\|^2\lesssim \delta_{n-1}^{-1},$$ where we  use the estimate $\|G^{\perp \perp}_{n+1} (E_{n+1}^{(2)})\| \leq2\delta_{n-1}^{-1}$ from  item \textbf{(c)} of   Proposition \ref{k2n}. Thus we finish the proof of item \textbf{(a)}.
		
		Now we are going to prove \textbf{(b)}. We will show the first term in \eqref{dfn} is large if  $| (E_{n+1}^{(2)})' (\theta)|\leq\delta_{n-1}^2$.  
		Assuming  $| (E_{n+1}^{(2)})' (\theta)|\leq\delta_{n-1}^2$, by \eqref{den}, we have 
		$$|A^2-rB^2| |E_{n,-}' (\theta)|\leq| ( E_{n+1}^{(2)})' (\theta)|+O (\delta_{n-1}^2)=O (\delta_{n-1}^2).$$
		Since $$|E_{n,-}' (\theta)|\gtrsim \delta_{n-2}^{10}>\delta_{n-1} ,$$ it follows that 
		$$|A^2-rB^2|\lesssim \delta_{n-1}^2 \delta_{n-1}^{-1}<\frac{1}{100}.$$
		Since $A^2+B^2=1$ and $r\geq1$, we obtain  
		$$ (1+r)B^2\geq A^2+B^2-|A^2-rB^2|\geq \frac{99}{100}$$ and 
		$$ (1+\frac{1}{r})A^2\geq A^2+B^2-\frac{1}{r}|A^2-rB^2|\geq \frac{99}{100}.$$
		Thus $$B^2\geq\frac{1}{4r},\ A^2\geq\frac{1}{4}.$$
		By \eqref{yn}, Lemma  \ref{chan}  and the previous lower bound for $A,B$,  we have 
		\begin{align*}
			|\langle\psi_{n+1}, V' \Psi_{n+1}\rangle|&=|AB (\langle\psi_{n,-}, V' \psi_{n,-}\rangle-\langle\psi_{n,+}, V' \psi_{n,+}\rangle)+O (\delta_n^{10} )| \\
			&=|AB   (E_{n,-}' -E_{n,+}')+O (\delta_n^{10})|\\
			&=|AB \left( (1+r) E_{n,-}'- (E_{n,+}'+r E_{n,-}')\right)+O (\delta_n^{10})|\\
			& \geq \frac{1+r }{4\sqrt{r}}  (|E_{n,-}'|-O (\delta_{n-1}^2 ))\\
			&\gtrsim \delta_{n-2}^{10}\gg \delta_{n-1}.
		\end{align*}
		By item \textbf{(a)} of  Proposition \ref{k2n}, we obtain  the estimate of the denominator  $|E_{n+1}^{(2)}-\mathcal{E}_{n+1}^{(2)}|\lesssim \delta_{n-1}^5$. Combining  the previous estimate of numerator, by \eqref{dfn},  we obtain  $$| (E_{n+1}^{(2)})'' (\theta)|\gtrsim     \delta_{n-1}^{2}  \delta_{n-1}^{-5}-O (\delta_{n-1}^{-1})>2,$$ 
		whose sign is the same as  that of $ E_{n+1}^{(2)} (\theta)- \mathcal{E}_{n+1}^{(2)} (\theta)$.
		
		To prove \textbf{(c)}, we first show \eqref{dan} holds for $\theta=\theta_s$. Since $E_{n+1}^{(2)} (\theta_s)=\mathcal{E}_{n+1}^{(2)} (\theta_s)$, by \eqref{intern} we have 
		$$E_{n+1}^{(2)} (\theta_s)=\mathcal{E}_{n+1}^{(2)} (\theta_s)=\lambda_+ (\theta_s)=\lambda_- (\theta_s),$$ 
		and since $E_{n+1}^{(2)}>\mathcal{E}_{n+1}^{(2)}$ on the right of $\theta_s$, we have for $\theta>\theta_s$, 
		$$E_{n+1}^{(2)} (\theta)\geq\lambda_+ (\theta)>\lambda_- (\theta) \geq \mathcal{E}_{n+1}^{(2)} (\theta).$$
		Thus by \eqref{tranln},
		we get $$   (E_{n+1}^{(2)})' (\theta_s)\geq \lambda_+' (\theta_s) >\delta_{n-1},\   (\mathcal{E}_{n+1}^{(2)})' (\theta_s)\leq \lambda_-' (\theta_s)<-\delta_{n-1}.$$
		We next claim  the inequalities hold for all $\theta\in I_{k_n,\cup}^{(2)}$. We prove the case  that  $ (E_{n+1}^{(2)})' (\theta)$ and $\theta>\theta_s$  (the other cases are analogous). If it  is not true, then the set 
		$$\{\theta \in I_{k_n,\cup}^{(2)}: \ \theta>\theta_s, \   (E_{n+1}^{(2)})' (\theta)\leq  \delta_{n-1}^2 \}\neq \emptyset.$$ 
		Let $\theta^*$ be its infimum. Since $ (E_{n+1}^{(2)})' (\theta)$ is continuous and $  (E_{n+1}^{(2)})' (\theta_s) >\delta_{n-1}^2$, we have $\theta^*>\theta_s$ and  $ (E_{n+1}^{(2)})' (\theta)>\delta_{n-1}^2\geq  (E_{n+1}^{(2)})' (\theta^*)$ for $\theta\in [\theta_s,\theta^*)$, which implies $  (E_{n+1}^{(2)})'' (\theta^*)\leq 0$. However, by item \textbf{(b)} and $E_{n+1}^{(2)} (\theta^*)>\mathcal{E}_{n+1}^{(2)} (\theta^*)$, we get   $ (E_{n+1}^{(2)})'' (\theta^*)>2$, a contradiction. Thus we  prove our claim. 
	\end{proof}
	
	Define 
	$$E_{n,\vee} (\theta)=\max (E_{n,+} (\theta), E_{n,-} (\theta)), \   E_{n,\wedge} (\theta)=\min (E_{n,+} (\theta), E_{n,-} (\theta)).$$
	\begin{prop}
		For  $\theta\in I_{k_n,\cup}^{(2)}$, we have 
		\begin{equation}\label{apbpn}
			|E_{n+1,>}^{(2)} (\theta)-E_{n,\vee} (\theta)|,| E_{n+1,<}^{(2)} (\theta)-E_{n,\wedge} (\theta)|\lesssim e^{-\frac14 \gamma_0l_n}.
		\end{equation}
	\end{prop}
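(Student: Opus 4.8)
The plan is to mimic, at scale $n$, the argument already carried out at scale $1$ in the proof of the proposition giving \eqref{apbp}. Fix $\theta\in I_{k_n,\cup}^{(2)}$. By the exponential decay \eqref{Hdecay} of the eigenfunctions $\psi_{n,\pm}$ of $H_{B_{n,\pm}}$ (recall the notation \eqref{not}, \eqref{not1}) and the fact that $B_{n,-}$ and $B_{n,+}$ are disjoint and well inside $B_{n+1}^{(2)}$, one has the trial-function bounds
$$\|(H_{B_{n+1}^{(2)}}(\theta)-E_{n,\pm}(\theta))\psi_{n,\pm}\|=\|\Gamma_{B_{n,\pm}}\psi_{n,\pm}\|\lesssim e^{-\frac14\gamma_0 l_n}.$$
In particular $\psi_{n,+}$ and $\psi_{n,-}$ are each trial functions for $H_{B_{n+1}^{(2)}}$ at energies $E_{n,\vee}(\theta)$ and $E_{n,\wedge}(\theta)$. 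First I would apply Lemma \ref{trialcor} to each of $\psi_{n,+},\psi_{n,-}$ separately: this produces eigenvalues of $H_{B_{n+1}^{(2)}}(\theta)$ within $O(e^{-\frac14\gamma_0 l_n})$ of $E_{n,\vee}(\theta)$ and of $E_{n,\wedge}(\theta)$. By item \textbf{(a)} of Proposition \ref{k2n}, the only eigenvalues of $H_{B_{n+1}^{(2)}}(\theta)$ in the relevant window $[E_{n,-}(\theta)-10D\delta_{n-1}^5,E_{n,-}(\theta)+10D\delta_{n-1}^5]$ are $E_{n+1}^{(2)}(\theta)$ and $\mathcal{E}_{n+1}^{(2)}(\theta)$, i.e. $E_{n+1,>}^{(2)}(\theta)$ and $E_{n+1,<}^{(2)}(\theta)$; and $E_{n,\vee}(\theta),E_{n,\wedge}(\theta)$ lie in this window since $\theta,\theta+k_n\cdot\omega\in I(E_n)$ with $E_n(\theta),E_n(\theta+k_n\cdot\omega)$ close to the center $e_{k_n}$ on the scale $\delta_{n-1}^5$.

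Next I would split into two cases exactly as in the scale-$1$ proof. \textbf{Case 1:} the intervals $[E_{n,\vee}-Ce^{-\frac14\gamma_0 l_n},E_{n,\vee}+Ce^{-\frac14\gamma_0 l_n}]$ and $[E_{n,\wedge}-Ce^{-\frac14\gamma_0 l_n},E_{n,\wedge}+Ce^{-\frac14\gamma_0 l_n}]$ are disjoint. Then each contains exactly one of $E_{n+1,>}^{(2)},E_{n+1,<}^{(2)}$, and since $E_{n+1,>}^{(2)}\geq E_{n+1,<}^{(2)}$ and $E_{n,\vee}\geq E_{n,\wedge}$, matching by order gives $|E_{n+1,>}^{(2)}(\theta)-E_{n,\vee}(\theta)|\lesssim e^{-\frac14\gamma_0 l_n}$ and likewise for the $\wedge$ pair, which is \eqref{apbpn}. \textbf{Case 2:} the two intervals overlap, so $E_{n,\vee}(\theta)-E_{n,\wedge}(\theta)\lesssim e^{-\frac14\gamma_0 l_n}$. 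Then $\psi_{n,+}$ and $\psi_{n,-}$ are two \emph{orthogonal} trial functions for $H_{B_{n+1}^{(2)}}(\theta)$ at the single energy $E_{n,\vee}(\theta)$ (orthogonality because $B_{n,-}\cap B_{n,+}=\emptyset$), so Lemma \ref{trialcor} forces $H_{B_{n+1}^{(2)}}(\theta)$ to have \emph{two} eigenvalues within $O(e^{-\frac14\gamma_0 l_n})$ of $E_{n,\vee}(\theta)$; these must be $E_{n+1,>}^{(2)}(\theta)$ and $E_{n+1,<}^{(2)}(\theta)$, and combining with $E_{n,\vee}-E_{n,\wedge}\lesssim e^{-\frac14\gamma_0 l_n}$ again yields \eqref{apbpn}.

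The bookkeeping I should be careful about is purely that the error $e^{-\frac14\gamma_0 l_n}$ from Lemma \ref{trialcor} is the right one: Lemma \ref{trialcor} gives $\sqrt 2$ times the trial-function defect, the defect is $\|\Gamma_{B_{n,\pm}}\psi_{n,\pm}\|$, and this is $\lesssim e^{-\frac14\gamma_0 l_n}$ because only the coupling across $\partial B_{n,\pm}$ contributes and $\partial B_{n,\pm}$ sits at distance $\gtrsim l_n^{6/7}$... actually at distance comparable to $l_n$ from the centers, where $|\psi_{n,\pm}|\le e^{-\frac14\gamma_0 l_n}$ by \eqref{Hdecay}; summing over the $O(l_n^{d-1})$ boundary sites only changes the constant. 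I do not expect any genuine obstacle here — the statement is a soft perturbation-theoretic comparison and the only subtlety, the possible coincidence $E_{n,\vee}=E_{n,\wedge}$ (level crossing of the parent translates), is handled exactly by the two-orthogonal-trial-functions device, which is the standard trick already used to prove \eqref{apbp}. The proof therefore reads: \emph{The proof is identical to that of the proposition establishing \eqref{apbp}, replacing $v(\theta),v(\theta+k\cdot\omega),\psi_{0,\pm},\varepsilon$ by $E_{n,-}(\theta),E_{n,+}(\theta),\psi_{n,\pm},e^{-\frac14\gamma_0 l_n}$ respectively, and invoking item \textbf{(a)} of Proposition \ref{k2n} in place of item \textbf{(a)} of Proposition \ref{k2}.}
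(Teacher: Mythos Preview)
Your proposal is correct and follows essentially the same approach as the paper: trial-function bounds $\|\Gamma_{B_{n,\pm}}\psi_{n,\pm}\|\lesssim e^{-\frac14\gamma_0 l_n}$, Lemma~\ref{trialcor} to locate eigenvalues near $E_{n,\vee}$ and $E_{n,\wedge}$, item \textbf{(a)} of Proposition~\ref{k2n} to identify them with $E_{n+1,>}^{(2)},E_{n+1,<}^{(2)}$, and the disjoint/overlapping dichotomy handled exactly as at scale $1$. The paper's proof is a verbatim upgrade of the \eqref{apbp} argument, just as you surmise.
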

	\begin{proof}
		Fix  $\theta\in I_{k_n,\cup}^{(2)}$. We have
		$$\| (H_{B_{n+1}^{(2)}}-E_{n,\pm})\psi_{n,\pm}\|=\|\Gamma_{B_{n,\pm}}\psi_{n,\pm}\|\lesssim e^{-\frac14 \gamma_0l_n}. $$ 
		By Lemma \ref{trialcor}, we deduce  $H_{B_{n+1}^{(2)}}$ must have an eigenvalue in $[E_{n,\vee}-O (e^{-\frac14 \gamma_0l_n}), E_{n,\vee}+O (e^{-\frac14 \gamma_0l_n})]$ and in $[E_{n,\wedge}-O (e^{-\frac14 \gamma_0l_n}), E_{n,\wedge}+O (e^{-\frac14 \gamma_0l_n})]$. By  item \textbf{(a)} of Proposition \ref{k2n}, the eigenvalue must be $E_{n+1}^{(2)}$ or $\mathcal{E}_{n+1}^{(2)}$.  If the two intervals are disjoint, \eqref{apbpn} must hold.
		Otherwise, we have \begin{equation}\label{othern}
			E_{n,\vee}-E_{n,\wedge}\lesssim e^{-\frac14 \gamma_0l_n}.
		\end{equation} Thus we have two trial functions  for  $E_{n,\vee}$: 
		$$\| (H_{B_{n+1}^{(2)}}-E_{n,\vee})\psi_{n,\pm}\|\lesssim e^{-\frac14 \gamma_0l_n}.$$ 
		By Lemma \ref{trialcor},  $H_{B_{n+1}^{(2)}}$ must have two eigenvalues in $[E_{n,\vee}-O (e^{-\frac14 \gamma_0l_n}), E_{n,\vee}+O (e^{-\frac14 \gamma_0l_n})].$ They are $E_{n+1}^{(2)}$ and  $\mathcal{E}_{n+1}^{(2)}$. Combining \eqref{othern}, we also obtain \eqref{apbpn}.
	\end{proof}
	
	The next proposition shows  the two-monotonicity interval structure of $E_{n+1,>}^{(2)}$ and $E_{n+1,<}^{(2)}$.
	\begin{prop}\label{str2} We have the following:
		\begin{itemize}
			\item [\textbf{(a).}] If there is no level crossing, then $E_{n+1}^{(2)}=E_{n+1,>}^{(2)}$ and $\mathcal{E}_{n+1}^{(2)}=E_{n+1,<}^{(2)}$. Moreover, $E_{n+1}^{(2)}$   (resp. $\mathcal{E}_{n+1}^{(2)}$) has the two-monotonicity interval structure with a  critical point $\theta_>$   (resp. $\theta_<$) satisfying $|\theta_>-\theta_{k_n,-}|<\delta_n^{10}$  (resp. $|\theta_<-\theta_{k_n,-}|<\delta_n^{10}$).
			\item [\textbf{(b).}] If there is a level crossing  (thus $E_{n+1}^{(2)} (\theta_s)=\mathcal{E}_{n+1}^{(2)} (\theta_s)$),  then 
			\begin{align*}
				E_{n+1}^{(2)} (\theta)=	\left\{\begin{aligned}
					&E_{n+1,<}^{(2)} (\theta) &\text{if $\theta<\theta_s$,}\\
					&E_{n+1,>}^{(2)} (\theta) &\text{if $\theta>\theta_s$,}
				\end{aligned}\right. 
			\end{align*}
			and 
			\begin{align*}
				\mathcal{E}_{n+1}^{(2)} (\theta)=	\left\{\begin{aligned}
					&E_{n+1,>}^{(2)} (\theta) &\text{if $\theta<\theta_s$,}\\
					&E_{n+1,<}^{(2)} (\theta) &\text{if $\theta>\theta_s$.}
				\end{aligned}\right.	\end{align*}
			Moreover, $E_{n+1,>}^{(2)}$ and $E_{n+1,<}^{(2)}$ are piecewise $C^1$ functions  (except at the point $\theta_s$) with two-monotonicity interval structure.
		\end{itemize}
		Combining the previous propositions of  double resonance case,  we deduce that if the gap between the critical values of $E_{n+1,>}^{(2)}$ and $E_{n+1,<}^{(2)}$ is larger than $3\delta_{n+1}$, then each of them belongs to \textbf{Type} \ref{t2}, otherwise, $E_{n+1,>}^{(2)}$ along with  $E_{n+1,<}^{(2)}$ belongs to \textbf{Type} \ref{t3}.
	\end{prop}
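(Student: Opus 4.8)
The plan is to mirror the initial-scale Proposition~\ref{str2n}, feeding in the scale-$(n+1)$ counterparts established above: the first/second derivative formulas of Proposition~\ref{1215n}, the interlacing data of Lemma~\ref{it}, the crossing Lemma~\ref{crossn}, the perturbative bound~\eqref{apbpn}, and the eigenvalue count and eigenfunction expansion of Proposition~\ref{k2n}. Recall also that $E_n$ is forced to be of \textbf{Type}~\ref{t1} by Lemma~\ref{sepn}, so Hypothesis~\ref{h4} supplies $\pm E_{n,\pm}'\gtrsim\delta_{n-2}^{10}$ on the relevant monotonicity intervals (with $E_{n,+}$ increasing and $E_{n,-}$ decreasing in the chosen sign convention).

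For (a), assume there is no level crossing, so $E_{n+1}^{(2)}(\theta)\neq\mathcal{E}_{n+1}^{(2)}(\theta)$ for every $\theta\in I^{(2)}_{k_n,\cup}$. Since by our labelling convention $E_{n+1}^{(2)}>\mathcal{E}_{n+1}^{(2)}$ just to the right of $\theta_s$, continuity forces $E_{n+1}^{(2)}>\mathcal{E}_{n+1}^{(2)}$ throughout $I^{(2)}_{k_n,\cup}$, hence $E_{n+1}^{(2)}=E_{n+1,>}^{(2)}$ and $\mathcal{E}_{n+1}^{(2)}=E_{n+1,<}^{(2)}$. To locate $\theta_>$, set $\theta_\pm:=\theta_{k_n,-}\pm\delta_n^{10}$; using $\pm E_{n,\pm}'\gtrsim\delta_{n-2}^{10}$, the relation $E_{n,+}(\theta_{k_n,-})=E_{n,-}(\theta_{k_n,-})$, and~\eqref{apbpn} together with $e^{-\gamma_0 l_n/4}\ll\delta_{n-2}^{10}\delta_n^{10}$, one gets $E_{n+1}^{(2)}(\theta_+)\geq E_{n,+}(\theta_+)-O(e^{-\gamma_0 l_n/4})>E_{n+1}^{(2)}(\theta_{k_n,-})$ and, symmetrically with $E_{n,-}$, $E_{n+1}^{(2)}(\theta_-)>E_{n+1}^{(2)}(\theta_{k_n,-})$. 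The mean value theorem then yields a critical point $\theta_>\in(\theta_-,\theta_+)$, so $|\theta_>-\theta_{k_n,-}|<\delta_n^{10}$, and the same argument with a maximum produces $\theta_<$ for $\mathcal{E}_{n+1}^{(2)}$. The two-monotonicity interval structure of each curve then follows from the Morse condition in Proposition~\ref{1215n}(b) and Lemma~\ref{C2}, exactly as in Proposition~\ref{str1}(a).

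For (b), if a level crossing occurs, Proposition~\ref{1215n}(c) gives $(E_{n+1}^{(2)})'>\delta_{n-1}^2$ and $(\mathcal{E}_{n+1}^{(2)})'<-\delta_{n-1}^2$ on all of $I^{(2)}_{k_n,\cup}$, so $E_{n+1}^{(2)}$ is strictly increasing and $\mathcal{E}_{n+1}^{(2)}$ strictly decreasing; since they coincide at $\theta_s$ (Lemma~\ref{crossn}) and $E_{n+1}^{(2)}>\mathcal{E}_{n+1}^{(2)}$ to the right of $\theta_s$, we obtain $E_{n+1}^{(2)}>\mathcal{E}_{n+1}^{(2)}$ for $\theta>\theta_s$ and $E_{n+1}^{(2)}<\mathcal{E}_{n+1}^{(2)}$ for $\theta<\theta_s$. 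Reading off $E_{n+1,>}^{(2)}$ and $E_{n+1,<}^{(2)}$ as the pointwise maximum and minimum then gives the displayed piecewise identities, and shows that $E_{n+1,>}^{(2)}$ is the concatenation at $\theta_s$ of a strictly decreasing $C^1$ branch ($\mathcal{E}_{n+1}^{(2)}$ on the left) and a strictly increasing $C^1$ branch ($E_{n+1}^{(2)}$ on the right), hence piecewise $C^1$ with two-monotonicity interval structure and a minimum at $\theta_s$; similarly $E_{n+1,<}^{(2)}$ has a maximum at $\theta_s$, with $\theta_s$ the only possible kink.

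Finally, for the classification remark: in all cases $E_{n+1,>}^{(2)}$ and $E_{n+1,<}^{(2)}$ are single-interval curves each carrying exactly one critical point within $\delta_n^{10}$ of $\theta_{k_n,-}$ (a minimum for $E_{n+1,>}^{(2)}$, a maximum for $E_{n+1,<}^{(2)}$), so their domains overlap. If the gap $\inf J(E_{n+1,>}^{(2)})-\sup J(E_{n+1,<}^{(2)})$ exceeds $3\delta_{n+1}$, then near each critical point the other curve stays $3\delta_{n+1}$-far, so by Proposition~\ref{k2n}(a) each is an isolated simple eigenvalue branch and, with the Morse and monotonicity properties just proved, belongs to \textbf{Type}~\ref{t2}; otherwise the gap is $\leq 3\delta_{n+1}$ and the pair $(E_{n+1,>}^{(2)},E_{n+1,<}^{(2)})$ is of \textbf{Type}~\ref{t3}, since the interlacing functions $\lambda_\pm$ of Lemma~\ref{it} satisfy $E_{n+1,<}^{(2)}\leq\lambda_\pm\leq E_{n+1,>}^{(2)}$ and $\pm\lambda_\pm'\gtrsim\delta_{n-2}^{10}\gg\delta_{n-1}$ by~\eqref{tranln}, matching the \textbf{Type}~\ref{t3} requirement $\pm\lambda_{m,\pm}'\geq\delta_{m-2}$ at $m=n+1$, while Proposition~\ref{k2n} supplies the two-eigenvalue count and exponential eigenfunction decay demanded by Hypothesis~\ref{h4}. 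I expect the only real difficulty to be the quantitative bookkeeping: checking that the thresholds $\delta_{n-1}$, $\delta_{n-2}^{10}$, $3\delta_{n+1}$, the size $|I^{(2)}_{k_n,\cup}|\leq 4\delta_{n-1}^5$, and the perturbation $e^{-\gamma_0 l_n/4}$ interlock so that every earlier lemma is applied on the correct interval and every inequality in Hypothesis~\ref{h4} comes out with room to spare.
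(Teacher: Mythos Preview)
Your proposal is correct and follows essentially the same approach as the paper's proof: the same continuity argument for (a), the same mean-value-theorem localization of $\theta_>,\theta_<$ via $\theta_\pm=\theta_{k_n,-}\pm\delta_n^{10}$ combined with~\eqref{apbpn} and the $\pm E_{n,\pm}'\gtrsim\delta_{n-2}^{10}$ bound, and the same appeal to Proposition~\ref{1215n}(b),(c) together with Lemma~\ref{C2} for the two-monotonicity structure in both cases. Your final classification paragraph is somewhat more explicit than the paper's one-line remark, but the content is the same.
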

	\begin{proof}
		If there is no level crossing, then $E_{n+1,>}^{(2)} (\theta)>E_{n+1,<}^{(2)} (\theta)$ for all $\theta \in I_{k_n,\cup}^{(2)}$. Since $E_{n+1}^{(2)}>\mathcal{E}_{n+1}^{(2)}$ on the right of $\theta_s$ and by continuity,   $E_{n+1}^{(2)} (\theta)>\mathcal{E}_{n+1}^{(2)} (\theta)$ for all $\theta \in I_{k_n,\cup}^{(2)}$. Thus $E_{n+1}^{(2)}=E_{n+1,>}^{(2)}$ and $\mathcal{E}_{n+1}^{(2)}=E_{n+1,<}^{(2)}$.
		
		Now we show the existence of the critical point $\theta_>$ for $E_{n+1}^{(2)}$  (the proof of $\theta_<$ and $\mathcal{E}_{n+1}^{(2)}$ is analogous).  Let $\theta_-:=\theta_{k_n,-}-\delta_n^{10}$ and $\theta_+:=\theta_{k_n,-}+\delta_n^{10}$.  Recalling  $\pm E_{n,\pm}'\gtrsim \delta_{n-2}^{10}>\delta_n$ and \eqref{apbpn}, we have by mean value theorem 
		$$E_{n+1}^{(2)} (\theta_+)\geq  E_{n,+} (\theta_+)-O ( e^{-\frac14 \gamma_0l_n})\geq   E_{n,+} (\theta_{k_n,-})+\delta_n^{11}-O ( e^{-\frac14 \gamma_0l_n}) 
		> E_{n+1}^{(2)} (\theta_{k_n,-}) $$ and 
		$$E_{n+1}^{(2)} (\theta_-)\geq  E_{n,-} (\theta_-)-O ( e^{-\frac14 \gamma_0l_n})\geq   E_{n,-} (\theta_{k_n,-})+\delta_n^{11}-O ( e^{-\frac14 \gamma_0l_n}) 
		> E_{n+1}^{(2)} (\theta_{k_n,-}). $$
		Thus $E_{n+1}^{(2)}$ must have a critical point $\theta_>\in [\theta_-,\theta_+]$ by mean value theorem.
		The two-monotonicity interval structure follows from  item \textbf{(b)} of Proposition \ref{1215n} and Lemma \ref{C2}. Thus we finish the proof of \textbf{(a)}.
		
		If $E_{n+1}^{(2)} (\theta_s)=\mathcal{E}_{n+1}^{(2)} (\theta_s)$, by item \textbf{(c)} of Proposition \ref{1215n}, $E_{n+1}^{(2)}$ is strictly increasing and $\mathcal{E}_{n+1}^{(2)}$ is strictly decreasing. Thus $E_{n+1}^{(2)}>\mathcal{E}_{n+1}^{(2)}$ on the right of $\theta_s$ and $E_{n+1}<\mathcal{E}_{n+1}^{(2)}$ on the left  of $\theta_s$.  The proof of \textbf{(b)} is obvious.
	\end{proof}
	\subsubsection{Completion of the simple resonance case}In this part, we will complete the  proof of    \textbf{Subcase B} in  the  simple resonance case. In this subcase, $E_n$ belongs to \textbf{Type} \ref{t3}  and the simple resonant interval we consider is 
	\begin{equation}\label{SBI}
		J_{i_n}^{(1)}=   [\sup J (E_{n,<})-\frac{1}{2}\delta_{n-1}^{10}-3\delta_n,\inf J (  E_{n,>})+\frac{1}{2}\delta_{n-1}^{10}+3\delta_n]. 
	\end{equation} Recall that if $E_n$ belongs  to \textbf{Type} \ref{t3},  there exist two brother Rellich functions with  $0\leq \inf E_{n,>} (\theta)-\sup E_{n,<} (\theta)\leq 3\delta_n$.
	\begin{figure}[htp]
		\begin{tikzpicture}[>=latex, scale=1]
			\draw[domain=-4:0, samples=100] plot  ({\x},{\x*\x/5});
			\draw[domain=0:2, samples=100] plot  ({\x},{\x*\x/1.25});
			\draw[domain=-2:0, samples=100] plot  ({\x-0.5},{-0.5-\x*\x/1.25});
			\draw[domain=0:4, samples=100] plot  ({\x-0.5},{-0.5-\x*\x/5});
			\draw (0,0)node[below]{gap$\leq3\delta_n$}; 	
			
			\draw (1.1,2)node{$E_{n,>}$};  	 \draw (1,-2)node{$E_{n,<}$}; 
			\draw[dashed]  (-4,1)--  (4,1); 	\draw[dashed]  (-4,-1.5)--  (4,-1.5); 
			\draw[dashed,<->]  (2.5,1)--node[fill=white]{$J_{i_n}^{(1)}$}  (2.5,-1.5); 
			\node[rotate = 270] at  (-3, -0.25) {$\underbrace{\hspace{2.5cm}}$};
			\draw (-3,-0.25)node[left]{$\delta_{n-1}^{10}\sim$};  	
		\end{tikzpicture}
		\caption{A cartoon illustration of the interval  $J_{i_n}^{(1)}$ in \textbf{Subcase B}: $E_n$ belongs to \textbf{Type 3}, and the gap between the Rellich brothers is smaller than $3\delta_n$. $J_{i_n}^{(1)}$ with length $\sim\delta_{n-1}^{10}$  covers the nearby region of the critical values of the two Rellich brothers.}
	\end{figure}
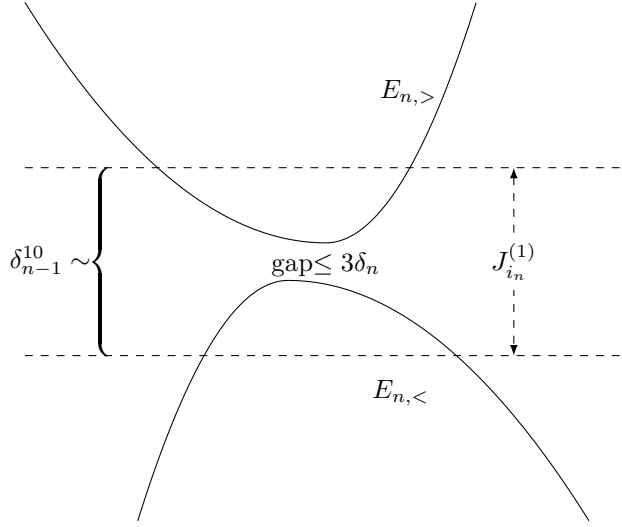
	
	\begin{lem}\label{Ta}
		$E_{n}$	must have  some ancestor $E_s\  (1\leq s\leq  n )$ generated from the double resonant interval of its parent $E_{s-1}$. 
	\end{lem}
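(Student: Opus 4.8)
The plan is to prove Lemma \ref{Ta} by a downward induction along the ancestry chain of $E_n$, combining the type trichotomy of Rellich functions (\textbf{Hypothesis} \ref{h4}) with the dichotomy \textbf{Subcase A}/\textbf{Subcase B} governing how simple-resonance children are produced. First I would fix notation for the ancestry: write $v=F_0,F_1,\dots,F_n=E_n$, where $F_{m-1}\in\mathcal{C}_{m-1}$ is the parent of $F_m\in\mathcal{C}_m$ for $1\le m\le n$, and prove by induction on $n$ the statement: ``if $E_n$ belongs to \textbf{Type} \ref{t3}, then $F_s\in\mathcal{C}^{(2)}(F_{s-1})$ for some $1\le s\le n$'', which is exactly the assertion of the lemma (membership in $\mathcal{C}^{(2)}(F_{s-1})$ automatically encodes that $F_{s-1}$ possessed a double resonant interval and that $F_s$ was generated from it).

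For the base case $n=1$, I would invoke that $\mathcal{C}_1^{(1)}$ consists only of functions of \textbf{Type} \ref{t1} or \textbf{Type} \ref{t2} (recorded in the verification of the induction hypotheses at scale $1$, via Propositions \ref{k1} and \ref{str1}); hence a \textbf{Type} \ref{t3} function $E_1=F_1$ must lie in $\mathcal{C}_1^{(2)}=\mathcal{C}^{(2)}(v)=\mathcal{C}^{(2)}(F_0)$, so $s=1$ works. For the inductive step I would split according to how $E_n$ arose from its parent $E_{n-1}=F_{n-1}$. If $E_n\in\mathcal{C}^{(2)}(E_{n-1})$, take $s=n$ and stop. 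If instead $E_n\in\mathcal{C}^{(1)}(E_{n-1})$, so that $E_n$ comes from a simple resonant interval $J_i^{(1)}$ of $E_{n-1}$, then this interval cannot fall under \textbf{Subcase A}: by Proposition \ref{kn}\textbf{(a)} together with Proposition \ref{strn1} (applied at the appropriate scale), every child produced under \textbf{Subcase A} is the unique eigenvalue of its block in the relevant energy window, hence has no Rellich brother and belongs to \textbf{Type} \ref{t1} or \textbf{Type} \ref{t2}, never \textbf{Type} \ref{t3}. Since $E_n$ is of \textbf{Type} \ref{t3}, the interval must instead fall under \textbf{Subcase B}; but \textbf{Subcase B} occurs, by definition, only when the parent $E_{n-1}$ is itself of \textbf{Type} \ref{t3} (and $J_i^{(1)}$ is the distinguished interval satisfying \eqref{chongdie}). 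Thus $E_{n-1}$ is of \textbf{Type} \ref{t3}, and I would apply the inductive hypothesis to $E_{n-1}$ to produce an ancestor $F_s\in\mathcal{C}^{(2)}(F_{s-1})$ with $1\le s\le n-1$; since $E_{n-1}$ is an ancestor of $E_n$, so is $F_s$, and the induction closes.

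This is a combinatorial bookkeeping argument on the Rellich tree rather than an analytic one, so I do not expect a serious obstacle. The two points needing care are: (i) ruling out that a \textbf{Subcase A} child can be of \textbf{Type} \ref{t3} — this rests on the uniqueness clause of Proposition \ref{kn}\textbf{(a)}, which guarantees $E_{n+1}^{(1)}$ is the \emph{only} eigenvalue of $H_{B_{n+1}^{(1)}}$ in the window $[E_n-2\delta_n,E_n+2\delta_n]$, hence has no close companion and cannot carry a Rellich brother; and (ii) handling the base scale correctly, where one uses that $\mathcal{C}_1^{(1)}$ contains no \textbf{Type} \ref{t3} function. Both ingredients are already established, so once the ancestry chain is set up the proof reduces to the short induction sketched above.
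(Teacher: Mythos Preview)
Your proposal is correct and takes essentially the same approach as the paper. The paper's proof is a terse contradiction argument --- assume every $E_s\in\mathcal{C}^{(1)}(E_{s-1})$ and invoke Propositions \ref{k1}, \ref{str1}, \ref{strn1} to force $E_n$ into \textbf{Type} \ref{t1} or \ref{t2} --- which is exactly the contrapositive of your forward induction, resting on the same two facts you identify (Subcase~A children are never \textbf{Type} \ref{t3}, and Subcase~B requires a \textbf{Type} \ref{t3} parent).
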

	\begin{proof}
		For the sake of contradiction, assume that $E_s\in \mathcal{C}^{(1)} (E_{s-1})$ for all $1\leq s\leq n$. By the previous analysis of the simple resonance case  (c.f. Propositions  \ref{k1}, \ref{str1}, \ref{strn1}), $E_n$ must belong to  \textbf{Type} \ref{t1} or  \textbf{Type} \ref{t2}, a contradiction.
	\end{proof}
	Let $m$ be the largest integer  of $0\leq m\leq n-1$ such that $E_{m+1}$ is  generated from the double resonant interval of its parent  $E_m$ among the ancestors of $E_n$, that is, there exists some $k_m\in \Z^d$ with $10l_m<\|k_m\|_1\leq 10l_{m+1}$ satisfying  
	$$e_{k_m}:=E_m (\theta_{k_m,-})=E_m (\theta_{k_m,-}+k_m\cdot \omega )$$ 
	for some $\theta_{k_m,-}\in I (E_m)_-$ and   $E_{m+1,>},E_{m+1,<}\in \mathcal{C}^{(2)} (E_m)$  are the Rellich children generated from the double resonant interval $\bar{B}_{\delta_{m-1}^{10}} (e_{k_m})$. Thus the results of the  double resonance case from the previous section hold true with $m$ replacing $n$.
	\begin{lem}
		For all $m+1\leq s\leq n-1$, $E_s$ belongs to  \textbf{Type} \ref{t3} and  $E_{s+1}$ is generated from  the interval $J_{i_{s}}^{(1)}$ (\eqref{SBI} replacing $n$ by $s$) of $E_s$.
	\end{lem}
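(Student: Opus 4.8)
The plan is to prove the assertion by a downward induction on the scale $s$, running from $s=n-1$ down to $s=m+1$. If $m=n-1$ the claimed range of $s$ is empty and there is nothing to prove, so from now on I assume $m\le n-2$; then $n-1>m$, and by the maximality of $m$ this forces $E_n\in\mathcal{C}^{(1)}(E_{n-1})$, i.e.\ $E_n$ is a simple-resonance child of $E_{n-1}$ (every child is generated from either a double resonant or a simple resonant interval of its parent, and index $n-1$ does not have the double-resonance property).

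The two structural facts I will use repeatedly are the following. (i) By the maximality of $m$, for every $s$ with $m+1\le s\le n-1$ the function $E_{s+1}$ is \emph{not} generated from a double resonant interval of $E_s$, so $E_{s+1}\in\mathcal{C}^{(1)}(E_s)$ and is produced by the simple-resonance construction (Subcase A or Subcase B) applied to some simple resonant interval of $E_s$. (ii) By Proposition \ref{strn1} (applied at scale $s$), a Rellich child produced in Subcase A is of \textbf{Type} \ref{t1} or \textbf{Type} \ref{t2}, never \textbf{Type} \ref{t3}. Combining (i) and (ii): if $E_{s+1}\in\mathcal{C}^{(1)}(E_s)$ is of \textbf{Type} \ref{t3}, then Subcase A is impossible, hence Subcase B occurred; and Subcase B, by definition, means precisely that $E_s$ is of \textbf{Type} \ref{t3} and that $E_{s+1}$ is generated from the distinguished interval $J^{(1)}_{i_s}$ of Proposition \ref{covern}(8), the unique simple resonant interval satisfying \eqref{chongdie}.

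The induction then runs as follows. Base case $s=n-1$: $E_n$ is of \textbf{Type} \ref{t3} (the standing hypothesis in Subcase B) and $E_n\in\mathcal{C}^{(1)}(E_{n-1})$ as noted above, so the combined fact yields that $E_{n-1}$ is of \textbf{Type} \ref{t3} and $E_n$ is generated from $J^{(1)}_{i_{n-1}}$. Inductive step: suppose $m+2\le s\le n-1$ and that $E_s$ has already been shown to be of \textbf{Type} \ref{t3}. Since $s-1\ge m+1>m$, fact (i) gives $E_s\in\mathcal{C}^{(1)}(E_{s-1})$; since $E_s$ is of \textbf{Type} \ref{t3}, the combined fact gives that $E_{s-1}$ is of \textbf{Type} \ref{t3} and $E_s$ is generated from $J^{(1)}_{i_{s-1}}$. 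Iterating from $s=n-1$ down to $s=m+2$ establishes, for every $s\in\{m+1,\dots,n-1\}$, that $E_s$ is of \textbf{Type} \ref{t3} and $E_{s+1}$ is generated from $J^{(1)}_{i_s}$, which is exactly the claim. In particular this shows $E_{m+1}$ is of \textbf{Type} \ref{t3}, i.e.\ the double resonance of $E_m$ at scale $m$ split into a \textbf{Type} \ref{t3} pair rather than a \textbf{Type} \ref{t2} pair (cf.\ Proposition \ref{str2}).

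There is no genuinely hard analytic step here: the lemma is a purely structural consequence of the type classification that is already in place. The only point requiring care — and the one I expect to be the real ``obstacle'' in the sense of needing attention rather than ingenuity — is the bookkeeping: one must make sure that each link $E_s\to E_{s+1}$ in the range $m+1\le s\le n-1$ is genuinely a simple-resonance link (which is exactly what the maximality of $m$ provides, via Lemma \ref{sepn} and the definition of $m$), and that, Subcase A being excluded by Proposition \ref{strn1}, ``being a simple-resonance child of \textbf{Type} \ref{t3}'' leaves Subcase B as the \emph{only} possibility, so that the case analysis is exhaustive and the identification of the interval as $J^{(1)}_{i_s}$ is forced.
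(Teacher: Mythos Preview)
Your proof is correct and uses essentially the same structural facts as the paper: the maximality of $m$ forces each link $E_s\to E_{s+1}$ in the range to be a simple-resonance child, and Subcase~A children are of \textbf{Type}~\ref{t1} or \textbf{Type}~\ref{t2} by Proposition~\ref{strn1}, so a \textbf{Type}~\ref{t3} child can only come from Subcase~B. The paper phrases this as a forward contradiction (if the conclusion failed at some $s$, then from that point on every child is Subcase~A hence \textbf{Type}~\ref{t1} or \ref{t2}, forcing $E_n$ not to be \textbf{Type}~\ref{t3}), whereas you run the contrapositive as a downward induction from $s=n-1$; the content is the same.
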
 
	\begin{proof}
		By the definition of $m$, $E_{s+1}\in \mathcal{C}^{(1)} (E_s)$ for all $m+1\leq s\leq n-1$. If the lemma is not true, as the proof of Lemma \ref{Ta},  $E_n$ must  belong to  \textbf{Type} \ref{t1} or  \textbf{Type} \ref{t2}, a contradiction.
	\end{proof}
	Define the domains 
	$$I^{(1)}_{>}:= (E_{n,>})^{-1}\left (J_{i_n}^{(1)}\cap J (E_{n,>})\right), \ I^{(1)}_{<}:= (E_{n,<})^{-1}\left (J_{i_n}^{(1)} \cap J(E_{n,<})\right) $$ 
	and 
	$$I^{(1)}_{\cap}:=I^{(1)}_{>}\cap I^{(1)}_{<}, \ I^{(1)}_{\cup}:=I^{(1)}_{>}\cup I^{ (1)}_{<}.$$
	\begin{lem}\label{in}
		$I^{(1)}_{\cap}\neq \emptyset$. Thus $I^{(1)}_{\cup}$ is a single interval with $\delta_{n-1}^{10}\lesssim |I^{(1)}_{\cup}|\lesssim \delta_{n-1}^5$. Moreover, let  $\theta_{n,>}$ (resp. $\theta_{n,<}$) be  the critical point of $E_{n,>}$ (resp. $E_{n,<}$), then    $\theta_{n,>},\theta_{n,<}\in I^{(1)}_{\cap}$. 
	\end{lem}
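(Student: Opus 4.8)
The plan is to pin down the two critical points, show that they essentially coincide (up to an error that is super-exponentially smaller than $\delta_{n-1}^{10}$), and then read off all three assertions from the two-monotonicity interval structure of the Rellich brothers $E_{n,>}, E_{n,<}$.

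First I would record the exact shapes of the relevant sets. By Hypothesis \ref{h4}, $E_{n,>}$ is $U$-shaped with minimum value $\inf J(E_{n,>})$ attained at $\theta_{n,>}$, and $E_{n,<}$ is $\cap$-shaped with maximum value $\sup J(E_{n,<})$ attained at $\theta_{n,<}$, both critical points lying in the common domain $I(E_{n,>})\cap I(E_{n,<})$. Writing $g:=\inf J(E_{n,>})-\sup J(E_{n,<})$, we have $0\le g\le 3\delta_n$, and $|J(E_{n,\bullet})|\gg \delta_{n-1}^{10}$ by Hypothesis \ref{h1}; unwinding the explicit endpoints in \eqref{SBI} then gives $J_{i_n}^{(1)}\cap J(E_{n,>})=[\inf J(E_{n,>}),\ \inf J(E_{n,>})+\tfrac12\delta_{n-1}^{10}+3\delta_n]$ and $J_{i_n}^{(1)}\cap J(E_{n,<})=[\sup J(E_{n,<})-\tfrac12\delta_{n-1}^{10}-3\delta_n,\ \sup J(E_{n,<})]$. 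In particular both critical values lie in $J_{i_n}^{(1)}$, so $\theta_{n,>}\in I^{(1)}_{>}$ and $\theta_{n,<}\in I^{(1)}_{<}$; moreover, since each set $J_{i_n}^{(1)}\cap J(E_{n,\bullet})$ is an interval abutting a critical value, its preimage under the (piecewise) monotone $E_{n,\bullet}$ is a single interval, so $I^{(1)}_{>}$ and $I^{(1)}_{<}$ are single intervals containing $\theta_{n,>}$ and $\theta_{n,<}$ respectively.

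The heart of the matter is the separation estimate $|\theta_{n,>}-\theta_{n,<}|\le g/\delta_{n-2}\le 3\delta_n/\delta_{n-2}$, which I would extract from the interlacing curves $\lambda_{n,\pm}$ of Hypothesis \ref{h4}. Evaluating $E_{n,<}\le \lambda_{n,\pm}\le E_{n,>}$ at the two critical points gives $\lambda_{n,+}(\theta_{n,>})\le \inf J(E_{n,>})$ and $\lambda_{n,+}(\theta_{n,<})\ge \sup J(E_{n,<})$, hence $\lambda_{n,+}(\theta_{n,>})-\lambda_{n,+}(\theta_{n,<})\le g$; since $\lambda_{n,+}'\ge \delta_{n-2}>0$ this forces $\theta_{n,>}-\theta_{n,<}\le g/\delta_{n-2}$ when $\theta_{n,>}\ge\theta_{n,<}$, and the symmetric argument with the decreasing curve $\lambda_{n,-}$ ($-\lambda_{n,-}'\ge\delta_{n-2}$) covers $\theta_{n,>}\le\theta_{n,<}$. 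Since $3\delta_n/\delta_{n-2}$ is far smaller than $\delta_{n-1}^{10}$ by the super-exponential spacing of the scales, and since $|E_{n,\bullet}'|\le D$ throughout (Feynman--Hellman formula \textbf{(1)}: $E'=\langle\psi,V'\psi\rangle$, $\|V'\|\le D$, $\|\psi\|=1$), the mean value theorem yields $|E_{n,<}(\theta_{n,>})-\sup J(E_{n,<})|=|E_{n,<}(\theta_{n,>})-E_{n,<}(\theta_{n,<})|\le 3D\delta_n/\delta_{n-2}\ll \delta_{n-1}^{10}$, and likewise $|E_{n,>}(\theta_{n,<})-\inf J(E_{n,>})|\ll\delta_{n-1}^{10}$. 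Combined with the first paragraph, this places $\theta_{n,>}$ in $I^{(1)}_{<}$ and $\theta_{n,<}$ in $I^{(1)}_{>}$, so $\theta_{n,>},\theta_{n,<}\in I^{(1)}_{\cap}$ and $I^{(1)}_{\cap}\ne\emptyset$; being overlapping single intervals, $I^{(1)}_{\cup}=I^{(1)}_{>}\cup I^{(1)}_{<}$ is a single interval. (In the level-crossing situation one simply has $\theta_{n,>}=\theta_{n,<}=\theta_s$, so this step is immediate.) For the size bound, the upper estimate $|I^{(1)}_{\cup}|\le |I^{(1)}_{>}|+|I^{(1)}_{<}|\lesssim \delta_{n-1}^{5}$ follows from Lemma \ref{C2} applied to the Morse function $E_{n,\bullet}$ at its critical point, exactly as in the derivation of \eqref{domainesti}: any $\theta\in I^{(1)}_{\bullet}$ satisfies $\tfrac12|\theta-\theta_{n,\bullet}|^2\le |E_{n,\bullet}(\theta)-E_{n,\bullet}(\theta_{n,\bullet})|\le \tfrac12\delta_{n-1}^{10}+3\delta_n$; and the lower estimate $|I^{(1)}_{\cup}|\ge |I^{(1)}_{>}|\gtrsim\delta_{n-1}^{10}$ follows since the monotone branch of $I^{(1)}_{>}$ adjacent to $\theta_{n,>}$ is mapped by $E_{n,>}$ onto an interval of length $\tfrac12\delta_{n-1}^{10}+3\delta_n$ while $|E_{n,>}'|\le D$, so that branch has length $\ge(\tfrac12\delta_{n-1}^{10}+3\delta_n)/D$. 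The only real obstacle I anticipate is bookkeeping — verifying that the intersections $J_{i_n}^{(1)}\cap J(E_{n,\bullet})$ are exactly the claimed intervals with single-interval preimages, justifying the uniform bound $|E_{n,\bullet}'|\le D$ (including across the kink $\theta_s$ in the level-crossing case, where $E_{n,\bullet}$ is still $D$-Lipschitz), and confirming the doubly-exponential comparison $3\delta_n/\delta_{n-2}\ll\delta_{n-1}^{10}$.
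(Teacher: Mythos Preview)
Your proposal is correct and follows essentially the same route as the paper: both arguments use the interlacing curves $\lambda_{n,\pm}$ from Hypothesis~\ref{h4} to bound $|\theta_{n,>}-\theta_{n,<}|\le 3\delta_n/\delta_{n-2}\ll\delta_{n-1}^{10}$, and then the uniform Lipschitz bound $|E_{n,\bullet}'|\le D$ to conclude that each critical point lies in the other preimage (the paper phrases this as $B_{\delta_{n-1}^{10}/2D}(\theta_{n,\bullet})\subset I^{(1)}_{\bullet}$, you phrase it as $E_{n,<}(\theta_{n,>})\in J_{i_n}^{(1)}\cap J(E_{n,<})$, but these are equivalent). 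You are in fact somewhat more thorough than the paper, which leaves the size estimate $\delta_{n-1}^{10}\lesssim|I^{(1)}_{\cup}|\lesssim\delta_{n-1}^{5}$ implicit.
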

	\begin{proof}
		We assume that $\theta_{n,>}\geq \theta_{n,<}$.  (The other case is analogous.) Recalling Hypothesis \ref{h4}, there exist two interlacing functions $\lambda_{n,\pm}$ satisfying 
		$$	E_{n,<} (\theta)\leq \lambda_{n,\pm} (\theta) \leq E_{n,>} (\theta),$$
		$$ \pm \lambda_{n,\pm}' (\theta) \geq \delta_{n-2}.$$  
		Thus \begin{align*}
			\delta_{n-2} (\theta_{n,>}-\theta_{n,<})&\leq  \lambda_{n,+} (\theta_{n,>})-\lambda_{n,+} (\theta_{n,<})\\&\leq E_{n,>} (\theta_{n,>})-E_{n,<} (\theta_{n,<})\leq 3\delta_n.
		\end{align*} 
		From the above inequality, we obtain 
		\begin{equation}\label{91}
			|\theta_{n,>}-\theta_{n,<}|\leq  3\delta_n \delta_{n-2}^{-1}\ll\delta_{n-1}^{10}.
		\end{equation}
		Since $|J_{i_n}^{(1)}\cap J (E_{n,>})|\geq \frac{1}{2} \delta_{n-1}^{10}$, by the uniform bound of $|v'|$, we have 
		\begin{equation}\label{92}
			B_{\frac{\delta_{n-1}^{10}}{2D}} (\theta_{n,>})\subset I^{(1)}_{>}.
		\end{equation}
		Similarly, 
		\begin{equation}\label{93}
			B_{\frac{\delta_{n-1}^{10}}{2D}} (\theta_{n,<})\subset I^{(1)}_{<}.
		\end{equation}
		The lemma follows from \eqref{91}, \eqref{92} and \eqref{93} immediately.
	\end{proof}
	We choose a block  a block $B_{n+1}^{(1)}$ satisfying 
	$$\Lambda_{l_{n+1}^{(1)}}\subset B_{n+1}^{(1)} \subset \Lambda_{l_{n+1}^{(1)}+50l_n}$$
	such that for all $\theta\in I^{(1)}_{\cup}$,  $B_{n+1}^{(1)}$ is $(n-1)$-strongly regular relative to $ (\theta,E)$ for all $E\in B_{2\delta_{n-1}} (E_{n,>} (\theta))$. 
	
	\begin{prop}\label{k2n+}
		The following holds  for $\theta\in I^{(1)}_{\cup}$:
		\begin{itemize}
			\item[\textbf{(a).}]  $H_{B_{n+1}^{(1)}} (\theta)$ has  two  eigenvalues $E_{n+1,>} (\theta)$ and $ E_{n+1,<} (\theta)$ with $E_{n+1,>} (\theta)\geq  E_{n+1,<} (\theta)$ satisfying 
			\begin{align}\label{ex}
				\begin{split}
					|E_{n+1,>} (\theta)- E_{n,>} (\theta)|\lesssim e^{-\frac{\gamma_0}{4}l_n}\ll\delta_n^{10},\\
					|E_{n+1,<} (\theta)- E_{n,<} (\theta)|\lesssim e^{-\frac{\gamma_0}{4}l_n}\ll\delta_n^{10}.
				\end{split}	
			\end{align}
			Moreover, any other $\hat{E}\in\sigma (H_{B_{n+1}^{(1)}} (\theta)) $ must obey $|\hat{E}-E_{n,>} (\theta)|\geq \delta_{n-1}$.
			\item [\textbf{(b).}] The corresponding   eigenfunction of $E_{n+1,>}${\rm\   (resp. $E_{n+1,<}$)}, $\psi_{n+1,>}${\rm\  (resp. $\psi_{n+1,<}$)} decays exponentially fast away from $o$ and $k_m$, i.e.,
			$$|\psi_{n+1,>} (x)|\leq e^{-\frac{\gamma_0}{4}\|x\|_1}+e^{-\frac{\gamma_0}{4}\|x-k_m\|_1},$$
			$$|\psi_{n+1,<} (x)|\leq e^{-\frac{\gamma_0}{4}\|x\|_1}+e^{-\frac{\gamma_0}{4}\|x-k_m\|_1}$$
			for $\operatorname{dist} (x,\{o,k_m\})\geq l_m^{6/7}.$	
			Moreover, the two eigenfunctions can be expressed as 
			\begin{align}\label{yn+}
				\begin{split}
					&\psi_{n+1,>}=A\psi_{m,-}+B\psi_{m,+}+O (\delta_m^{10}),\\
					&\psi_{n+1,<}=B\psi_{m,-}-A\psi_{m,+}+O (\delta_m^{10}),
				\end{split}	
			\end{align}
			where $A,B$ depend on $\theta$ satisfying $A^2+B^2=1$.
			\item [\textbf{(c).}] $\|G_{n+1}^{\perp\perp} (E_{n+1,>})\|,\|G_{n+1}^{\perp\perp} (E_{n+1,<})\|\leq2\delta_{n-1}^{-1}$, where $G_{n+1}^{\perp\perp}$ denotes the Green's function for $B_{n+1}^{(1)}$ on the orthogonal complement of the space spanned by $\psi_{n+1,>}$ and $\psi_{n+1,<}$.
		\end{itemize}
	\end{prop}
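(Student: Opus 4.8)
The plan is to reproduce, with only bookkeeping changes, the analysis of the double resonance case at scale $m$ carried out in Proposition \ref{k2n}. By Lemma \ref{Ta} and the two lemmas following it, $E_n$ descends through a chain of \textbf{Type} \ref{t3} ancestors from the Rellich pair $E_{m+1,>},E_{m+1,<}\in\mathcal{C}^{(2)}(E_m)$ produced by the scale-$m$ double resonance $e_{k_m}=E_m(\theta_{k_m,-})=E_m(\theta_{k_m,-}+k_m\cdot\omega)$ with $10l_m<\|k_m\|_1\leq 10l_{m+1}$; in particular $B_{m,-}:=B_m$ and $B_{m,+}:=B_m+k_m$ are disjoint, the intermediate resonances (at scales $m+1,\dots,n$) are all simple, $B_{n+1}^{(1)}$ contains full copies of $B_{m,-}$ and $B_{m,+}$, and $I^{(1)}_{\cup}$ is a single sub-interval of $I^{(2)}_{k_m,\cup}$ around $\theta_{k_m,-}$ (Lemma \ref{in}). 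Thus on $I^{(1)}_{\cup}$ the operator $H_{B_{n+1}^{(1)}}(\theta)$ plays exactly the role that $H_{B_{m+1}^{(2)}}(\theta)$ did at scale $m$, and I would recycle that proof almost verbatim.

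For part \textbf{(a)}, I would first localize to the base point $\theta^*=\theta_{k_m,-}$, where $E_{m,-}(\theta^*)=E_{m,+}(\theta^*)$, and take $\psi_{m,-},\psi_{m,+}$ (which are orthogonal, being supported in disjoint blocks) as trial functions: by the inductive decay \eqref{Hdecay} at scale $m$ one has $\|(H_{B_{n+1}^{(1)}}(\theta^*)-E_{m,-}(\theta^*))\psi_{m,\pm}\|=\|\Gamma_{B_{m,\pm}}\psi_{m,\pm}\|\lesssim e^{-\frac{\gamma_0}{4}l_m}\ll\delta_m^{10}$, so Lemma \ref{trialcor} produces two eigenvalues of $H_{B_{n+1}^{(1)}}(\theta^*)$ within $O(e^{-\frac{\gamma_0}{4}l_m})$ of $E_{m,-}(\theta^*)$. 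Extending to $\theta\in I^{(1)}_{\cup}$ via $|I^{(1)}_{\cup}|\lesssim\delta_{n-1}^{5}$ and $|v'|\leq D$, and comparing $E_{n,>},E_{n,<}$ with $E_{m,\vee},E_{m,\wedge}$ along the \textbf{Type} \ref{t3} chain (an iterate of the estimate \eqref{apbpn}), one gets two eigenvalues obeying \eqref{ex}. The separation statement --- that any third $\hat E\in\sigma(H_{B_{n+1}^{(1)}}(\theta))$ has $|\hat E-E_{n,>}(\theta)|\geq\delta_{n-1}$ --- then follows from the two-bump expansion \eqref{yn+} together with orthogonality, exactly as in Proposition \ref{k2n}, and \textbf{(c)} is immediate from \textbf{(a)}.

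The real work is \textbf{(b)} and \eqref{yn+}. Here the point to establish is that on $I^{(1)}_{\cup}$, for every energy $E\in B_{2\delta_{n-1}}(E_{n,>}(\theta))$ and every scale $0\leq i\leq n-1$, the set of $i$-resonant points (relative to the scale-$i$ ancestor of $E_n$) contained in $B_{n+1}^{(1)}$ is $\subseteq\{o,k_m\}$. I would prove this as in Propositions \ref{DRpro}, \ref{DRpron} and \ref{SRnpro}: for $i\geq 1$ the domain $I(E_i)$ has at most two connected components, each of super-exponentially small length, each meeting exactly one of $o,k_m$, so by the Diophantine condition a third site $x$ with $\theta+x\cdot\omega\in I(E_i)$ would force $\|x\cdot\omega\|$ or $\|(x-k_m)\cdot\omega\|$ below its lower bound $\gtrsim(l_{n+1}^{(1)})^{-\tau}$, a contradiction; for $i=0$ one argues directly from the two-monotonicity structure of $v$, as in Proposition \ref{DRpro}. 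Granting this, remove from $B_{n+1}^{(1)}$ the $O(l_m^{2/3})$-size blocks around $o$ and $k_m$; the remaining set $\Lambda$ is (trivially) $m$-good relative to $(\theta,E)$, the $m$-regularity being supplied by the $(n-1)$-strong regularity of $B_{n+1}^{(1)}$, so $G_\Lambda$ has off-diagonal $\gamma_m$-decay by Hypothesis \ref{h6}. Restricting the eigenequation $(H_{B_{n+1}^{(1)}}(\theta)-E_{n+1,\bullet}(\theta))\psi_{n+1,\bullet}=0$ ($\bullet\in\{>,<\}$) to $\Lambda$ yields the claimed decay of $\psi_{n+1,\bullet}$ away from $\{o,k_m\}$, and restricting it instead to the full blocks $B_{m,-}$ and $B_{m,+}$ --- using the separation bound $\|G_{B_{m,\pm}}^{\perp}(E)\|\lesssim\delta_{m-1}^{-1}$ supplied by Proposition \ref{kn} at the appropriate lower scale (Proposition \ref{k1}/Theorem \ref{0ge} in the base cases $m\leq1$) --- forces $\psi_{n+1,\bullet}\chi_{B_{m,\pm}}$ to be a multiple of $\psi_{m,\pm}$ up to $O(\delta_m^{10})$; normalizing and imposing $\psi_{n+1,>}\perp\psi_{n+1,<}$ then produces \eqref{yn+}, exactly as in Proposition \ref{k2n}\textbf{(b)}.

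The main obstacle I anticipate is precisely the uniform nonresonance statement above: the lemmas already established (Propositions \ref{DRpron}, \ref{SRnpro}) control sites only of norm up to $(l_{i+1}^{(1)})^{4}$, whereas $B_{n+1}^{(1)}$ has size $l_{n+1}^{(1)}\gg(l_{i+1}^{(1)})^{4}$ for $i<n$, so the Diophantine/Morse exclusion must be rerun uniformly over the whole block and over all intermediate scales $m\leq i\leq n-1$. This is exactly where one uses that $m$ is the \emph{largest} scale at which a double resonance occurred among the ancestors of $E_n$: beyond scale $m$ the chain is purely simple-resonant, the domains $I(E_i)$ shrink monotonically, and the resonant set stays pinned at $\{o,k_m\}$, so no new near-collapse can be created inside $B_{n+1}^{(1)}$.
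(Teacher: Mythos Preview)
Your proposal diverges from the paper's proof in two places, and in both the discrepancy is not just stylistic.

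\textbf{Part (a).} The paper does \emph{not} use the scale-$m$ functions $\psi_{m,\pm}$ as trial functions. It uses the scale-$n$ eigenfunctions $\psi_{n,>},\psi_{n,<}$ of $H_{B_n}(\theta)$, which are available by the inductive hypothesis \eqref{Hdecay}; since $B_n\subset B_{n+1}^{(1)}$ and $|\psi_{n,\bullet}(x)|\leq e^{-\gamma_0\|x\|_1/4}$ for $\|x\|_1\geq l_n^{6/7}$, one gets $\|(H_{B_{n+1}^{(1)}}-E_{n,\bullet})\psi_{n,\bullet}\|=\|\Gamma_{B_n}\psi_{n,\bullet}\|\lesssim e^{-\gamma_0 l_n/4}$ directly, and a short case analysis (disjoint vs.\ overlapping trial windows) yields \eqref{ex}. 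Your route via $\psi_{m,\pm}$ only places two eigenvalues within $O(e^{-\gamma_0 l_m/4})$ of $E_{m,-}(\theta)$, and the ``chain comparison'' you invoke likewise controls $|E_{n,\bullet}-E_{m,\bullet}|$ only to order $e^{-\gamma_0 l_m/4}$; combining these cannot produce the sharper bound $e^{-\gamma_0 l_n/4}$ required in \eqref{ex}. This matters downstream (e.g.\ in the domain adjustment and in \eqref{apl}).

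\textbf{Part (b).} The paper does \emph{not} remove two small blocks and declare the remainder $m$-good. Instead it uses a genuinely multi-scale decomposition: for $\|x\|_1\leq l_{m+2}/2$ it restricts to the $(m-1)$-good set $B_{m+2}\setminus(B_{m-1}\cup(B_{m-1}+k_m))$, and for $l_s/2\leq\|x\|_1\leq l_{s+1}/2$ with $m+2\leq s\leq n$ it restricts to the annulus $B_{s+1}\setminus B_{s-1}$, which is $(s-1)$-nonresonant because $I(E_{s-1})$ is a \emph{single} interval of length $\lesssim\delta_{s-3}^5\ll l_{s+1}^{-\tau}$ (so no translate $\theta+x\cdot\omega$ with $x\in B_{s+1}\setminus B_{s-1}$ can land in it). Your assertion that for \emph{every} $0\leq i\leq n-1$ the $i$-resonant points in the whole of $B_{n+1}^{(1)}$ lie in $\{o,k_m\}$ is too strong: for small $i$ (in particular $i\leq m$) the Diophantine separation $\gtrsim(l_{n+1}^{(1)})^{-\tau}$ need not dominate $\delta_i^{1/2}$ once $n-m$ is large, so a single scale-$m$ Green's function on $B_{n+1}^{(1)}\setminus(\text{two small blocks})$ is not available. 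The annulus trick avoids this by always matching the Green's-function scale $s-1$ to the annulus size $l_{s+1}$, so that the relevant comparison is $l_{s+1}^{-\tau}$ versus $|I(E_{s-1})|\lesssim\delta_{s-3}^5$, which does hold.

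The derivation of \eqref{yn+} by restricting to $B_{m,\pm}$ and using $\|G_{B_{m,\pm}}^\perp(E)\|\lesssim\delta_{m-1}^{-1}$ is the same as in the paper; here the needed input is $|E_{n+1,\bullet}-E_{m,-}|\lesssim|J_{i_{m+1}}^{(1)}|\lesssim\delta_{m-1}^5$, together with the scale-$(m-1)$ separation (Proposition \ref{kn} at level $m-1$).
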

	\begin{proof}
		Fix  $\theta\in I_{\cup}^{(1)}$. Denote by $\psi_{n,>}$ (resp. $\psi_{n,<}$) the eigenfunction corresponding to $E_{n,>}$ (resp. $E_{n,<}$)  of the restriction operator $H_{B_n}$. By the exponential decay \eqref{Hdecay} of the eigenfunctions from the inductive hypothesis, we have 
		$$\| (H_{B_{n+1}^{(1)}}-E_{n,>})\psi_{n,>}\|=\|\Gamma_{B_{n}}\psi_{n,>}\|\lesssim e^{-\frac14 \gamma_0l_n}, $$ 
		$$\| (H_{B_{n+1}^{(1)}}-E_{n,<})\psi_{n,<}\|=\|\Gamma_{B_{n}}\psi_{n,<}\|\lesssim e^{-\frac14 \gamma_0l_n}.$$ 
		The above estimates together with Lemma \ref{trialcor} yield   $H_{B_{n+1}^{(1)}}$ must have an eigenvalue in $[E_{n,>}-O (e^{-\frac14 \gamma_0l_n}), E_{n,>}+O (e^{-\frac14 \gamma_0l_n})]$ and in $[E_{n,<}-O (e^{-\frac14 \gamma_0l_n}), E_{n,<}+O (e^{-\frac14 \gamma_0l_n})]$. If the  two intervals are disjoint, \eqref{ex} must hold.
		Otherwise, we have \begin{equation}\label{othern+}
			E_{n,>}-E_{n,<}\lesssim e^{-\frac14 \gamma_0l_n}.
		\end{equation} Thus we have two trial functions  for  $E_{n,>}$: 
		$$\| (H_{B_{n+1}^{(1)}}-E_{n,>})\psi_{n,>}\|\lesssim e^{-\frac14 \gamma_0l_n},$$
		$$\| (H_{B_{n+1}^{(1)}}-E_{n,>})\psi_{n,<}\|\leq |	E_{n,>}-E_{n,<}|+ \| (H_{B_{n+1}^{(1)}}-E_{n,<})\psi_{n,<}\| \lesssim e^{-\frac14 \gamma_0l_n}.$$ 
		By Lemma \ref{trialcor},  $H_{B_{n+1}^{(1)}}$ must have two eigenvalues in $[E_{n,>}-O (e^{-\frac14 \gamma_0l_n}), E_{n,>}+O (e^{-\frac14 \gamma_0l_n})].$  Combining \eqref{othern+}, we also obtain \eqref{ex}. Thus we finish  the existence part of \textbf{(a)}.
		
		We now prove \textbf{(b)}. Assume $E\in\sigma (H_{B_{n+1}}^{(1)} (\theta))$ with $|E-E_{n,>} (\theta)|\leq \delta_{n-1}$. Fix  $\operatorname{dist} (x,\{o,k_m\})\geq l_m^{6/7}$. Suppose 	 $\|x\|_1\leq l_{m+2}/2$,  restricting  the eigenfunction equation to the $(m-1)$-good set  $\Lambda:=B_{m+2}\setminus  (B_{m-1}\cup (B_{m-1}+k_m))$ together with the exponential decay of $G_\Lambda (E)$ yields the desired decay of eigenfunctions.  Suppose  $l_{s}/2\leq\|x\|_1\leq l_{s+1}/2$ for some  $m+2\leq s\leq  n-1$ or $l_{n}/2\leq\|x\|_1$ for $s=n$, restricting  the eigenfunction equation to the $(s-1)$-good set  $\Lambda:=B_{s+1}\setminus B_{s-1}$ together with the exponential decay of $G_\Lambda (E)$ yields the desired decay.  
		We explain that the above  set $B_{s+1}\setminus B_{s-1}$ is  $(s-1)$-nonresonant for $m+2\leq s\leq n$ since $I (E_{s-1})$ is a single interval satisfying   $| I (E_{s-1})|\lesssim\delta_{s-3}^5$, which together with Diophantine condition yields $\theta+x\cdot \omega \notin I (E_{s-1})$ for all $x\in B_{s+1}\setminus B_{s-1}$. The  $(s-1)$-regularity of $B_{s+1}\setminus B_{s-1}$ is confirmed by the construction.
		
		To prove \eqref{yn+}, it suffices to show $\psi_{n+1,>}$ and $\psi_{n+1,<}$ are close to a linear combination of $\psi_{m,-}$ and $\psi_{m,+}$ inside $B_{m,-}\cup B_{m,+}$.
		We restrict the equation $H_{B_{n+1}^{(1)}} (\theta)\psi_{n+1,>}=E_{n+1,>} (\theta)\psi_{n+1,>}$ on $B_{m,-}$ to get 
		$$\left(H_{B_{m,-}}-E_{n+1,>}\right)\psi_{n+1,>}=\Gamma_{B_{m,-}}\psi_{n+1,>}.$$
		Thus  $$\|P_{m,-}^\perp\psi_{n+1,>}\|=\|G_{B_{m,-}}^\perp (E_{n+1,>})P_{m,-}^\perp\Gamma_{B_{m,-}}\psi_{n+1,>}\|=O (\delta_{m-1}^{-1}e^{-\frac{1}{4}\gamma_0l_m})\leq\delta_m^{10},$$
		where $P_{m,-}^\perp=I-\langle\psi_{m,-}|\psi_{m,-}\rangle$ is the projection onto the orthogonal complement of $\psi_{m,-}$ and $G_{B_{m,-}}^\perp (E_{n+1,>})$ is the Green's function of  $B_{m,-}$ on $\operatorname{Image}P_{m,-}^\perp$ with upper bound $O (\delta_{m-1}^{-1})$ by $|E_{n+1,>} (\theta)-E_{m,-} (\theta)|\lesssim|J_{i_{m+1}}^{(1)}|\lesssim\delta_{m-1}^5$ together with  the eigenvalue separation  result of simple resonance case (c.f. Proposition \ref{kn} replacing $n$ by $m-1$). 
		Thus  
		$$\psi_{n+1,>}\chi_{B_{m,-}}=a\psi_{m,-}+O (\delta_m^{10}).$$
		Similarly, 
		$$\psi_{n+1,>}\chi_{B_{m,+}}=b\psi_{m,+}+O (\delta_m^{10}).$$
		The above  estimates together with the exponential  decay of the eigenfunction and orthogonality of  $\psi_{n+1,>}$ and $\psi_{n+1,<}$ yield \eqref{yn+}. 
		
		Since the  eigenfunction corresponding to $E$ in the interval $|E-E_{n,>} (\theta)|\leq  \delta_{n-1}$ has  the form \eqref{yn+}, there are only  two such eigenvalues $E_{n+1,>}$ and $ E_{n+1,<}$ or else it will violate the orthogonality.
		
		Finally,  \textbf{(c)} follows from \textbf{(a)} and  $|E_{n+1,>} (\theta)- E_{n,>} (\theta)|, |E_{n+1,<} (\theta)- E_{n,>} (\theta)| \lesssim\delta_{n-1}^5$ immediately.
	\end{proof}
	
	Define 
	$$P_{m,\pm}:=\langle\psi_{m,\pm}|\psi_{m,\pm}\rangle$$ and  $$Q_{s,\pm}:=\operatorname{Id}_{B_{s}}-P_{m,\pm},\ m+1\leq s\leq n+1. $$
	and consider the  compressed operators 
	$$H_{s,\pm} (\theta):=Q_{s,\pm} H_{B_s} (\theta) Q_{s,\pm}.$$
	{
		In the following proposition, we will  use a multi-scale interlacing argument to construct the interlacing functions of $E_{s,>} (\theta)$ and $E_{s,<} (\theta)$ for all scales $m+1\leq s\leq n+1$. Unlike Lemma \ref{it} which starts  the construction at  the scale $n$, the following proposition  starts  the construction  at the scale $m$ to obtain the interlacing functions $\lambda_{s,\mp} (\theta)$ ($m+1\leq s\leq n+1$) scale by scale.}
	\begin{prop}\label{mul}
		For $m+1\leq s\leq n+1$ and  $\theta\in I^{(1)}_{\cup}$, the operators $H_{s,\pm} (\theta)$ have unique eigenvalues $\lambda_{s,\mp} (\theta)$   interlacing $E_{s,>} (\theta)$ and $E_{s,<} (\theta)$: 
		\begin{equation}\label{intern+}
			E_{s,<} (\theta)\leq \lambda_{s,\mp} (\theta) \leq E_{s,>} (\theta),
		\end{equation}
		and the eigenfunctions $\phi_{s,\mp}$ corresponding to $\lambda_{s,\mp}$ decay exponentially:
		\begin{equation}\label{interexp}
			|\phi_{s,\mp} (x)|\leq e^{-\frac{\gamma_0}{5}\|x\|_1}, \  
			\|x\|_1\geq l_s^{\frac{7}{8}}. 
		\end{equation}
		Moreover,  we have
		\begin{equation}\label{Epln+}
			|\lambda_{s,\pm}' (\theta)-E_{m,\pm}' (\theta)|\lesssim e^{-\frac{1}{5}\gamma_0l_m}\delta_{m-1}^{-1}\ll \delta_m^{10}.
		\end{equation}
		In particular,
		\begin{equation}\label{tranln+}
			\pm \lambda_{s,\pm}' (\theta)\gtrsim  \delta_{m-2}^{10}\gg \delta_{m-1}.
		\end{equation}
	\end{prop}
	\begin{proof}
		We prove \eqref{intern+}, \eqref{interexp} by induction. The case $s=m+1$ was proved in Lemma \ref{it} replacing $n$ by $m$. Assuming  that  $\lambda_{s,\mp}$ have been  constructed, we are going to construct $\lambda_{s+1,\mp}$.  First we claim that  $H_{s+1,\pm} (\theta)$ have  unique eigenvalues $\lambda_{s+1,\mp} (\theta)$ in  $[\lambda_{s,\mp} (\theta)-\delta_{s-1},\lambda_{s,\mp} (\theta)+\delta_{s-1}]$ satisfying 
		\begin{equation}\label{apl}
			|\lambda_{s+1,\mp} (\theta)-\lambda_{s,\mp} (\theta)|\lesssim e^{-\frac15 \gamma_0l_s}.
		\end{equation} Let $\phi_{s,\mp}\in \operatorname{Image}Q_{s,\pm}$ be the eigenfunctions of $H_{s,\pm} (\theta)$ corresponding to $\lambda_{s,\mp} (\theta)$.
		By the exponential decay \eqref{interexp} for $\phi_{s,\mp}$, we have 
		\begin{align*}
			\| (H_{s+1,\pm} (\theta)-\lambda_{s,\mp} (\theta))\phi_{s,\mp}\|&=\|Q_{s+1,\pm} ( H_{B_{s+1}} (\theta)-\lambda_{s,\mp} (\theta))\phi_{s,\mp}\| \\ 
			&=\|Q_{s+1,\pm}\Gamma_{B_{s}}\phi_{s,\mp}\|\lesssim e^{-\frac15 \gamma_0l_s}.
		\end{align*}
		The above estimate together with Lemma \ref{trialcor} yields the existence of $\lambda_{s+1,\mp}$.
		
		Next we show the uniqueness and  they are the desired interlacing curves.  Let $\hat{\lambda}\in \sigma (H_{s+1,\pm})$ satisfy   $|\hat{\lambda}-\lambda_{s,\mp}|\leq \delta_{s-1}$  with  corresponding eigenfunction $\hat{\phi}\in \operatorname{Image}Q_{s+1,\pm}$. Then we have 
		$$Q_{s+1,\pm} ( H_{B_{s+1}}-\hat{\lambda})\hat{\phi}=0.$$
		Thus$$  ( H_{B_{s+1}}-\hat{\lambda})\hat{\phi}=a\psi_{m,\pm}.$$
		Employing  the same argument as the proof of item \textbf{(b)} of Proposition \ref{k2n+},   together with the exponential decay of $\psi_{m,\pm}$, one can prove that 
		$$|\hat{\phi} (x)|\leq e^{-\frac{\gamma_0}{5}\|x\|_1}+e^{-\frac{\gamma_0}{5}\|x-k_m\|_1}$$ 
		for $\operatorname{dist} (x,\{o,k_m\})\geq l_m^{7/8}$ and 	 $\hat{\phi}=\psi_{m,\mp}+O ( e^{-\frac15 \gamma_0l_m}\delta_{m-1}^{-1})$ since  $\hat{\phi}\in \operatorname{Image}Q_\pm$.
		Thus such an eigenvalue must be unique or else it will violate the  orthogonality. 
		Since   $E_{s,<} (\theta)\leq \lambda_{s,\mp} (\theta) \leq E_{s,>} (\theta)$ and $E_{s+1,<},E_{s+1,>}$ are generated from the interval $J_{i_s}^{(1)}$ with $|J_{i_s}^{(1)}|\ll\delta_{s-1}$,  $\lambda_{s+1,\mp} (\theta)$ must lie between $E_{s+1,<} (\theta)$ and  $ E_{s+1,>} (\theta)$ by Theorem \ref{cahuchy}. Thus we finish the proof of induction.
		
		By the Feynman-Hellman formula \textbf{(1)} in Lemma \ref{daoshu}, we have 
		\begin{align*}
			|\lambda_{s+1,\mp}'-E_{m,\mp}'|&=|\left\langle\phi_{s+1,\mp}, (H_{s+1,\pm})' \phi_{s+1,\mp}
			\right\rangle-\left\langle\psi_{m,\mp},H' \psi_{m,\mp}\right\rangle|\\
			&=|\left\langle\phi_{s+1,\mp},V' \phi_{s+1,\mp}
			\right\rangle-
			\left\langle\psi_{m,\mp},V' \psi_{m,\mp}\right\rangle -2\langle\phi_{s+1,\mp}, (P_{m,\pm})'  H_{B_{s+1}} \phi_{s+1,\mp}
			\rangle|\\
			&\leq |\left\langle\phi_{s+1,\mp},V' \phi_{s+1,\mp}
			\right\rangle-
			\left\langle\psi_{m,\mp},V' \psi_{m,\mp}\right\rangle| +2|\langle\phi_{s+1,\mp}, (P_{m,\pm})'  H_{B_{s+1}} \phi_{s+1,\mp}
			\rangle|\\
			&\lesssim e^{-\frac15 \gamma_0l_m}\delta_{m-1}^{-1} + |\langle\phi_{s+1,\mp},Q_{s+1,\pm} (P_{m,\pm})'  H_{B_{s+1}} Q_{s+1,\pm}\phi_{s+1,\mp}\rangle|,
		\end{align*}
		where we use the estimate $\|\phi_{s+1,\mp}-\psi_{m,\mp}\|\lesssim  e^{-\frac15 \gamma_0l_m}\delta_{m-1}^{-1}$ to bound the first term on the third  line of the inequality.
		
		Thus it suffices to estimate $Q_{s+1,\pm}P_{m,\pm}'  H_{B_{s+1}} Q_{s+1,\pm}$.  
		Recall that 
		\begin{align*}
			P_{m,\pm}'&=\langle\psi_{m,\pm} |\psi_{m,\pm}'\rangle+\langle\psi_{m,\pm}'|\psi_{m,\pm}\rangle \\
			&=-G_{B_{m,\pm}}^\perp (E_{m,\pm})V'P_{m,\pm}-P_{m,\pm}V'G_{B_{m,\pm}}^\perp (E_{m,\pm}).
		\end{align*}
		Since $	Q_{s+1,\pm}P_{m,\pm}=P_{m,\pm}	Q_{s+1,\pm}=0$,  it follows that 
		\begin{equation}\label{699}
			Q_{s+1,\pm}P_{m,\pm}'  H_{B_{s+1}} Q_{s+1,\pm}=-Q_{s+1,\pm}G_{B_{m,\pm}}^\perp (E_{m,\pm})V'P_{m,\pm}  H_{B_{s+1}} Q_{s+1,\pm}.
		\end{equation}
		We can employ the  estimates 
		\begin{align*}
			\|	P_{m,\pm}  H_{B_{s+1}} Q_{s+1,\pm}	\|&=\|P_{m,\pm} (  H_{B_{s+1}}-E_{m,\pm} )Q_{s+1,\pm}\|	\\
			&\leq \|P_{m,\pm} (  H_{B_{s+1}}-E_{m,\pm} )\|	\\
			&\leq \|\Gamma_{B_{m,\pm}}\psi_{m,\pm}\|\lesssim e^{-\frac15 \gamma_0l_m}
		\end{align*} 
		and $\|G_{B_{m,\pm}}^\perp (E_{m,\pm})\|=O (\delta_{m-1}^{-1})$ to \eqref{699} and complete the proof of \eqref{Epln+}. 
		
		Finally, \eqref{tranln+} follows from $\pm E_{m,\pm}'\gtrsim \delta_{m-2}^{10}$  and \eqref{Epln+} immediately.
	\end{proof}
	
	\begin{lem}\label{crossn+}
		With the notation from Lemma \ref{in},  there is a unique point $\bar{\theta}\in I^{(1)}_{\cup}$ satisfying $|\bar{\theta}-\theta_{n,>}|\lesssim\delta_n\delta_{n-2}^{-2}$, such that 
		$$\lambda_{n+1,+} (\bar{\theta})=\lambda_{n+1,-} (\bar{\theta}).$$	
		Moreover, the Rellich children have  quantitative separation away from $\bar{\theta}$: 
		\begin{equation}\label{E1sn+}
			E_{n+1,>} (\theta)-E_{n+1,<} (\theta)\geq |\lambda_{n+1,+} (\theta)-\lambda_{n+1,-} (\theta)|\geq \delta_{m-1}|\theta-\bar{\theta}|.
		\end{equation}
		As a corollary, if  $E_{n+1,>} (\theta)=E_{n+1,<} (\theta)$, then $\theta=\bar{\theta}$.
	\end{lem}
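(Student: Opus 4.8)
The statement is the exact analogue of Lemma \ref{cross}/Lemma \ref{crossn} for the present scale, with the interlacing curves $\lambda_{n+1,\pm}$ replacing the previously used $\lambda_\pm$, so I would follow the same two-step recipe: first locate the crossing of the two interlacing curves by a mean-value argument, then transfer the separation to $E_{n+1,>}$ and $E_{n+1,<}$ via the interlacing inequality \eqref{intern+}. Concretely, consider the difference function
\[
d(\theta):=\lambda_{n+1,+}(\theta)-\lambda_{n+1,-}(\theta).
\]
By \eqref{Epln+} and the identity $E_{m,+}'=-rE_{m,-}'$ at $\theta_{k_m,-}$ I do not directly get that $d$ is small at $\theta_{n,>}$; instead I would first estimate $d$ at a convenient reference point. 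The natural reference point here is the critical point $\theta_{n,>}$ of $E_{n,>}$ (equivalently $\theta_{n,<}$, which by \eqref{91} is within $3\delta_n\delta_{n-2}^{-1}$ of it and lies in $I^{(1)}_{\cap}$ by Lemma \ref{in}). At $\theta_{n,>}$ we have $E_{n,>}(\theta_{n,>})-E_{n,<}(\theta_{n,<})\le 3\delta_n$, and combining the interlacing $E_{n,<}\le\lambda_{n+1,\pm}\le E_{n,>}$ inherited through \eqref{intern+} (plus the $e^{-\gamma_0 l_n/4}$-closeness \eqref{ex} of $E_{n+1,\bullet}$ to $E_{n,\bullet}$ and the $e^{-\gamma_0 l_m/5}$-closeness in \eqref{Epln+}) with \eqref{91}, I obtain $|d(\theta_{n,>})|\lesssim \delta_n$, or more precisely $|d(\theta_{n,>})|\lesssim \delta_n + \delta_n\delta_{n-2}^{-1}D \lesssim \delta_n\delta_{n-2}^{-1}$.

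**Locating $\bar\theta$.** Next, \eqref{tranln+} gives $\pm\lambda_{n+1,\pm}'(\theta)\gtrsim\delta_{m-2}^{10}\gg\delta_{m-1}$, hence
\[
d'(\theta)=\lambda_{n+1,+}'(\theta)-\lambda_{n+1,-}'(\theta)\ge\delta_{m-1}
\]
for all $\theta\in I^{(1)}_{\cup}$, so $d$ is strictly increasing. Since $|I^{(1)}_{\cup}|\gtrsim\delta_{n-1}^{10}$ by Lemma \ref{in} and $d$ ranges over an interval of length at least $\delta_{m-1}|I^{(1)}_{\cup}|\gg|d(\theta_{n,>})|$, the intermediate value theorem produces a unique zero $\bar\theta$ of $d$ in $I^{(1)}_{\cup}$; I must check $\bar\theta$ actually lies in the interior and not merely abstractly exists, which follows because the interval is long compared to $|d(\theta_{n,>})|/\delta_{m-1}$. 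The quantitative bound $|\bar\theta-\theta_{n,>}|\le |d(\theta_{n,>})|/d'(\bar\theta)\lesssim \delta_n\delta_{n-2}^{-1}\delta_{m-1}^{-1}$; I then need to absorb this into the stated $\lesssim\delta_n\delta_{n-2}^{-2}$. Since $m\le n-1$ we have $\delta_{m-1}^{-1}\le\delta_{n-2}^{-1}$, giving $|\bar\theta-\theta_{n,>}|\lesssim\delta_n\delta_{n-2}^{-2}$ as claimed.

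**Transferring the separation.** For the separation estimate \eqref{E1sn+}, by mean value theorem and $d'\ge\delta_{m-1}$,
\[
|\lambda_{n+1,+}(\theta)-\lambda_{n+1,-}(\theta)|=|d(\theta)-d(\bar\theta)|\ge\delta_{m-1}|\theta-\bar\theta|,
\]
and then \eqref{intern+} gives $E_{n+1,>}(\theta)-E_{n+1,<}(\theta)\ge\lambda_{n+1,+}(\theta)-\lambda_{n+1,-}(\theta)$ (noting $\lambda_{n+1,+}\ge\lambda_{n+1,-}$ for $\theta>\bar\theta$ and the reverse for $\theta<\bar\theta$, so the absolute value is what appears). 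The corollary is immediate: if $E_{n+1,>}(\theta)=E_{n+1,<}(\theta)$ then the left side of \eqref{E1sn+} vanishes, forcing $|\theta-\bar\theta|=0$.

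**Main obstacle.** The routine parts (MVT, interlacing) are mechanical; the one place requiring care is the estimate of $|d(\theta_{n,>})|$, because unlike in the first-scale Lemma \ref{cross} — where the reference point $\theta_{k,-}$ satisfies $E_{0,+}(\theta_{k,-})=E_{0,-}(\theta_{k,-})$ exactly — here $\theta_{n,>}$ is \emph{not} a point where the two previous-scale curves agree, so I cannot invoke \eqref{Epln+} at $s=0$ to conclude $d$ is exponentially small. Instead I must route the bound through the \textbf{Type} \ref{t3} structure of $E_n$ (the gap $\le 3\delta_n$ between the critical values of $E_{n,>}$ and $E_{n,<}$, together with the pre-existing interlacing functions $\lambda_{n,\pm}$ from Hypothesis \ref{h4}) and the closeness estimates \eqref{ex}, \eqref{Epln+}, which only gives $|d(\theta_{n,>})|\lesssim\delta_n\delta_{n-2}^{-1}$ rather than $O(\varepsilon)$. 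This weaker (polynomial in the resonance scale) bound is exactly why the final localization of $\bar\theta$ is $\delta_n\delta_{n-2}^{-2}$ rather than $\delta_n^{10}$ as in the double-resonance case, and I would make sure all the comparisons $\delta_n\delta_{n-2}^{-2}\ll\delta_{n-1}^{10}\lesssim|I^{(1)}_\cup|$ are explicitly verified so that $\bar\theta$ genuinely sits inside $I^{(1)}_{\cup}$.
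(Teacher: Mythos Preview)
Your proposal is correct and follows essentially the same route as the paper: define $d=\lambda_{n+1,+}-\lambda_{n+1,-}$, bound $|d(\theta_{n,>})|\lesssim\delta_n\delta_{n-2}^{-1}$ via the \textbf{Type}~\ref{t3} gap together with \eqref{91}, use \eqref{tranln+} to get $d'\ge\delta_{m-1}$, and apply the mean value theorem plus the interlacing \eqref{intern+}. One small citation slip: \eqref{Epln+} is a derivative estimate, so for the $C^0$-closeness of $\lambda_{n+1,\pm}$ to the scale-$n$ interlacing curves the paper instead invokes \eqref{apl} (i.e., $|\lambda_{n+1,\pm}-\lambda_{n,\pm}|\lesssim e^{-\gamma_0 l_n/5}$); your alternative route through \eqref{intern+} at $s=n+1$ combined with \eqref{ex} is equally valid.
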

	\begin{proof}
		We consider the difference function 
		$$d (\theta):=\lambda_{n+1,+} (\theta)-\lambda_{n+1,-} (\theta).$$
		By  \eqref{tranln+}, we have
		\begin{equation}\label{trand}
			d' (\theta)\geq2\delta_{m-1}.
		\end{equation}
		Recalling  \eqref{91}, it follows that  
		\begin{align*}
			|\lambda_{n,+} (\theta_{n,>})-\lambda_{n,-} (\theta_{n,>})|&\leq 	E_{n,>} (\theta_{n,>})-E_{n,<} (\theta_{n,>})\\ &\leq 	E_{n,>} (\theta_{n,>})-E_{n,<} (\theta_{n,<}) +|E_{n,<} (\theta_{n,<})-E_{n,<} (\theta_{n,>})|\\
			&\leq 3\delta_n+O (\delta_n\delta_{n-2}^{-1}).
		\end{align*}
		By the above estimate and approximation \eqref{apl}, we obtain 
		$$|d (\theta_{n,>})|=|\lambda_{n+1,+} (\theta_{n,>})-\lambda_{n+1,-} (\theta_{n,>})|\lesssim \delta_n\delta_{n-2}^{-1}.$$
		It follows that by mean value theorem  there exists  a unique  point $\bar{\theta}$ satisfying $|\bar{\theta}-\theta_{n,>}|\lesssim \delta_n\delta_{n-2}^{-1}\delta_{m-1 }^{-1}\leq \delta_n\delta_{n-2}^{-2}$, such that 
		$d (\bar{\theta})=0$, that is 	$$\lambda_{n+1,+} (\bar{\theta})=\lambda_{n+1,-} (\bar{\theta}).$$	
		Then \eqref{E1sn+} follows from \eqref{intern+}, \eqref{trand} and 
		$$ 	|\lambda_{n+1,+} (\theta)-\lambda_{n+1,-} (\theta)|=|d (\theta)|=|d (\theta)-d (\bar{\theta})|\geq \delta_{m-1}|\theta-\bar{\theta}|$$
		by mean value theorem.
	\end{proof}
	
	Suppose that $|E_{m,+}' (\theta_{k_m,-})|\geq|E_{m,-}' (\theta_{k_m,-})|$  (the argument is similar to  the opposite case), and define $1\leq r\lesssim  \delta_{m-1}^{-1}$ such that 
	$$    |E_{m,+}' (\theta_{k_m,-})|=r|E_{n,-}' (\theta_{k_m,-})|.$$
	Recalling Lemma \ref{chan},  for $\theta\in I_\cup^{(1)}$, we have 
	\begin{equation}\label{chan+}
		| (E_{m,+}'+rE_{m,-}') (\theta)|\leq \delta_{m-1}^2.
	\end{equation}
	
	\begin{prop}\label{1215n+}
		For $\theta\in I_{\cup}^{(1)}$, we have the following:
		\begin{itemize}
			\item[\textbf{(a).}] If $E_{n+1,>} (\theta)\neq E_{n+1,<} (\theta)$, then 
			\begin{align}
				(E_{n+1,>})'& = (A^2-rB^2) E_{m,-}'+O (\delta_{m-1}^2 ), \label{den+}\\
				(E_{n+1,<})'& = (B^2-rA^2) E_{m,-}'+O (\delta_{m-1}^2 )\label{deen+}, 
			\end{align}
			and 	
			\begin{align}
				( E_{n+1,>})'' & =\frac{2\left\langle\psi_{n+1}, V' \Psi_{n+1}\right\rangle^2}{E_{n+1,>}-E_{n+1,<}}+O (\delta_{n-1}^{-1} ),\label{dfn+} \\
				( E_{n+1,<})''& =\frac{2\left\langle\psi_{n+1}, V' \Psi_{n+1}\right\rangle^2}{E_{n+1,<}-E_{n+1,>}}+O (\delta_{n-1}^{-1} )\label{ddfn+}
			\end{align}
			with the notation from Proposition \ref{k2n+} and $r$ as above.
			\item [\textbf{(b).}] Suppose  $E_{n+1,>} (\theta)\neq E_{n+1,<} (\theta)$. If $| (E_{n+1,>})' (\theta)|\leq\delta_{m-1}^2$, then $ (E_{n+1,>})'' (\theta)\geq2$. Similarly, if   $| (E_{n+1,<})' (\theta)|\leq\delta_{m-1}^2$, then $ (E_{n+1,<})'' (\theta)\leq-2$.
			\item[\textbf{(c).}] If there is a level crossing  (thus $E_{n+1,>} (\bar{\theta})=E_{n+1,<} (\bar{\theta})$ by Lemma \ref{crossn+}),  then $E_{n+1,>},E_{n+1,<}$ are piecewise differentiable  (except the point $\bar{\theta}$), and for $\theta>\bar{\theta}$, 
			$$ (E_{n+1,>})' (\theta)>\delta_{m-1}^2, \  (E_{n+1,<})' (\theta)<-\delta_{m-1}^2,$$
			for $\theta<\bar{\theta}$,
			$$ (E_{n+1,>})' (\theta)<-\delta_{m-1}^2, \  (E_{n+1,<})' (\theta)>\delta_{m-1}^2.$$
		\end{itemize}
	\end{prop}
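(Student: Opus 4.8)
The plan is to follow the proof of Proposition \ref{1215n} almost verbatim, with the scale $m$ — the largest ancestor scale at which a double resonance in the tree of $E_n$ occurred, so that $E_m$ is of \textbf{Type} \ref{t1} by Lemma \ref{ds} — playing the role that $n$ played there, and with the $O(\delta_{n-1}^{-1})$ error coming from the compressed Green's function at the current scale (Proposition \ref{k2n+}\textbf{(c)}). I will freely use $\pm E_{m,\pm}'\gtrsim\delta_{m-2}^{10}$ (Hypothesis \ref{h4}, \textbf{Type} \ref{t1}), the bound $|(E_{m,+}'+rE_{m,-}')(\theta)|\le\delta_{m-1}^2$ on $I^{(1)}_\cup$ from \eqref{chan+}, the expansions \eqref{yn+}, and the fact that the supports $B_{m,-}=B_m$ and $B_{m,+}=B_m+k_m$ are disjoint, so that the diagonal operators $V',V''$ kill the cross terms $\langle\psi_{m,-},V'\psi_{m,+}\rangle=\langle\psi_{m,+},V'\psi_{m,-}\rangle=0$. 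For \textbf{(a)}: inserting \eqref{yn+} into the Feynman--Hellman formula \textbf{(1)} of Lemma \ref{daoshu} gives $(E_{n+1,>})'=A^2E_{m,-}'+B^2E_{m,+}'+O(\delta_m^{10})$; regrouping as $(A^2-rB^2)E_{m,-}'+B^2(E_{m,+}'+rE_{m,-}')+O(\delta_m^{10})$ and applying \eqref{chan+} yields \eqref{den+}, and \eqref{deen+} is identical. For the second-order formulas, Feynman--Hellman \textbf{(2)} and \textbf{(3)} applied with the companion simple eigenvalue $E_{n+1,<}$ reduce $(E_{n+1,>})''$ to the resonant term $2\langle\psi_{n+1,>},V'\psi_{n+1,<}\rangle^2/(E_{n+1,>}-E_{n+1,<})$ plus an $O(D)$ contribution from $V''$ and an $O(\|G^{\perp\perp}_{n+1}(E_{n+1,>})\|\,\|V'\psi_{n+1,>}\|^2)=O(\delta_{n-1}^{-1})$ contribution, which is \eqref{dfn+}; interchanging the eigenvalues gives \eqref{ddfn+}.

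For \textbf{(b)}: assuming $E_{n+1,>}(\theta)\ne E_{n+1,<}(\theta)$ and $|(E_{n+1,>})'(\theta)|\le\delta_{m-1}^2$, \eqref{den+} and $|E_{m,-}'|\gtrsim\delta_{m-2}^{10}$ give $|A^2-rB^2|\lesssim\delta_{m-1}^2\delta_{m-2}^{-10}<\frac1{100}$, hence (using $A^2+B^2=1$, $r\ge1$) $A^2\ge\frac14$, $B^2\ge\frac1{4r}$. Expanding $\langle\psi_{n+1,>},V'\psi_{n+1,<}\rangle$ via \eqref{yn+}, the disjoint-support cross terms drop and $|\langle\psi_{n+1,>},V'\psi_{n+1,<}\rangle|=|AB((1+r)E_{m,-}'-(E_{m,+}'+rE_{m,-}'))|+O(\delta_m^{10})\gtrsim\frac{1+r}{4\sqrt r}|E_{m,-}'|-O(\delta_{m-1}^2)\gtrsim\delta_{m-2}^{10}$. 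For the denominator I would combine \eqref{ex}, $|I^{(1)}_\cup|\lesssim\delta_{n-1}^5$ and $|\theta_{n,>}-\theta_{n,<}|\ll\delta_{n-1}^5$ (Lemma \ref{in}, \eqref{91}) and the collapsed-gap bound $0\le\inf J(E_{n,>})-\sup J(E_{n,<})\le3\delta_n$ to conclude $|E_{n+1,>}-E_{n+1,<}|\lesssim\delta_{n-1}^5$ on $I^{(1)}_\cup$. Then \eqref{dfn+} gives $|(E_{n+1,>})''(\theta)|\gtrsim\delta_{m-2}^{20}\delta_{n-1}^{-5}-O(\delta_{n-1}^{-1})$, and since $m\le n-1$ and the $l_s$ grow doubly-exponentially, $\delta_{m-2}^{20}\gg\delta_{n-1}^4$, so this is $>2$, with sign that of $E_{n+1,>}-E_{n+1,<}>0$. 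The statement for $E_{n+1,<}$ is symmetric (via \eqref{deen+}, \eqref{ddfn+}), giving $(E_{n+1,<})''(\theta)\le-2$. Note the threshold $\delta_{m-1}^2$ exceeds $\delta_{n-1}^2$, so \textbf{(b)} is stronger than the Morse condition needed for Hypothesis \ref{h3} at scale $n+1$.

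For \textbf{(c)}: in the level-crossing case, Lemma \ref{crossn+} and \eqref{intern+} give $\lambda_{n+1,+}(\bar\theta)=\lambda_{n+1,-}(\bar\theta)=E_{n+1,>}(\bar\theta)=E_{n+1,<}(\bar\theta)$; since $\pm\lambda_{n+1,\pm}'\gtrsim\delta_{m-2}^{10}$, $\lambda_{n+1,+}>\lambda_{n+1,-}$ for $\theta>\bar\theta$ and $\lambda_{n+1,+}<\lambda_{n+1,-}$ for $\theta<\bar\theta$, and Lemma \ref{crossn+} shows $\bar\theta$ is the only point where $E_{n+1,>}=E_{n+1,<}$, so off $\bar\theta$ each is a single smooth eigenvalue branch and the two are piecewise $C^1$. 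Comparing $E_{n+1,>}(\theta)\ge\lambda_{n+1,+}(\theta)$ for $\theta>\bar\theta$ with equality at $\bar\theta$ gives the one-sided bound $(E_{n+1,>})'(\bar\theta^+)\ge\lambda_{n+1,+}'(\bar\theta)>\delta_{m-1}^2$; if $(E_{n+1,>})'$ first reached $\le\delta_{m-1}^2$ at some $\theta^*\in(\bar\theta,\sup I^{(1)}_\cup)$, then $(E_{n+1,>})''(\theta^*)\le0$, contradicting \textbf{(b)} (applicable since $E_{n+1,>}(\theta^*)\ne E_{n+1,<}(\theta^*)$). So $(E_{n+1,>})'>\delta_{m-1}^2$ on $(\bar\theta,\sup I^{(1)}_\cup)$; running the same argument with $E_{n+1,<}(\theta)\le\lambda_{n+1,-}(\theta)$ and the ``$<$'' half of \textbf{(b)} gives $(E_{n+1,<})'<-\delta_{m-1}^2$ there. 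For $\theta<\bar\theta$ one uses instead $E_{n+1,>}(\theta)\ge\lambda_{n+1,-}(\theta)$ and $E_{n+1,<}(\theta)\le\lambda_{n+1,+}(\theta)$ — dividing by $\theta-\bar\theta<0$ reverses the signs — and obtains $(E_{n+1,>})'<-\delta_{m-1}^2$ and $(E_{n+1,<})'>\delta_{m-1}^2$.

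The hard part will not be any single computation but the scale bookkeeping in \textbf{(b)}: one must check that the resonant second-derivative term, whose numerator is only of size $\gtrsim\delta_{m-2}^{20}$ (governed by the early scale $m$ at which the double resonance occurred) and whose denominator is as large as the collapsed gap $\sim\delta_{n-1}^5$, still dominates the $O(\delta_{n-1}^{-1})$ Green's-function error — which rests on $m\le n-1$ and the super-exponential spacing of the $l_s$. A secondary technical point is the clean handling of one-sided derivatives at the crossing point $\bar\theta$ in \textbf{(c)}.
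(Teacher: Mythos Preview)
Your proposal is correct and follows the paper's proof essentially verbatim: the same Feynman--Hellman expansion via \eqref{yn+} and \eqref{chan+} for part \textbf{(a)}, the same $A^2,B^2$ lower bounds and cross-term estimate for the numerator in \textbf{(b)}, and the same interlacing-plus-continuity argument for \textbf{(c)}. The only cosmetic differences are that the paper records the numerator-squared bound as $\gtrsim\delta_{m-1}^2$ (a slightly weaker consequence of your $\gtrsim\delta_{m-2}^{20}$) and justifies the denominator bound more tersely via $|J_{i_n}^{(1)}|\lesssim\delta_{n-1}^{10}$ and $|I_\cup^{(1)}|\lesssim\delta_{n-1}^5$; your scale-bookkeeping check that the resonant term dominates the $O(\delta_{n-1}^{-1})$ error is exactly the point, and it goes through.
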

	\begin{proof}
		We are going to prove \eqref{den+}.  (The proof of \eqref{deen+} is analogous.) 	 By \eqref{yn+} and \eqref{chan+}, we refer to Feynman-Hellman formula \textbf{(1)} Lemma \ref{daoshu} to obtain 
		\begin{align*}
			( E_{n+1,>})'
			& =\left\langle\psi_{n+1}, V' \psi_{n+1}\right\rangle =A^2 \left\langle\psi_{m,-}, V' \psi_{m,-}\right\rangle +B^2  \left\langle\psi_{m,+}, V' \psi_{m,+}\right\rangle+O (\delta_m^{10} ) \\
			&=A^2  E_{m,-}'+B^2  E_{m,+}'+O (\delta_m^{10} ) \\
			& = (A^2-rB^2) E_{m,-}'+B^2  (E_{m,+}'+rE_{m,-}')+O (\delta_m^{10} )\\
			& = (A^2-rB^2) E_{m,-}'+O (\delta_{m-1}^2 ).
		\end{align*}
		To prove \eqref{dfn+} and \eqref{ddfn+}, we  use Feynman-Hellman formula \textbf{(2)} and \textbf{(3)} to obtain 	
		\begin{equation*}
			(E_{n+1,>})''=\left\langle\psi_{n+1}, V'' \psi_{n+1}\right\rangle+2 \frac{\left\langle\psi_{n+1}, V' \Psi_{n+1}\right\rangle^2}{E_{n+1,>}-E_{n+1,<}}-2\left\langle V' \psi_{n+1},G^{\perp \perp}_{n+1} (E_{n+1,>}) V' \psi_{n+1}\right\rangle .
		\end{equation*}
		The first term is bounded by $D$ and the  third  term is bounded by $$2\|G^{\perp \perp}_{n+1} (E_{n+1,>})\|\|V' \psi_{n+1}\|^2\lesssim \delta_{n-1}^{-1},$$ where we  use the estimate $\|G^{\perp \perp}_{n+1} (E_{n+1,>})\| \leq2\delta_{n-1}^{-1}$ from  item \textbf{(c)} of   Proposition \ref{k2n+}. Thus we finish the proof of item \textbf{(a)}.
		
		Now we are going to prove \textbf{(b)}. We will show the first term in \eqref{dfn+} is large if  $| (E_{n+1,>})' (\theta)|\leq\delta_{m-1}^2$. \\
		Assume $| (E_{n+1,>})' (\theta)|\leq\delta_{m-1}^2$,  by \eqref{den+}, we have 
		$$|A^2-rB^2| |E_{m,-}' (\theta)|\leq| ( E_{n+1,>})' (\theta)|+O (\delta_{m-1}^2)\lesssim  \delta_{m-1}^2.$$
		Since $$|E_{m,-}' (\theta)|\gtrsim \delta_{m-2}^{10} ,$$ it follows that 
		$$|A^2-rB^2|\lesssim \delta_{m-1}^2 \delta_{m-2}^{-10}<\frac{1}{100}$$
		Since $A^2+B^2=1$ and $r\geq1$, we obtain  
		$$ (1+r)B^2\geq A^2+B^2-|A^2-rB^2|\geq \frac{99}{100}$$ and 
		$$ (1+\frac{1}{r})A^2\geq A^2+B^2-\frac{1}{r}|A^2-rB^2|\geq \frac{99}{100}.$$
		Thus $$B^2\geq\frac{1}{4r},\ A^2\geq\frac{1}{4}.$$
		By \eqref{yn+}, \eqref{chan+}  and the previous lower bound for $A,B$,  we have 
		\begin{align*}
			|\langle\psi_{n+1}, V' \Psi_{n+1}\rangle|&=|AB (\langle\psi_{m,-}, V' \psi_{m,-}\rangle-\langle\psi_{m,+}, V' \psi_{m,+}\rangle)+O (\delta_m^{10} )| \\
			&=|AB   (E_{m,-}' -E_{m,+}')+O (\delta_m^{10})|\\
			&=|AB \left( (1+r) E_{m,-}'- (E_{m,+}'+r E_{m,-}')\right)+O (\delta_m^{10})|\\
			& \geq \frac{1+r }{4\sqrt{r}}  (|E_{m,-}'|-O (\delta_{m-1}^2 ))\\
			&\gtrsim \delta_{m-2}^{10}\gg \delta_{m-1}.
		\end{align*}
		The denominator  $0<E_{n+1,>}-E_{n+1,<}\lesssim \delta_{n-1}^5$ since $|J_{i_n}^{(1)}|\lesssim\delta_{n-1}^{10}$ and $|I_\cup^{(1)}|\lesssim\delta_{n-1}^5$. Combining  the previous estimate of numerator, by \eqref{dfn+},  we obtain  $$ (E_{n+1,>})'' (\theta)\gtrsim     \delta_{m-1}^{2}  \delta_{n-1}^{-5}-O (\delta_{n-1}^{-1})>2,$$ 
		and 
		$$- (E_{n+1,<})'' (\theta)\gtrsim     \delta_{m-1}^{2}  \delta_{n-1}^{-5}-O (\delta_{n-1}^{-1})>2.$$ 
		Supposing there is a level crossing,  then  $$E_{n+1,>} (\bar{\theta})=\lambda_{n+1,+} (\bar{\theta}) =\lambda_{n+1,-} (\bar{\theta})= E_{n+1,<} (\bar{\theta}).$$
		Thus for  $\theta>\bar{\theta}$, 
		$$E_{n+1,>} (\theta)\geq\lambda_{n+1,+} (\theta)>\lambda_{n+1,-} (\theta) \geq E_{n+1,<} (\theta).$$
		The above inequality together with \eqref{tranln+} yields  the left derivative estimates of $E_{n+1,>}$ and $E_{n+1,<}$, namely, 
		$$   (E_{n+1,>})'_{\operatorname{Left}} (\bar{\theta})\geq \lambda_{n+1,+}' (\bar{\theta}) >\delta_{m-1},\   (E_{n+1,<})'_{\operatorname{Left}} (\bar{\theta})\leq \lambda_{n+1,-}' (\bar{\theta})<-\delta_{m-1}.$$
		We next claim  the inequalities hold for all $\theta>\bar{\theta}$. If it is not true, then  
		$$\{\theta \in I_{\cup}^{(1)}:\  \theta>\bar{\theta}, \ \  (E_{n+1,>})' (\theta)\leq  \delta_{m-1}^2 \}\neq \emptyset.$$ 
		Let $\theta^*$ be its infimum. Since $E_{n+1,>}' (\theta)$ is continuous for $\theta\geq \bar{\theta}$ and $  (E_{n+1,>})'_{\operatorname{Left}} (\bar{\theta}) >\delta_{m-1}^2$, we have $\theta^*>\bar{\theta}$ and  $ (E_{n+1,>})' (\theta)>\delta_{m-1}^2\geq  (E_{n+1,>})' (\theta^*)$ for $\theta\in [\bar{\theta},\theta^*)$, which implies $  (E_{n+1,>})'' (\theta^*)\leq 0$. However, by item \textbf{(b)}, we get a contradiction that   $ (E_{n+1,>})'' (\theta^*)>2$. Thus we prove our claim. The case of $\theta<\bar{\theta}$ is analogous.
	\end{proof}
	\begin{rem}\label{xin}
		We can deduce  from the above proposition that  $E_{n+1,>}$ and $E_{n+1,<}$ have the two-monotonicity interval structure. Moreover, if the gap between the critical values of $E_{n+1,>}$ and $E_{n+1,<}$ is larger than $3\delta_{n+1}$, then each of them belongs to \textbf{Type} \ref{t2}, otherwise, $E_{n+1,>}$ along with  $E_{n+1,<}$ belong  to \textbf{Type} \ref{t3}.  We finish the proof of \textbf{Subcase B}.
	\end{rem}
	\subsubsection{A sketch of the domain adjustment}
	At the end of the construction, we employ the argument in Section \ref{adjm} to adjust  the domains of Rellich children  by $O (\delta_{n}^{10})$ so that every Rellich child $E_{n+1}$ has the same image on each of its monotone  interval. After the domain adjustment,    Hypothesis \ref{h5} still holds for $m=n+1$ because  the overlap of the adjacent codomains   ($\sim3\delta_n$) is  much larger  than the codomain contraction in the adjustment.  We henceforth denote by $I (E_{n+1})$ (resp. $J (E_{n+1})$) the  modified domain (codomain) of $E_{n+1}$.
	
	\subsubsection{Green's function estimates} In this part, we establish the Green's function estimates for $(n+1)$-good sets by a multi-scale analysis argument and finish the proof of  induction.
	
	Recall that $\tilde{J} (E_{n+1})$ is defined as:  
	For  $E_{n+1}$ belonging to   \textbf{Type} \ref{t1} or \ref{t2}, \\
	\textbf{(1).} if $E_{n+1}$ has no critical point in $ I (E_{n+1})$, then 
	$$\tilde{J} (E_{n+1}):=[\inf J (E_{n+1})+\frac{9}{8}\delta_{n},\sup J (E_{n+1})-\frac{9}{8}\delta_{n}], $$ 
	\textbf{(2).} if $E_{n+1}$  has  its minimum at a  critical point in $ I (E_{n+1})$, then 
	$$\tilde{J} (E_{n+1}):=[\inf J (E_{n+1})-2\delta_{n+1},\sup J (E_{n+1})-\frac{9}{8}\delta_{n}] ,$$
	\textbf{(3).}  if $E_{n+1}$ has its maximum at a  critical point in $ I (E_{n+1})$, then 
	$$\tilde{J} (E_{n+1}):=[\inf J (E_{n+1})+\frac{9}{8}\delta_{n}, \sup J (E_{n+1})+2\delta_{n+1}], $$
	and for  $E_{n+1}$ belonging to  \textbf{Type} \ref{t3},
	$$ \tilde{J} (E_{n+1}):=[\inf J (E_{n+1,<})+\frac{9}{8}\delta_{n},\sup J (E_{n+1,>})-\frac{9}{8}\delta_{n}].$$
	
	Fix $\theta^*,E^*$. Let  $E_{n+1}\in \mathcal{C}_{n+1}$ be the  Rellich function such that $E^*\in \tilde{J} (E_{n+1})$ (if exists). The set of $(n+1)$-resonant points  (relative to $\theta^*,E^*$ and $E_{n+1}$) is defined as: For $E_{n+1}$ belonging to   \textbf{Type} \ref{t1} or \ref{t2}, 
	$$	S_{n+1} (\theta^*,E^*):=	\{x\in \Z^d:\ \theta^*+x\cdot\omega \in I (E_{n+1}), |E_{n+1} (\theta^*+x\cdot \omega)-E^*|<\delta_{n+1}\},$$
	and for $E_{n+1}$ belonging to  \textbf{Type} \ref{t3},
	\begin{align*}
		&\ \ \  S_{n+1} (\theta^*,E^*)\\
		&:=\{x\in \Z^d:\ \theta^*+x\cdot\omega \in  I (E_{n+1,<}) \cup I (E_{n+1,>}), \min_{\bullet\in\{>,<\}} |E_{n+1,\bullet} (\theta^*+x\cdot \omega)-E^*|<\delta_{n+1}\}.
	\end{align*}
	We say that a set $\Lambda\subset \Z^d$ is $(n+1)$-nonresonant  (relative to $ (\theta^*,E^*)$ and $E_{n+1}$) if $\Lambda\cap S_{n+1} (\theta^*,E^*) =\emptyset$ and is  $(n+1)$-regular if  $ (\Lambda_{2l_{i+1}}+x)\subset\Lambda$ for any $x\in S_{i} (\theta^*,E^*)\cap\Lambda$ relative to  each ancestor $E_i  $ $  (0\leq i\leq n)$ of $E_{n+1}$. We say that a set is $(n+1)$-good if it is both $(n+1)$-nonresonant and $(n+1)$-regular. We are going to prove  Green's function estimates for $(n+1)$-good sets:
	\begin{thm}\label{rep}
		Fix $\theta^*, E^*$ and a finite set $\Lambda\subset \Z^d$. Let  $E_{n+1}\in \mathcal{C}_{n+1}$  satisfy  $E^*\in \tilde{J} (E_{n+1})$ (if exists).    If $\Lambda$ is $(n+1)$-good, then  for $|\theta-\theta^*|<\delta_{n+1}/ (10D), |E-E^*|<\delta_{n+1}/5$, 
		\begin{align*}
			\|G_\Lambda (\theta,E)\|&\leq10\delta_{n+1}^{-1},\\
			|G_\Lambda (\theta,E;x,y)|&\leq e^{-\gamma_{n+1}\|x-y\|_1}, \  \|x-y\|_1\geq l_{n+1}^{\frac{5}{6}}, 
		\end{align*}
		where $\gamma_{n+1}= (1-O (l_{n+1}^{-\frac{1}{30}}))\gamma_{n}\geq \frac{1}{2}\gamma_0$.
		The above estimates also hold if $\Lambda$ is $(n+1)$-regular and $E^*\notin  \tilde{J} (E_{n+1})$ for any  $E_{n+1}\in \mathcal{C}_{n+1}$.
	\end{thm}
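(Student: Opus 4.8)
The plan is to mimic the proof of Theorem \ref{1g} at scale $n+1$, now using Hypothesis \ref{h6} (the scale-$n$ Green's function estimates) in place of Theorem \ref{0ge}, and the scale-$n$ eigenvalue separation facts (Propositions \ref{kn}, \ref{k2n}, \ref{k2n+}) in place of Propositions \ref{k1}, \ref{k2}. First I would reduce to the basic resolvent-identity step: if $S_n(\theta^*,E^*)\cap\Lambda=\emptyset$, then $\Lambda$ is $n$-good (by $(n+1)$-regularity it is automatically $n$-regular relative to the ancestor chain), so Hypothesis \ref{h6} already gives the estimate with constants $\delta_n^{-1},\gamma_n$, which are stronger than the claimed $\delta_{n+1}^{-1},\gamma_{n+1}$. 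So one may assume $S_n^\Lambda:=S_n(\theta^*,E^*)\cap\Lambda\neq\emptyset$. The goal is then to cover each resonant site $x\in S_n^\Lambda$ by a translate $B_{n+1}^{(j)}+p$ (with $p=x$ in the \textbf{Type 1}/\textbf{Type 2} cases and $p\in\{x,x-k_n\}$ or $\{x,x-k_m\}$ in the \textbf{Type 3} cases, exactly as in the case analysis of Theorem \ref{1g}) such that $\theta^*+p\cdot\omega$ lies in the domain $I(E_{n+1}^{(j)})$ of the appropriate Rellich child and $\|G_{B_{n+1}^{(j)}+p}(\theta^*,E^*)\|\leq\delta_{n+1}^{-1}$. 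The containment of $\theta^*+p\cdot\omega$ in the right domain follows from $E^*\in\tilde J(E_{n+1})$ together with the $C^0$-approximations \eqref{C0n}, \eqref{apbpn}, \eqref{ex} and the structure of the modified codomains $\tilde J$ (Hypothesis \ref{h5}), just as in parts \textbf{(1)(a)},\textbf{(1)(b)},\textbf{(2)} of the proof of Theorem \ref{1g}. The resolvent bound for each block then follows from the $(n+1)$-nonresonance of $\Lambda$ ($|E_{n+1}(\theta^*+p\cdot\omega)-E^*|\geq\delta_{n+1}$) combined with the scale-$n+1$ eigenvalue-separation statements: $E_{n+1}$ (or the pair $E_{n+1,<},E_{n+1,>}$) are the only eigenvalues of $H_{B_{n+1}^{(j)}}$ near $E^*$, and any others are separated by at least $\delta_{n-1}\gg\delta_{n+1}$ by Propositions \ref{kn}\textbf{(a)}, \ref{k2n}\textbf{(a)}, \ref{k2n+}\textbf{(a)}.

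Once the covering blocks $\{B_{n+1}^{(j)}+p\}_{p\in\bar S_n^\Lambda}$ are in hand, together with the smaller blocks $B_n+q$ covering the scale-$(n-1)$ resonances inside each $B_{n+1}^{(j)}+p$ (these are built into the $n$-strong-regularity requirement on $B_{n+1}^{(j)}$), the rest is a standard multi-scale resolvent-iteration exactly parallel to the last two pages of the proof of Theorem \ref{1g}. I would proceed in three stages: (i) for a single $(n+1)$-nonresonant block $\Lambda=B_{n+1}^{(j)}+p$, use Neumann series to perturb $\|G_\Lambda(\theta^*,E^*)\|\leq\delta_{n+1}^{-1}$ to $\|G_\Lambda(\theta,E)\|\leq 2\delta_{n+1}^{-1}$ for $(\theta,E)$ in the stated neighborhood, and then obtain off-diagonal decay for $\|x-y\|_1\geq (l_{n+1}^{(j)})^{4/5}$ by one application of resolvent identity peeling off an $(l_{n+1}^{(j)})^{2/3}$-cube around each scale-$n$ resonant core (on whose complement Hypothesis \ref{h6} gives decay rate $\gamma_n$), with the loss $\gamma_{n+1}=(1-O(l_{n+1}^{-1/30}))\gamma_n$ absorbed exactly as in \eqref{728}--\eqref{728.}; (ii) for a general $(n+1)$-good $\Lambda$, first prove $G_\Lambda(\theta,E+i0)$ exists by a Schur-test argument — remove $\bigcup_p\Omega_{n+1,p}$ (cubes of size $2(l_{n+1}^{(j)})^{2/3}$) to get an $n$-good remainder $\Lambda'$ with $\sup_x\sum_y|G_{\Lambda'}(x,y)|\lesssim\delta_n^{-1}$, and run the resolvent identity separately for $x\in\Lambda\setminus\bigcup\tilde\Omega_{n+1,p}$ (using $\Lambda'$ and rate $\gamma_n$) and for $x$ inside some $\tilde\Omega_{n+1,p}$ (using $B_{n+1}^{(j)}+p$ and $\|G\|\leq\delta_{n+1}^{-1}$), yielding $\sup_x\sum_y|G_\Lambda(x,y)|\leq2\delta_{n+1}^{-2}$; (iii) conclude $\operatorname{dist}(\sigma(H_\Lambda(\theta)),E)\geq\frac15\delta_{n+1}$ hence the operator norm bound, and finally derive the off-diagonal decay by choosing for each $x$ a block $B_x$ (Choice A: an $(l_{n+1}^{(j)})^{1/2}$-cube-cap that is $n$-good; Choice B: the block $B_{n+1}^{(j)}+p$ when $x$ is in the associated $\tilde\Omega$) and iterating the resolvent identity, stopping when $y\in B_{x_L}$, exactly reproducing \eqref{728}--\eqref{728.} at the new scale.

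For the final clause of the statement — that the estimates also hold when $\Lambda$ is merely $(n+1)$-regular and $E^*\notin\tilde J(E_{n+1})$ for any $E_{n+1}\in\mathcal C_{n+1}$ — I would argue that in this regime $\Lambda$ has no scale-$n$ resonant sites to deal with at level $n+1$: the ``almost-covering'' property of Hypothesis \ref{h5} (the codomains of children cover $\tilde{\tilde J}(E_n)$, with the only uncovered parts near true critical values, which are handled by $\tilde J$ extending to $\pm2\delta_{n+1}$ beyond the critical value) forces that if $x\in\Lambda$ had $|E_n(\theta^*+x\cdot\omega)-E^*|<\delta_n$, then $E^*$ would lie in $\tilde J(E_{n+1})$ for some child $E_{n+1}$ of $E_n$, contradicting the hypothesis; hence $S_n(\theta^*,E^*)\cap\Lambda=\emptyset$, so $\Lambda$ is $n$-good and Hypothesis \ref{h6} applies directly (with the stronger constants $\delta_n^{-1},\gamma_n$). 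A subtlety is that this inductive descent must be carried back through all scales — one needs that $E^*\notin\tilde J(E_m)$ for every $m$-scale Rellich function at \emph{every} scale down to wherever the obstruction disappears — which is handled by the chain of covering statements in Hypothesis \ref{h5}.

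The main obstacle, and the place where genuinely new work beyond the one-dimensional \cite{FV21} argument is needed, is the Schur-test existence argument in stage (ii) and the final off-diagonal iteration in stage (iii): in the multi-dimensional setting the resonant cores $\bar S_n^\Lambda$ are no longer points on a line with an a priori interval structure, so one must check that the translates $\{B_{n+1}^{(j)}+p\}$ are pairwise non-overlapping and contained in $\Lambda$ purely from the Diophantine lower bound on $\|(p-p')\cdot\omega\|$ versus the block size $l_{n+1}^{(j)}\lesssim(l_{n+1}^{(1)})^2$ and from the $(n+1)$-regularity hypothesis, and that the geometry of the ``caps'' $\Lambda_{r}(x)\cap\Lambda$ near $\partial_\Lambda$ still permits the iteration — precisely the issue the authors flag as requiring new proofs in Appendix \ref{chouti} and this Theorem \ref{rep}. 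I expect the bookkeeping of exponents ($\tfrac12,\tfrac23,\tfrac34,\tfrac56$ of various $l_{n+1}$) and the verification that $\gamma_{n+1}\geq\tfrac12\gamma_0$ survives (using $\prod_{m}(1-O(l_m^{-1/30}))\geq\tfrac12$ since $l_m$ grows super-exponentially) to be routine, following the scale-$1$ template verbatim.
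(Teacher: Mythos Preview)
Your covering argument for the main case $E^*\in\tilde J(E_{n+1})$ and the single-block and Schur-test stages match the paper. There are, however, two genuine gaps.

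\textbf{The final clause is handled incorrectly.} Your claim that $E^*\notin\tilde J(E_{n+1})$ for every $E_{n+1}\in\mathcal C_{n+1}$ forces $S_n(\theta^*,E^*)\cap\Lambda=\emptyset$ is false. Take $E_n$ of \textbf{Type}~\ref{t1} with a double-resonant pair of children $E_{n+1,>},E_{n+1,<}$ that are of \textbf{Type}~\ref{t2} (gap $>3\delta_{n+1}$; this is allowed since \eqref{apbpn} only bounds the gap by $O(e^{-\gamma_0 l_n/4})\gg\delta_{n+1}$). If $E^*$ lies in the interior of this gap, more than $2\delta_{n+1}$ from both critical values, then $E^*\notin\tilde J(E_{n+1,>})\cup\tilde J(E_{n+1,<})$; yet $E_n(\theta_{k_n,-})=e_{k_n}$ is within $O(e^{-\gamma_0 l_n/4})\ll\delta_n$ of $E^*$, so a translate of $\theta_{k_n,-}$ can lie in $S_n(\theta^*,E^*)$. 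The paper does \emph{not} reduce to $S_n^\Lambda=\emptyset$ here; it still covers each $x\in S_n^\Lambda$ by a block $B_{n+1}^{(j)}+p$, choosing the child whose critical value is nearest $E^*$, and uses that $|E_{n+1}(\theta^*+p\cdot\omega)-E^*|\ge\operatorname{dist}(E^*,J(E_{n+1}))\ge 2\delta_{n+1}$ precisely because $E^*\notin\tilde J(E_{n+1})$ on the critical-value side, so Propositions~\ref{kn}/\ref{k2n}/\ref{k2n+} again give $\|G_{B_{n+1}+p}\|\le\delta_{n+1}^{-1}$.

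\textbf{Your stage-(iii) iteration does not go through with a two-level Choice A/B.} At scale~$1$, Choice~A (a small cube-cap) was $0$-good because $0$-regularity is vacuous. At scale~$n+1$, an $(l_{n+1})^{1/2}$-cube-cap around $x\notin\bigcup_p\tilde\Omega_{n+1,p}$ is $n$-nonresonant but \emph{not} automatically $n$-regular: it may clip a scale-$i$ resonant site ($i<n$) near its boundary without containing the full $\Lambda_{2l_{i+1}}$-block around it, so Hypothesis~\ref{h6} does not apply. The paper replaces the two-level choice by a full multi-scale one: $B_x=\Lambda_{l_1^{1/2}}(x)\cap\Lambda$ when $x$ avoids the scale-$0$ cores, and otherwise $B_x=B_m+p_m$ where $m\le n+1$ is the \emph{smallest} scale with $p_m\notin S_m(\theta^*,E^*)$; each such $B_m+p_m$ is $m$-good by construction, and the iteration then runs through all scales simultaneously. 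This is exactly the point the authors flag as requiring a new proof in higher dimensions.
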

	
	\begin{proof}
		If 	$E^*\notin  \tilde{J} (E_{n})$ for any  $E_{n}\in \mathcal{C}_{n}$, then the theorem follows from the induction hypothesis.  By the overlap of two   adjacent modified codomains  $\tilde{J}$ where the Rellich functions do not attain their  critical values (c.f. Hypothesis \ref{h5}), it follows that $\cup_{E_n\in \mathcal{C}_n}\tilde{J} (E_{n})= \cup_{E_n\in \mathcal{C}_n}\tilde{\tilde{J}} (E_{n})$. We may thus assume $E^*\in\tilde{\tilde{J}} (E_{n})$ for some $E_{n}\in \mathcal{C}_{n}$.
		
		First,  we assume that $E^*\in\tilde{J} (E_{n+1})\subset\tilde{J} (E_{n})$ for some $E_{n+1}\in \mathcal{C}^{(j)} (E_{n})$.	We define the set  $S_n^{\Lambda}:= S_n (\theta^*,E^*)\cap \Lambda$. If $S_n^{\Lambda}=\emptyset$, then $\Lambda$ is $n$-nonresonant and the theorem follows from  the induction  hypothesis. We thus assume $S_n^{\Lambda}$ is nonempty. We wish to  find a finite family $\{B_{n+1}^{(j)}+p\}_{p\in \bar{S}_n^{\Lambda} }$ of translations  of $B_{n+1}^{(j)}$ such that each $x\in S_n^{\Lambda}$ is near the center of $B_{n+1}^{(j)}+p$ for some $p\in \bar{S}_n^{\Lambda}$ and $\|G_{B_{n+1}^{(j)}+p} (\theta^*,E^*)\|\leq \delta_{n+1}^{-1}$
		for all $p\in \bar{S}_n^{\Lambda}$. We do so by case analysis:\\
		\textbf{(1).} $j=1$:	We  claim  that  for any  $x\in S_n^{\Lambda}$, if $E_{n+1}$ is generated from the interval in  \textbf{Subcase A}, then $\theta^*+x\cdot \omega \in  I (E_{n+1})$, and  if $E_{n+1}$ is generated from the interval  \eqref{SBI} from \textbf{Subcase B}, then    $\theta^*+x\cdot \omega \in  I (E_{n+1,<})\cup I (E_{n+1,>})$,  where $E_{n+1,<}$ and  $E_{n+1,>}$ are the two Rellich children generated from  the interval \eqref{SBI}. The proof proceeds also by case analysis:\\
		\textbf{Subcase A (a).} $E_{n+1}$ has no critical point.\\
		Since $E^*\in\tilde{J} (E_{n+1})$ and $x\in S_n^{\Lambda}$, by the definition of $ \tilde{J} (E_{n+1})$, we obtain 
		$$E_n (\theta^*+x\cdot \omega )\in [\inf J (E_{n+1})+\frac{1}{8}\delta_n,\sup J (E_{n+1})-\frac{1}{8}\delta_n].$$
		Since $j=1$, by approximation \eqref{C0n}, we have 
		$$[\inf J (E_{n+1})+\frac{1}{8}\delta_n,\sup J (E_{n+1})-\frac{1}{8}\delta_n]\subset  (E_n)_\pm (I (E_{n+1})_\pm ).$$ 
		Thus $E_n (\theta^*+x\cdot \omega )\in  (E_n)_\pm (I (E_{n+1})_\pm )$ and so $\theta^*+x\cdot \omega  \in  I (E_{n+1}) $.\\
		\textbf{Subcase A (b).}  $E_{n+1}$ has a  critical point.\\
		Suppose that this critical point is a minimum  (the other case is analogous). Then $E_n$ likewise achieves its minimum at a critical point in $ I (E_{n+1})$. As above,  we have 
		$$E_n (\theta^*+x\cdot \omega ) \in  [\inf J (E_n),\sup J (E_{n+1})-\frac{1}{8}\delta_n]$$
		and 
		$$ [\inf J (E_n),\sup J (E_{n+1})-\frac{1}{8}\delta_n]   \subset  (E_n)_\pm (I (E_{n+1})_\pm).$$
		Thus  $\theta^*+x\cdot \omega\in I (E_{n+1})$.   \\
		\textbf{Subcase B.}  $E_{n}$ belongs to \textbf{Type} \ref{t3} and $E_{n+1,<}$ and  $E_{n+1,>}$ are the two Rellich children generated from  the interval \eqref{SBI}. \\
		Since $E^*\in\tilde{J} (E_{n+1})$ and $x\in S_n^{\Lambda}$, by the definition of $ \tilde{J} (E_{n+1})$, we obtain 
		\begin{equation}\label{hold}
			E_{n,\bullet} (\theta^*+x\cdot \omega )\in [\inf J (E_{n+1,<})+\frac{1}{8}\delta_n,\sup J (E_{n+1,>})-\frac{1}{8}\delta_n]
		\end{equation}
		for some $\bullet\in\{>,<\}$. If \eqref{hold}  holds for $\bullet=<$, by the approximation \eqref{ex}, it follows that 
		$$E_{n,<} (\theta^*+x\cdot \omega )\geq \inf J (E_{n+1,<})+\frac{1}{8}\delta_n\geq \inf  ( E_{n,<})_\pm ( I (E_{n+1,<})), $$ 
		and so  $\theta^*+x\cdot \omega\in  I (E_{n+1,<}).$ 
		Likewise, if \eqref{hold}  holds for $\bullet=>$, then $\theta^*+x\cdot \omega\in  I (E_{n+1,>}).$ \\
		Thus we finish the proof of the claim.\\
		For $j=1$,  we define $\bar{S}_n^{\Lambda}=S_n^{\Lambda}$. \\
		For \textbf{Subcase A}, since $x\in S_n^{\Lambda}$,   $$|E_n (\theta^*+x\cdot\omega)-E^*|\leq \delta_n,$$
		and since $\Lambda$ is $(n+1)$-nonresonant,
		$$ \delta_{n+1}\leq |E_{n+1} (\theta^*+x\cdot\omega)-E^*|.$$
		For \textbf{Subcase B}, since $x\in S_n^{\Lambda}$,
		$$ \min_{\bullet\in\{>,<\}} |E_{n,\bullet} (\theta^*+x\cdot \omega)-E^*|\leq\delta_{n}, $$	and since $\Lambda$ is $(n+1)$-nonresonant,
		$$\delta_{n+1}\leq  \min_{\bullet\in\{>,<\}} |E_{n+1,\bullet} (\theta^*+x\cdot \omega)-E^*|.$$
		By the eigenvalue separation estimates from Propositions  \ref{kn}, \ref{k2n+}, it follows that $|\hat{E}-E^*|\geq \delta_{n+1}$ for any eigenvalue $\hat{E}$ of $H_{B_{n+1}^{(1)}+x} (\theta^*)$, and so 
		$$\|G_{B_{n+1}^{(1)}+x} (\theta^*,E^*)\|\leq \delta_{n+1}^{-1}.$$\\
		\textbf{(2).} $j=2$: Recalling that in double resonance case, $E_{n+1}$ has a brother. We  denote  by $E_{n+1,>}$  the larger one  and  $E_{n+1,<}$ the smaller one.   We   prove  that for any  $x\in S_n^{\Lambda}$, $\theta^*+p\cdot \omega \in  I (E_{n+1,<})\cup I (E_{n+1,>})$ for some $p\in \{x,x-k_n\}$, where the definition of $k_n$ is from \eqref{huiyi}. \\ 
		Since  $E^*\in\tilde{J} (E_{n+1})$ and $x\in S_n^{\Lambda}$, by the definition of $\tilde{J} (E_{n+1})$, we obtain 
		$$E_n (\theta^*+x\cdot \omega )\in [\inf J (E_{n+1,<})+\frac{1}{8}\delta_n,\sup J (E_{n+1,>})-\frac{1}{8}\delta_n].$$
		Since $j=2$, by approximation \eqref{apbpn}, we have 
		$$ [\inf J (E_{n+1,<})+\frac{1}{8}\delta_n,\sup J (E_{n+1,>})-\frac{1}{8}\delta_n]    \subset  (E_n)_-\left ([\inf I (E_{n+1,>}), \sup I (E_{n+1,<})]\right) $$
		and 
		$$[\inf J (E_{n+1,<})+\frac{1}{8}\delta_n,\sup J (E_{n+1,>})-\frac{1}{8}\delta_n]    \subset  (E_n)_+\left ([\inf I (E_{n+1,<}), \sup I (E_{n+1,>})]+k_n\cdot \omega \right).$$ 
		Thus $$\theta^*+x\cdot \omega\in [\inf I (E_{n+1,>}), \sup I (E_{n+1,<})]\cup [\inf I (E_{n+1,<}), \sup I (E_{n+1,>})]+k_n\cdot \omega.$$ It follows that $\theta^*+x\cdot \omega \in 
		[\inf I (E_{n+1,>}), \sup I (E_{n+1,<})]$ or $\theta^*+ (x-k_n)\cdot \omega \in 
		[\inf I (E_{n+1,<}), \sup I (E_{n+1,>})]$. Thus 
		\begin{align*}
			\theta^*+p\cdot \omega&\in  	[\inf I (E_{n+1,>}), \sup I (E_{n+1,<})]\cup[\inf I (E_{n+1,<}), \sup I (E_{n+1,>})]\\
			&=I (E_{n+1,<})\cup I (E_{n+1,>})
		\end{align*} for some $p\in \{x,x-k_n\}$. For $j=2$, we define $\bar{S}_n^{\Lambda}$ to be the set of all points  $p$ coming from  $x\in S_n^{\Lambda}$ as above. 
		Since $\Lambda$ is $(n+1)$-regular and $\|k_n\|_1\leq 10l_{n+1}^{(1)}<2l_{n+1}^{(2)}$, it follows that $\bar{S}_n^{\Lambda}\subset \Lambda$. Since $\Lambda$ is $(n+1)$-nonresonant,   by the eigenvalue separation estimates from Proposition \ref{k2n}, it follows that  $|\hat{E}-E^*|\geq \delta_{n+1}$ for any eigenvalue $\hat{E}$ of $H_{B_{n+1}^{(2)}+p} (\theta^*)$, and so 
		$$\|G_{B_{n+1}^{(2)}+p} (\theta^*,E^*)\|\leq \delta_{n+1}^{-1}.$$
		Now we assume  that  $E^*\notin\tilde{J} (E_{n+1})$ for all $E_{n+1}\in \mathcal{C} (E_{n})$. Recall that  $\tilde{\tilde{J}} (E_n)$ is almost  covered by $\cup_{E_{n+1}\in \mathcal{C} (E_{n})}\tilde{J} (E_{n+1})$ except the regions near a critical value of some $E_{n+1}\in \mathcal{C} (E_{n})$ (c.f. Hypothesis \ref{h5} for $m=n+1$). Thus the $$\theta\in\bigcup_{E_{n+1}\in \mathcal{C} (E_n) }I (E_{n+1})$$ minimizing  $$ \operatorname{dist} (E^*,E_{n+1} (\theta)),\ E_{n+1}\in \mathcal{C} (E_{n})$$ must attain at a critical point  of some $E_{n+1}\in \mathcal{C}^{(j)} (E_{n})$. By the same argument as above, it follows that for this $E_{n+1}$, 
		$\theta^*+p\cdot \omega \in  I (E_{n+1})$ for some $p\in \{x,x-k_n\}$.  By the definition of $\tilde{J} (E_{n+1})$, 
		we have  $$\delta_{n+1}\leq |E_{n+1} (\theta^*+p\cdot \omega)-E^*|. $$ Thus  we also have  $$\|G_{B_{n+1}^{(j)}+p} (\theta^*,E^*)\|\leq \delta_{n+1}^{-1}.$$ 
		For both $j=1,2$, by the $(n+1)$-regularity of $\Lambda$,  $(B_{n+1}+p)\subset \Lambda$ for all $p\in \bar{S}_n^{\Lambda}$. Moreover, by Diophantine condition, the separation of $p\in \bar{S}_n^{\Lambda}$ is larger than the size of $B_{n+1}$, thus the blocks $\{B_{n+1}+p\}_{p\in \bar{S}_n^{\Lambda} }$ are non-overlapping.

		Next,  we prove the case  that  $\Lambda=B_{n+1}+p$ is a  single $(n+1)$-nonresonant block. By the previous discussion, we have  
		\begin{equation*}
			\|G_{\Lambda} (\theta^*,E^*)\|\leq\delta_{n+1}^{-1}.
		\end{equation*}
		It follows that  by Neumann series argument, for $|\theta-\theta^*|<\delta_{n+1}/ (10D)$ and  $|E-E^*|<\frac{2}{5}\delta_{n+1}$, 
		\begin{equation*}\label{L2}
			\|G_{\Lambda } (\theta,E)\|\leq2\delta_{n+1}^{-1}.
		\end{equation*}
		Let $x,y\in \Lambda $ satisfy  $\|x-y\|_1\geq l_{n+1}^\frac{4}{5}$. Since $G_{\Lambda }$ is self-adjoint, we may  assume $\|x-p\|_1\geq l_{n+1}^\frac{3}{4}$. Let $\Omega_{n+1,p}$ be an $l_{n+1}^\frac{2}{3}$-size block centered at $p$ such that  $\Lambda \setminus \Omega_{n+1,p}$ is $n$-good. Recalling  the estimates for $n$-good sets from Hypothesis \ref{h6}, by   resolvent identity, we obtain 
		\begin{align*}
			|G_{\Lambda } (\theta,E;x,y)|&=|G_{\Lambda \setminus \Omega_{n+1,p}} (x,y)\chi (y)+\sum_{z,z'}G_{\Lambda \setminus \Omega_{n+1,p}} (x,z)\Gamma_{z,z'}G_{\Lambda } (z',y)|\\
			&\leq e^{-\gamma_n\|x-y\|_1}+C (d)\sup_{z,z'}e^{-\gamma_n\|x-z\|_1}|G_{\Lambda } (z',y)|\\
			&\leq e^{-\gamma_n\|x-y\|_1}+C (d)\sup_{z,z'}e^{-\gamma_n\|x-z\|_1}e^{-\gamma_n (\|z'-y\|_1-l_{n+1}^\frac{3}{4})}\delta_{n+1}^{-1}\\
			&\leq e^{-\gamma'_n\|x-y\|_1}
		\end{align*}
		with  $\gamma'_n= (1-O (l_{n+1}^{-\frac{1}{30}}))\gamma_n$, where we have used if $\|z'-y\|_1\leq l_{n+1}^\frac{3}{4}$, then 
		\begin{equation*}
			|G_{\Lambda} (z',y)|\leq 	\|G_{\Lambda}\|\leq2\delta_{n+1}^{-1}\leq 2e^{-\gamma_n (\|z'-y\|_1-l_{n+1}^\frac{3}{4})}\delta_{n+1}^{-1}, 
		\end{equation*}
		if $\|z'-y\|_1\geq l_{n+1}^\frac{3}{4}$, then 
		\begin{align*}
			|G_{\Lambda} (z',y)|=	|G_{\Lambda} (y,z')|&\leq 	\sum_{w,w'}|G_{\Lambda \setminus \Omega_{n+1,p}} (y,w)\Gamma_{w,w'}G_{\Lambda} (w',z')|\\
			&\leq C (d)e^{-\gamma_n\|y-w\|_1}\|G_{\Lambda}\| \\
			&\leq C (d)	e^{-\gamma_n (\|y-z'\|_1-l_{n+1}^\frac{3}{4})}\delta_{n+1}^{-1}
		\end{align*} and  $\delta_{n+1}^{-1}=e^{l_{n+1}^\frac{2}{3}}\ll e^{\gamma_n\|x-y\|_1}$ to bound the second term. Thus  we finish  the case that  $\Lambda$ is a single $(n+1)$-nonresonant block.\\  
		Now,  assume $\Lambda$ is an arbitrary $(n+1)$-good set. We must show that $G_\Lambda$ dose exist. By Schur's test, it suffices to prove 
		\begin{equation}\label{Schurn}
			\sup_x\sum_{y}|G_\Lambda (\theta,E+io;x,y)|<C<\infty.
		\end{equation}
		Define  $$\Lambda':=\Lambda\setminus\bigcup_{p\in \bar{S}_n^{\Lambda}} \Omega_{n+1,p}.$$ Then $\Lambda'$ is $n$-nonresonant  since $S_n^{\Lambda}\subset\cup_{p\in \bar{S}_n^{\Lambda}} \Omega_{n+1,p}$ and is $n$-regular since $\Lambda$ is $(n+1)$-regular and each $(B_{n+1}+p)\setminus\Omega_{n+1,p}$ is $n$-regular.  Let $\tilde{\Omega}_{n+1,p}$ be  a  $2l_{n+1}^\frac{2}{3}$-size cube centered at $p$.  For $x\in \Lambda\setminus\cup_{p\in \bar{S}_n^{\Lambda}} \tilde{\Omega}_{n+1,p}$, by resolvent identity, we have
		\begin{align*}
			\sum_y|G_\Lambda (x,y)|&\leq \sum_y|G_{\Lambda'} (x,y)|+\sum_{z,z',y}|G_{\Lambda'} (x,z)\Gamma_{z,z'}G_{\Lambda} (z',y)|\\
			&\leq	C (d)\delta_n^{-2}+	C (d)e^{-l_{n+1}^\frac{2}{3}}\sup_{z'}\sum_y|G_{\Lambda} (z',y)|.
		\end{align*}
		For $x\in \tilde{\Omega}_{n+1,p}$, by resolvent identity, we have
		\begin{align*}
			\sum_y|G_\Lambda (x,y)|&\leq \sum_y|G_{B_{n+1}+p} (x,y)|+\sum_{z,z',y}|G_{B_{n+1}+p} (x,z)\Gamma_{z,z'}G_{\Lambda} (z',y)|\\
			&\leq \delta_{n+1}^{-2}+C (d)e^{-\frac{1}{2}l_{n+1}}\sup_{z'}\sum_y|G_{\Lambda} (z',y)|.
		\end{align*}
		Taking supremum for $x$ on the left hand side of the above two inequalities, we get $$\sup_x\sum_y|G_\Lambda (x,y)|\leq \delta_{n+1}^{-2}+\frac{1}{2}\sup_x\sum_y|G_\Lambda (x,y)|,$$ thus 
		$$\sup_x\sum_y|G_\Lambda (x,y)|\leq 2\delta_{n+1}^{-2},$$
		which gives \eqref{Schurn}.
		Since for $|\theta-\theta^*|<\delta_{n+1}/ (10D)$ and $|E-E^*|<\frac{2}{5}\delta_{n+1}$, $G_\Lambda (\theta,E)$ does exist,  we obtain   $\operatorname{dist} (\sigma (H_\Lambda (\theta)),E^*)\geq \frac{2}{5}\delta_{n+1}$ and hence $\operatorname{dist} (\sigma (H_\Lambda (\theta)),E)\geq \frac{1}{5}\delta_{n+1}$ for all $|E-E^*|<\frac{1}{5}\delta_{n+1}$. Thus we get the desired  operator norm 
		$$\|G_\Lambda (\theta,E)\|=\|\left ( H_\Lambda (\theta)-E\right) ^{-1}\|=\frac{1}{\operatorname{dist} (\sigma (H_\Lambda (\theta)),E)}\leq10\delta_{n+1}^{-1}.$$
		It remains to  prove the off-diagonal decay of $G_\Lambda$.
		Let $x,y\in \Lambda$ satisfy $\|x-y\|_1\geq l_{n+1}^\frac{5}{6}$. We define 
		\[B_x:=\left\{\begin{aligned}
			&\Lambda_{\l_{1}^\frac{1}{2}} (x)\cap\Lambda  \quad \text{if }   x\in \Lambda\setminus\bigcup_{p\in \bar{S}_0^{\Lambda}} \tilde{\Omega}_{1,p}, \\
			&B_m+p_m\ \quad \text{if }  x \in\tilde{\Omega}_{m,p_m},
		\end{aligned}\right. \]
		where $p_m\in \bar{S}_{m-1}^{\Lambda}$ and $m$ satisfying    $1\leq m\leq n+1$ is the smallest integer  such that  $p_m\notin S_m (\theta^*,E^*)$.
		The set $B_x$ has the following two properties: \textbf{(1).} $B_x$ is $m$-good for some $0\leq m\leq n+1$; \textbf{(2).} $x$ is closed  to the center of $B_x$ and away from its relative boundary $\partial_\Lambda B_x$. We can iterate the resolvent identity  to obtain 
		\begin{align}
			|G_\Lambda (x,y)|&\leq\prod_{s=0}^{L-1}  (C (d) l_{m_s}^d e^{-\gamma_{m_s-1}'\|x_{s}-x_{s+1}\|_1})|G_\Lambda (x_L,y)|\nonumber \\
			&\leq e^{-\gamma_n''\|x-x_L\|_1}|G_\Lambda (x_L,y)|, \label{728n}
		\end{align}
		where $x_0:=x$, $x_{s+1}\in \partial B_{x_s}$  with  $\|x_{s}-x_{s+1}\|_1\geq \frac{1}{2}l_{m_s}\gg l_{m_s}^{\frac{5}{6}}$. We stop the iteration until  $y\in B_{x_L}$. 
		By resolvent identity again, we get 
		\begin{align}
			|G_\Lambda (x_L,y)|&\leq|G_{B_{x_L}} (x_L,y)|+\sum_{z,z'}|G_{B_{x_L}} (x_L,z)\Gamma_{z,z'}G_{\Lambda} (z',y)|\nonumber\\
			&\leq C (d) e^{-\gamma_n' (\|x_L-y\|_1-l_{n+1}^\frac{4}{5})}\delta_{n+1}^{-1},\label{728.n}
		\end{align}
		where we have used the off-diagonal exponential decay of $G_{B_{x_L}}$ and   $\|G_{\Lambda}\|\leq10\delta_{n+1}^{-1}$. Since $\|x-y\|_1\geq l_{n+1}^\frac{5}{6}$, \eqref{728n} and \eqref{728.n} give the desired off-diagonal estimate
		$$|G_\Lambda (x,y)|\leq e^{-\gamma_{n+1}\|x-y\|_1}$$
		with $\gamma_{n+1}= (1-O (l_{n+1}^{-\frac{1}{30}}))\gamma_n$.  
		
		Thus we finish the proof.
	\end{proof}
	
	\section{Proof of main Theorem}\label{mainp}
	In this section, we will relate the Rellich functions we have constructed in the previous section to spectral and dynamical properties  of  $H$ and give the proof of Theorem \ref{main}.
	\subsection{Anderson localization}
	
	We first construct the set where Anderson localization can occur.
	
	For each function $E_n\in \mathcal{C}_n\  (n\geq 1)$,  noticing that  $\delta_{n-2}^{10}\ll (l_{n+2})^{-2\tau}$, one can  employ the argument  from the proof of  Lemma \ref{sepn} to  show that there  exists at most one $\tilde{k}_n$ with $0\neq \|{\tilde{k}_n}\|_1\leq 10l_{n+2}$, such that there is some $\theta_{{\tilde{k}_n},-}\in  I (E_n)_-$ satisfying 
	$$\theta_{{\tilde{k}_n},-}+{\tilde{k}_n}\cdot \omega  \in  I (E_n)_+, \ e_{\tilde{k}_n}:= E_n (\theta_{{\tilde{k}_n},-})=E_n (\theta_{{\tilde{k}_n},-}+{\tilde{k}_n}\cdot \omega). $$ 
	Moreover, if such $\tilde{k}_n$ exists, then $E_n$ belongs to \textbf{Type} \ref{t1}. 
	If the  above $\tilde{k}_n$ exists, we define 
	\begin{equation}\label{ALDL}
		J^{DR} (E_n):=B_{\delta_{n-1}^{10}} (e_{\tilde{k}_n}).
	\end{equation}
	Now define 
	\begin{equation}\label{bad}
		K_n:=\bigcup_{E_n\in\mathcal{C}_n}\bigcup_{\|k\|_1\leq l_{n+1}} (E_n^{-1} (J^{DR} (E_n))+k\cdot \omega )
	\end{equation}
	and 
	\begin{equation}\label{ALset}
		K:=\bigcap_{s\geq 1}\bigcup_{n\geq s }K_n.   
	\end{equation}
	
	Denote by $|\cdot|$ the Lebesgue measure of a set. We have
	\begin{lem}\label{ALLem}
		$|K|=0.$	
	\end{lem}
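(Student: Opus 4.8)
The plan is to run the Borel--Cantelli lemma: since $K=\bigcap_{s\ge1}\bigcup_{n\ge s}K_n=\limsup_{n\to\infty}K_n$, it suffices to prove $\sum_{n\ge1}|K_n|<\infty$. For this I would estimate $|K_n|$ scale by scale, combining three ingredients: (i) a bound on the number of Rellich functions $\#\mathcal C_n$; (ii) a bound on $|E_n^{-1}(J^{DR}(E_n))|$ coming from the Morse/two\-/monotonicity structure; and (iii) the trivial bound $\#\{k\in\Z^d:\|k\|_1\le l_{n+1}\}\lesssim l_{n+1}^d$. Since a translation on $\T$ preserves Lebesgue measure, $|E_n^{-1}(J^{DR}(E_n))+k\cdot\omega|=|E_n^{-1}(J^{DR}(E_n))|$, so only (i)--(iii) are needed.

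For the counting, each $E_n\in\mathcal C_n$ has a unique parent in $\mathcal C_{n-1}$, hence $\#\mathcal C_n\le \#\mathcal C_{n-1}\cdot\max_{E_{n-1}\in\mathcal C_{n-1}}\#\mathcal C(E_{n-1})$. By Proposition \ref{covern}\textbf{(5)} applied at scale $n-1$, the children of $E_{n-1}$ are indexed by at most $2\delta_{n-2}^{-10}$ energy intervals, each spawning at most two Rellich functions (one in the simple resonance case, two in the double resonance case or in \textbf{Subcase B}), so $\#\mathcal C(E_{n-1})\le 4\delta_{n-2}^{-10}$; at the first scale $\#\mathcal C_1\lesssim_v\delta_0^{-1/100}$ by Proposition \ref{coverpr}\textbf{(5)}. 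Iterating gives $\#\mathcal C_n\lesssim_v 4^n\prod_{j=0}^{n-2}\delta_j^{-10}$. Recalling $\delta_m=e^{-l_m^{2/3}}$ and $l_j^{2/3}=(l_{j-1}^{2/3})^4$, the ratios $l_{j-1}^{2/3}/l_j^{2/3}=(l_{j-1})^{-2}$ are $\le\frac12$ for $\varepsilon_0$ small, so $\sum_{j=1}^{n-2}l_j^{2/3}\le 2l_{n-2}^{2/3}$ and therefore $\#\mathcal C_n\lesssim_v 4^n\,\delta_{n-2}^{-20}\lesssim\delta_{n-2}^{-21}$ for $\varepsilon_0$ small and $n$ large (and a finite constant for each small $n$).

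For the preimage bound, note $J^{DR}(E_n)=B_{\delta_{n-1}^{10}}(e_{\tilde k_n})$ is defined only when $\tilde k_n$ exists, and in that case $E_n$ belongs to \textbf{Type} \ref{t1} (as recorded just before \eqref{ALDL}), so by Hypothesis \ref{h4} we have $|E_n'|\gtrsim\delta_{n-2}^{10}$ on each of the two monotonicity intervals. Since $E_n$ is $C^1$ and strictly monotone on each such interval, the change-of-variables inequality yields $|E_n^{-1}(J^{DR}(E_n))|\lesssim|J^{DR}(E_n)|\,\delta_{n-2}^{-10}\lesssim\delta_{n-1}^{10}\delta_{n-2}^{-10}$. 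Combining,
\[
|K_n|\le\sum_{E_n\in\mathcal C_n}\#\{k:\|k\|_1\le l_{n+1}\}\cdot\big|E_n^{-1}(J^{DR}(E_n))\big|\lesssim\#\mathcal C_n\cdot l_{n+1}^d\cdot\delta_{n-1}^{10}\delta_{n-2}^{-10}\lesssim l_{n+1}^d\,\delta_{n-1}^{10}\,\delta_{n-2}^{-21}.
\]
Now $\delta_{n-1}^{10}\delta_{n-2}^{-21}=e^{-10l_{n-1}^{2/3}+21l_{n-2}^{2/3}}\le e^{-l_{n-1}^{2/3}}$ for $\varepsilon_0$ small (because $l_{n-1}^{2/3}=(l_{n-2}^{2/3})^4$), while $l_{n+1}^d$ is only a fixed power of $l_{n-1}^{2/3}=|\log\varepsilon_0|^{(2/3)4^{n-1}}$. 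Hence $|K_n|\lesssim (l_{n-1}^{2/3})^{C(d)}e^{-l_{n-1}^{2/3}}$, and since $l_{n-1}^{2/3}\to\infty$ doubly exponentially in $n$ this is summable; Borel--Cantelli then gives $|K|=|\limsup_n K_n|=0$.

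The main obstacle is the counting step: one must extract from Propositions \ref{coverpr} and \ref{covern} a clean bound on $\#\mathcal C_n$ and then check that its (rapid, essentially $e^{O(l_{n-2}^{2/3})}$) growth is nonetheless crushed by the factor $\delta_{n-1}^{10}$ produced by the \emph{next, finer} scale — and not merely by the same-scale quantity $\delta_{n-2}^{10}$, which would be insufficient. The rest is bookkeeping with the scale parameters.
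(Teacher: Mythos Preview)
Your proof is correct and follows the same Borel--Cantelli strategy as the paper: bound $\#\mathcal C_n$, bound the preimage $|E_n^{-1}(J^{DR}(E_n))|$, multiply by the number of translates, and sum. The one substantive difference is in step (ii): the paper invokes the Morse condition via Lemma~\ref{C2} to get the square-root bound $|E_n^{-1}(J^{DR}(E_n))|\le 4\delta_{n-1}^5$, whereas you exploit the observation (recorded just before \eqref{ALDL}) that the existence of $\tilde k_n$ forces $E_n$ to be of \textbf{Type}~\ref{t1}, so the derivative lower bound $|E_n'|\gtrsim\delta_{n-2}^{10}$ gives the sharper linear estimate $|E_n^{-1}(J^{DR}(E_n))|\lesssim\delta_{n-1}^{10}\delta_{n-2}^{-10}$ by direct change of variables. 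Both routes are valid; yours is slightly more economical here, while the paper's square-root bound is the more robust one (it would survive even if $J^{DR}$ met a critical value). Your counting of $\#\mathcal C_n$ is also marginally more careful than the paper's, correctly inserting a factor of~$2$ per generation for the intervals that spawn a pair of Rellich children.
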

	\begin{proof}
		By construction (c.f. \textbf{(5)} of Proposition \ref{covern}), for any $E_s\in \mathcal{C}_s$, $|\mathcal{C} (E_s)|\leq 2\delta_{s-1}^{-10}$. Thus for $n\geq 1$, 
		\begin{equation}\label{gu1}
			|\mathcal{C}_n|\leq \prod_{s=0}^{n-1} (2\delta_{s-1}^{-10})\leq \delta_{n-2}^{-11}.
		\end{equation}
		Fix $E_n\in \mathcal{C}_n$.  Since each  $|J^{DR} (E_n)|=2\delta_{n-1}^{10}$  and  $E_n$ satisfies the Morse condition with the  two-monotonicity interval structure, we can employ Lemma \ref{C2} to obtain 
		\begin{equation}\label{gu2}
			|E_n^{-1} (J^{DR} (E_n))|\leq 2\sqrt{2|J^{DR} (E_n)|}\leq 4\delta_{n-1}^5.
		\end{equation} Combining the estimates \eqref{gu1} and \eqref{gu2}, we get 
		$$|K_n|\lesssim  \delta_{n-2}^{-11}  (l_{n+1})^d \delta_{n-1}^5\leq \delta_{n-1}$$
		and then $$\sum_{n=1}^{+\infty }|K_n| <+\infty.$$ Thus by Borel-Cantelli Lemma, $|K|=0$.
	\end{proof}

	Thus $\Theta:=\mathbb{T}\setminus K$ is a set of full measure. For the remainder of this section, we fix $\theta \in \Theta$. Thus there exists some $N_1$ such that $\theta\notin K_n$ for all $n\geq N_1$.

	To prove $H (\theta)$ has Anderson localization using Schnol's lemma,     it suffices  to show that every generalized eigenfunction $\psi$ of $H (\theta)$ that grows at most polynomially    (i.e., $ |\psi (x)|\leq  ( 1+\|x\|_1)^d$) in fact decays exponentially. For the remainder of this section, we fix a generalized eigenvalue $E$ and its generalized eigenfunction $\psi$. 
	
	\begin{lem}\label{ALL}
		There exists some $N_2$ such that for all $n \geq N_2$, $E\in \tilde{J} (E_n)$ for some $E_n\in \mathcal{C}_n$. Moreover, there exists an $x_n\in \Z^d$ satisfying $\|x_n\|_1\leq 100l_n$ such that $x_n\in S_n (\theta,E)$ relative to this  $E_n$. 
	\end{lem}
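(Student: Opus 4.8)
\textbf{Proof proposal for Lemma \ref{ALL}.}

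The plan is to argue that the generalized eigenvalue $E$ must eventually fall inside one of the modified codomains $\tilde J(E_n)$ and that the operator must exhibit a resonant site at the $n$-scale cube centered at the origin, for otherwise $\psi$ would decay exponentially and contradict the hypothesis that $\psi$ is a nontrivial generalized eigenfunction (Schnol growth is incompatible with exponential decay to zero of a nonzero solution). First I would record the covering statement: by Hypothesis \ref{h5}, the modified codomains $\tilde J(E_n)$ (equivalently $\tilde{\tilde J}(E_n)$ on the region where the Rellich functions do not attain critical values) of the $n$-scale Rellich functions together cover, up to a controlled amount, the whole spectral range, so any fixed $E$ in the (compact) spectrum lies in $\tilde J(E_n)$ for some $E_n\in \mathcal{C}_n$ once $n$ is large; the boundary ambiguity is handled by the overlap of adjacent codomains, which is $\simeq 3\delta_{n-1}$, far larger than the contraction $O(\delta_n^{10})$ from the domain adjustment. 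This gives the first assertion: there is $N_2$ with $E\in \tilde J(E_n)$ for all $n\ge N_2$.

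Next I would show the existence of the resonant site $x_n$ with $\|x_n\|_1\le 100l_n$. Suppose not: then the cube $\Lambda_{100l_n}$ (or a suitably chosen block between $\Lambda_{50l_n}$ and $\Lambda_{100l_n}$, adjusted to be regular by the remark after Hypothesis \ref{h6} and Appendix \ref{chouti}) contains no point of $S_n(\theta,E)$ relative to $E_n$. Moreover, because $\theta\notin K_n$ for $n\ge N_1$ (using $\theta\in\Theta$), the bad set \eqref{bad} is avoided, which rules out the double-resonance configuration; combined with the inductive separation of resonant sites (Diophantine condition plus the two-monotonicity/Morse structure of $E_n$, exactly as in Lemma \ref{sepn} and Proposition \ref{SRnpro}), this lets me arrange that the chosen block is in fact $n$-good (both $n$-nonresonant and $n$-regular) relative to $(\theta,E)$. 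Then Hypothesis \ref{h6} (or Theorem \ref{rep}) applies on this block and on a slightly larger annulus, giving the Green's function estimates $\|G_{\Lambda}(\theta,E)\|\le 10\delta_n^{-1}$ and off-diagonal decay $|G_\Lambda(\theta,E;x,y)|\le e^{-\gamma_n\|x-y\|_1}$ for $\|x-y\|_1\ge l_n^{5/6}$, with $\gamma_n\ge \tfrac12\gamma_0$.

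Finally I would feed these estimates into the Poisson formula
$$\psi(x)=-\sum_{z,z'}G_{\Lambda}(\theta,E;x,z)\Gamma_{z,z'}\psi(z'),\qquad x\in\Lambda,$$
applied on an annular region between scale $\sim l_n$ and scale $\sim l_{n+2}$ (so that the inner boundary is far from the origin and the outer boundary supplies the polynomial factor): since $\psi$ grows at most polynomially, $|\psi(z')|\lesssim (1+\|z'\|_1)^d$, while the Green's function contributes exponential decay $e^{-\gamma_n(\cdot)}$ across the annulus width. Choosing $x$ near the origin and letting $n\to\infty$ along the sequence of annuli, the polynomial bound is absorbed by the exponential decay and one obtains $\psi(o)=0$, and indeed $\psi\equiv 0$ near the origin, hence $\psi\equiv 0$ on $\Z^d$ by unique continuation for the difference equation — contradicting that $\psi$ is a nontrivial generalized eigenfunction. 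Therefore the resonant site $x_n$ must exist for all large $n$; a separate (and easier) counting using the Morse condition and the $n$-good annulus construction shows it can be taken with $\|x_n\|_1\le 100 l_n$. The main obstacle I anticipate is the bookkeeping in the second step: verifying that the block around the origin can be simultaneously chosen $n$-regular and $n$-nonresonant once $\theta$ avoids $K_n$ — this requires carefully matching the scales in \eqref{bad} (which uses shifts $\|k\|_1\le l_{n+1}$) with the size of the cube and invoking the separation of resonant sites inductively, rather than any new analytic input.
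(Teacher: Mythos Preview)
Your overall strategy (contradiction via Poisson formula and Green's function decay) is the right one, and matches the paper's. But the execution has two genuine problems.

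\medskip
\textbf{Wrong geometry.} You apply the Poisson formula on an \emph{annular} region between scales $l_n$ and $l_{n+2}$, with inner boundary ``far from the origin,'' and then choose $x$ ``near the origin.'' These are incompatible: if the annulus excludes a neighborhood of the origin, the Poisson formula on that annulus says nothing about $\psi(x)$ for $x$ near the origin. The annulus argument you describe is the one used \emph{after} Lemma~\ref{ALL} in the proof of localization, where one already knows a resonant site sits near the origin. For Lemma~\ref{ALL} itself the correct picture is the opposite: negate the conclusion along a subsequence $\{n_i\}$, so that either $E\notin\tilde J(E_{n_i})$ for all $E_{n_i}\in\mathcal C_{n_i}$ or $\Lambda_{100l_{n_i}}\cap S_{n_i}(\theta,E)=\emptyset$. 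In either case Hypothesis~\ref{h6} (note its last sentence covers the first alternative) applies to an $n_i$-regular block $U_i$ with $\Lambda_{50l_{n_i}}\subset U_i\subset\Lambda_{100l_{n_i}}$, which \emph{does} contain the origin. Poisson at any fixed $x$ (with $\|x\|_1$ bounded, eventually $\ll l_{n_i}$) then gives $|\psi(x)|\lesssim l_{n_i}^{2d}e^{-c\gamma_0 l_{n_i}}\to 0$, hence $\psi\equiv 0$, a contradiction.

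\medskip
\textbf{Unnecessary and unjustified ingredients.} You invoke $\theta\notin K_n$, double-resonance exclusion, and separation of resonant sites to get $n$-goodness of the block. None of this is needed: if $\Lambda_{100l_n}$ contains no point of $S_n(\theta,E)$ it is $n$-nonresonant by definition, and Appendix~\ref{chouti} supplies an $n$-regular subset; that is all. More seriously, you close the argument with ``$\psi\equiv 0$ by unique continuation for the difference equation.'' Unique continuation can fail for discrete Schr\"odinger operators on $\Z^d$ with $d\ge 2$, and the paper neither uses nor needs it: because the good block $U_i$ contains the origin and its radius tends to infinity, the Poisson estimate directly yields $\psi(x)=0$ for \emph{every} fixed $x$, without any propagation step.

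\medskip
Finally, your separate covering argument for the first assertion ($E\in\tilde J(E_n)$ for large $n$) is plausible but superfluous: the paper folds both assertions into a single contradiction via the last clause of Hypothesis~\ref{h6}, which already handles the case $E\notin\bigcup_{E_n}\tilde J(E_n)$.
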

	\begin{proof}
		If not,   there exists an increasing integer sequence $\{n_i\}_{i=0}^{\infty}$ such that $\Lambda_{100l_{n_i}}$ is $n_i$-nonresonant and  $\{U_i\}_{i=0}^{\infty}$  such that $U_i$ is $n_i$-regular and satisfies   $$\Lambda_{50l_{n_i}}\subset 	U_i\subset \Lambda_{100l_{n_i}}. $$ Thus $U_i$ satisfies the condition of Hypothesis \ref{h6} at scale $n_i$. It follows that $G_{U_i} (\theta,E;x,y)$ decays exponentially for $\|x-y\|_1\geq l_{n_i}^\frac{5}{6}$.  For any $\|x\|_1\leq \frac{1}{3}n_i$, one has $\operatorname{dist} (x,\partial U_i)\geq4l_{n_i}>  l_{n_i}^\frac{5}{6} $.  Thus applying the Poisson formula gives 
		$$|\psi (x)|\leq\sum_{z,z'}|G_{U_i} (x,z)\Gamma_{z,z'} \psi (z')|\lesssim l_{n_i}^{2d}e^{-2\gamma_0l_{n_i}}\leq e^{-\gamma_0l_{n_i}} .$$
		Fixing $x\in \Z^d$, we have $\|x\|_1\leq \frac{1}{3}n_i$, as $i\to \infty$.  Thus we must have $\psi (x)=0$ by letting $i\to\infty$, a contradiction since $\psi\neq 0$.  
	\end{proof}
	
	We are ready to prove Anderson localization. 
	\begin{proof}[Proof of Anderson localization]
		First,  we  note that  fixing $E_n\in \mathcal{C}_n$, if $E_n$ belongs to \textbf{Type} \ref{t3}, then for any $\theta\in I (E_{n,>})\cup I (E_{n,<})$ and $0<\|y\|_1\leq  10l_{n+2}$,  we have $\theta+y\cdot \omega \notin I (E_{n,>})\cup I (E_{n,<})$ since  $I (E_{n,>})\cup I (E_{n,<})$ is a single interval with length $\ll  l_{n+2}^{-\tau}$, and if $\theta\in  I (E_n)\setminus E_n^{-1} (J^{DR} (E_n))$, then $\theta$ is in the simple resonant interval. An analogous argument in  the proof of Proposition \ref{SRnpro} yields that   for $0<\|y\|_1\leq  10l_{n+2}$ with  $\theta+y\cdot \omega \in I (E_n)$, 
		\begin{equation}\label{ALL1}
			|E_n (\theta)-E_n (\theta+y\cdot \omega )|\geq 3\delta_n.
		\end{equation}
		
		Let  $n\geq \max  (N_1,N_2):=N$. 	By Lemma \ref{ALL} and the definition of $S_n (\theta,E)$, there exists some $x_n$ satisfying $\|x_n\|_1\leq 100l_n\leq l_{n+1}$, such that $\theta+x_n\cdot \omega \in I (E_n)$ and 
		$$|E_n (\theta+x_n\cdot \omega )-E|\leq \delta_n.$$
		Since  $\theta\notin K_n$, by the definition of $K_n$, we have   $\theta+x_n\cdot \omega\notin  E_n^{-1} (J^{DR} (E_n))$. Thus by \eqref{ALL1} (with $\theta+x_n\cdot \omega$ in the role of $\theta$), for any $x$ with $0<\|x-x_n\|_1\leq 10l_{n+2}$ and  $ \theta+x\cdot \omega \in I (E_n)$,
		$$|E_n (\theta+x\cdot \omega )-E|\geq |E_n (\theta+x\cdot \omega)-E_n (\theta+x_n\cdot \omega )|- |E_n (\theta+x_n\cdot \omega )-E|\geq \delta_n.$$
		It follows that the annulus $\Lambda_{9l_{n+2}}\setminus \Lambda_{l_{n+1}}$ is $n$-nonresonant.  Thus there exists an $n$-good set $\Lambda$, which is  a deformation of  $\Lambda_{8l_{n+2}}\setminus \Lambda_{2l_{n+1}}$ satisfying 
		$$  (\Lambda_{8l_{n+2}}\setminus \Lambda_{2l_{n+1}})\subset   \Lambda \subset (\Lambda_{9l_{n+2}}\setminus \Lambda_{l_{n+1}}).$$
		Let  $x\in \Lambda_{6l_{n+2}}\setminus \Lambda_{4l_{n+1}}.$ Then $$\operatorname{dist} (x,\partial\Lambda)\geq \frac{1}{5}\|x\|_1>l_n^{\frac{5}{6}}. $$
		Thus 
		\begin{align*}
			|\psi (x)|&\leq\sum_{z,z'}|G_{\Lambda} (x,z)\Gamma_{z,z'} \psi (z')|\\
			&\lesssim  (l_{n+2})^{2d}e^{-\frac{1}{10}\gamma_0\|x\|_1}\\
			&\leq e^{-\frac{1}{20}\gamma_0\|x\|_1}.	
		\end{align*}
		Since $$\left(\Z^d\setminus\bigcup_{n\geq N} (\Lambda_{6l_{n+2}}\setminus \Lambda_{4l_{n+1}} )\right)\subset \Lambda_{4l_{N+1}}, $$
		the above estimate shows  the exponential decay of $\psi (x)$ for $\|x\|_1> 4l_{N+1}$. 
		
		We  finish the proof of Anderson localization for $\theta\in \Theta$.
	\end{proof}

	\subsection{Dynamical  localization}
	We first construct the set where dynamical localization can  occur.
	
	Fix $E_n\in \mathcal{C}_n\  (n\geq 1)$.  Recalling the definition of $J^{DR} (E_n)$ in the previous section (c.f. \eqref{ALDL}), we  define  analogously   $J^{DR}_\rho$ for $\rho>0$,
	$$J^{DR}_\rho (E_n):=   B_{\rho\delta_{n-1}^{9}} (\tilde{k}_n), $$
	and 
	$$K_{\rho}:=\bigcup_{n\geq 1}\bigcup_{E_n\in\mathcal{C}_n}\bigcup_{\|k\|_1\leq l_{n+1}} (E_n^{-1} (J^{DR}_\rho  (E_n))+k\cdot \omega ). $$
	As the proof of Lemma \ref{ALLem}, one can prove  $|K_{\rho}|\leq \rho^\frac{1}{2}.$ Recalling the set $K$ (c.f. \eqref{ALset}), we have  $K\subset K_{\rho}$ for any $\rho>0$. Thus Anderson localization holds for $\theta\in \T\setminus K_\rho$ by the previous theorem. We will show dynamical localization also occurs on this set. 
	
	For the remainder of this section,  we  fix $\theta\in  \T\setminus K_\rho$ and  denote by  $\{\lambda_\alpha,\varphi_\alpha\}_{\alpha\in\N}$ the  complete set of  eigenvalues and corresponding eigenfunctions  of $H (\theta)$.
	Then 
	$${\bm e}_0=\sum_\alpha\varphi_\alpha (0)\varphi_\alpha,$$ where 
	$\varphi_\alpha (0)=\langle \varphi_\alpha, {\bm e}_0\rangle.$
	Hence 
	$$e^{itH(\theta)}{\bm e}_0=\sum_\alpha e^{it\lambda_\alpha}\varphi_\alpha (0)\varphi_\alpha.$$
	To control $$\sum_{x\in \mathbb{Z}^d} (1+\|x\|_1)^q|\langle e^{itH (\theta)}{\bm e}_0, {\bm e}_x\rangle|,$$ it suffices to estimate 
	\begin{equation*}
		\sum_\alpha\left(\sum_x  (1+\|x\|_1)^q |\varphi_\alpha (x)|\right) |\varphi_\alpha (0)|.
	\end{equation*}
	Let $A_0=\emptyset$ and $A_j=\{\alpha:\ |\varphi_\alpha (0)|>e^{-\gamma_0l_j}\}\   (j\geq1).$
	Then we can rewrite the above summation as \begin{equation}\label{kuai}
		\sum_{j=1}^{+\infty}\sum_{\alpha\in A_j\setminus A_{j-1}}\left(\sum_x  (1+\|x\|_1)^q |\varphi_\alpha (x)|\right) |\varphi_\alpha (0)|.
	\end{equation}
	{\begin{lem}
			Let  $\alpha\in A_j$.  Then for all  $n\geq j$, there exists some $E_n\in \mathcal{C}_n$  such that  $\lambda_\alpha\in \tilde{J} (E_n)$. Moreover, there exists an $x_n\in \Z^d$ satisfying $\|x_n\|_1\leq 100l_n$ such that $x_n\in S_n (\theta,\lambda_\alpha)$ relative to this $E_n$. 
		\end{lem}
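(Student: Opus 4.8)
The proof proposal is essentially a repetition of the argument for Lemma~\ref{ALL} carried out for the eigenvalue $\lambda_\alpha$ in place of the generalized eigenvalue $E$, but with the quantitative input that $\alpha\in A_j$ provides. First I would argue by contradiction: suppose that for infinitely many scales $n\ge j$ the cube $\Lambda_{100l_n}$ fails to contain an $n$-resonant site relative to any Rellich function $E_n\in\mathcal C_n$ with $\lambda_\alpha\in\tilde J(E_n)$. This means that for those scales one can find, exactly as in the proof of Lemma~\ref{ALL}, an $n$-regular set $U_n$ with $\Lambda_{50l_n}\subset U_n\subset\Lambda_{100l_n}$ that is also $n$-nonresonant (using the ``$E^*\notin\tilde J(E_n)$'' clause of Hypothesis~\ref{h6} in the case where $\lambda_\alpha$ lies in no $\tilde J(E_n)$); hence $U_n$ is $n$-good, and by Hypothesis~\ref{h6} the Green's function $G_{U_n}(\theta,\lambda_\alpha;x,y)$ decays exponentially for $\|x-y\|_1\ge l_n^{5/6}$.

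Then I would apply the Poisson formula to the eigenfunction $\varphi_\alpha$ of $H(\theta)$ on $U_n$: for $\|x\|_1\le \tfrac13 n$ one has $\operatorname{dist}(x,\partial U_n)\ge 4l_n>l_n^{5/6}$, so
$$
|\varphi_\alpha(x)|\le\sum_{z,z'}|G_{U_n}(\theta,\lambda_\alpha;x,z)\,\Gamma_{z,z'}\,\varphi_\alpha(z')|\lesssim l_n^{2d}e^{-2\gamma_0 l_n}\|\varphi_\alpha\|_{\ell^\infty}\le e^{-\gamma_0 l_n}.
$$
Taking $x=0$ and letting $n\to\infty$ along the bad subsequence gives $|\varphi_\alpha(0)|=0$, which already contradicts $\alpha\in A_j$ since $|\varphi_\alpha(0)|>e^{-\gamma_0 l_j}>0$. (Alternatively, as in Lemma~\ref{ALL}, one gets $\varphi_\alpha\equiv 0$, which is impossible.) This proves that for every $n\ge j$ there is some $E_n\in\mathcal C_n$ with $\lambda_\alpha\in\tilde J(E_n)$; then, again exactly as in the argument establishing $S_n(\theta,E)\ne\emptyset$ in Lemma~\ref{ALL}, the failure of the existence of $x_n\in S_n(\theta,\lambda_\alpha)$ with $\|x_n\|_1\le 100l_n$ would likewise allow the construction of an $n$-good annular region on which $G$ decays, and the same Poisson-formula estimate forces $\varphi_\alpha(0)=0$, a contradiction. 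So the desired $x_n$ exists.

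I do not expect any genuine obstacle here: the only subtlety is the bookkeeping of which clause of Hypothesis~\ref{h6} to invoke — the ``$\Lambda$ is $m$-regular and $E^*\notin\tilde J(E_m)$'' fallback is what lets the argument run even when $\lambda_\alpha$ temporarily escapes all the modified codomains — and the observation that the constant $e^{-\gamma_0 l_j}$ appearing in the definition of $A_j$ is a fixed positive number, so any statement of the form $|\varphi_\alpha(0)|\le e^{-\gamma_0 l_n}$ with $n>j$ (in particular $\to 0$) is already contradictory; one does not even need the full ``$\varphi_\alpha\equiv 0$'' conclusion. The role this lemma plays downstream is to feed into the estimate of \eqref{kuai}: once each $\lambda_\alpha$ with $\alpha\in A_j$ is known to admit $n$-resonant sites near the origin at every scale $n\ge j$, and $\theta\notin K_\rho$ rules out the double-resonant sub-case, one obtains $n$-good annuli on which $G$ decays, hence exponential decay of $\varphi_\alpha$ away from the origin with a rate uniform over $\alpha\in A_j$, which is exactly what is needed to sum \eqref{kuai} and close the Hilbert–Schmidt bound for dynamical localization.
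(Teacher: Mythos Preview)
Your proposal is correct and follows the same approach as the paper, which simply records that the proof is analogous to Lemma~\ref{ALL}. Your explicit observation that a \emph{single} failing scale $n\ge j$ already contradicts $|\varphi_\alpha(0)|>e^{-\gamma_0 l_j}$ (rather than needing infinitely many) is precisely what upgrades the conclusion from ``for all sufficiently large $n$'' in Lemma~\ref{ALL} to the sharper ``for all $n\ge j$'' stated here; the initial ``infinitely many'' framing in your first paragraph is therefore unnecessary and should be replaced by this one-scale argument from the outset.
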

		\begin{proof} The proof is similar  to  that of Lemma \ref{ALL}.
	\end{proof}}
	Let $m\geq 1$ be the smallest integer such that $\rho \geq \delta_{m-1}$. From  the above lemma,  as in  the proof of  Anderson localization (via the Poisson formula argument), one can prove that for $\alpha\in A_j,x\in \Z^d $ satisfying $\|x\|_1>4\max  (l_{m+1}, l_{j+1}):=L_{\rho,j}$, 
	$$	|\varphi_\alpha (x)|\leq e^{-\frac{1}{20}\gamma_0\|x\|_1}.	$$
	By the exponential decay of the eigenfunction, we employ the Hilbert-Schmidt argument to obtain   \begin{align*}
		L_{\rho,j}^d&\gtrsim\sum_{\|x\|_1\leq L_{\rho,j}}\sum_\alpha|\varphi_\alpha (x)|^2\\
		&\geq\sum_{\alpha\in A_j}\sum_{\|x\|_1\leq L_{\rho,j}}|\varphi_\alpha (x)|^2\\
		&=\sum_{\alpha\in A_j} (1-\sum_{\|x\|_1>L_{\rho,j} }|\varphi_\alpha (x)|^2)\\
		&\geq\frac{1}{2}|A_j|.
	\end{align*}
	Thus   $|A_j|\lesssim L_{\rho,j}^{d}$. 
	
	Using the estimate   	$|\varphi_\alpha (x)|\leq e^{-\frac{1}{20}\gamma_0\|x\|_1}$ for $\alpha \in A_m$ and  $\|x\|_1> L_{\rho,m} =4l_{m+1}$, we get 
	\begin{align}\label{jizhong}\sum_{j=1}^{m}&\sum_{\alpha\in A_j\setminus A_{j-1}}\left(\sum_x (1+\|x\|_1)^q |\varphi_\alpha (x)|\right) |\varphi_\alpha (0)|\leq\sum_{\alpha\in A_m}\left (\sum_x (1+\|x\|_1)^q |\varphi_\alpha(x)|\right)\nonumber\\
		&\leq|A_m|\sup_{\alpha\in A_m}\left(\sum_{\|x\|_1\leq4l_{m+1}}+\sum_{\|x\|_1>4l_{m+1}}\right) (1+\|x\|_1)^q |\varphi_\alpha (x)|\nonumber\\
		&\leq C ({q,d}) l_{m+1}^{q+2d}.
	\end{align}
	Since  $|\varphi_\alpha (0)|\leq e^{-\gamma_0l_{j-1}} $ for $\alpha\notin A_{j-1}$, using the estimate  	$|\varphi_\alpha (x)|\leq e^{-\frac{1}{20}\gamma_0\|x\|_1}$ for $  \alpha \in A_j \  (j> m)$ and $\|x\|_1>L_{\rho,j}= 4l_{j+1}$,    we get 
	\begin{align*}
		&\ \ \ \sum_{\alpha\in A_j\setminus A_{j-1}}\left(\sum_x  (1+\|x\|_1)^q |\varphi_\alpha (x)|\right) |\varphi_\alpha (0)|\\
		&\leq|A_j| e^{-\gamma_0l_{j-1}}\sup_{\alpha\in A_j}\left(\sum_{\|x\|_1\leq4l_{j+1}}+\sum_{\|x\|_1>4l_{j+1}}\right) (1+\|x\|_1)^q |\varphi_\alpha (x)|\\
		&\leq C (q,d)l_{j+1}^{q+2d	}e^{-\gamma_0l_{j-1}}.
	\end{align*}
	Summing  up the above estimates for  $j$ from $m+1$ to $+\infty$ implies  
	\begin{align}\label{zai}
		\sum_{j=m+1}^{+\infty}\sum_{\alpha\in A_j\setminus A_{j-1}}\left(\sum_x  (1+\|x\|_1)^q |\varphi_\alpha (x)|\right) |\varphi_\alpha (0)|\leq C (q,d)e^{-\frac{1}{2}\gamma_0l_{m}}.
	\end{align}
	From \eqref{jizhong} and \eqref{zai}, we obtain 
	\begin{equation}
		\begin{aligned}
			\eqref{kuai}&\leq C ({q,d})l_{m+1}^{q+2d}\\ 
			&\leq C ({q,d})\max (|\log \delta_{m-2}|^{100 (q+2d)},l_{2}^{q+2d})\\ 
			&\leq C ({q,d})\max (|\log \rho|,|\log\varepsilon_0|)^{100 (q+2d)}  <+\infty, \label{dede}
		\end{aligned}
	\end{equation}
	where  on the third line, we use  $\rho\leq \delta_{m-2}$ for $m\geq 2$ by the definition of $m$.  Hence we finish  proving $H (\theta)$ satisfies   dynamical localization  on the set  $\T\setminus K_\rho$ for all $\rho>0$.  Since $ |K_\rho|\leq \rho^{\frac{1}{2}}$, it follows that  dynamical localization holds for Lebesgue almost $\theta\in \T$.
	
	Next,  we  prove the expectation version of  dynamical localization. For this purpose, we employ the estimate from \eqref{dede} to get 
	\begin{align*}
		&\ \ \ \left(\int_{\T\setminus K_{\delta_1}}	+\sum_{n=1}^{+\infty}	\int_{K_{\delta_{n}}\setminus K_{\delta_{n+1}}}\right)\sup_{t\in \mathbb{R}}\sum_{x\in \mathbb{Z}^d} (1+\|x\|_1)^q|\langle e^{itH (\theta)}{\bm e}_0, {\bm e}_x\rangle| d\theta\\
		&\leq C ({q,d})\left (|\log\delta_1|^{100 (q+2d)}+\sum_{n=1}^{+\infty}|\log\delta_{n+1}|^{100 (q+2d)}|K_{\delta_{n}}|\right)\\
		&<+\infty.
	\end{align*}
	
	\subsection{$ (\frac{1}{2}-)$-H\"older regularity of IDS} In this section, 
	we fix $\theta \in \T ,E\in \R$ and $\eta>0$. We are going to estimate  the number of eigenvalues of  $H_{\Lambda_N} (\theta)$ belonging to $[E-\eta,E+\eta]$ as $N\to +\infty$.   
	For this purpose, we first introduce a lemma  relating  the  operator norm of Green's function  to the number of eigenvalues  inside the energy  interval $[E-\eta,E+\eta]$.
	{\begin{lem}\label{IDSL}
			Let $H$ be a  self-adjoint operator on $\Z^{d}$  and $\Lambda\subset \Z^d$  be a finite set. Fix $E\in \R$. Assume there exists a set   $\Lambda'\subset\Z^d$ such that $|\Lambda\Delta\Lambda'|\leq M$ and $\|G_{\Lambda'} (E)\|\leq (2\eta)^{-1}$,  where $G_{\Lambda'} (E)= (H_{\Lambda'}-E)^{-1}$. Then the number of eigenvalues of $H_\Lambda$  inside $[E-\eta,E+\eta]$ does  not exceed  $3M$. 
		\end{lem}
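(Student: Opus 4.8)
The plan is to reduce the statement to the Cauchy interlacing theorem (Theorem~\ref{cahuchy}), applied twice through the common compression $H_{\Lambda\cap\Lambda'}$. First I would record the elementary consequences of the hypotheses. Since $\Lambda$ is finite and $|\Lambda'\setminus\Lambda|\le|\Lambda\Delta\Lambda'|\le M$, the set $\Lambda'$ is finite as well; write $m_1:=|\Lambda\setminus\Lambda'|$ and $m_2:=|\Lambda'\setminus\Lambda|$, so that $m_1+m_2=|\Lambda\Delta\Lambda'|\le M$. Because $H_{\Lambda'}$ is self-adjoint, $\|G_{\Lambda'}(E)\|=\operatorname{dist}(E,\sigma(H_{\Lambda'}))^{-1}$, so the hypothesis $\|G_{\Lambda'}(E)\|\le(2\eta)^{-1}$ forces $\operatorname{dist}(E,\sigma(H_{\Lambda'}))\ge 2\eta$; in particular $H_{\Lambda'}$ has \emph{no} eigenvalue in $[E-\eta,E+\eta]$.

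Next I would put $\Lambda_0:=\Lambda\cap\Lambda'$ and observe that $H_{\Lambda_0}=R_{\Lambda_0}H_\Lambda R_{\Lambda_0}$ and $H_{\Lambda_0}=R_{\Lambda_0}H_{\Lambda'}R_{\Lambda_0}$ exhibit $H_{\Lambda_0}$ as a compression of $H_\Lambda$ (removing $m_1$ coordinates) and of $H_{\Lambda'}$ (removing $m_2$ coordinates), both of the form covered by Theorem~\ref{cahuchy} with $P$ a coordinate restriction. Writing $N_\bullet(t):=\#\{\lambda\in\sigma(H_\bullet):\ \lambda\le t\}$ for the eigenvalue counting function, the interlacing inequality $\alpha_k\le\beta_k\le\alpha_{k+\mathrm{codim}}$ translates, for every $t\in\R$, into
$$N_\Lambda(t)-m_1\le N_{\Lambda_0}(t)\le N_\Lambda(t),\qquad N_{\Lambda'}(t)-m_2\le N_{\Lambda_0}(t)\le N_{\Lambda'}(t),$$
and the same with the left limits $t^{-}$ in place of $t$. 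Since the number of eigenvalues of a self-adjoint matrix $H_\bullet$ in $[E-\eta,E+\eta]$ equals $N_\bullet(E+\eta)-N_\bullet((E-\eta)^{-})$, chaining the two estimates gives
$$\#\{\lambda\in\sigma(H_\Lambda):\ |\lambda-E|\le\eta\}\le \#\{\lambda\in\sigma(H_{\Lambda_0}):\ |\lambda-E|\le\eta\}+m_1\le \#\{\lambda\in\sigma(H_{\Lambda'}):\ |\lambda-E|\le\eta\}+m_1+m_2.$$
By the first paragraph the last count vanishes, so the left-hand side is at most $m_1+m_2\le M\le 3M$, which in fact improves the claimed bound.

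The only point requiring care — and essentially the only ``obstacle'' here — is the bookkeeping that converts the interlacing inequalities for \emph{ordered} eigenvalues into inequalities for the counting functions, paying attention to the distinction between the closed interval $[E-\eta,E+\eta]$ and the half-open increments $N_\bullet(E+\eta)-N_\bullet((E-\eta)^{-})$ used to count eigenvalues in it. Once this is set up, the argument is purely combinatorial: there is no small-divisor or analytic input, and nothing else is delicate. (If one prefers to match the stated constant exactly and avoid even the mild counting-function manipulation, one can instead note that $H_\Lambda$ and $H_{\Lambda'}$ are both compressions of $H_{\Lambda\cup\Lambda'}$ differing from it by a perturbation supported on $m_2$, resp.\ $m_1$, rows and columns, hence of rank $\le 2m_2$, resp.\ $\le 2m_1$, and apply the crude rank form of interlacing; this loses a factor but still lands inside $3M$.)
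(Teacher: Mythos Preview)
Your proof is correct and in fact yields the sharper bound $M$ rather than $3M$, but it follows a genuinely different route from the paper's. The paper does not use interlacing at all here: instead, it takes the orthonormal system $\{\xi_l\}_{l=1}^L$ of eigenfunctions of $H_\Lambda$ with eigenvalues in $[E-\eta,E+\eta]$, writes $T=H-E$, and shows via the resolvent bound that each $R_{\Lambda'}\xi_l$ lies within distance $1/2$ of the subspace $\mathcal{H}:=\operatorname{Image}\,G_{\Lambda'}(E)(R_{\Lambda'}TR_{\Lambda\setminus\Lambda'}-R_{\Lambda'\setminus\Lambda}TR_\Lambda)$, which has $\dim\mathcal{H}\le M$. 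A Hilbert--Schmidt sum over $l$ then gives $L\le L/4+\dim\mathcal{H}+|\Lambda\setminus\Lambda'|\le L/4+2M$, whence $L\le 8M/3$.

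Your argument is cleaner and more elementary: by passing through the common compression $H_{\Lambda\cap\Lambda'}$ and invoking Theorem~\ref{cahuchy} twice, you avoid any analytic estimate beyond $\|G_{\Lambda'}(E)\|\le(2\eta)^{-1}\Rightarrow\sigma(H_{\Lambda'})\cap[E-\eta,E+\eta]=\emptyset$, and the eigenvalue count transfers losslessly. The paper's approach, on the other hand, is the one inherited from Bourgain and is more robust in settings where $H_\Lambda$ and $H_{\Lambda'}$ are not literally compressions of a common ambient operator (e.g., when boundary conditions or long-range hopping prevent the exact identity $R_{\Lambda_0}H_\Lambda R_{\Lambda_0}=H_{\Lambda_0}$); it only needs that the coupling between $\Lambda$ and $\Lambda'$ be supported on $\Lambda\Delta\Lambda'$. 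In the present nearest-neighbor Dirichlet setting your compression identity holds exactly, so your shortcut is legitimate and preferable.
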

		\begin{rem}
			This lemma only works for self-adjoint matrices. 
		\end{rem}
		\begin{proof}
			We refer to \cite{CSZ23a} for the  detailed proof (c.f. page 553 of \cite{CSZ23a}). 

	\end{proof}}
	
	Let  $N$ be  sufficiently large depending on $\eta$. The proof proceeds by case analysis:\\
	\textbf{(1).}  $\varepsilon_0 =\delta_0^{20}<\eta$.\\ We define   $$S_\eta:=\{x\in \Lambda_N:\ |v (\theta+x\cdot \omega)-E|\leq (2d+2) \eta\}.$$ 
	Since $v$ has a two-monotonicity interval structure and satisfies the Morse condition, 
	$$|v^{-1}\left( [E- (2d+2) \eta,E+ (2d+2) \eta]\right)|\leq C (d) \eta^\frac{1}{2}.$$
	From the uniform distribution of $\{x\cdot \omega\}_{x\in \Z^d}$, we deduce that  $|S_\eta|\leq C (d)\eta^{\frac12}|\Lambda_N|$ for sufficiently large $N$. Let  $\Lambda'=\Lambda_N\setminus S_\eta$. Then by Neumann series argument,  
	$$\|G_{\Lambda'} (E;\theta)\|\leq  ( (2d+2)\eta-\varepsilon\|\Delta\|)^{-1}  \leq ( (2d+2)\eta-2d\varepsilon)^{-1}\leq (2\eta)^{-1}.$$
	By Lemma \ref{IDSL}, $H_{\Lambda_N} (\theta)$ has at most $C (d)\eta^{\frac12}|\Lambda_N|$ eigenvalues in $[E-\eta,E+\eta]$. Thus  $$\mathcal{N}_{\Lambda_N} (E+\eta;\theta)-\mathcal{N}_{\Lambda_N} (E-\eta;\theta)\leq C (d)\eta^{\frac12}.$$
	\textbf{(2).} $\eta\in  (\delta_{n+1}^{20},\delta_n^{20}]$ for some $n\geq0$.\\  Replacing $\delta_{n+1}$ with $\eta$ in the proof of Theorem \ref{rep} where we essentially use the relation  $|\log\delta_{n}|\ll|\log\delta_{n+1}|\ll l_{n+1}$, one can obtain  the following lemma:
	\begin{lem}\label{fff}
		Let $E_{n+1}\in \mathcal{C}_{n+1}$ be such that $E\in \tilde{J} (E_{n+1})$ or $E_{n+1}$ minimizes $\operatorname{dist} (\tilde{J} (\textbf{E}),E)$ among $\textbf{E}\in\mathcal{C}_{n+1}$. Let $\Lambda\subset\Z^d$ be an $(n+1)$-regular set satisfying    $$|E_{n+1} (\theta+x\cdot \omega)-E|\geq 20\eta$$ for any $x\in \Lambda$ with $\theta+x\cdot \omega \in  I (E_{n+1})$.  Then $$\|G_\Lambda (\theta,E)\|\leq \frac{1}{2}\eta^{-1}.$$
	\end{lem}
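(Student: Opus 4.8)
Lemma \ref{fff} is proved by re-running, almost verbatim, the multi-scale resolvent iteration of Theorem \ref{rep} with $\eta$ playing the role of $\delta_{n+1}$. The point is that the hypothesis $\eta\in(\delta_{n+1}^{20},\delta_n^{20}]$ supplies exactly the chain of inequalities $|\log\delta_n|\ll|\log\eta|\ll l_{n+1}$ that mirrors $|\log\delta_n|\ll|\log\delta_{n+1}|\ll l_{n+1}$, so every estimate used in Theorem \ref{rep} survives the substitution. As in that proof, I would first reduce: since $E\in\tilde J(E_{n+1})$ (or $E_{n+1}$ is the distance-minimizer), $E$ lies in $\tilde{\tilde{J}}(E_n)$ for the parent $E_n$ of $E_{n+1}$ (using the codomain-overlap statement of Hypothesis \ref{h5}), so the set $S_n(\theta,E)$ of $n$-resonant sites is well defined; and if $S_n(\theta,E)\cap\Lambda=\emptyset$, then $\Lambda$ is $n$-good (its $n$-regularity is contained in $(n+1)$-regularity), and Hypothesis \ref{h6} already gives the stronger bound $\|G_\Lambda(\theta,E)\|\le 10\delta_n^{-1}\le\frac12\eta^{-1}$, since $\delta_n^{-1}=e^{l_n^{2/3}}\ll e^{20l_n^{2/3}}\le\eta^{-1}$.

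Assume then $S_n^\Lambda:=S_n(\theta,E)\cap\Lambda\neq\emptyset$. For each $x\in S_n^\Lambda$ I would reproduce the ``claim'' of Theorem \ref{rep}: using the approximations \eqref{C0n}, \eqref{ex}, \eqref{apbpn} and the definition of $\tilde J$, there is $p=p(x)\in\{x,x-k_n\}$ (with $p=x$ in the simple-resonance case) such that $B_{n+1}+p\subset\Lambda$ (by $(n+1)$-regularity) and $\theta+p\cdot\omega\in I(E_{n+1})$, respectively $\in I(E_{n+1,<})\cup I(E_{n+1,>})$ in the Type \ref{t3} case. Since $H_{B_{n+1}+p}(\theta)=H_{B_{n+1}}(\theta+p\cdot\omega)$, the eigenvalue-separation statements of Propositions \ref{kn}, \ref{k2n}, \ref{k2n+} say that every eigenvalue of $H_{B_{n+1}+p}(\theta)$ other than the Rellich eigenvalue $E_{n+1}(\theta+p\cdot\omega)$ (and, in the Type \ref{t3} case, its brother) lies at distance $\gg\delta_n\gg 20\eta$ from $E$ (here $\eta\le\delta_n^{20}\ll\delta_n$ is used), while $E_{n+1}(\theta+p\cdot\omega)$ is at distance $\ge 20\eta$ from $E$ by the hypothesis of the lemma. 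Hence
$$\operatorname{dist}\bigl(\sigma(H_{B_{n+1}+p}(\theta)),E'\bigr)\ge 20\eta-|E'-E|\quad\text{for every }E',$$
so in particular $\|G_{B_{n+1}+p}(\theta,E')\|\le(18\eta)^{-1}$ whenever $|E'-E|<2\eta$. Removing $l_{n+1}^{2/3}$-size cubes around the points $p$ leaves a set that is $n$-good relative to $(\theta,E)$ (regularity from the $n$-strong-regularity built into $B_{n+1}$, nonresonance from the previous sentence), hence by Hypothesis \ref{h6} it obeys $\|\cdot\|\le 10\delta_n^{-1}$ together with off-diagonal decay at rate $\gamma_n\ge\frac12\gamma_0$ for all energies within $\delta_n/5$ of $E$, in particular for all $|E'-E|<2\eta$.

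With these two families of blocks in hand, I would feed them into the resolvent-identity iteration and the Schur-test argument of Theorem \ref{rep} line by line — all the ``decay beats growth'' bounds hold because $\eta^{-1}=e^{O(l_{n+1}^{2/3})}\ll e^{\gamma_0 l_{n+1}^{5/6}}$ — to conclude that $G_\Lambda(\theta,E')$ exists for every $|E'-E|<2\eta$, i.e. $\operatorname{dist}(\sigma(H_\Lambda(\theta)),E)\ge 2\eta$, which gives the asserted bound $\|G_\Lambda(\theta,E)\|=\operatorname{dist}(\sigma(H_\Lambda(\theta)),E)^{-1}\le\frac12\eta^{-1}$. The only genuinely new point to verify — and the place I expect to spend the most care — is the two-sided compatibility of scales: on one side $20\eta\ll\delta_n$, so the coarse eigenvalue separation ``$\gg\delta_n$ from $E_n$'' of Propositions \ref{kn}--\ref{k2n+} is more than enough to handle all eigenvalues except the Rellich one; on the other side $\eta$ is not so small that it spoils the iteration, since $|\log\eta|<20l_{n+1}^{2/3}\ll l_{n+1}$ means the block resolvent norm $\sim\eta^{-1}$ is still dominated by the off-diagonal decay over distances $\ge l_{n+1}^{5/6}$. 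Both facts are immediate from $\eta\in(\delta_{n+1}^{20},\delta_n^{20}]$, so no essentially new estimate is needed; everything else is a transcription of Theorem \ref{rep} (and of Theorem \ref{1g} when $n=0$).
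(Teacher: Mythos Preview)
Your proposal is correct and takes exactly the approach the paper indicates: the paper's entire proof is the one-line remark preceding the lemma, namely that one replaces $\delta_{n+1}$ by $\eta$ in the proof of Theorem \ref{rep}, the key relation $|\log\delta_n|\ll|\log\delta_{n+1}|\ll l_{n+1}$ being preserved under this substitution since $\eta\in(\delta_{n+1}^{20},\delta_n^{20}]$. You have in fact supplied considerably more detail than the paper does.
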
 
	We find  an $(n+1)$-regular set $\Lambda$  such that $$\Lambda_N\subset\Lambda\subset\Lambda_{N+50l_{n+1}}$$
	and  define   
	$$S_\eta:=\{ x\in \Lambda  : \  \theta+x\cdot \omega \in  I (E_{n+1}),\  |E_{n+1} (\theta+x\cdot \omega)-E|< 20\eta\}.$$
	Then the set $\Lambda':=\Lambda \setminus (\bigcup_{x\in S_\eta}B_{n+1}+x)$ satisfies the assumption of Lemma \ref{fff} and so \begin{equation}\label{1241}
		\|G_{\Lambda'} (\theta,E)\|\leq \frac{1}{2}\eta^{-1}.
	\end{equation}
	As done above,  we have for sufficiently large $N$,  $$|S_\eta|\leq C (d)\eta^{\frac12}|\Lambda_{N+50l_{n+1}}|\leq C (d)\eta^{\frac12}|\Lambda_{N}|.$$ It  follows that 
	\begin{equation}\label{1242}
		\begin{aligned}
			|\Lambda_N\setminus\Lambda'|+|\Lambda'\setminus\Lambda_N|
			&\leq|\Lambda\setminus\Lambda'|+|\Lambda\setminus\Lambda_N|\\
			&\leq C (d) (l_{n+1}^d |S_\eta|  +l_{n+1}N^{d-1})\\
			&\leq C (d) (l_{n+1}^d\eta^{\frac12}|\Lambda_N|+l_{n+1}N^{d-1}).
		\end{aligned}
	\end{equation}
	Combining \eqref{1241}, \eqref{1242} and  Lemma \ref{IDSL}  implies
	\begin{align*}
		\mathcal{N}_{\Lambda_N} (E+\eta;\theta)-\mathcal{N}_{\Lambda_N} (E-\eta;\theta)&\leq C (d) (l_{n+1}^d\eta^{\frac12}+l_{n+1}/N)\\ 
		&\leq C (d)\eta^{\frac{1}{2}}|\log\eta|^{16d}
	\end{align*}
	provided $N$ sufficiently large,  where we use   $$l_{n+1}\leq l_n^8\leq|\log \delta_n|^{16}\leq|\log\eta|^{16}$$ on the second line.
	
	We  finish the proof by combining the above two cases and letting $N\to +\infty$.

	\appendix
	\section{Proof of Lemma \ref{ll}}
	\begin{proof}
		Without loss of generality, we may assume $E^*=0$. It suffices to show the smallest $m$ eigenvalues of the  positive semidefinite operator   $H^2$,   $0\leq\lambda_1\leq\cdots\leq\lambda_m$ satisfy
		$$	\sum_{k=1}^m \lambda_k\leq m\delta^2.$$
		Denote by $P$ the orthogonal projection on the space spanned by $\psi_k\  (1\leq k\leq m)$. Thus the compressed operator $PH^2P$ has $m$  eigenvalues $0\leq\mu_1\leq\cdots\leq\mu_m$ satisfying $\lambda_k\leq\mu_k$ by min-max principle. Thus we obtain \begin{align*}
			\sum_{k=1}^m \lambda_k\leq\sum_{k=1}^m \mu_k&=\operatorname{Trace} (PH^2P)\\
			&\leq \sum_{k=1}^{m}\langle\psi_k,PH^2P\psi_k\rangle\\
			&=\sum_{k=1}^m\|H\psi_k\|^2\\
			&\leq m\delta^2 
		\end{align*} and finish the proof.
	\end{proof}
	
	\section{Proof of Lemma \ref{C2}}
	\begin{proof}
		Without loss of generality, we may consider the case $\left|E'\right| \leq \delta$ implies $E^{\prime \prime} \geq 2$ and $\theta_s\leq \theta_1\leq \theta_2$. Since  $E'\left(\theta_s\right)=0$,  by assumption  $E^{\prime \prime}\left(\theta_s\right) \geq 2$. Let $\theta_d$ be the largest point with the following property:
		$$
		E^{\prime \prime} (\theta) \geq 2 \text { for } \theta_s \leq \theta \leq \theta_d .
		$$
		This implies that $E (\theta)$ is an increasing function near the  right of $\theta_s$. By definition of $\theta_d$, we have $E^{\prime \prime}\left (\theta_d+\Delta \theta_n\right)<2$ for a sequence $\Delta \theta_n\to 0^+$; therefore $E'\left (\theta_d+\Delta \theta_n\right) >\delta$. This inequality must hold for every $\theta>\theta_d$ or else we would have a point $\theta^*>\theta_d$ where $E' (\theta^*)=\delta,E' (\theta^*-\Delta\theta)>\delta$ for small $\Delta\theta>0$,  but   $E^{\prime \prime} (\theta^*) \geq 2>0$ by  $|E' (\theta^*)|\leq\delta$ and the assumption of $E$. This is impossible. Therefore
		$$
		E' (\theta) \geq \delta \text { for } \theta>\theta_d .
		$$
		There are three cases.\\
		{\it Case} 1: $\theta_s\leq\theta_1<\theta_2\leq\theta_d$.  
		\begin{align*}
			E (\theta_2)-E (\theta_1)= E' (\theta_1)  (\theta_2-\theta_1)+\frac{1}{2}E'' (\xi) (\theta_2-\theta_1)^2\geq (\theta_2-\theta_1)^2.
		\end{align*}
		{\it Case}  2: $\theta_d\leq\theta_1<\theta_2$.
		\begin{align*}
			E (\theta_2)-E (\theta_1)= E' (\xi)  (\theta_2-\theta_1)\geq \delta (\theta_2-\theta_1)\geq  (\theta_2-\theta_1)^2.
		\end{align*}
		{\it Case}  3: $\theta_s\leq\theta_1\leq\theta_d\leq\theta_2$. 
		\begin{align*}
			E (\theta_2)-E (\theta_1)&= E (\theta_2)-E (\theta_d)+E (\theta_d)-E (\theta_1) \\
			&\geq  (\theta_2-\theta_d)^2+ (\theta_d-\theta_1)^2\\
			&\geq\frac{1}{2} (\theta_2-\theta_1)^2.
		\end{align*}
		
		To prove the second inequality, we assume $\theta_s\leq\theta$  (The case $\theta\leq\theta_s$ is analogous).  We need to consider two cases. \\
		{\it Case} 1: $\theta_s \leq \theta \leq \theta_d$.
		$$
		E' (\theta)=E' (\theta_s)+E'' (\xi) (\theta-\theta_s) \geq \theta-\theta_s .
		$$
		{\it Case}  2: $\theta_d \leq \theta$. In this case we have $E' (\theta) \geq \delta$. \\
		
		Hence we finish the proof. 
	\end{proof} 
	\section{Proof of Lemma \ref{daoshu}}
	\begin{proof}
		The local  $C^2$ smoothness of $E (\theta)$ is inherited from  the operators since $E (\theta^*)$ is a simple eigenvalue. 	Noting  that $\langle\psi,\psi\rangle\equiv1$, we have $\langle\psi,\psi'\rangle=0$.
		Differentiating  the equation  $E= \langle\psi,H\psi\rangle$ yields 
		\begin{equation}\label{yijie}
			E'=\langle\psi,H'\psi\rangle+2\langle\psi',H\psi\rangle=\langle\psi,H'\psi\rangle,
		\end{equation} where we use   $H\psi=E\psi$ and $\langle\psi,\psi'\rangle=0$.
		
		Now we prove \textbf{(2)}.
		Differentiating  \eqref{yijie} gives 
		$E''=\langle\psi,H''\psi\rangle+2\langle \psi',H'\psi\rangle$. Thus it suffices to show 
		$$\langle \psi',H'\psi\rangle=-\langle H'\psi,G^\perp (E)H'\psi\rangle.$$
		Since $\psi'$ is orthogonal to $\psi$, we have 
		\begin{align*}
			\langle \psi',H'\psi\rangle =	\langle P^\perp \psi',H'\psi\rangle=\langle G^\perp (E) (H-E)\psi',H'\psi\rangle=-\langle H'\psi,G^\perp (E)H'\psi\rangle,
		\end{align*}
		where we use  $G^\perp (E)\psi=0 $ and  $ (H'-E')\psi=- (H-E)\psi'$ since $ (H-E)\psi\equiv0$.

		Finally, \textbf{(3)} follows from $$G^\perp (E)= ( (H-E)^\perp)^{-1}=\sum_{E'\neq E}\frac{1}{E'-E}P_{E'}$$
		and 
		$$G^{\perp \perp} (E)= ( (H-E)^{\perp \perp})^{-1}=\sum_{E'\neq E,\mathcal{E}}\frac{1}{E'-E}P_{E'}$$
		immediately.
	\end{proof}
	\section{Lemmas for constructing the  desired regular deformation}\label{chouti}
	The desired deformation can be made by  the following two lemmas:
	\begin{lem}
		Let $\theta^*\in\T,E^*\in\R$ and $E_n\in\mathcal{C}_n$. Then any three different points $x_1,x_2,x_3\in S_n (\theta^*,E^*)$ relative to $E_n$ satisfy
		$$\max (\|x_1-x_2\|_1,\|x_2-x_3\|_1,\|x_1-x_3\|_1)\gtrsim \delta_n^{-\frac{1}{2\tau}}\gg l_n.$$ 
		The above separation also holds uniformly for $\theta$ and $E$ belonging to a small neighborhood (of  order $\delta_n$) of $ (\theta^*,E^*)$.
	\end{lem}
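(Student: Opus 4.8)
The plan is to exploit the fact that each site $x_j \in S_n(\theta^*,E^*)$ forces $E_n(\theta^*+x_j\cdot\omega)$ to lie within $\delta_n$ of $E^*$ (here I use the convention that if $E_n$ belongs to \textbf{Type} \ref{t3}, one of $E_{n,>}, E_{n,<}$ is meant, and I will need to separate into the subcase where two of the three sites pick out the same Rellich function and the subcase where they do not — but since there are only two Rellich functions involved and three sites, the pigeonhole principle guarantees at least two of the three sites are associated with the \emph{same} Rellich function $E_n$, so I may as well work with a single monotone-interval-structured function). So first I would reduce to the situation of three sites $x_1,x_2,x_3$ with $\theta^*+x_j\cdot\omega \in I(E_n)$ and $|E_n(\theta^*+x_j\cdot\omega) - E^*| < \delta_n$ for all three (after renaming, at least two of the original three satisfy this for the same branch; if all three do, even better). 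Among three phases $\theta_j := \theta^*+x_j\cdot\omega$ lying in $I(E_n) = I(E_n)_+ \cup I(E_n)_-$, the pigeonhole principle again gives at least two, say $\theta_1,\theta_2$, lying in the \emph{same} monotonicity component $I(E_n)_{\pm}$.

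Next, on that single monotonicity component $E_n$ satisfies the Morse condition (Hypothesis \ref{h3}), so Lemma \ref{C2} applies: since $\theta_1,\theta_2$ are on the same side of the critical point (or there is no critical point on that component, which only makes things easier), we get $|E_n(\theta_1)-E_n(\theta_2)| \geq \tfrac12 |\theta_1-\theta_2|^2$ provided $|\theta_1-\theta_2|$ is small enough (which it is, since $|I(E_n)_\pm| \lesssim \delta_{n-2}^5$ by Hypothesis \ref{h1}, far below the threshold $\delta$ in Lemma \ref{C2}). On the other hand, $|E_n(\theta_1)-E_n(\theta_2)| \leq |E_n(\theta_1)-E^*| + |E^*-E_n(\theta_2)| < 2\delta_n$. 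Combining, $\|\theta_1-\theta_2\|^2 \lesssim \delta_n$, i.e. $\|\theta_1-\theta_2\| = \|(x_1-x_2)\cdot\omega\| \lesssim \delta_n^{1/2}$. Now invoke the Diophantine condition $\omega \in {\rm DC}_{\tau,\gamma}$: $\|(x_1-x_2)\cdot\omega\| \geq \gamma \|x_1-x_2\|_1^{-\tau}$, so $\|x_1-x_2\|_1 \gtrsim \delta_n^{-1/(2\tau)}$, and since $\delta_n = e^{-l_n^{2/3}}$ we have $\delta_n^{-1/(2\tau)} = e^{l_n^{2/3}/(2\tau)} \gg l_n$. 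This gives the bound for one of the three pairwise distances; taking the max over all three pairs, the conclusion $\max(\|x_1-x_2\|_1,\|x_2-x_3\|_1,\|x_1-x_3\|_1) \gtrsim \delta_n^{-1/(2\tau)} \gg l_n$ follows. Note that one pair being far apart is all that is claimed.

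Finally, for the uniformity statement: if $\theta$ and $E$ lie within a $\delta_n$-neighborhood of $(\theta^*,E^*)$, then $x_j \in S_n(\theta,E)$ still forces $|E_n(\theta+x_j\cdot\omega)-E| < \delta_n$, hence (using $|E_n' | \leq D$ and $|E - E^*|, |\theta-\theta^*| < \delta_n$) we get $|E_n(\theta^*+x_j\cdot\omega)-E^*| < \delta_n + D\delta_n + \delta_n = O(\delta_n)$, which is the same type of bound up to an absolute constant, so the argument above goes through verbatim with $\delta_n$ replaced by $C\delta_n$ (absorbed by the $\gtrsim$). The main obstacle — really the only place that needs care — is the bookkeeping around \textbf{Type} \ref{t3}: one must be careful that the pigeonhole reduction to a single monotone-interval-structured Rellich function is legitimate, i.e. that when two sites are associated with, say, $E_{n,>}$ and we use Lemma \ref{C2} for $E_{n,>}$, the Morse condition and two-monotonicity structure genuinely hold for $E_{n,>}$ (they do, by Hypothesis \ref{h2}, \ref{h3}, \ref{h4}, which apply to both brothers). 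Everything else is a direct chain of the quadratic lower bound from Lemma \ref{C2} and the Diophantine lower bound.
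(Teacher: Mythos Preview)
Your approach matches the paper's in spirit and works cleanly for \textbf{Type}~\ref{t1} and \textbf{Type}~\ref{t2}. The \textbf{Type}~\ref{t3} bookkeeping, however, has a real gap: you first pigeonhole on the two brothers (three sites, two functions $\Rightarrow$ at least two sites share a brother, say $E_{n,>}$), and then write ``among three phases\dots\ pigeonhole again gives two in the same monotonicity component.'' But after the first reduction you may have only \emph{two} phases on $E_{n,>}$, not three, so the second pigeonhole is unavailable. Concretely, if $\theta_1\in I(E_{n,>})_-$, $\theta_2\in I(E_{n,>})_+$, and $x_3$ is associated only with $E_{n,<}$, your argument as written produces no pair on a common monotone branch.

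The paper avoids this by arguing directly that the \emph{total} preimage $\{\theta:\min_\bullet|E_{n,\bullet}(\theta)-E^*|<\delta_n\}$ has at most two connected components, each of length $\lesssim\delta_n^{1/2}$; a single pigeonhole on three phases among two components then suffices. This holds in \textbf{Type}~\ref{t3} because if both brothers are $\delta_n$-close to $E^*$ somewhere, then $E^*$ lies within $O(\delta_n)$ of both critical values, so by the Morse bound all such phases sit in $O(\delta_n^{1/2})$-neighborhoods of the critical points $\theta_{n,>},\theta_{n,<}$; since $|\theta_{n,>}-\theta_{n,<}|\leq 3\delta_n\delta_{n-2}^{-1}\ll\delta_n^{1/2}$ (cf.\ the proof of Lemma~\ref{in}), the preimage is a single short interval in that regime, while if only one brother is relevant you are back to the two-component picture of a single Morse function. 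Once two phases lie in a component of length $\lesssim\delta_n^{1/2}$, your Diophantine step and your uniformity argument go through unchanged.
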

	\begin{proof}
		By the two-monotonicity interval structure and Morse condition of $E_n$, it follows that  
		the preimage $E_n^{-1} (B_{\delta_n} (E^*))$ has at  most two connected components with length $\lesssim \delta_n^\frac{1}{2}$. The lemma follows from the  Diophantine condition and pigeonhole principle immediately.
	\end{proof}
	\begin{lem}
		Let $L>1$. Assume a set  $S\subset\Z^d$ satisfies the following condition:\\ Any three different points $x_1,x_2,x_3\in S$ satisfy 
		$$\max (\|x_1-x_2\|_1,\|x_2-x_3\|_1,\|x_1-x_3\|_1)> 100L.$$
		Then for any set $B\subset\Z^d$, there exists a deformation $B^*$ satisfying
		$$B\subset B^* \subset (B+\Lambda_{50L}) =\{x\in \Z^d:\ \operatorname{dist} (x,B)\leq 50L\},   $$
		such that $(\Lambda_L+x)\subset  B^*$ for any $x\in S$ with $(\Lambda_L+x)\cap B^*\neq \emptyset$.
	\end{lem}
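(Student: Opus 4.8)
The plan is to build $B^*$ by a greedy absorption procedure: start with $B^{(0)} := B$, and repeatedly look for a point $x \in S$ such that $(\Lambda_L + x) \cap B^{(j)} \neq \emptyset$ but $(\Lambda_L + x) \not\subset B^{(j)}$; if such an $x$ exists, set $B^{(j+1)} := B^{(j)} \cup (\Lambda_L + x)$ and continue. The process terminates because each step strictly enlarges a finite set (we may assume $B$ finite, or work locally), and when it terminates the resulting set $B^*$ satisfies the required defining property by construction. The two things to check are (i) that $B^* \subset B + \Lambda_{50L}$, i.e. the absorption never reaches more than $50L$ away from the original $B$, and (ii) that termination is genuine, i.e. only boundedly many cubes get added.

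\textbf{Key step: controlling the spread via the three-point separation.} The heart of the argument is that the sparsity hypothesis on $S$ limits how far a chain of overlapping absorptions can propagate. First I would observe that if $x, x' \in S$ are distinct and $(\Lambda_L + x) \cap (\Lambda_L + x') \neq \emptyset$, then $\|x - x'\|_1 \leq 2L$. Now suppose $x_1, x_2, x_3 \in S$ are three \emph{distinct} points all of whose $L$-cubes have been added during the process, and suppose the cubes $\Lambda_L + x_1, \Lambda_L + x_2, \Lambda_L + x_3$ are ``chained'' in the sense that consecutive ones overlap (or more precisely, that each was absorbed because it met the set built so far). The three-point condition says $\max(\|x_1-x_2\|_1,\|x_2-x_3\|_1,\|x_1-x_3\|_1) > 100L$, so the $x_i$ cannot be mutually within $2L$; hence at most two cubes from $S$ can be pairwise ``close.'' This means that along any connected component of $B^* \setminus B$, at most two cubes $\Lambda_L + x$ with $x \in S$ can appear, and the first of them must have met $B$ itself. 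Consequently any point of $B^*$ lies within distance $\leq (2L) + (2L) + (2L) \ll 50L$ of $B$ (two cube-widths to chain through two absorbed cubes, one more for the radius), which gives (i); and since each component of $B + \Lambda_{50L}$ has bounded diameter and the absorbed cubes are essentially disjoint beyond the first two, (ii) follows as well.

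\textbf{Finishing.} Once the spread bound $B^* \subset B + \Lambda_{50L}$ is established, the defining property $(\Lambda_L + x) \subset B^*$ for every $x \in S$ meeting $B^*$ is immediate: if at termination some such $x$ had $(\Lambda_L+x) \not\subset B^*$, the procedure would not have stopped. I would phrase the whole argument as: let $B^*$ be the smallest set containing $B$ that is closed under the operation ``if $x \in S$ and $(\Lambda_L+x)$ meets the set, add $(\Lambda_L+x)$'' — this smallest closed set exists as the union of the finite increasing chain $B^{(j)}$ — and then prove the spread bound by the three-point combinatorics above, noting that the bound must be checked uniformly so that the chain genuinely terminates inside $B + \Lambda_{50L}$.

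\textbf{Main obstacle.} The delicate point is the bookkeeping in the spread estimate: one must argue carefully that a single connected piece of absorbed cubes cannot contain a long chain $x_1, x_2, \dots, x_m \in S$ with consecutive $L$-cubes overlapping, and here the three-point hypothesis (rather than a mere pairwise one) is exactly what forbids $m \geq 3$. Translating ``consecutive cubes overlap'' into the hypothesis $\max_{i<j}\|x_i - x_j\|_1 > 100L$ and deriving the contradiction — while also accounting for the cube absorbed because it meets $B$ (not another $S$-cube) — is the step where the constant $50L$ must be tracked honestly; everything else is routine set manipulation.
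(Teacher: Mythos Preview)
Your proposal is correct and is essentially the same argument as the paper's, with only a cosmetic difference in bookkeeping: the paper adds at each stage \emph{all} cubes $\Lambda_L+x$ that meet the current set (so the sequence $B_0\subsetneq B_1\subsetneq\cdots$ terminates after at most two steps by exactly your three-point chain argument), whereas you add one cube at a time; the key combinatorial observation---that three distinct points of $S$ cannot form a chain of overlapping $L$-cubes---and the resulting spread bound are identical.
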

	\begin{proof}
		We inductively define $B_0:=B$, 
		\begin{equation}\label{449}
			B_{0} \subsetneqq B_{1}\subsetneqq \cdots \subsetneqq B_{N} 
		\end{equation} satisfying 
		$$B_{n+1}=B_n\bigcup\left(\bigcup_{\{x\in S: \ (\Lambda_L+x)\cap B_n\neq \emptyset\} }  (\Lambda_L+x)  \right)  $$ for $n\geq  0$,  
		where $N$ is the  largest integer such that  the ``$\subsetneqq$'' relationship holds in \eqref{449}. We claim that $N\leq 2$. Otherwise, there exist three  different points  $x_1,x_2,x_3\in S$ such that 
		$$(\Lambda_L+x_1)\cap B_0\neq \emptyset,\  (\Lambda_L+x_2)\cap(\Lambda_L+x_1)\neq \emptyset, \  (\Lambda_L+x_3)\cap(\Lambda_L+x_2)\neq \emptyset.$$
		Thus $\max (\|x_1-x_2\|_1,\|x_2-x_3\|_1,\|x_1-x_3\|_1)\leq 4L,$ a contradiction. Then $B^*:=B_N$ is the desired deformation.
	\end{proof}
	\section*{Acknowledgments}
	H. Cao  and Z. Zhang are partially supported by the National Key R\&D Program of China under Grant 2023YFA1008801.  Y. Shi is  partially supported by NSFC  (12271380) and Z. Zhang is  partially supported by NSFC  (12288101).   The authors are very grateful to the handling editor and the anonymous referees for their helpful suggestions. 
	\section*{Data Availability}
	The manuscript has no associated data.
	\section*{Declarations}
	{\bf Conflicts of interest} \ The authors  state  that there is no conflict of interest.
	\bibliographystyle{alpha}

\begin{thebibliography}{FSW90}
		
		\bibitem[AYZ17]{AYZ17}
		A.~Avila, J.~You, and Q.~Zhou.
		\newblock Sharp phase transitions for the almost {M}athieu operator.
		\newblock {\em Duke Math. J.}, 166(14):2697--2718, 2017.
		
		\bibitem[BG00]{BG00}
		J.~Bourgain and M.~Goldstein.
		\newblock On nonperturbative localization with quasi-periodic potential.
		\newblock {\em Ann. of Math. (2)}, 152(3):835--879, 2000.
		
		\bibitem[BGS02]{BGS02}
		J.~Bourgain, M.~Goldstein, and W.~Schlag.
		\newblock Anderson localization for {S}chr\"{o}dinger operators on {$\bold
			Z^2$} with quasi-periodic potential.
		\newblock {\em Acta Math.}, 188(1):41--86, 2002.
		
		\bibitem[BJ02]{BJ02}
		J.~Bourgain and S.~Jitomirskaya.
		\newblock Absolutely continuous spectrum for 1{D} quasiperiodic operators.
		\newblock {\em Invent. Math.}, 148(3):453--463, 2002.
		
		\bibitem[Bou00]{Bou00}
		J.~Bourgain.
		\newblock H\"{o}lder regularity of integrated density of states for the almost
		{M}athieu operator in a perturbative regime.
		\newblock {\em Lett. Math. Phys.}, 51(2):83--118, 2000.
		
		\bibitem[Bou05]{Bou05}
		J.~Bourgain.
		\newblock {\em Green's function estimates for lattice {S}chr\"{o}dinger
			operators and applications}, volume 158 of {\em Annals of Mathematics
			Studies}.
		\newblock Princeton University Press, Princeton, NJ, 2005.
		
		\bibitem[Bou07]{Bou07}
		J.~Bourgain.
		\newblock Anderson localization for quasi-periodic lattice {S}chr\"{o}dinger
		operators on {$\Bbb Z^d$}, {$d$} arbitrary.
		\newblock {\em Geom. Funct. Anal.}, 17(3):682--706, 2007.
		
		\bibitem[CD93]{CD93}
		V.~A. Chulaevsky and E.~I. Dinaburg.
		\newblock Methods of {KAM}-theory for long-range quasi-periodic operators on
		{${\bf Z}^\nu$}. {P}ure point spectrum.
		\newblock {\em Comm. Math. Phys.}, 153(3):559--577, 1993.
		
		\bibitem[CSZ23]{CSZ23a}
		H.~Cao, Y.~Shi, and Z.~Zhang.
		\newblock Localization and regularity of the integrated density of states for
		{S}chr\"{o}dinger operators on {$\Bbb Z^d$} with {$C^2$}-cosine like
		quasi-periodic potential.
		\newblock {\em Comm. Math. Phys.}, 404(1):495--561, 2023.
		
		\bibitem[CSZ24]{CSZ23b}
		H.~Cao, Y.~Shi, and Z.~Zhang.
		\newblock Quantitative {G}reen's function estimates for lattice quasi-periodic
		{S}chr\"{o}dinger operators.
		\newblock {\em Sci. China Math.}, 67(5):1011--1058, 2024.
		
		\bibitem[Din97]{Din97}
		E.~I. Dinaburg.
		\newblock Some problems in the spectral theory of discrete operators with
		quasiperiodic coefficients.
		\newblock {\em Uspekhi Mat. Nauk}, 52(3(315)):3--52, 1997.
		
		\bibitem[Eli97]{Eli97}
		L.~H. Eliasson.
		\newblock Discrete one-dimensional quasi-periodic {S}chr\"{o}dinger operators
		with pure point spectrum.
		\newblock {\em Acta Math.}, 179(2):153--196, 1997.
		
		\bibitem[FS83]{FS83}
		J.~Fr\"{o}hlich and T.~Spencer.
		\newblock Absence of diffusion in the {A}nderson tight binding model for large
		disorder or low energy.
		\newblock {\em Comm. Math. Phys.}, 88(2):151--184, 1983.
		
		\bibitem[FSW90]{FSW90}
		J.~Fr\"{o}hlich, T.~Spencer, and P.~Wittwer.
		\newblock Localization for a class of one-dimensional quasi-periodic
		{S}chr\"{o}dinger operators.
		\newblock {\em Comm. Math. Phys.}, 132(1):5--25, 1990.
		
		\bibitem[FV21]{FV21}
		Y.~Forman and T.~VandenBoom.
		\newblock Localization and {C}antor spectrum for quasiperiodic discrete
		{S}chr\"odinger operators with asymmetric, smooth, cosine-like sampling
		functions.
		\newblock {\em arXiv:2107.05461}, 2021.
		
		\bibitem[GY20]{GY20}
		L.~Ge and J.~You.
		\newblock Arithmetic version of {A}nderson localization via reducibility.
		\newblock {\em Geom. Funct. Anal.}, 30(5):1370--1401, 2020.
		
		\bibitem[GYZ22]{GYZ22}
		L.~Ge, J.~You, and X.~Zhao.
		\newblock H\"{o}lder regularity of the integrated density of states for
		quasi-periodic long-range operators on {$\ell^2(\Bbb Z^d)$}.
		\newblock {\em Comm. Math. Phys.}, 392(2):347--376, 2022.
		
		\bibitem[GYZ23]{GYZ23}
		L.~Ge, J.~You, and Q.~Zhou.
		\newblock Exponential dynamical localization: criterion and applications.
		\newblock {\em Ann. Sci. \'{E}c. Norm. Sup\'{e}r. (4)}, 56(1):91--126, 2023.
		
		\bibitem[Jit94]{Jit94}
		S.~Jitomirskaya.
		\newblock Anderson localization for the almost {M}athieu equation: a
		nonperturbative proof.
		\newblock {\em Comm. Math. Phys.}, 165(1):49--57, 1994.
		
		\bibitem[Jit99]{Jit99}
		S.~Jitomirskaya.
		\newblock Metal-insulator transition for the almost {M}athieu operator.
		\newblock {\em Ann. of Math. (2)}, 150(3):1159--1175, 1999.
		
		\bibitem[JL18a]{JL18}
		S.~Jitomirskaya and W.~Liu.
		\newblock Universal hierarchical structure of quasiperiodic eigenfunctions.
		\newblock {\em Ann. of Math. (2)}, 187(3):721--776, 2018.
		
		\bibitem[JL18b]{JL23}
		S.~Jitomirskaya and W.~Liu.
		\newblock Universal reflective-hierarchical structure of quasiperiodic
		eigenfunctions and sharp spectral transition in phase.
		\newblock {\em arXiv:1802.00781, JEMS (to appear)}, 2018.
		
		\bibitem[JLS20]{JLS20}
		S.~Jitomirskaya, W.~Liu, and Y.~Shi.
		\newblock Anderson localization for multi-frequency quasi-periodic operators on
		{${\Bbb Z}^D$}.
		\newblock {\em Geom. Funct. Anal.}, 30(2):457--481, 2020.
		
		\bibitem[Liu22a]{Liu22}
		W.~Liu.
		\newblock Quantitative inductive estimates for {G}reen's functions of
		non-self-adjoint matrices.
		\newblock {\em Anal. PDE}, 15(8):2061--2108, 2022.
		
		\bibitem[Liu22b]{Liu23}
		W.~Liu.
		\newblock Small denominators and large numerators of quasiperiodic
		{S}chr\"odinger operators.
		\newblock {\em arXiv:2205.04648, Peking Math J (to appear)}, 2022.
		
		\bibitem[Sch01]{Sch01}
		W.~Schlag.
		\newblock On the integrated density of states for {S}chr\"{o}dinger operators
		on {$\Bbb Z^2$} with quasi periodic potential.
		\newblock {\em Comm. Math. Phys.}, 223(1):47--65, 2001.
		
		\bibitem[Shi22]{Shi22}
		Y.~Shi.
		\newblock Spectral theory of the multi-frequency quasi-periodic operator with a
		{G}evrey type perturbation.
		\newblock {\em J. Anal. Math.}, 148(1):305--338, 2022.
		
		\bibitem[Sin87]{Sin87}
		Y.~G. Sinai.
		\newblock Anderson localization for one-dimensional difference
		{S}chr\"{o}dinger operator with quasiperiodic potential.
		\newblock {\em J. Statist. Phys.}, 46(5-6):861--909, 1987.
		
		\bibitem[Sur90]{Sur90}
		S.~Surace.
		\newblock The {S}chr\"{o}dinger equation with a quasi-periodic potential.
		\newblock {\em Trans. Amer. Math. Soc.}, 320(1):321--370, 1990.
		
	\end{thebibliography}

\end{document}